\documentclass{statsoc}
\usepackage[top=1in, bottom=1in, left=1in, right=1in]{geometry}
% Any additional packages needed should be included after jmlr2e.
% Note that jmlr2e.sty includes epsfig, amssymb, natbib and graphicx,
% and defines many common macros, such as 'proof' and 'example'.
%
% It also sets the bibliographystyle to plainnat; for more information on
% natbib citation styles, see the natbib documentation, a copy of which
% is archived at http://www.jmlr.org/format/natbib.pdf
\usepackage{color}
\usepackage{enumerate}
% Definitions of handy macros can go here
%\usepackage{caption}
\usepackage{subcaption}

% Heading arguments are {volume}{year}{pages}{submitted}{published}{author-full-names}
%\jmlrheading{?}{2017}{?-?}{?/?}{?/?}{Kristjan Greenewald, Alfred Hero, and Shuheng Zhou}

% Short headings should be running head and authors last names

%\ShortHeadings{Tensor graphical Lasso}{Greenewald, Zhou, Hero}
%\firstpageno{1}

\usepackage{amsmath}
\usepackage{amssymb}
\usepackage[all]{xy}
\usepackage{graphicx}
\usepackage{float}
%\usepackage{xr-hyper}
%\usepackage{hyperref}
%%% HELPER CODE FOR DEALING WITH EXTERNAL REFERENCES
\usepackage{xr}
\makeatletter
\newcommand*{\addFileDependency}[1]{
  \typeout{(#1)}
  \@addtofilelist{#1}
  \IfFileExists{#1}{}{\typeout{No file #1.}}
}
\makeatother

%%% END HELPER CODE

% put all the external documents here!
%\myexternaldocument{Supplementary}
%\myexternaldocument{kron_sum_supplement}
% just to see what's happening
\listfiles
%\externaldocument{kron_sum_sparseSupp}

\usepackage{framed}
\usepackage{fancyhdr}

\usepackage{amsmath}
\usepackage{amsfonts}
\usepackage{algorithm}
\usepackage{algorithmic}
\usepackage{graphicx}
\usepackage{appendix}
% use Times
\usepackage{times}
% For figures
\usepackage{graphicx} % more modern
%\usepackage{epsfig} % less modern
%\usepackage{subfigure} 

% For citations
\usepackage{natbib}

% For algorithms
\usepackage{algorithm}
\usepackage{algorithmic}
\usepackage{comment}
% As of 2011, we use the hyperref package to produce hyperlinks in the
% resulting PDF.  If this breaks your system, please commend out the
% following usepackage line and replace \usepackage{icml2015} with
% \usepackage[nohyperref]{icml2015} above.
%\usepackage{hyperref}

% Packages hyperref and algorithmic misbehave sometimes.  We can fix
% this with the following command.
%\newcommand{\theHalgorithm}{\arabic{algorithm}}
%\usepackage{hyperref}
\usepackage{framed}
\usepackage{fancyhdr}
%\usepackage{appendices}
% Employ the following version of the ``usepackage'' statement for
% submitting the draft version of the paper for review.  This will set
% the note in the first column to ``Under review.  Do not distribute.''
%\usepackage{icml2016} 
% Employ this version of the ``usepackage'' statement after the paper has
% been accepted, when creating the final version.  This will set the
% note in the first column to ``Proceedings of the...''
%\usepackage[accepted]{icml2016}

\def\fatnorm#1{|\kern-.2ex|\kern-.2ex| #1 |\kern-.2ex|\kern-.2ex|}
\newcommand{\twonorm}[1]{\left\lVert#1\right\rVert_2}

\newcommand{\fnorm}[1]{\left\lVert#1\right\rVert_{F}}
\newcommand{\norm}[1]{\left\lVert#1\right\rVert}
\newcommand{\abs}[1]{\left\lvert#1\right\rvert}

\newcommand{\T}{\mathcal{T}}

\newcommand{\cov}{\textsf{Cov}}

\newcommand{\off}{{\rm off}}

\newcommand{\half}{\ensuremath{\frac{1}{2}}}
\newcommand{\inv}[1]{\frac{1}{#1}}
\def\conv{\mathop{\text{\rm conv}\kern.2ex}}

\newcommand{\ip}[1]{\;\langle{\,#1\,}\rangle\;}

\newcommand{\onenorm}[1]{\ensuremath{\left|#1\right|_1}}
\newcommand{\offone}[1]{\ensuremath{\left|#1\right|_{1,\off}}}

\newcommand{\silent}[1]{}
\newcommand{\mvec}[1]{\rm{vec}\left\{\,#1\,\right\}}

\newcommand{\ve}{\varepsilon}

\def\qed{\hskip1pt $\;\;\scriptstyle\Box$}
\def\Ber{\mathop{\text{Bernoulli}\kern.2ex}}
\def\supp{\mathop{\text{supp}\kern.2ex}}
\def\corr{\mathop{\text{corr}\kern.2ex}}
\def\prec{\mathop{\text{precision}\kern.2ex}}
\def\recall{\mathop{\text{recall}\kern.2ex}}
\def\cov{\mathop{\text{Cov}\kern.2ex}}
\def\mnorm{\mathcal{N}_{f,m}\kern.2ex}
\def\var{\mathop{\text{Var}\kern.2ex}}
\def\ess{\mathop{\text{ess}\kern.2ex}}
\def\dom{\mathop{\text{dom}\kern.2ex}}
\def\lin{\mathop{\text{lin}\kern.2ex}}

\newcommand{\func}[1]{\ensuremath{\mathrm{#1}}}
\newcommand{\diag}{\func{diag}}
\newcommand{\offd}{\func{offd}}

\let\hat\widehat
\let\tilde\widetilde

\newcommand{\tr}{{\rm tr}}

\newcommand{\sign}{{\rm sign}}

\def\E{{\mathbb E}}

\def\supp{\mathop{\text{\rm supp}\kern.2ex}}
\def\argmin{\mathop{\text{arg\,min}\kern.2ex}}

\newcommand{\prob}[1]{\ensuremath{\mathbb P}\left(#1\right)}

\newcommand{\beq}{\begin{equation}}
\newcommand{\eeq}{\end{equation}}
\newcommand{\ben}{\begin{eqnarray}}
\newcommand{\een}{\end{eqnarray}}
\newcommand{\bnum}{\begin{enumerate}}
\newcommand{\enum}{\end{enumerate}}
\newcommand{\bit}{\begin{itemize}}
\newcommand{\eit}{\end{itemize}}
\newcommand{\bens}{\begin{eqnarray*}}
\newcommand{\eens}{\end{eqnarray*}}

\newcommand{\Sc}{\ensuremath{S^c}}

\newcommand{\A}{\mathcal{A}}

\newenvironment{proofof}[1]{\hspace*{20pt}{\it Proof}{ of #1}.\hskip10pt}{\qed\vskip5pt}
\newenvironment{proofof2}{\hskip10pt}{\qed\vskip5pt}

\newtheorem{theorem}{Theorem}
\newtheorem{corollary}{Corollary}
\newtheorem{lemma}[theorem]{Lemma}
\newtheorem{proposition}[theorem]{Proposition}
\newtheorem{definition}{Definition}

\graphicspath{FiguresTalk/}
 \graphicspath{{Figures/}}
 
 \title{Tensor Graphical Lasso (TeraLasso)}
\author[Greenewald {\it et al.}]{Kristjan Greenewald}
\address{IBM Research,
Cambridge,
         USA.}
%\address{University of Michigan,
%Ann Arbor,
 %        USA.}
%\email{greenewk@umich.edu}
\author{Shuheng Zhou}
\address{University of California,
Riverside,
         USA.}
\author[Greenewald, Zhou, Hero]{Alfred Hero III}
\address{University of Michigan,
Ann Arbor,
         USA.}

\begin{document} 

\begin{abstract}

%The Bigraphical Lasso framework was proposed to parsimoniously model conditional dependence relationships of matrix-normal data based on the Cartesian product of graphs. By enforcing extreme sparsity (the number of parameters) and imposing explicit structures on the precision matrix, this model has excellent potential for improving computational scalability and for easing interpretations of data with complex two-way dependencies. In this work, we first extend the Bigraphical Lasso (BiGLasso) models and estimators to the tensor setting. 
This paper introduces a multi-way tensor generalization of the Bigraphical Lasso (BiGLasso), which uses a two-way sparse Kronecker-sum multivariate-normal model for the precision matrix to parsimoniously model conditional dependence relationships of matrix-variate data based on the Cartesian product of graphs. We call this generalization the {\bf Te}nsor g{\bf ra}phical Lasso (TeraLasso). We demonstrate using theory and examples that
the TeraLasso model can be accurately and scalably estimated from very limited data samples of high dimensional variables with multiway coordinates
such as space, time and replicates.
Statistical consistency and statistical rates of convergence are established for both the BiGLasso and TeraLasso
estimators of the precision matrix and estimators of its support (non-sparsity) set, respectively. We
propose a scalable composite gradient descent algorithm and analyze the computational convergence rate, showing that the
composite gradient descent algorithm is guaranteed to converge at a
geometric rate to the global minimizer of the TeraLasso objective function. Finally, we illustrate the TeraLasso using both simulation and experimental data from a meteorological dataset, showing that we can
accurately estimate precision matrices and recover meaningful conditional dependency graphs from high dimensional complex datasets.

\end{abstract}
%\end{frontmatter}

\section{Introduction}
\label{Sec:Not}

%INTRODUCE TENSOR COVARIANCE
%
%NEED FOR FACTORIZED METHODS
%
%LIMITATIONS OF KRONECKER PRODUCT
%
%INTRODUCE KRONECKER SUM
%
%COMPARATIVE ADVANTAGES OF KRONECKER SUM
%
%- PHILISOPHICALLY
%
%- EMPIRICALLY

%MAKE A TABLE SUMMARIZING!

The increasing availability of matrix and tensor-valued data with complex dependencies has fed the fields of statistics and machine learning. Examples of tensor-valued data include medical and radar imaging modalities, spatial and meteorological data collected from sensor networks and weather stations over time, and biological, neuroscience and spatial gene expression data aggregated over trials and time points.
Learning useful structures from these large scale, complex and high-dimensional data in the low sample regime is an important task in statistical machine learning, biology and signal processing.

As the precision matrix (inverse covariance matrix) encodes interactions and, for tensor-valued Gaussian distributions, conditional independence relationships
between and among variables, multivariate statistical models, such as the matrix normal model (\cite{dawid1981some}), have been proposed for estimation of these matrices. However, the number of parameters of the precision matrix of a $K$-way data tensor $X \in \mathbb{R}^{d_1 \times \dots \times d_K}$ grows as $\prod_{i=1}^K d_i^2$. Therefore in high dimensions unstructured precision matrix estimation is impractical, requiring very large sample sizes. 
Undirected graphs are often used to describe high dimensional distributions. Under sparsity conditions, the graph can be estimated using $\ell$1-penalization methods, such as the graphical Lasso (GLasso) \citep{FHT07} and multiple (nodewise) regressions \citep{meinshausen2006high}. Under suitable conditions, such approaches yield consistent (and sparse) estimation in terms of graphical structure and fast convergence rates with respect to the operator and Frobenius norm for the covariance matrix and its inverse. However, many of the statistical models that have been considered still tended to be overly simplistic and not fully reflective of reality. For example, in neuroscience one must take into account temporal correlations as well as spatial correlations, which reflect the connectivity formed by the neural pathways. Yet, this line of high dimensional statistical literature
mentioned above has primarily focused on estimating linear or graphical models with i.i.d. samples. In the case of graphical models, the data matrix is usually assumed to have independent rows or columns that follow the same distribution. The independence assumptions substantially simplify mathematical derivations but they tend to be very restrictive. For instance, the cortical circuits can change over time due to activities such as motor learning, attention or visual stimulation. This data typically has a complex structure that is organized by the experiment'€™s design, with one or more experimental factors varying according to a predefined pattern.

On the theoretical and methodological front, recent work demonstrated another regime where further reductions in the sample size are possible under additional structural assumptions on the conditional dependency graphs which arise naturally in the above mentioned contexts when handling data with complex dependencies. For example, the matrix-normal model as studied in \cite{tsiligkaridis2013convergence} and \cite{zhou2014gemini} restricts the topology of the graph to tensor product graphs where the precision matrix corresponds a Kronecker product representation. Moreover, \citep{zhou2014gemini} showed that one can estimate the covariance and inverse covariance matrices well using only one instance from the matrix-variate normal distribution. Along the same lines, the Bigraphical Lasso framework was proposed to parsimoniously model conditional dependence relationships of matrix-variate data based on the Cartesian product of graphs \citep{kalaitzis2013bigraphical} as opposed to the direct product graphs of the matrix-normal models above. These models naturally generalize to multilinear settings with more than two axes of structure as demonstrated in the present work. The present work addresses the problem of sparse modeling of a structured precision matrix for tensor-valued data; more precisely, we aim to estimate the structure and parameters for a class of Gaussian graphical models by restricting the topology to the class of Cartesian product graphs, with precision matrices represented by a Kronecker sum for data with complex dependencies.

Toward these goals, we will introduce the tensor graphical Lasso (TeraLasso) procedure for estimating sparse $K$-way decomposable precision matrices. We will show that our concentration of measure analysis enables a significant reduction in the sample size requirement in order to estimate parameters and the associated conditional dependence graphs along different coordinates such as space, time and experimental conditions. We establish consistency for both the Bigraphical Lasso and Tensor graphical Lasso estimators and obtain optimal rates of convergence in the operator and Frobenius norm for estimating the associated precision matrix, and for structure recovery. Finally, we demonstrate using simulations and real data that the Kronecker sum precision model has excellent potential for improving computational scalability, structural interpretation, and its applications to classification, prediction, and visualization for complex data analysis.

A philosophical motivation of TeraLasso's Kronecker sum (Cartesian graph) model is that it achieves the maximum entropy among all models for which the tensor component projections of the covariance matrix are fixed, see Section \ref{sec:Models}. A compelling justification for the proposed Kronecker sum model for the precision matrix is that similar models have been successfully used in other fields, including regularization of multivariate splines, design of physical networks, and decomposition of solutions of partial differential equations governing many physical processes. Additional discussion of these practical motivations for the model is in Section \ref{sec:motivation} below.

\subsection{The multi-way Kronecker sum precision matrix model}
We follow the notation and terminology of \cite{kolda2009tensor} for modeling tensor-valued data arrays. 
%Consider the $K$-order data tensor $X \in \mathbb{R}^{d_1 \times \dots \times d_K}$. 
Define the vector of component dimensions $\mathbf{p} = [d_1,\dots,d_K]$ and let $p$ denote the product of these dimensions 
\begin{align*}
p = \prod\nolimits_{k = 1}^K d_k \quad \mathrm{and} \quad m_k = \prod\nolimits_{i \neq k} d_i = \frac{p}{d_k}.
\end{align*}
To simplify the multiway Kronecker notation, we define
\begin{align*}
{I}_{[d_{k:\ell}]} = \underbrace{{I}_{d_{k}} \otimes \dots \otimes {I}_{d_{\ell}}}_{\ell-k+1 \; \mathrm{factors}}
\end{align*}
where $\otimes$ denotes the Kronecker (direct) product and $\ell \geq k$. Using this notation, the $K$-way Kronecker sum of matrix components $\{\Psi_k\}_{k=1}^K$ can be written as
\begin{align}%\label{eq:omdef}
\label{Eq:model}
\Psi_1 \oplus \dots \oplus \Psi_K = \sum_{k =1}^K I_{[d_{1:k-1}]} \otimes \Psi_k \otimes I_{[d_{k+1:K}]}.
\end{align}
In the special case of $K=2$ this Kronecker sum representation reduces to the more familiar $\Psi_1\oplus \Psi_2=\Psi_1 \otimes I_{d_1}+I_{d_2} \otimes \Psi_2$. 
The vectorization of a $K$-way tensor $X$ is denoted as $\mathrm{vec}(X)$ and is defined as in \cite{kolda2009tensor}. Likewise, we  
define the transpose of a $K$-way tensor $X^T \in \mathbb{R}^{d_K\times \dots \times d_1}$ analogously to the matrix transpose, i.e. $[X^T]_{i_1,\dots,i_K} = X_{i_K,\dots, i_1}$. 
%We propose the Kronecker sum model for the precision matrix $\Omega = \Sigma^{-1}$ where $\Sigma = \mathrm{Cov}[\mathrm{vec}(X^T)]$ and $\Omega$ is as in \eqref{Eq:model},
%%\begin{equation}
%%\Omega = \underbrace{{\Psi}_1 \oplus \dots \oplus {\Psi}_K}_{K \: \mathrm{terms}},
%%\end{equation}
%${\Psi}_k, k=1, \ldots, K$ are sparse, and each component corresponds to a conditional independence graph across the $k$th dimension of the data tensor.

When the precision matrix $\Omega$ has a decomposition of the form (\ref{Eq:model}) the Kronecker sum components $\{\Psi_k\}_{k=1}^K$ are sparse, and the $K$-way data $X$ has a multivariate Gaussian distribution, the sparsity pattern of $\Psi_k$ corresponds to a conditional independence graph across the $k$-th dimension of the data.

Figure \ref{fig:cartAR} illustrates the Kronecker sum model proposed in \eqref{Eq:model} for $K =3$ and $d_k=4$. Specifically, $\Psi_k, k = 1, 2, 3$ are identical $4\times 4$ tridiagonal precision matrices corresponding to a one dimensional autoregressive-1 (AR-1) process.  
In the Figure the precision matrix $\Omega=\Psi_1\oplus\Psi_2\oplus\Psi_3$ is shown on the left and covariance $\Sigma = \Omega^{-1}$ on the right.
\begin{figure}[h]
\centering
\includegraphics[width=5in]{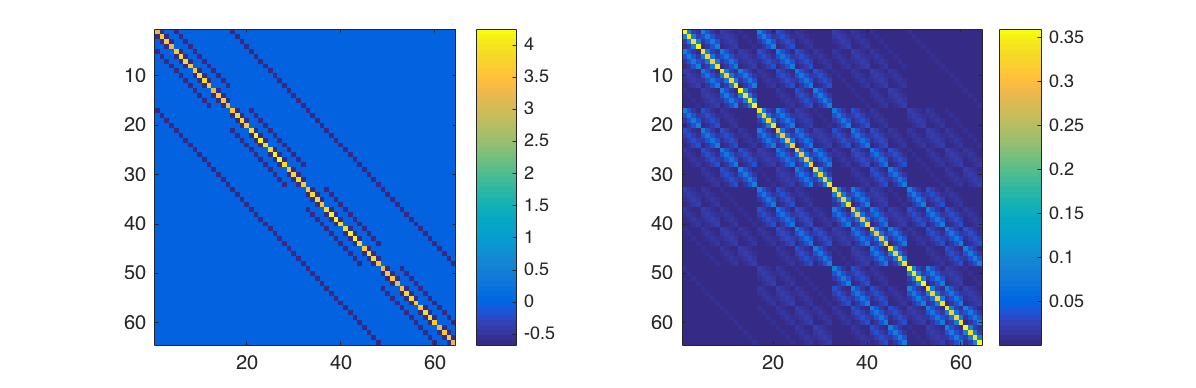}
\caption{Illustration of the Kronecker sum model for a tensor valued AR(1) process. Left: Sparse $4\times 4 \times 4$  precision matrix $\Omega = \Psi_1 \oplus \Psi_2 \oplus \Psi_3$, where $\Psi_k$ are identical tridiagonal  precision matrices corresponding to one dimensional AR(1) models. Right: Covariance matrix $\Sigma = \Omega^{-1}$. Unlike the Kronecker product precision model, the nested block structure in $\Sigma$ is not representable by a product of component factors. % arising from the Kronecker product structure of the eigenvectors.
}
\label{fig:cartAR}
\end{figure}
The entries of each $\Psi_k$ are replicated $m_k=16$ times across $\Omega$ for each $k$. 
This regular structure permits the aggregation of corresponding entries in the sample covariance matrix, resulting in variance reduction in estimating $\Omega$. 
This Kronecker sum gives $\Omega$ a nonseparable and interlocking repeating block structure in the covariance matrix. % without having to resort to restrictive independence, separability, or Toeplitz assumptions on coordinates or blocks in $\Omega$ and $\Sigma$. 

%We now give an example motivated by the neuroscience literature. 
%It is becoming more common to have available repeated measurements (trials) simultaneously recorded across a set of neurons and over an ensemble of stimuli.  Due to the trial-specific variation, one can no longer treat such trials as independent measurements of the underlying spatio-temporal process. A three-way tensor model with a precision matrix as defined in \eqref{Eq:model} provides a compelling alternative to the Kronecker product model.  
%Moreover, for $K=3$ model \eqref{Eq:model} is inherently more structured than model~\eqref{eq::twosum}.

%%%

%Having motivated our consideration of the Kronecker sum model, we now present our estimator, which is a form of the GLasso estimator with a Kronecker sum constraint. 
We propose the following sparse Kronecker sum estimator of the precision matrix $\Omega$ in \eqref{Eq:model}, which we call the Tensor Graphical Lasso (TeraLasso). 
The TeraLasso minimizes the negative $\ell_1$-penalized Gaussian log-likelihood function over the domain $\mathcal{K}_{\mathbf{p}}^{\sharp}$ of precision matrices $\Omega$ having Kronecker sum form: 
%an $\ell_1$-penalty to the negative Gaussian loglikelihood restricted to the class of 
%Kronecker sum precision matrices $\Omega = \Psi_1 \oplus \dots \oplus \Psi_K \in \mathcal{K}_{\mathbf{p}}^\sharp$: 
\begin{align}
\label{eq::lossfunc}
&\hat{\Omega}
=\arg\min_{\Omega \in \mathcal{K}_{\mathbf{p}}^\sharp, \|\Omega\|_2 \leq \kappa}
\left\{-\log | \Omega| + \langle \hat{S}, \Omega\rangle + \sum_{k=1}^K  m_k\sum_{i\neq j}g_{\rho_k}({[{\Psi}_k]_{ij}})
\right\}\\
\label{eq::gram}
&\text{ where }  
\hat{S}  = \inv{n} \sum_{i= 1}^n \mathrm{vec}(X_i^T) \mathrm{vec}(X_i^T)^T,
\end{align}
%$\offone{\Psi_k} = \sum_{i\not=j} \abs{\Psi_{k,ij}}$ is the off diagonal $\ell_1$ norm
$g_{\rho}(t)$ is a sparsity-inducing regularization function parameterized by a regularization parameter $\rho$, and 
%$\offd{A}=A-\diag(A)$ denote the off diagonal of a matrix $A$ and 
\ben
\label{eq::kppintro}
\mathcal{K}^\sharp_{\mathbf{p}} &= & 
\{ {A} \succeq 0: \exists \: {B}_k \in \mathbb{R}^{d_k \times d_k} \: \mathrm{s.t.} \: {A} = {B}_1 \oplus \dots \oplus {B}_K   \}
\een
is the set of positive semidefinite matrices that are decomposable into a Kronecker sum of fixed factor dimensions $d_1, \ldots, d_K$.
In this paper we consider $(\mu,\gamma)$-amenable regularizers $g_\rho$ \citep{LOH}. The norm constraint $\|\Omega\|_2 \leq \kappa$ is required for the solution to be well defined when $g_\rho$ is not a convex penalty. These penalties includes nonconvex regularizers such as SCAD and MCP, as well as the traditional $\ell 1$ regularizer $g_\rho(t) = \rho | t|$.

Observe that sparsity in the off diagonal elements of $\Psi_k$ directly creates sparsity in $\Omega$. As in the graphical Lasso, incorporating an $\ell_1$-penalty over entries of $\Omega$ with the tensor-valued Gaussian or matrix-normal (pseudo)-loglikelihood promotes a sparse graphical structure in $\Omega$;
see for example \citep{banerjeeGLASSO,yuanGLASSO,zhou2014gemini,zhou2011high}. In this work, we allow for the more general case of nonconvex regularization functions $g_\rho$ as considered in \cite{LOH}. While sometimes difficult to tune in practice, nonconvex regularization provides strong nonasymptotic guarantees on the elementwise estimation error of $\Omega$, implying strong, single sample support recovery guarantees when the smallest nonzero element of $\Omega$ is bounded from below. 
%Note further that the unregularized ($\rho_k = 0$) version of \eqref{eq::lossfunc} can be derived from a maximum entropy principle \citep{kalaitzis2013bigraphical}, specifically it gives the precision matrix corresponding to the maximum entropy tensor-valued distribution given the mode-wise sample covariance matrices $S_k = \frac{1}{n m_k} \sum_{i = 1}^n X_{i,(k)}X_{i,(k)}^T$ where $X_{i,(k)}$ is the mode-$k$ matricization of $X_i$. %These same sample covariances $S_k$ are commonly used to form the Kronecker product estimate \cite{zhou2014gemini}, hence in this sense the Kronecker sum model is less restrictive than the Kronecker product. 
%We further examine the inherently more robust eigenstructure of the Kronecker sum model in Section \ref{supp:logdet} in the supplement. 

%This form of the objective function for $K=2$ is equivalent to the bigraphical Lasso objective of \cite{kalaitzis2013bigraphical}, except we do not penalize the diagonals in the ${\Psi}_k$'s. %\cite{kalaitzis2013bigraphical} proposed an alternating minimization algorithm 

The contributions of this paper are as follows. The sparse multivariate-normal Bigraphical Lasso (BiGLasso) model is extended to the sparse tensor-variate ($K>2$) TeraLasso model, allowing the modeling of data with arbitrary tensor degree $K$.  
A new subgaussian concentration inequality (Corollary \ref{Cor:Chaos} in the supplement) is presented that gives rates of statistical convergence (Theorems 1-3) of 
the TeraLasso estimator as well as the BiGLasso estimator, when the sample size is low (even equal to one). TeraLasso's generalization of BiGlasso from 2-way to $K$-way decompositions is important as it expands the domain of application, allowing a data scientist to group variables into their natural domains along multiple tensor axes. For example, with a health data set that is collected over space, time, people and replicates, TeraLasso's 3-way tensor decomposition (time$\times$space$\times$people) can account for possible dependency structure between people, while a 2-way BiGLasso or KLasso approach decomposing over (time$\times$space) would unnecessarily enforce an assumption of independence between people. Alternately, BiGLasso or KLasso could group two axes together (e.g. (time$\times$space)$\times$people), however, this would create a large, unstructured factor whose estimation would require many more replicates than the 3-way decomposition that TeraLasso uses to give each axis its own factor. 

A highly scalable, first-order ISTA-based algorithm is proposed to minimize the TeraLasso objective function. We prove (Theorem \ref{thm:conv} in the supplement) that it converges to the global optimum with a geometric convergence rate,
and demonstrate its practical advantages on high dimensional problems. 
As compared to the alternating block coordinate descent algorithm proposed by \cite{kalaitzis2013bigraphical} for the BiGLasso, %alternates between optimizing over $\Psi_1$ while holding $\Psi_2$ fixed and vice versa. 
the proposed ISTA algorithm enjoys a per-iteration computational speedup over BiGLasso of order $\Theta(p)$.
%Roughly speaking, the complexity of the BiGLasso algorithm is equivalent to two GLasso problems or $O(n + m)$ of Lasso regressions given a data matrix of dimension $n \times m$.
Our numerical results show that the BiGLasso algorithm often requires 
many more iterations to converge than our ISTA method.
Numerical comparisons are presented demonstrating that TeraLasso significantly improves performance in small sample regimes. To demonstrate the application of TeraLasso to real world data we use it to estimate the precision matrix of spatio-temporal meteorological data collected by the National Center for Environmental Prediction (NCEP). Our results show that the TeraLasso precision matrix estimator degrades much more slowly than other estimators as one reduces the number of samples available to fit the model. The intuitive graphical structure, the robust eigenstructure and a maximum-entropy interpretation make
the TeraLasso model a compelling choice for modeling tensor data, 
much as the Bigraphical Lasso provides a meaningful alternative to the matrix-normal model.

\subsection{Relevant prior work}

The use of tensor product models for multiway data has a long history. In the statistical context, directly fitting a Kronecker product to multiway data yields a first order approximation corresponding to fitting the mean  \citep{kolda2009tensor} when the fitting criteria is the Frobenius norm of the residuals. Many such methods involve low-rank factor decompositions including: PARAFAC and CANDECOMP as in \cite{harshman1994parafac,faber2003recent}; Tucker decomposition-based methods such as \cite{tucker1966some} and \cite{hoff2016equivariant}; and hybrid methods such as \cite{johndrow2017tensor}. In contrast, second order methods have been used to approximate multiway structure of the covariance \citep{werner2008estimation,pouryazdian2016candecomp}. 
Series decomposition methods have been proposed for fitting the covariance matrix in Frobenius norm using sums of Kronecker products \citep{tsiliArxiv,greenewaldTSP,rudelson2015high,NIPS2017_7165}. 
%The sum of Kronecker products model is a universal approximation, in the sense that any matrix can be represented without error by such a model if the number of Kronecker summands is sufficiently large \citep{loan1992approximation}. As contrasted to formulating precision matrix estimation through norm-minimizing  approximations, like other statistical covariance estimation approaches, we directly estimate the precision matrix using a sparsity inducing likelihood model \eqref{Eq:model}. 

Kronecker product approximations to the inverse covariance have fitted matrix normal models \citep{allen2010transposable} and sparse matrix normal models \citep{leng2012sparse,zhou2014gemini,tsiligkaridis2013convergence}. In contrast to the Kronecker sum model (\ref{Eq:model}) for the precision matrix $\Omega$, the $K$-way Kronecker product model is $\Omega=\Psi_1 \otimes \ldots \otimes \Psi_K$. The Kronecker product decomposition implies a separable property of the precision matrix across the $K$ data dimensions, which one might expect to become an increasingly restrictive condition as $K$ increases. % \citep{tsiliArxiv}. 
%Furthermore, the $\ell$1 regularized likelihood objective is non-convex in the Kronecker factors due to the multi-linear structure of the Kronecker product representation, which presents computational challenges. 
In this paper we show that the proposed Kronecker sum model (\ref{Eq:model}) can be a worthwhile alternative representation. 

A two factor ($K=2$) sparse Kronecker sum model for the precision matrix $\Omega$ was introduced and studied in \cite{kalaitzis2013bigraphical}. The model was fitted to the sample covariance matrix using an iterative procedure called BiGlasso, which required the diagonal entries of $\Omega$ to be known. Conditions guaranteeing convergence were not provided. Here we extend the BiGlasso model to arbitrary $K\geq 2$ and unknown diagonal entries of $\Omega$, provide a faster converging optimization algorithm, and obtain strong convergence guarantees and bounds on the convergence rate for all $K$, including $K=2$. For completeness, we also obtain (Appendix \ref{app:B} of the supplement) bounds on the convergence rate for the known-diagonal setting of \cite{kalaitzis2013bigraphical}.

%\subsection{The Kronecker sum vs the Kronecker sum models}
% As contrasted to the Kronecker sum model (\ref{Eq:model}) for the precision matrix $\Omega$, the $K$-way Kronecker product model for the precision matrix is $\Omega=\Psi_1 \otimes \ldots \otimes \Psi_K$. When the matrix components $\Psi_k$ are sparse this is called a sparse Kronecker product model.  % where $\Psi_k$ is the $k$-th component precision matrix. % . 
% %where $\Psi_k \in \mathbb{R}^{d_k \times d_k}$ are assumed to be sparse symmetric positive definite matrices.
% $\ell$1 regularized maximum likelihood estimation procedures for the sparse Kronecker product model were obtained and analyzed in \cite{leng2012sparse}, \cite{tsiligkaridis2013convergence} and \cite{zhou2014gemini}. The model can be estimated accurately when there is no model mismatch and $\Psi_k \in \mathbb{R}^{d_k \times d_k}$ are sparse.

%	Figure \ref{fig:cube}.
\begin{figure}[h]
\centering
\includegraphics[width=1in]{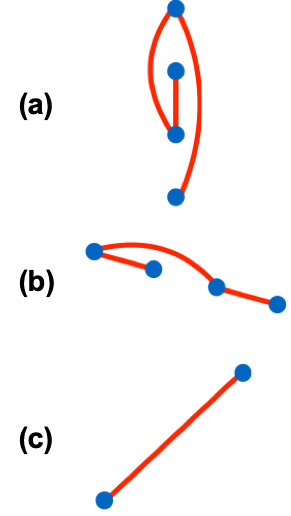}\hspace{6mm}
\includegraphics[width=3.4in]{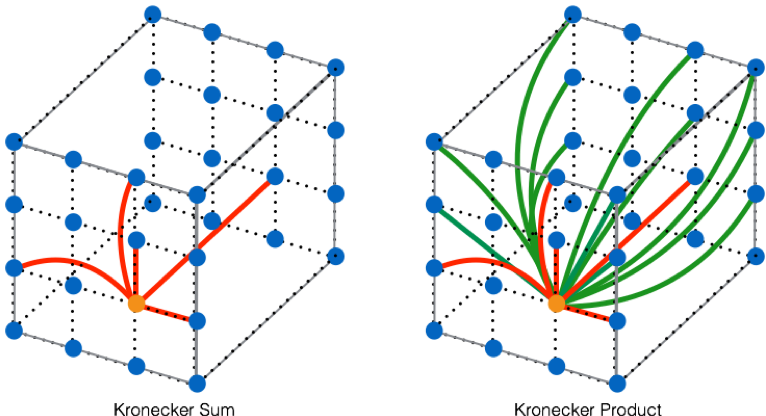}
\caption{Comparison of Kronecker sum (Cartesian product graph) at center and Kronecker product (direct product graph) at right. The products are formed from the component graphs shown in (a)-(b)-(c); the number of factors in the product graphs is $K=3$ and the dimensions are $d_1=d_2=4$, $d_3=2$, leading to product graphs with 32 nodes, arranged in a regular 3 dimensional grid in the figures at bottom. Only the edges emanating from the orange node are indicated (red and green edges). The Kronecker sum model has a total of 64 edges while the Kronecker product model is much less sparse, having a total of 184 edges. }% each with a weight a multiple of three separate weights from the axis graphs.} %Hence, the Kronecker sum model is often a more natural and interpretable sparse graphical model.}
\label{fig:cube}
\end{figure}

\begin{figure}
\centering
\scriptsize{
\begin{tabular}{|p{.6in}|p{1.8in}|p{1.8in}|}
\hline
& \textbf{Multiway Kronecker Product} & \textbf{Multiway Kronecker Sum}  \\
\hline
\textbf{Covariance Model} &Precision matrix $\Omega$ is separable across $K$ tensor components. & Precision matrix is nonseparable across tensor components, motivated by maximum entropy considerations.\\\hline
\textbf{Graphical Model} & Graph is the direct product of the $K$ graph components. & Graph is the Cartesian product of the $K$ graph components.\\\hline
\textbf{Sparsity} & Number of edges in $\Omega$ grows as the \textbf{product} of the number of edges in each component.  & Number of edges in $\Omega$ grows as the \textbf{sum} of the number of edges in each component. \\\hline
\textbf{Graphical model interpretability} & Edges in sparse factors contribute to large numbers of edges multiplicatively. & Each edge in the sparse factors directly map to edges in the overall precision $\Omega$. Sparsity pattern follows Cartesian Markov-like network.\\\hline
\textbf{Inference} & Nonconvex (multilinear) maximum likelihood estimator, alternative estimators usually favored. & Maximum likelihood estimator is convex.\\\hline
%\textbf{Statistical convergence} & Strong guarantees& Strong guarantees\\\hline
%\textbf{Sample complexity} & INSERT& INSERT\\\hline
%\textbf{Computational complexity} & Depends on chosen estimator INSERT & INSERT for our iterative algorithm.\\\hline
%\textbf{Model misspecification}& Highly susceptible to model misspecification, particularly when $K > 2$.& Always corresponds to maximum entropy estimator, guaranteeing interpretability and accuracy regardless of true model. \\\hline
\end{tabular}
}
\captionof{table}{Qualitative differences between multiway Kronecker sum (Teralasso) and multiway Kronecker product (BiGlasso) models for high dimensional precision matrix estimation.}
\label{Fig:SumVsProd}
\end{figure}
%Kronecker product models have long existed as an alternative method of decomposing precision matrices of matrix and tensor valued data. 

The qualitative differences between the Kronecker product and Kronecker sum models for the precision matrix can be better appreciated by considering the product graphs that are induced by them.  
%for sparsity structure of the precision matrix $\Omega=\Sigma^{-1}$ specifies a graph whose nodes index the variables in the data matrix $X$ and whose edges connect pairs of nodes where the entry of $\Omega$ is non-zero.
For given sparse Kronecker factors $\Psi_1, \ldots, \Psi_K$, the Kronecker product model corresponds to the direct (tensor) product of the component graphs while the Kronecker sum model corresponds to the Cartesian product\footnote{The Cartesian product of two graphs $G_1 = (V_1,E_1)$ and $G_2 = (V_2,E_2)$ is a graph with vertices being the Cartesian product of $V_1$ and $V_2$, and with edges such that node $(u,u')$ is adjacent to $(v,v')$ if and only if either $u = v$ and $u'$ is adjacent to $v'$ in $G_2$, or $u' = v'$ and $u$ is adjacent to $v$ in $G_1$.}  of these components \citep{hammack2011handbook}.  The direct product graph and Cartesian product graph differ greatly; the former has a number of edges equal to $ \frac{1}{2}\prod_{k=1}^K(2|E_k| + |V_k|) - \prod_{k=1}^K |V_k|$,  while the latter has a number of edges equal to $\sum_{k=1}^K |E_k| \prod_{i\neq k} |V_i|$, where $V_i, E_i$ denote the node and edge sets of the $i$-th component graph\footnote{The notation $|V_i|=d_i$ denotes the row dimension of $\Psi_i$ and $|E_i|$ denotes the number of non-zero upper triangular entries of $\Psi_i$}. To illustrate, if the number of non-zero entries of $\Psi_k$ is $c d_i$ for some $c$, the number of edges induced in the direct product graph by inserting a single new edge into the first component graph is equal to $\frac{1}{2}(2c+1)^{K} (p/d_1) - p$, where we recall that $p=\prod_{k=1}^K d_i$ is the number of covariates (rows of $\Omega$). On the other hand, for the Cartesian product graph it is only $p/d_1$ regardless of $c$. Hence, as $c$ and $K$ increase, using the Kronecker product model a single edge in $\Psi_1$ can create a proliferation of edges while the number of new edges in the Kronecker sum model is fixed, independent of $K$. A concrete example of these differences is illustrated in Figure \ref{fig:cube}. The qualitative differences between the Kronecker product and Kronecker sum models for the precision matrix are summarized in Table \ref{Fig:SumVsProd}.  

\subsection{Rationale for the proposed multiway Kronecker sum model}
\label{sec:motivation}
%Unify scattered motivation here (above and in Models and Methods section).

This paper develops a scalable, fast and accurate estimation procedure, the TeraLasso, for multiway  precision matrices $\Omega$ using higher order Kronecker sum models. To justify the practical utility of the TeraLasso we illustrate it on a spatio-temporal meteorological dataset. We have also applied it to other applications not presented here. While comprehensive validation of the model on a larger corpus of real data is beyond the scope of this paper, there is ample evidence that the model will have many statistical applications. We base this assessment on the wide use of Kronecker sum models, equivalently Cartesian product graph models, in biology, physics, social sciences, and network engineering, among other fields \citep{imrich2008topics,van2000ubiquitous}.  In particular the Kronecker sum arises in solving the celebrated Sylvester equation for a matrix $X$ which, for $K=2$, takes the form $XA+B X=N$. The Sylvester equation can be solved by expressing the equation in vectorized form as $A \oplus B \; {\mathrm{vec}}(X)= {\mathrm{vec}}(N)$ (for arbitrary $K$ this becomes the tensor Sylvester equation $(A_1 \oplus \dots \oplus A_K)\mathrm{vec}(X) = \mathrm{vec}(N)$), but this is often impractical in high dimension. Such equations result from the discretization of separable $K$-dimensional PDEs with tensorized finite elements \citep{grasedyck2004existence,kressner2010krylov,beckermann2013error,shi2013backward,ellner1986new}. As a result Kronecker sums come in many areas of applied math, including, beam propagation physics (\cite{andrianov1997matrix}); control theory %for social opinion dynamics 
\citep{luenberger1966observers,chapman2014controllability}; fluid dynamics %Poisson's equation
\citep{dorr1970direct}; and spatio-temporal neural processes \citep{schmitt2001numerical}.

Closer to home, the Kronecker sum model arises in multivariate spline data analysis, e.g. as applied to harmonic analysis on graphs (\cite{kotzagiannidis2017splines}). More recently, \cite{fey2018splinecnn} has proposed tensor B-splines defined over a Cartesian product basis for geometric Convolutional Neural Networks (CNN). Kronecker sums have been proposed as precision matrices for weighting the quadratic regularizer in smoothed multivariate spline regression. In particular, \cite{wood2006low} observed that, as compared to the Kronecker product, the Kronecker sum reduces the coupling between the axes when used as a spline smoothing penalty for generalized additive mixed model regression. This observation motivated \cite{wood2006low} and \cite{eilers2003multivariate} to use the inverse of a Kronecker sum matrix as a penalty, or prior, for smoothing $K$-dimensional regressions (see also work by \cite{lee2011p} and \cite{wood2016}). This approach has been applied to spatio-temporal forest health modeling (for which $K=3$) \citep{augustin2009modeling}, brain development modeling \citep{holland2014structural}, and analysis of the impact of climate and weather on spatio-temporal patterns of beetle populations \citep{preisler2012climate}, among other applications. %as it is a more robust and interpretable method of combining dimension-wise smooths into a $K$-dimensional smooth and corresponds to applying each marginal smoothing along its axis and then summing them up. 
In these spline regression problems the Kronecker sum appears as a precision matrix parameterizing a Gaussian prior on the spline coefficient vector $\beta$, where the prior is of the form $p(\beta) \propto \exp(-\beta^T(\lambda_1 S_1 \oplus \dots \oplus \lambda_K S_K)\beta/2)$. Here,  $\lambda_i$ are regularization coefficients and $S_i$ are coordinate-wise smoothing matrices, $i=1, \ldots, K$. % Note this expression is equivalent to our likelihood function without penalization. 
%In the real world, this is useful. Why?
%1. Less coupling found to often be useful in multidimensional spline regression (even using Kronecker sum!) Show math connecting.
%In other words, the approach of \citep{wood2006low} adopts a Kronecker sum model for the precision matrix in a Gaussian prior for the spline regression coefficients. 

Instead of using the Kronecker sum to model the {\em a priori} precision matrix of a set of spline parameters, this paper proposes the Kronecker sum as a model for the precision matrix of the multiway data in the likelihood function, where the data matrix $X$ takes the place of the spline coefficient vector $\beta$. The stated advantages of the Kronecker sum model for the spline regression setting \citep{wood2006low} can be expected to carry over to the precision matrix estimation setting of TeraLasso.  In particular, like the spline regression prior, the TeraLasso smooths each axis separately, while summing over the others, thereby reducing coupling between the tensor axes as compared to the Kronecker product. For data that has structure similar to that imposed by \citep{wood2006low} on the spline regression coefficients this should result in a more accurate fit. Indeed, if a population of regression spline problems was available, in principle one could apply the TeraLasso to estimating the best precision matrix of the spline coefficients that would minimize the population-averaged fitting error.

\textbf{Outline}. The remainder of the paper is organized as follows. 
We introduce notation and some preliminary results in Section~\ref{sec:: sec::notation}, and our proposed TeraLasso model in Section \ref{sec:Models}.
High dimensional consistency results are presented in Section \ref{sec:highdim}, first with convex $\ell 1$ regularizers and then with non-convex sparsity regularizers.  %Section \ref{sec:Nonconv} shows that elementwise convergence and support recovery guarantees can be obtained with nonconvex regularization of the TeraLasso objective. 
The first order ISTA optimization algorithm is described in Section \ref{sec:alg}, and conditions are specified for which the algorithm converges geometrically to the global optimum. Finally, Sections \ref{sec:exp} and \ref{sec:real} illustrate the proposed TeraLasso estimator on simulated and real data, with Section \ref{sec:conc} concluding the paper.
We place all technical proofs in the supplementary material, along with additional experiments and further exploration of the properties and implications of the Kronecker sum subspace $\mathcal{K}_{\mathbf{p}}$ and the associated identifiable parameterization. 
%, as well as proofs of joint convexity of our objective and bounds for statistical and algorithmic convergence of the TeraLasso estimator. 

\section{Notation and Preliminaries}
\label{sec:: sec::notation}
%The notation in this paper is as follows. 
We use upper case letters, e.g. $A$ for matrices and tensors, bold lower case $\mathbf{a}$ for vectors, and denote the $(i,j)$ element of a matrix $A$ as $A_{ij}$ and the $(i_1,i_2,\dots,i_K)$ element of a tensor $A$ as $A_{i_1,i_2, \dots, i_K}$. 
%For $A\in \mathbb{R}^{d_1 \times \dots \times d_K}$, we use $\mathrm{vec}(A) \in \mathbb{R}^{d_1 \dots d_k \times 1}$ as in \cite{kolda2009tensor}, and
 %define $A^T \in \mathbb{R}^{d_K\times \dots \times d_1}$ by analogy
%to the matrix transpose, i.e. $[A^T]_{i_1,\dots,i_K} = A_{i_K,\dots, i_1}$.
Fibers are the higher-order analogue of matrix rows and columns. A fiber of a tensor
is obtained by fixing every index but one, the
mode-$k$ fiber of tensor $X$ is denoted as the column vector
$X_{i_1,\dots,i_{k-1},:,i_{k+1},\dots, i_K}$.
Following definition by \cite{kolda2009tensor}, tensor unfolding or matricization of $X$ along
the $k$th-mode is denoted as $X_{(k)}$, formed by arranging the mode-$k$ fibers
as columns of the resulting matrix of dimension $d_k \times m_k$. 
The column ordering is not important so long as it is consistent.  

For a vector $y = (y_1, \ldots, y_p)$ in $\mathbb{R}^p$, denote by 
$\twonorm{y} = \sqrt{\sum_{j} y_j^2}$ the Euclidean norm of $y$.
The operator and Frobenius norms of a matrix $A$ are denoted as
$\twonorm{A}$ and $\fnorm{A}$ respectively; the notation $\mathrm{vec}(A)$ denotes the vectorization of the matrix $A$; $\|A\|_\infty$ denotes the matrix infinity norm and $\|A\|_{\max} = \max_{ij} |A_{ij}|$ denotes the max norm.
The determinant is denoted as $\abs{A}$. We use the inner product $\ip{ A, B} = \tr(A^T B)$ throughout.
Define the set of $p\times p$ matrices with Kronecker sum structure of fixed dimensions $d_1, \ldots, d_K$:
\begin{align}
\label{eq:kpp}
\mathcal{K}_{\mathbf{p}}
 &= \{ {A} \in \mathbb{R}^{p\times p}: \exists \: {B}_k \in \mathbb{R}^{d_k \times d_k} \: \mathrm{s.t.} \: {A} = {B}_1 \oplus \dots \oplus {B}_K   \} 
\end{align}
where the set of matrices defined in \eqref{eq::kppintro} is obtained by restricting $\mathcal{K}_{\mathbf{p}}$ to the positive cone, i.e., 
\begin{align*}
\mathcal{K}_{\mathbf{p}}^\sharp &= \{A \succeq 0 | A \in \mathcal{K}_{\mathbf{p}}\}.
\end{align*} 
%is the set of matrices expressible as $K$
Note that the set $\mathcal{K}_{\mathbf{p}}$ \eqref{eq:kpp} is linearly spanned by
the $K$ components, since there are no nonlinear interactions between any of the parameters.
Thus $\mathcal{K}_{\mathbf{p}}$ is a linear subspace of $\mathbb{R}^{p \times p}$, and we can define a unique projection operator onto $\mathcal{K}_{\mathbf{p}}$:
%SAY PROJECTION TO POS DEF HERE, AND PRESERVES POS DEF.
\[
\mathrm{Proj}_{\mathcal{K}_{\mathbf{p}}}(A) = \arg \min_{M \in \mathcal{K}_{\mathbf{p}}} \|A - M\|_F^2.
\]
A closed-form expression for $\mathrm{Proj}_{\mathcal{K}_{\mathbf{p}}}(A)$ is given in Section \ref{sec:prjjj} of the supplementary material. Note that the dimensionality of the $\mathcal{K}_{\mathbf{p}}$ subspace is $1 - K + \sum_{k=1}^K d_k^2$, which is often significantly smaller than the ambient dimension $p^2 = \prod_{k=1}^K d_k^2$.   %, where we also show that if $A$ is positive semidefinite, $\mathrm{Proj}_{\mathcal{K}_{\mathbf{p}}}(A)$ is as well. 
%Further properties of $\mathcal{K}_{\mathbf{p}}$, including its eigenstructure, 
%are found in Section \ref{App:Ident} of the supplementary material.

\noindent\textbf{Parameterization of $\mathcal{K}_{\mathbf{p}}$ by $\Psi_k$.} Note that ${\Omega}= \Psi_1 \oplus \dots \oplus \Psi_K$
 does not uniquely determine $\{{\Psi}_k\}_{k=1}^K$, i.e., without further constraints the Kronecker sum parameterization is not
fully identifiable. %, as there is a trace ambiguity (see \eqref{eq:eqq} in the supplement) in the sense that scaled identity factors can be moved between the $\Psi_k$. 
It is easy to verify, however, that both
$
\offd(\Psi_k)\:\mathrm{and} \: \mathrm{diag}(\Omega)
$
are identifiable, %since they are unaffected by this trace ambiguity,
where we define the notation $\offd(M) = {M} - \mathrm{diag}({M})$. We can then write the identifiable decomposition
\begin{align}\label{eq:iddd}
\hat{\Omega}%&= \mathrm{diag}(\Omega) + \offd({\Psi}_1 \oplus \dots \oplus {\Psi}_K)\\\nonumber
&=\mathrm{diag}(\hat{\Omega}) + \offd(\hat{\Psi}_1) \oplus \dots \oplus \offd(\hat{\Psi}_K),
\end{align}
and correspondingly $\Omega_0=\mathrm{diag}(\Omega_0) + \offd({\Psi}_{0,1}) \oplus \dots \oplus \offd({\Psi}_{0,K})$. Note that while the offdiagonal factors can take on any values, $\mathrm{diag}(\Omega_0)$ is not completely free (for a fully orthogonal parameterization see Section \ref{app:DO} of the supplement).

\noindent\textbf{Interpretation of correlation coefficients.}
The quantities $\frac{[\Psi_k]_{ij}}{\sqrt{[\Psi_k]_{ii}[\Psi_k]_{jj}}}$ do not by themselves correspond to correlation coefficients. Due to the repeating structure of the Kronecker sum each element $[\Psi_k]_{ij}$ will appear in $m_k$ distinct $d_k \times d_k$ symmetric subblocks of $\Omega$, and in each ($\ell$th) subblock it will have a correlation coefficient uniquely defined for that subblock:
\[
\rho_{k,ij, \ell} = \frac{[\Psi_k]_{ij}}{\sqrt{([\Psi_k]_{ii} + c_\ell/d_k)([\Psi_k]_{jj} + c_\ell/d_k)}}
\]
where $c_\ell = \tr(\ell\mathrm{th \: subblock \: of \:} \Omega) - \tr(\Psi_k)$. The overall correlation structure is preserved across the $m_k$ blocks, simply the strength of the correlations are modulated by the contributions of the other $K-1$ additive factors in the block.\footnote{Recall that the $\Psi_k$ need not be positive definite and $c_\ell$ need not be $> 0$.}

\section{Models and Methods}
\label{sec:Models}

Let $X_1, \ldots, X_n$ be $n$ independent realizations of the $K$-way tensor $X$.
Define $\mathbf{x}_i = \mvec{{X}_i^{T}}$ for all $i=1, \ldots, n$.
Define ${\hat{S}} = \frac{1}{n} \sum_{i = 1}^n \mathbf{x}_i \mathbf{x}_i^T$
as the sample covariance. 
%Recall that the mode-$k$ matricization of a tensor $X_i \in \mathbb{R}^{d_1\times \ldots \times d_K}$ denoted by $X_{i,(k)}$ is a matrix with column vectors consisting of all mode-$k$ fibers of dimension $d_k$.
%where we use $X \sim Y$ notation to denote two random vectors $X$ and $Y$ follow the same distribution.
The mode-$k$ Gram matrix $S_k$ and factor-wise covariance $\Sigma^{(k)} = \mathbb{E}[S_k]$ are given by
\begin{align*}
S_k &= \frac{1}{n m_k} \sum_{i=1}^n X_{i,(k)} X_{i,(k)}^T 
%= \sum_{j = 1}^{m_k} \hat{S}(j,j|k)
%\frac{1}{m_k}\sum_{j= 1}^{d_k^-} \sum_{\ell = 1}^{d_k^+}S_{\ell + ((0: d_k-1) + (j-1)d_k)d_k^+,\ell + ((0: d_k-1) + (j-1)d_k)d_k^+}
\quad \mathrm{and}\quad
{\Sigma}^{(k)} =  \frac{1}{m_k}  \mathbb{E}[X_{(k)} X_{(k)}^T],\quad k = 1,\dots,K,
%= \sum_{j=1}^{m_k} \Sigma(j,j|k).%\frac{1}{m_k}\sum_{j= 1}^{d_k^-} \sum_{\ell = 1}^{d_k^+} \Sigma_{\ell + ((0: d_k-1) + (j-1)d_k)d_k^+,\ell + ((0: d_k-1) + (j-1)d_k)d_k^+},
%\frac{d_k}{p}\mathrm{Cov}\left[(1^T_{[d_{1:k-1}]}  \otimes I_{d_k} \otimes 1^T_{[d_{k+1:K}]} ) \mathbf{x}\right],
\end{align*}
noting that the elements of these matrices are effectively inner products between $(K-1)$ order tensors.
$S_k$ is the sample covariance of the data unfolded across the $k$th tensor axis, while $\Sigma^{(k)}$ denotes the population covariance matrix along the same axis.
%tensor unfolding or matricization of $X$ along
%aggregation over samples from the other $K-1$ dimensions:
%by taking the average over 
%\ben S_k = \frac{1}{n m_k} \sum_{i=1}^n X_{i,(k)} X_{i,(k)}^T \een
These Gram matrices $S_k$ can
be represented as elementwise aggregations over entries in the full sample covariance~\eqref{eq::gram}, with locations indexed by $\Psi_{k, i,j}$ as:
\begin{eqnarray}
\label{eq:skk}
[S_k]_{ij} & = & \frac{1}{m_k}\ip{\hat{S}, I_{[d_{1:k-1}]} \otimes
\mathbf{e}_i \mathbf{e}_j^T \otimes I_{[d_{k+1:K}]}}.
\end{eqnarray}
In tensor covariance modeling when the dimension $p$ is much larger than the number of samples $n$, the Gram
matrices $S_k$ are often used to model the rows and columns separately, notably in
the matrix-variate estimation methods of \cite{zhou2014gemini} and \cite{kalaitzis2013bigraphical}. 
Observe that the TeraLasso estimator~\eqref{eq::lossfunc} of the precision matrix can be expressed as
\ben
\label{eq:nonconvobv}
\hat{\Omega} 
%=& \arg\min_{\Omega \in \mathcal{K}_{\mathbf{p}}^\sharp} Q(\Omega) \\\nonumber
%=&\arg\min_{\Omega \in \mathcal{K}_{\mathbf{p}}^\sharp}-\log | \Omega| + \langle \hat{S}, \Omega\rangle + %\sum_{k=1}^K \rho_k m_k |{\Psi}_k^-|_1\\\label{Eq:objFun}
 =\arg\min_{\Omega \in \mathcal{K}_{\mathbf{p}}^\sharp, \|\Omega\|_2 \leq \kappa}
\left\{ -\log | \Omega| + \sum_{k=1}^K m_k \left(\langle S_k, \Psi_k \rangle + \sum_{i\neq j}g_{\rho_k}({[{\Psi}_k]_{ij}}) \right)\right\}
\een
where $\mathcal{K}_{\mathbf{p}}^\sharp$ is the set of positive semidefinite Kronecker sum matrices~\eqref{eq::kppintro}.
%the Gaussian log-likelihood is %using Kronecker sum precision matrices $\Omega = \Psi_1 \oplus \dots \oplus \Psi_K$ can now be expressed in terms of the $\ip{\Psi_k, S_k}$:
%\begin{equation}
%\label{eq:intobj}
%\ell(X_1,\dots, X_n)
%%- \log |\Psi_1 \oplus \dots \oplus \Psi_K| + \langle \hat{S}, \Psi_1 \oplus \dots \oplus \Psi_K\rangle \nonumber\\
%%& = & 
%%&- \log \abs{\Psi_1 \oplus \dots \oplus \Psi_K}+
%%\sum_{k} \ip{ \hat{S}, I_{[d_{1:k-1}]} \otimes \Psi_k \otimes I_{[d_{k+1:K}]}} \\
%= \log \abs{\Psi_1 \oplus \dots \oplus \Psi_K} -
%\sum_{k=1}^K m_k \ip{ S_k , \Psi_k},
%\end{equation}
%and $X_i$ is the $i$th replicate of $X$.

Ignoring regularization, the objective function in curly brackets can be written as $-\log p(\hat{S}|\Omega)$ where $p(\hat{S}|\Omega)= \alpha_\Omega \prod_{k=1}^K p(S_k|\Psi_k)$ and $p(S_k|\Psi_k) =  \exp\left(- \langle m_k S_k, \Psi_k\rangle \right) $, with $\alpha_\Omega$ a normalizing constant. %Thus the Kronecker sum model implies independence of the Kronecker component projections $S_k$ of the sample covariance. 
The non-negativity of the Kullback-Liebler divergence $\int p(S|\Omega)\log \left(\frac{p(S|\Omega)}{\alpha_\Omega \prod_{k=1}^K p(S_k|\Psi_k)} \right)dS$ implies that the Kronecker sum model is a {\em maximum entropy} model, as previously pointed out for the case of $K=2$ by \cite{kalaitzis2013bigraphical}. Alternatively, Kronecker sum models can be characterized as regularizing the precision matrix estimation problem with a minimally informative prior over the set $\mathcal{K}_{\mathbf{p}}^{\sharp}$.

%Observe that the log-likelihood is equivalent to modeling the Gaussian likelihood of each $S_k$ independently with 
%$L(X_{(k)}) \propto \exp\{-\ip{S_k, \Psi_k}\}$.
%\cite{kalaitzis2013bigraphical} motivates from a maximum entropy viewpoint that one can combine these sufficient statistics for $\Psi_k$ and form the fully factorized model
%\[
%p(X) \propto \prod\nolimits_{k=1}^K \exp(- \ip{m_k S_k, \Psi_k}),
%\]
%where $\Psi_k \in \mathbb{R}^{d_k \times d_k}$ are positive definite matrices such that the data tensor
% $X^T \in \mathbb{R}^{d_K\times \dots \times d_1}$ has a precision matrix precisely that of \eqref{Eq:model}. 
%In tensor covariance modeling when the dimension $p$ is much larger than the number of samples $n$, the Gram
%matrices $S_k$ are often used to model the rows and columns separately, notably in
%the matrix-variate estimation methods of \cite{zhou2014gemini} and \cite{kalaitzis2013bigraphical}. 

The class of Kronecker sum matrices is a highly structured, lower-dimensional subspace of $\mathbb{R}^{p\times p}$.
By definition of the Kronecker sum \eqref{Eq:model},
%\[
%\Omega= {\Psi}_1 \oplus \dots \oplus {\Psi}_K = \sum_{k = 1}^K I_{[d_{1:k-1}]} \otimes {\Psi}_k \otimes I_{[d_{k+1:K}]}.
%\]
each entry of $\Psi_k$ appears in $m_k = p/d_k$ entries of $\Omega$. %Each parameter appears in a minimum of $\min_k m_k$ entries of $\Omega$.  
By imposing that the precision matrix have both Kronecker sum structure and sparse structure through the penalty $g_\rho$, TeraLasso is able to effectively regularize the precision estimation problem.
%exploits this $\min_k m_k$ repeating structure to correspondingly reduce the variance of the parameter estimates. The structure of this subspace not only allows us to present statistical convergence bounds for arbitrary factor order $K$, but allows us to develop an estimation algorithm that performs a unified gradient descent directly on $\Omega$ \emph{within the set of possible Kronecker sum precision matrices}. 

We assume the penalty $g_\rho$ is $(\mu,\gamma)$-amenable in the sense of \cite{LOH}.
\begin{definition}[$(\mu,\gamma)$ amenable regularizer]
A regularizer $g_\rho(t)$ is $(\mu,\gamma)$-amenable when $\mu \geq 0$ and $\gamma \in (0,\infty)$ if 
\begin{enumerate}
\item $g_\rho$ is symmetric around zero and $g_\rho(0) = 0$.
\item $g_\rho(t)$ and $g_\rho(t)/t$ are both nondecreasing on $\mathbb{R}^+$.
\item $g_\rho(t)$ is differentiable for all $t \neq 0$.
\item The function $g_\rho(t) + \frac{\mu}{2} t^2$ is convex.
\item $\lim_{t\rightarrow 0^+}g'_\rho (t) = \rho$.
\item $g'_\rho(t) = 0$ for all $t \geq \gamma \rho$. 
\end{enumerate}
\end{definition}
Note that the $\ell 1$ regularizer is $(0,\infty)$-amenable. Example nonconvex penalties in this class include the SCAD penalty \citep{fan2001variable} and the MCP penalty \citep{zhang2010nearly}, both defined in Appendix \ref{supp:regs} of the supplement.

{Observe that for nonzero $\mu$ (i.e. nonconvex $g_\rho$) the constraint on the spectral norm of $\Omega$ ($\|\Omega\|_2 \leq \kappa$) in the TeraLasso objective function \eqref{eq:nonconvobv} is necessary since without it a global minimum may not exist \citep{LOH}. }
For spectral norm constraint parameter set to $\kappa = \sqrt{2/\mu}$, we show (Lemma \ref{lem:ncconv} in the supplement) that \eqref{eq:nonconvobv} with $g_\rho$ $(\mu,\gamma)$-amenable is convex and has a unique global minimizer. For the $\ell 1$ penalty, the objective is always convex and $\kappa$ can be set to infinity.

 %Hence this additive structure does not necessarily encourage weaker correlations.

%\noindent\textbf{Remark 2: Mean estimation.}
%In this work, we focus on the problem of covariance estimation for tensor valued data. In many practical settings, the mean may not be known. In these cases, two independent replicates $X$ and $X'$ suffice to remove the mean, as $X - X'$ has zero mean and covariance $2 \mathrm{Cov}[X]$. More sophisticated joint mean and covariance estimation procedures, for example exploiting known structure (e.g. means constant along certain axes, or PARAFAC type structure \cite{harshman1994parafac}) is left for future work. 

\section{High Dimensional Consistency of the TeraLasso}
\label{sec:highdim}

%We now derive high dimensional consistency results for the TeraLasso estimator. 
%We define an independent isotropic vector with {\em
%  subgaussian} 
%marginals as
%%We first need to define an independent isotropic vector with {\em
%%  subgaussian} 
%%marginals as in Definition~\ref{def:psi2-vector}.
%\begin{definition}
%\label{def:psi2-vector} 
%Let $Y$ be a random vector in $\mathbb{R}^p$
%\bnum
%\item
%$Y$ is called isotropic
%if for every $y \in \mathbb{R}^p$, $\expct{\abs{\ip{Y, y}}^2} = \twonorm{y}^2$.
%\item
%$Y$ is $\psi_2$ with a constant $\alpha$ if for every $y \in \mathbb{R}^p$,
%\bens
%\norm{\ip{Y, y}}_{\psi_2} := \;
%\inf \{t>0: \expct{\exp(\ip{Y,y}^2/t^2)} \leq 2 \}
%\; \leq \; \alpha \twonorm{y}.
%\eens
%\enum
%\end{definition}
%The $\psi_2$ condition on a scalar random variable $V$ is equivalent to
%the subgaussian decay of the tails of $V$, implying
%$\prob{|V| >t} \leq 2 \exp(-t^2/c^2) \; \text{for all} \; t>0$.
%%Throughout this paper, we use the terms $\psi_2$ vector, vector with subgaussian marginals and subgaussian vector interchangeably. 
%

%\noindent{\bf The subgaussian tensor model.}
%\subsection{Subgaussian Model}
%We first present the generative TeraLasso model.
%For a random variable $Y$, the subgaussian (or $\psi_2$) norm of $Y$, denoted by $\|Y\|_{\psi_2}$, is defined as
% \begin{align*}
% \|Y\|_{\psi_2} &= \sup_{q \geq 1} q^{-1/2}(\mathbb{E}|Y|^q)^{1/q}. %;\\
% \mathrm{if} \: \mathbb{E}[Y] = 0, \: \mathrm{then} \:\: &\mathbb{E}[\exp(tY)] \leq \exp(Ct^2\|Y\|_{\psi_2}^2) \quad \forall t \in \mathbb{R}.  \end{align*}
Let $\mathbf{v} =  [v_{1},\dots,v_{p}]^T$ be an isotropic $\psi_2$-subgaussian random vector 
with independent entries $v_{j}$ satisfying $\E v_{j} = 0$, $1 = \E v_{j}^2 \le \norm{v_{j}}_{\psi_2} \leq K$. The $\psi_2$ condition on a scalar random variable $V$ is equivalent to
subgaussian decay of the tails of $V$, implying
$\prob{|V| >t} \leq 2 \exp(-t^2/c^2) \; \text{for all} \; t>0$.
The extension to random vectors is straightforward. Specifically, $\mathbf{x}$ is a subgaussian random vector with 
positive definite covariance $\Sigma \in \mathbb{R}^{p \times p}$ when
\begin{equation}
\label{eq::marginal}
\mathbf{x} = {\Sigma}^{1/2} \mathbf{v},
\end{equation}
where $\Sigma^{1/2}$ denotes a positive definite square root factor of $\Sigma$. We then call ${X}\in \mathbb{R}^{d_1 \times d_2 \times \dots \times d_K}$ to be an order-$K$ 
subgaussian random tensor with covariance $\Sigma$ when $\mathbf{x} = \mathrm{vec}({X}^{T})$ 
is a subgaussian random vector in $\mathbb{R}^p$ defined as in \eqref{eq::marginal}.

%for some covariance matrix $\Sigma \succ 0$

We assume the data $X_1, X_2, \ldots, X_n$ are independent and identically distributed subgaussian random tensors whose inverse covariance follows the Kronecker sum model \eqref{Eq:model}, namely, that 
 $\mathrm{vec}({X_i}^{T}) \sim \mathbf{x}$, 
where $\mathbf{x}$ is a subgaussian random vector in $\mathbb{R}^p$ as defined in \eqref{eq::marginal}. A special case of the subgaussian model is the Gaussian model, for which the zeros in the precision matrix define the conditional independencies among the variables $X_i$.
This conditional independence relation does not hold for the general subgaussian case, but nonetheless strong convergence of the TeraLasso precision matrix estimator is preserved. 

In addition to the subgaussian generative model given above, we make the following technical assumptions on the true model, guaranteeing sparsity in $\Omega$ and its eigenvalues being bounded away from zero and infinity.  
\begin{description}
\item[(A1)] Define the support set of the $k$th Kronecker sum component $\Psi_k$ of the precision matrix by 
$\mathcal{S}_k = \{(i,j): i \neq j, [\Psi_{k}]_{ij} \neq 0\}$ for $k = 1,\dots, K$. 
We assume $\mathcal{S}_k$ is sparse, i.e. $\mathrm{card}(\mathcal{S}_k) \leq s_k$. 
%A2: Let the set $S_\Psi = \{(i,j): i \neq j, {\Psi}_{ij} \neq 0\}$. Then $\mathrm{card}(S_\Psi) \leq s_\Psi$.

\item[(A2)]
The minimal eigenvalue satisfies $\phi_{\min}({\Omega}) = \sum_{k=1}^K \phi_{\min}({\Psi}_k) \geq\underline{k}_\Omega > 0$, and the maximum eigenvalue satisfies $\phi_{\max}({\Omega}) = \sum_{k=1}^K \phi_{\max}({\Psi}_k)  \leq\overline{k}_\Omega < \infty$.

%covariance matrix
%\item[A3]: $\phi_{max}({\Omega}_0) = \sum_{k=1}^K \phi_{max}({\Psi}_k)  \leq\overline{k}_\Omega$.
%\item[A3]: $\log d_k \ll m_kN$ for all $k$.
\end{description}
%A5: Assume $pq + qs_\Psi + ps_\Theta = o(n^{(2/3)} / \log n)$ and $\phi_{min}(\mathbf{\Psi}_0) + \phi_{min}(\mathbf{\Theta}_0) = \Omega(2\sqrt{\frac{(pq + qs_\Psi + p s_\Theta) \log n}{n^{2/3}}})$.
Defining the support set of $\Omega$ as $\mathcal{S} = \{(i,j): i \neq j,\}$, (A1) implies $\mathrm{card}(\mathcal{S}) \leq s = \sum_{k=1}^K m_k s_k$.

%Note that these assumptions only involve identifiable parameters of the model. 

\subsection{Regularization with $\ell 1$ penalty}
%We will first analyze the $\ell 1$ regularized case. The bounds given in this section are not directly related to the nonconvex bounds given later, those bounds will require additional assumptions that make them less broadly applicable. 
With $g_\rho(t) = \rho |t|$, the constraint on $\|\Omega\|_2$ is unnecessary, and \eqref{eq:nonconvobv} becomes
\ben
\label{Eq:objFun}
\hat{\Omega} 
%=& \arg\min_{\Omega \in \mathcal{K}_{\mathbf{p}}^\sharp} Q(\Omega) \\\nonumber
%=&\arg\min_{\Omega \in \mathcal{K}_{\mathbf{p}}^\sharp}-\log | \Omega| + \langle \hat{S}, \Omega\rangle + %\sum_{k=1}^K \rho_k m_k |{\Psi}_k^-|_1\\\label{Eq:objFun}
& =&\arg\min_{\Omega \in \mathcal{K}_{\mathbf{p}}^\sharp}
\left\{ -\log | \Omega| + \sum_{k=1}^K m_k \left(\langle S_k, \Psi_k \rangle +  \rho_k  |{\Psi}_k|_{1,\off} \right)\right\}
\een
where
$\offone{\Psi_k} = \sum_{i\not=j} \abs{[\Psi_{k}]_{ij}}$ is the off diagonal $\ell_1$ norm.
The objective \eqref{Eq:objFun} is jointly convex, and its minimization over $\Omega \in \mathcal{K}_{\mathbf{p}}^\sharp$ has a unique solution (see Section \ref{App:Conv} of the supplement). 
We require an additional assumption
\begin{description}
\item[(A3)]
The sample size $n$ and the component dimensions $d_k$ satisfy the following condition:
\ben
n (\min_{k} m_k)^2 \ge C^2 \kappa(\Sigma_0)^4 (s+ p) (K+1)^2 \log p
\een
where $m_k = p/d_k$ and $\kappa(\Sigma_0) = \phi_{\max}(\Sigma_0)/\phi_{\min}(\Sigma_0)$
is the condition number of $\Sigma_0$.
\end{description}
Note this assumption holds for $n=1$ and sufficiently large $(\min_{k} m_k)^2 > O({p})$, which can hold for any $K >2$.
We obtain the following bounds on the Frobenius and operator norm error of the TeraLasso estimator \eqref{Eq:objFun}. %, both on the full $\Omega$ and the identifiable parameters. %Furthermore, 
%We have the following theorems. 
The constants ($c,C_1, C_2, C_3$) are given in the proof (see the supplement), and do not depend on $K$, $n$, $s$, or $\mathbf{p}$.
\begin{theorem}[Frobenius error bound]
\label{Thm:1}
Suppose the assumptions (A1)-(A3) hold, and that $\hat{\Omega}$ is the minimizer of \eqref{Eq:objFun} with
%\[
%\log d_k \leq \frac{n m_k}{4 C^2}
%\]
%for all $k$.
$\rho_k  \asymp \frac{1}{\underline{k}_\Omega} \sqrt{\frac{\log p}{nm_k}}$. Then with probability at least $1-2(K+1) \exp(-c\log p)$
\begin{align*}
%\|\hat{{\Omega}} - {\Omega}\|_F^2 \leq 2(qr_p^2 + pr_q^2)
\|\hat{{\Omega}} - {\Omega}_0\|_F &\leq  \frac{2 C_1  \twonorm{\Sigma_0}}{\phi_{\min}^2(\Sigma_0)}
\sqrt{(K+1)\left(s + p\right) \frac{\log p}{n \min_k m_k}}.
%\frac{1}{K}\sum_{k=1}^K    \frac{\|\Delta_{\Psi,k}\|_F^2}{{d_k}} &\leq C_2 \frac{\max_k [(d_k + s_k) \log d_k] }{p n},\\
%\|\hat{{\Omega}} - {\Omega}\|_2 &\leq C_3 K   \sqrt{ \frac{(d_{k_*} + \max_k s_k) \log d_{\bar{k }}}{m_{k_*} n}},
\end{align*}
%Furthermore, event $\mathcal{A}$ holds with probability at least $1-2(K+1) \exp(-c\log p)$.

\end{theorem}

%\begin{theorem}[Frobenius error bound]
%\label{Thm:1}
%Suppose the assumptions (A1)-(A2) hold, and that $\hat{\Omega}$ is the minimizer of \eqref{Eq:objFun} with
%%\[
%%\log d_k \leq \frac{n m_k}{4 C^2}
%%\]
%%for all $k$.
%$\rho_k = \delta_{n,k}/\epsilon_k $ for some $0< \epsilon_k < 1$ where $\delta_{n,k} \asymp \frac{1}{\underline{k}_\Omega} \sqrt{\frac{\log p}{nm_k}}$. Then under an event $\mathcal{A} = \cap_{k=0}^K \mathcal{A}_k$ defined in the proof
%\begin{align*}
%%\|\hat{{\Omega}} - {\Omega}\|_F^2 \leq 2(qr_p^2 + pr_q^2)
%\|\hat{{\Omega}} - {\Omega}_0\|_F &\leq  \frac{2 C_1  \twonorm{\Sigma_0}}{\phi_{\min}^2(\Sigma_0)}
%\sqrt{(K+1)\left(s + p\right) \frac{\log p}{n \min_k m_k}}.
%%\frac{1}{K}\sum_{k=1}^K    \frac{\|\Delta_{\Psi,k}\|_F^2}{{d_k}} &\leq C_2 \frac{\max_k [(d_k + s_k) \log d_k] }{p n},\\
%%\|\hat{{\Omega}} - {\Omega}\|_2 &\leq C_3 K   \sqrt{ \frac{(d_{k_*} + \max_k s_k) \log d_{\bar{k }}}{m_{k_*} n}},
%\end{align*}
%Furthermore, event $\mathcal{A}$ holds with probability at least $1-2(K+1) \exp(-c\log p)$.
%
%\end{theorem}

%We therefore present a bound involving the error in the offdiagonal factors $\offd({\Psi}_k)$ and the unified diagonal $\mathrm{diag}(\Omega)$. 

 %Thus, this trace ambiguity does not affect $\Omega$ or the off diagonals of the factors $\Psi_k$.

%is thus identifiable. We will present estimator bounds with respect to this decomposition.

%The 
\begin{theorem}[Factorwise and L2 error bounds]
\label{Thm:1Spec}
Suppose the conditions of Theorem \ref{Thm:1} hold. Then with probability at least $1-2(K+1) \exp(-c\log p)$, %with probability at least $1-4(K+1) \exp(-c\log p)$ we have % the following will hold:
\begin{align}\label{eq:fcbd}
\frac{\|\mathrm{diag}(\hat{\Omega}) -\mathrm{diag}(\Omega_0)\|_2^2}{(K+1)\max_k d_k} &+ \sum_{k=1}^K \frac{\|\offd(\hat{\Psi}_k - {\Psi}_{0,k})\|_F^2}{d_k}\nonumber\\ &\leq C_2 (K+1)\left(1 + \sum_{k=1}^K \frac{s_k}{d_k} \right)  \frac{\log p}{n\min_k  m_k}%\\
%&\leq cK^2 \frac{\left((\max_k d_k) + \sum_{k=1}^K s_k\right)\log p }{pn}
\end{align}
and as a result
\begin{align*}
%\|\hat{{\Omega}} - {\Omega}\|_F^2 \leq 2(qr_p^2 + pr_q^2)
%\|\hat{{\Omega}} - {\Omega}\|_F^2 &\leq  C_1 K(s+p) \max_k \frac{\log d_k}{m_k n},\\
%\frac{1}{K}\sum_{k=1}^K    \frac{\|\Delta_{\Psi,k}\|_F^2}{{d_k}} &\leq C_2 \frac{\max_k [(d_k + s_k) \log d_k] }{p n},\\
\|\hat{{\Omega}} - {\Omega}_0\|_2 \leq C_3(K+1) \sqrt{\left( \frac{p}{(\min_k m_k)^2}\right)\left( 1+ \sum_{k=1}^K \frac{s_k}{d_k} \right)\frac{\log p}{n}}.
%&\leq C_3 K   \sqrt{ \frac{(d_{{k}_*} + \max_k s_k) \log d_{k_*}}{m_{{k}_*} n}},
\end{align*}

\end{theorem}
Theorems \ref{Thm:1} and \ref{Thm:1Spec} are proved in Section \ref{app:Pf} of the supplement.
%The first bound in Theorem \ref{Thm:1Spec} 
Observe that the theorem predicts \eqref{eq:fcbd} that, for fixed $n$ and $K>2$, the estimation error of the parameters of $\Omega$ converges to zero as the dimensions $\{d_k\}$ go to infinity (recall that $p=\prod_{k=1}^K d_k$). This implies that for increasing dimensions the TeraLasso will converge even for a single sample $n=1$.  
%by \eqref{eq:fcbd} the parameters of $\Omega$ can be estimated in the single sample regime when the dimension is large ($m_k > d_k$). When $K >2$ and all $d_k$ increase, $m_k$ will grow faster than $d_k$, leading to convergence. 
Due to the repeating structure and increasing dimension of $\Omega$, the parameter estimates can converge without the overall Frobenius error $\|\hat{\Omega} - \Omega_0\|_F$ converging. %To see this, suppose $\Delta_\Omega = \Delta \oplus \mathbf{0}\oplus \dots\oplus \mathbf{0}$. Then
%$
%\|\Delta_\Omega\|_F = \sqrt{m_1}\|\Delta\|_F.
%$
%Note that highly rectangular tensors will not enjoy fast convergence rates, e.g. if for some $k$, $d_k \gg \prod_{\ell \neq k} d_\ell$. 
%where $m_1$ is large. 

% \textbf{Proof Intuition.}  The intuition behind the proof of Theorems \ref{Thm:1} and \ref{Thm:1Spec} can be summarized as follows. 
% Dimensionality is small (in fully parameterized and even more in sparse land), this is responsible for the $(s+p)$ term in Theorem \ref{Thm:1} and the $1 + \sum_{k=1}^K \frac{s_k}{d_k}$ term in Theorem \ref{Thm:1Spec} (where the division by $d_k$ arises from the scaling of the factorwise Frobenius norms). 
% Each parameter in $\Psi_k$ appears (without scaling) in exactly $m_k$ elements of $\Omega$. Something about norm of basis associated with each parameter under the ?? parameterization. Thus the $\hat S$ is getting averaged. By the eigenvalue assumptions, these elements are not too dependent, implying an ``effective" increase in the number of samples by a factor proportional to $m_k$. Hence, up to a constant, can reduce the usual $\sqrt{\frac{\log p}{n}}$ scaling associated with the vanilla graphical Lasso by the amount of averaging (i.e. $m_k$). 

%------------------------

\textbf{Comparison to GLasso.} The Frobenius norm bound in Theorem \ref{Thm:1} improves on the subgaussian GLasso rate of
\cite{rothman2008sparse,zhou2011high} by a factor of $\min_k m_k$.
%to be
%\[
%\|\hat{\Omega} - \Omega\|_F =O_p \left(\sqrt{{\frac{(p + s)\log p}{n}}}\right),
%\]
%and that the Frobenius norm bound in Theorem \ref{Thm:1} improves on this result by a factor of $\min_k m_k$, exploiting the redundancy across dimensions. 
If the dimensions are equal ($d_k = p^{1/K}$ and $s_k$ are constant over $k$) and $K$ is fixed, Theorem \ref{Thm:1Spec} implies $\|\Delta_k\|_F = O_p\left( \sqrt{\frac{ (d_k + s_k) \log p }{m_k n}}\right)$,
indicating that TeraLasso with $n$ replicates estimates the identifiable representation of $\Psi_{k}$ with an error rate equivalent to that of GLasso with $\Omega = \Psi_k$ and $nm_k$ available replicates. %This latter observation implies rate optimality when $s_k = O(d_k)$.

\textbf{Independence along an axis.} Suppose that the data tensor $X$ is i.i.d. along the first axis, i.e. $\Psi_1 = I_{d_1}$. Then instead of a $K$-way TeraLasso, a $K-1$ model with $n d_1$ replicates would suffice, yielding a factorwise error bound (Theorem \eqref{Thm:1Spec}) of $O\left(\sqrt{ \left(1 + \sum_{k=2}^K \frac{s_k}{d_k}\right) \frac{\log (p/d_1)}{n d_1 \min_{k>1} (m_k/d_1)}}\right)$, as compared to the factorwise error bound of $O\left(\sqrt{ \left(1 + \sum_{k=2}^K \frac{s_k}{d_k}\right) \frac{\log (p)}{n \min_{k} m_k}}\right)$ associated with the full $K$-way model (since $s_1 = 0$). Hence having a priori knowledge of independence (allowing the use of the $K-1$ model) does not meaningfully improve the rate over the the original $K$-way model so long as $\min_{k>1} m_k \approx \min_{k} m_k$. A similar satisfying result holds for the Frobenius error bound in Theorem \ref{Thm:1}.

\begin{comment}
The proof of these estimation bounds is founded upon the following subgaussian concentration bound on the factor sample covariances formed by tensor matricization.
\begin{lemma}[Subgaussian Concentration]
\label{Cor:Chaos}

Suppose that $\log p \ll m_k n$ for all $k$. Then, with probability at least $1 - 2\exp (-c' \log p)$,
\begin{align*}
|\langle \Delta, S_k - \Sigma_0^{(k)} \rangle| 
&\leq C |\Delta|_1 \|\Sigma_0\|_2 \sqrt{\frac{ \log p}{m_k n}} 
\end{align*}
%with probability at least $1 - 2\exp(-c' \log d_k)$
for all $\Delta \in \mathbb{R}^{d_k\times d_k}$, where $c'$ is a constant depending on $C$ given in the proof.
%\end{comment}

\end{lemma}
This lemma is proved in Appendix \ref{App:ChaosPf}. Note that $\|\Sigma_0\| \leq \frac{1}{\sum_k \frac{1}{\|\Psi^{-1}_k\|}}$.
\end{comment}

\subsection{Nonconvex Regularizers and Single Sample Support Recovery}\label{sec:Nonconv}
%In this section, we consider the more general case of using nonconvex regularizers as in \cite{LOH}. 
Nonconvex regularization will provide nonasymptotic guarantees on the elementwise estimation error, implying strong, single sample support recovery guarantees when the smallest nonzero element of $\Omega_0$ is bounded from below. On the other hand, these stronger results require more restrictive assumptions on sparsity of the precision matrix and its smallest nonzero element.
%\subsection{Objective Function}
%The objective function remains the same as \eqref{eq::lossfunc}, 
%\[
%\hat{\Omega}
%=\arg\min_{\Omega \in \mathcal{K}_{\mathbf{p}}^\sharp, \|\Omega\|_2 \leq \kappa}
%\left\{-\log | \Omega| + \langle \hat{S}, \Omega\rangle + \sum_{k=1}^K m_k \sum_{i\neq j}g_\rho({[{\Psi}_k]_{ij}})
%\right\}\numberthis \label{eq:nonconvobv}
%\]
%In this section, 
%\subsection{High Dimensional Consistency}
%We first introduce the following oracle estimator, which estimates $\Omega$ given knowledge of the true support of $\Omega_0$. Since the support is known in the oracle setting, no regularization is required and we have
%\[
%\hat{\Omega}^{\mathcal{O}} = \arg\min_{\Omega \in \mathcal{K}_{\mathbf{p}} }\left\{\langle \hat{S},%\Omega\rangle - \log |\Omega| | \supp(\Omega) \subseteq \supp(\Omega_0) \right\}.
%\]
Specifically, we will require the following:
\begin{description}
\item[(A4)] The degree (maximum number of nonzero edges connected to a node) of the sparsity graph of each factor $\Psi_k$ is bounded by a constant $d$.
\item[(A5)] The sample size satisfies: $n \min_k m_k \geq c_0 d^2 \log p$ for some $c_0$ large enough.
\item[(A6)] There exist constants $c_\infty, c_3$ such that
$\|(\Omega_0 \otimes \Omega_0)_{\mathcal{S}\mathcal{S}}\|_\infty \leq c_\infty$
and
\[
\min_{[i,j] \in \mathcal{S}} |[\Omega_{0}]_{ij}| \geq \rho(\gamma + 2c_\infty) + c_3 \sqrt{\frac{\log p}{n\min_k m_k}}.
\]
\end{description}
In (A6) the notation $A_{\mathcal S \mathcal S}$ denotes the submatrix of $A$ formed by extracting the rows and columns corresponding to the index set $\mathcal{S}$.
Under these assumptions we have the following result. % can now state the bounds for the nonconvex estimator. 
\begin{theorem}[Nonconvex Regularizers]\label{thm:NonCon}
%Suppose that the degree of each factor $\Psi_k$ is bounded by $d$. Suppose further that the sample size is $n \min_k m_k \geq c_0 d^2 \log p$, 
Suppose the regularizer $g_\rho$ in \eqref{eq:nonconvobv} is $(\mu,\gamma)$-amenable, and $\kappa = \sqrt{2/\mu}$. Then with probability at least $1-2(K+1) \exp(-c\log p)$ as in Theorem \ref{Thm:1}, \eqref{eq:nonconvobv} has a unique stationary point $\hat{\Omega}$ (given by the oracle estimator defined in the supplement), with (for all $k$)
\[
\|\mathrm{offd}(\hat{\Psi}_k - \Psi_{0,k})\|_{\max} \leq \|\hat{\Omega} - \Omega_0\|_{\max} \leq c_3 (K + 1) \sqrt{\frac{\log p}{n \min_k m_k}},
\]
\[
\|\mathrm{offd}(\hat{\Psi}_k - \Psi_{0,k})\|_{F} \leq c_3 (K + 1) \sqrt{\frac{s_k \log p }{n \min_k m_k}},
\]
\[
\|\hat{\Omega} - \Omega_0\|_{F} \leq c_3 (K + 1) \sqrt{\frac{(s+p)\log p }{n \min_k m_k}},
\]
\[
\|\hat{\Omega} - \Omega_0\|_{2} \leq c_3 d(K + 1) \sqrt{\frac{\log p }{n \min_k m_k}}.
\]
\end{theorem}
The proof is given in Section \ref{app:NonConv} in the supplement, and uses arguments analogous to those of \cite{LOH} along with concentration inequalities arising from the structure of the TeraLasso model.

Theorem \ref{thm:NonCon} implies that the elements (of both $\Omega$ and the offdiagonals of $\Psi_k$), and thus the support (of both $\Omega$ and the $\Psi_k$) can be estimated using a single sample ($n=1$) provided $\min_k m_k$ is large enough. The Frobenius norm convergence rates (both factorwise and overall) for the convex and nonconvex regularizers remain effectively the same (comparing Theorem \ref{thm:NonCon} to Theorems \ref{Thm:1} and \ref{Thm:1Spec}), hence the primary benefit of the nonconvex bound is the ability to guarantee support recovery in exchange for additional assumptions.

\section{TG-ISTA Algorithm}\label{sec:alg}

%When the BiGLasso model was proposed in \cite{kalaitzis2013bigraphical}, a simple alternating-minimization estimation algorithm was given. It required, however, the diagonal elements of the inverse covariance to be known a priori, which is unrealistic in practice. Here, we derive a joint first-order primal algorithm for BiGLasso and TeraLasso estimation that is not subject to this limitation.

%We propose a first-order method on the primal objective \eqref{Eq:objFun}. 
%As the TeraLasso objective \eqref{Eq:objFun} is non-differentiable because of the $\ell_1$ penalties, 
In this section, we introduce an iterative soft thresholding (ISTA) method, restricted to the convex set $\mathcal{K}_{\mathbf{p}}^\sharp$ of possible positive semidefinite Kronecker sum precision matrices, to implement the TeraLasso optimization \eqref{eq:nonconvobv}. We call this implementation Tensor Graphical Iterative Soft Thresholding (TG-ISTA). 

\subsection{Composite gradient descent and proximal first order methods}

Our goal is to solve the objective \eqref{eq:nonconvobv}.
This objective function can be decomposed into the sum of a differentiable function $f$ and a lower semi-continuous but nonsmooth function $g$: for $\Omega \in \mathcal{K}_{\mathbf{p}}$:
\begin{align}
\label{eq:oj2}
Q(&{\Psi}_1,\dots,{\Psi}_K) = f(\Omega) + g(\Omega),\: \mathrm{where\: for\:}  \langle \hat{S}, \Omega\rangle  = \sum\nolimits_{k=1}^K m_k \langle S_k, \Psi_k \rangle, \nonumber\\
f(\Omega) %&= -\log | {\Psi}_1 \oplus \dots \oplus {\Psi}_K| + \sum_{k=1}^K m_k \langle S_k, \Psi_k \rangle \nonumber\\\nonumber
&= \left. -\log|\Omega| + \langle \hat{S}, \Omega\rangle \right|_{\Omega \in \mathcal{K}_{\mathbf{p}}},\:\:
g(\Omega) = \sum\nolimits_{k=1}^K m_k  \sum_{i\neq j} g_{\rho_k}(  [{\Psi}_k]_{ij}).%\quad \mathrm{and} 
\end{align}
For objectives of this form, \cite{nesterov2007gradient} proposed a first order method called composite gradient descent. Composite gradient descent has been specialized to the case of $g = |\cdot|_1$ and is widely known as Iterative Soft Thresholding (ISTA) (see for example \cite{tseng2010approximation,combettes2005signal,beck2009fast,nesterov1983method,nesterov2004introductory}). An extension to nonconvex regularizers $g$ is given in \cite{loh2013regularized}.

The linearity of the constraint set $\mathcal{K}_{\mathbf{p}}$ suggests the use of gradient descent where the gradients are projected onto the associated $1 - K + \sum_{k=1}^K d_k^2$ dimensional \emph{linear subspace}. The positive definite restriction can then be handled in a similar way as \cite{GISTA} did for the vanilla GLasso. We therefore derive composite gradient descent in the linear subspace $\mathcal{K}_{\mathbf{p}}$ of $\mathbb{R}^{p^2}$, creating a positive definite sequence of iterates $\{\Omega_t\}$ given by the recursion
\begin{align}\label{eq:cmpgrd}
\Omega_{t+1} \in \arg \min_{\Omega \in \mathcal{K}^\sharp_{\mathbf{p}}} \left\{ \frac{1}{2}\left\| \Omega - \left(\Omega_t - \zeta_t \mathrm{Proj}_{\mathcal{K}_{\mathbf{p}}}(\nabla f(\Omega_t))\right) \right\|_F^2 + \zeta_t g(\Omega) \right\},
\end{align}
where the initial matrix $\Omega_0 \in \mathcal{K}^\sharp_{\mathbf{p}}$ can be chosen as the identity. We enforce the positive semidefinite constraint at each step by performing backtracking line search to find a suitable stepsize $\zeta_t$ (see Algorithm \ref{alg:highlevel}) \citep{GISTA}. We decompose and solve the problem \eqref{eq:cmpgrd} for the case of the TeraLasso objective in Section \ref{sec:decc} below.
%This method 
%For the step size $\zeta_t$, we use a backtracking algorithm detailed in Section ?? of the supplement. 
%Convergence to the optimal solution in ${\mathcal{K}}_{\mathbf{p}}$ is guaranteed since the positive definite cone is an open subset of $\mathbb{R}^{p \times p}$ and the objective $Q$ goes to infinity on the boundary of the positive semidefinite cone. 
%The rates of convergence to the global minimum are given in in the next section. %  (see Section \ref{sec:convrt}).

\subsection{TG-ISTA implementation of TeraLasso}
\label{sec:decc}

To apply this form of composite gradient descent to the TeraLasso objective, the projected gradient of $f(\Omega)$ is required for \eqref{eq:oj2}. For simplicity, consider the $\ell 1$ regularized case. The general nonconvex case is described in the next section and the supplement.
%restricted to the space $\mathcal{K}_{\mathbf{p}}$.
Since the gradient of $\langle \hat{S}, \Omega\rangle$ with respect to $\Omega$ is $\hat{S}$ %the gradient of $\langle \hat{S}, \Omega \rangle$ for $\Omega$ projected onto the subspace $\mathcal{K}_{\mathbf{p}}$ is 
(Lemma \ref{lem:Projj} in the supplementary material) %the projection of $\hat{S}$ onto the subspace $\mathcal{K}_{\mathbf{p}}$. This is given by (Lemma \ref{lem:Projj} in the supplementary material)
\begin{align}\label{eq:ingrad}
\nabla_{\Omega \in \mathcal{K}_{\mathbf{p}}} (\langle \hat{S}, \Psi_1 \oplus \dots \oplus \Psi_k\rangle) =& \mathrm{Proj}_{\mathcal{K}_{\mathbf{p}}}(\hat{S})%\\\nonumber
% &= \left(S_1 - \frac{K-1}{K} \frac{\tr(S_1)}{d_1} I_{d_1}\right) \oplus \dots \oplus \left(S_K-\frac{K-1}{K} \frac{\tr(S_K)}{d_K} I_{d_K}\right) \\\nonumber
=  \tilde{S}_1 \oplus \dots \oplus \tilde{S}_K = \tilde{S} \quad \mathrm{where}\nonumber\\
\tilde{S}_k &= S_k - \frac{K-1}{K} \frac{\tr(S_k)}{d_k} I_{d_k}.
\end{align}
While many different conventions for parameterizing the projection using the $\tilde{S}_k$ are possible, the projection remains unique. Alternate parameterizations will not affect the convergence or output of the algorithm. Since the gradient of $-\log|\Omega|$ with respect to $\Omega$ is $\Omega^{-1}$ \citep{boyd2009convex}, the projected gradient takes the form %As $\mathcal{K}_{\mathbf{p}}$ is a linear subspace of $\mathbb{R}^{p\times p}$ the intrinsic gradient is the projection of $\Omega^{-1}$ onto $\mathcal{K}_{\mathbf{p}}$. %Let $\mathrm{proj}_{\mathcal{K}_{\mathbf{p}}}(\cdot)$ be the projection operator onto $\tilde{K}_{\mathbf{p}}$. 
%Thus the gradient of the log determinant is
\begin{align}\label{eq:logGrad}
\nabla_{\Omega \in \mathcal{K}_{\mathbf{p}}} & \left(-\log | \Omega |\right)%\\\nonumber &=  \mathrm{proj}_{\mathcal{K}_{\mathbf{p}}}\left((\Psi_1 \oplus \dots \oplus \Psi_K)^{-1}\right) 
= \mathrm{Proj}_{\mathcal{K}_{\mathbf{p}}}\left(\Omega^{-1}\right) = G_1^t \oplus \dots \oplus G_K^t
\end{align}
The matrices $G_k^t \in \mathbb{R}^{d_k \times d_k}$ are computed via the expressions given in Lemma \ref{lem:Projj} in the supplement. Combining \eqref{eq:ingrad} and \eqref{eq:logGrad}, the projected gradient of the objective $f(\Omega_t)$ is 
\begin{align}\label{eq:allGrad}
\mathrm{Proj}_{\mathcal{K}_{\mathbf{p}}}(\nabla f(\Omega_t)) = \tilde{S} - (G_1^t \oplus \dots \oplus G_K^t).
\end{align}
%We can now show the following lemma. 
%PUT THIS IN APPENDIX, NOT SUPPLEMENT
\begin{lemma}[Decomposition of objective]\label{lem:decomp}
For $\Omega_t, \Omega \in \mathcal{K}_{\mathbf{p}}$ of the form
\begin{align*}
\Omega_t &= \Psi^t_1 \oplus \dots \oplus \Psi^t_K\quad \mathrm{and} \quad
\Omega = \Psi_1 \oplus \dots \oplus \Psi_K,
\end{align*}
%the objective \eqref{eq:cmpgrd} can be decomposed into $K$ fully independent optimization problems. 
the unique solution to \eqref{eq:cmpgrd} with $g_\rho = |\cdot|_1$ is given by $\Omega_{t+1} = \Psi_1^{t+1}\oplus \dots \oplus \Psi_K^{t+1}$
where
\begin{equation}\label{eq:newpsi}
\Psi_k^{t+1} = \arg \min_{\Psi_k \in \mathbb{R}^{d_k \times d_k}} \frac{1}{2}\left\| \Psi_k - (\Psi_k^t - \zeta_t (\tilde{S}_k - G_k^t)) \right\|_F^2 + \zeta_t \rho_k  |{\Psi}_k|_{1,\off}.
\end{equation}

\end{lemma}
The proof is in supplement Section \ref{app:decomp}.
The right hand side of \eqref{eq:newpsi} is the proximal operator of the $\ell_1$ penalty on the off diagonal entries. The solution has closed form, as given in \cite{beck2009fast},
\begin{align}\label{eq:upp}
\Psi_k^{t+1} = \mathrm{shrink}_{\zeta_t\rho_k}^- (\Psi_k^t - \zeta_t (\tilde{S}_k - G_k^t)),
\end{align}
where we define the off diagonal shrinkage operator $\mathrm{shrink}^-_{\rho} (\cdot)$ as
\begin{align}
[\mathrm{shrink}^-_{\rho}({M})]_{ij} = \left\{\begin{array}{ll} \sign(M_{ij})(|M_{ij}| - \rho)_+& i \neq j \\ M_{ij}& \mathrm{o.w.} \end{array} \right.
\end{align}
%Combining \eqref{eq:shr} and \eqref{eq:digg} gives a unified expression

The composite gradient descent algorithm is given in Algorithm \ref{alg:highlevel}. %, %, with the complete algorithm shown in Algorithm \ref{alg:lowlevel}.
%and the complete algorithm is shown as Algorithm \ref{alg:lowlevel} in the supplement.
In Section \ref{supp:conv} of the supplement, a scalable geometric rate of convergence of TG-ISTA to the global minimum is derived (Theorem \ref{thm:conv}).
%This geometric convergence makes TG-ISTA converge very quickly. 
In Section \ref{supp:compcomp} of the supplement we show that each iteration can be computed in $
O\left(pK + \sum_{k=1}^K d_k^3\right)$ floating point operations.

\begin{algorithm}[h]
\caption{TG-ISTA implementation of TeraLasso (high level)}
\label{alg:highlevel}
\begin{algorithmic}[1]
%\STATE PUT SEMIDEFINITE CHECK EXPLICITLY. MATCH STRUCTURE. SAY IDENTICAL TO ALG 1 IN THEIR PAPER
\STATE Input: SCM factors $S_k$, regularization parameters $\rho_i$, backtracking constant $c \in (0,1)$, initial step size $\zeta_{1,0}$, initial iterate $\Omega_{\mathrm{init}} = I \in \mathcal{K}_{\mathbf{p}}^\sharp$.

\WHILE{ not converged}

\STATE Compute the subspace gradient $\mathrm{proj}_{\mathcal{K}_{\mathbf{p}}}\left(\Omega^{-1}_t\right) = G_1^t \oplus \dots \oplus G_K^t$.
\STATE \emph{Line search}: Let stepsize $\zeta_t$ be the largest element of $\{c^j \zeta_{t,0}\}_{j =1,\dots}$ such that the following are satisfied for $\Psi_k^{t+1} = \mathrm{shrink}^-_{\zeta_t \rho_k}(\Psi_k^t - \zeta_t (\tilde{S}_k - G_k^t))$
\[
\Psi_1^{t+1} \oplus \dots \oplus \Psi_K^{t+1} \succ 0 \quad \mathrm{and}\quad f(\{\Psi_k^{t+1}\}) \leq \mathcal{Q}_{\zeta_t}(\{\Psi_k^{t+1}\}, \{\Psi_k^{t+1}\})
\]

%\FOR{$j = 0,1,\dots$}
%\STATE $\zeta_t \gets c^j \zeta_{t,0}$.
\FOR{$k = 1,\dots,K$}

\STATE \emph{Composite objective gradient update}: 
\[
\Psi_k^{t+1} \gets \mathrm{shrink}^-_{\zeta_t \rho_k}(\Psi_k^t - \zeta_t (\tilde{S}_k - G_k^t)).
\]
\ENDFOR
\STATE Compute Barzilai-Borwein stepsize $\zeta_{t+1,0}$ via \eqref{eq:BarBor} in Supplement \ref{supp:stepsz}.
%\IF{$ \Psi_1^{t+1} \oplus \dots \oplus \Psi_K^{t+1}$ is positive definite \AND decreases the objective sufficiently}
 %\STATE Stepsize $\zeta_t$ is acceptable; break
%\ENDIF
%\STATE Compute next initial stepsize $\zeta_{t+1,0}$. 
%\ENDFOR

\ENDWHILE
\STATE Return $\{\Psi_k^{t+1}\}_{k =1 }^K$.

\end{algorithmic}
\end{algorithm}

%why safe?

%\section{Numerical Convergence}
%\label{sec:conv}
%MAYBE JUST MENTION IN SUPPLEMENT?

%\subsection{Computational Cost}
%Algo converges to global minimum because convex. 
%Say what rate algo converges. 
 %In Section \ref{supp:conv} of the supplement, we prove that TG-ISTA converges geometrically to the global optimum, and present a bound on the number of iterations required for convergence.

\subsection{TG-ISTA for a nonconvex regularizer}%\label{sec:Nonconv}

%\subsection{Objective Function and Algorithm}

%For $\kappa = \sqrt{2/\mu}$, %we show (Lemma \ref{lem:ncconv} in the supplement) that 
%\eqref{eq:nonconvobv} is convex and has a unique global minimizer. 

The estimation algorithm is largely the same as Algorithm \ref{alg:highlevel}, except with an additional term added to the gradient. Specifically, the updates are of the form
\begin{equation}\label{eq:ncalg}
\Omega^{t + 1} = \mathrm{shrink}^-_{\zeta\rho}\left(\Omega^t - \zeta\nabla \bar{\mathcal{L}}_n(\Omega^t)\right)
\end{equation}
where $\zeta$ is the step size and 
\[
\bar{\mathcal{L}}_n(\Omega) = -\log | \Omega| + \langle \hat{S}, \Omega\rangle + \sum_{k=1}^K m_k \sum_{i\neq j}\left(g_\rho({[{\Psi}_k]_{ij}}) - \rho |{\Psi}_k]_{ij}|\right).
\]
The update \eqref{eq:ncalg} can be decomposed into the factorwise updates
\[
\Psi_k^{t + 1} = \mathrm{shrink}^-_{\zeta\rho}\left(\Psi_k^t - \zeta (\tilde{S}_k - G^t_k + q'_\rho(\Psi_k)) \right)
\]
where $q'_\rho(t) = \frac{d}{dt} (g_\rho(t) - \rho |t|)$ for $t \neq 0$ and $q'_\rho(0) = 0$. 
These updates can be inserted into the framework of Algorithm \ref{alg:highlevel}, with an added step of enforcing the $\|\Omega\|_2 \leq \kappa$ constraint, e.g. via step size line search. The algorithm is summarized in Algorithm \ref{alg:highlevelNC} in Supplement \ref{supp:nonconvalg}.

\begin{theorem}[Convergence of Algorithm \ref{alg:highlevelNC}]
Algorithm \ref{alg:highlevelNC} will converge to the global optimum when the norm constraint parameter $\kappa$ is chosen to be less than or equal to $\sqrt{2/\mu}$.
\end{theorem}
\begin{proof}
Follows since for $\kappa \leq \sqrt{2/\mu}$ the objective \eqref{eq:nonconvobv} is convex on the convex constraint set $\{\Omega \in \mathcal{K}_{\mathbf{p}} | \Omega \succ 0, \|\Omega\|_2 \leq \kappa\}$ (Lemma \ref{lem:ncconv}, supplement).
\end{proof}

%\section{Experiments}
\section{Validation on synthetic data}
\label{sec:exp}

Random graphs were created for each factor $\Psi_k$ using both an Erdos-Renyi (ER) topology and a random grid graph topology\footnote{Code for experiments is included in the supplementary material and can be found at https://github.com/kgreenewald/teralasso.}. These ER type graphs were generated according to the method of \cite{zhou2010time}. Initially we set $\Psi_k = 0.25I_{n \times n}$, where $n = 100$, and randomly select $q$ edges and update $\Psi_k$ as follows: for each new edge $(i,j)$, a weight $a>0$ is chosen uniformly at random from $[0.2,0.4]$; we subtract $a$ from $[\Psi_k]_{ij}$ and $[\Psi_k]_{ji}$, and increase $[\Psi_k]_{ii},[\Psi_k]_{jj}$ by $a$. This keeps $\Psi_k$ positive definite. We repeat this process until all edges are added. Finally, we form $\Omega = \Psi_1 \oplus \dots \oplus \Psi_K$. An example 25-node, $q = 25$ ER graph and precision matrix are shown in Figure \ref{Fig:GER}. %, along with a 225-node precision matrix formed from the Kronecker sum of a 25-node and a 9-node ER precision matrix.
%In this section, we verify the performance of TeraLasso on synthetic data. 
\begin{figure}[h]
\begin{subfigure}[b]{\textwidth}
\centering
\includegraphics[width=2.8in]{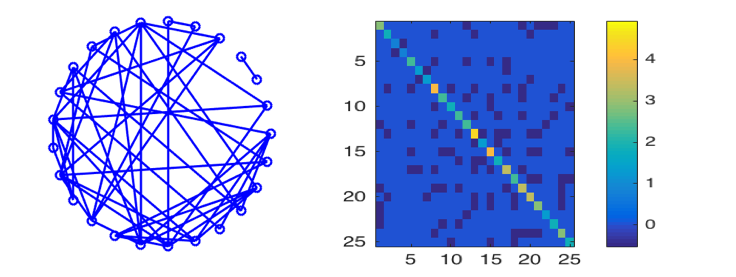}%\includegraphics[width=2in]{FiguresTalk/
%FigNoneClassE-eps-converted-to.pdf}
\caption{Random Erdos-Renyi (ER) graph with 25 nodes and 50 edges}
\end{subfigure}
\hfill
\begin{subfigure}[b]{\textwidth}
\centering
\includegraphics[width=2.8in]{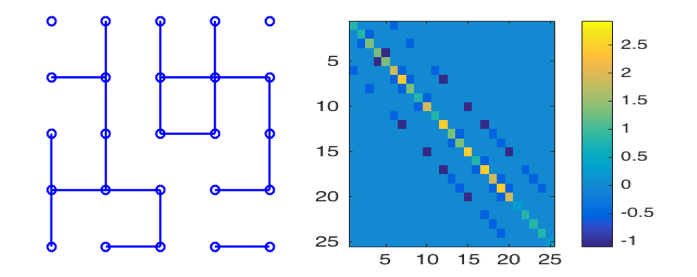}
\caption{Random grid graph (square) with 25 nodes and 26 edges}
\end{subfigure}
\caption{Example Erdos-Renyi and random grid graphs. Left: Graphical representation. Right: Corresponding precision matrix $\Psi$. }
%Right: Full $K=2$ ($\Omega = \Psi_1 \oplus \Psi_2$), 225-node Kronecker sum of $\Psi_1$ with an ER graph of size 9 ($\Psi_2$).  }
\label{Fig:GER}
\end{figure}
The random grid graph is produced in a similar way, with the exception that edges are only allowed between adjacent nodes, where the nodes are arranged on a square grid (Figure \ref{Fig:GER}(b)). Algorithm 1 in Section \ref{App:Alg} of the supplement describes how the random vector $\mathbf{x} = \mathrm{vec}(X^T)$ is generated under the Kronecker sum model. %Example code for our experiments can be found at GITHUB LINK.

\subsection{Validation of theoretical algorithmic convergence rates}

To verify the geometric convergence of the TG-ISTA implementation (Theorem \ref{thm:conv} in the supplement), we generated Kronecker sum inverse covariance graphs and plotted the Frobenius norm between the inverse covariance iterates $\Omega_t$ and the optimal point $\Omega^*$. We set the $\Psi_k$ to be random ER graphs with $d_k$ edges where $d_1 = \dots = d_K$, and determined the value for $\rho_k = \rho$ using cross validation. Figure \ref{Fig:Conv} shows the results as a function of iteration, for a variety of $d_k$ and $K$ configurations and the $\ell$1 convex regularization. Figure \ref{Fig:ConvNon} in Supplement \ref{supp:nonconvalg} repeats these experiments with the nonconvex SCAD and MCP penalties, using the same random seed.
For comparison, the statistical error of the optimal point is also shown, as optimizing beyond this level provides reduced benefit. As predicted, linear or better convergence to the global optimum is observed. %As expected it does not appear that increasing $d_k$, $p$, or $K$ significantly affect the number of iterations required. 
The small number of iterations combined with the low computational cost per iteration confirm the algorithmic efficiency of the TG-ISTA implementation of TeraLasso. Additional numerical experiments demonstrating fast convergence on larger scale problems are given in Section \ref{supp:compcomp} of the supplement.

\begin{comment}
\begin{figure}[h]
%\centering
%\includegraphics[width=3in]{FiguresTalk/FigNoneE-eps-converted-to.pdf}%\\
%\includegraphics[width=2in]{FiguresTalk/FigNoneClassE-eps-converted-to.pdf}

\begin{subfigure}[b]{.5\textwidth}
\centering
\includegraphics[width=1.3in]{Figures/BIGRel-eps-converted-to.pdf}
%\includegraphics[width=3in]{Figures/BIGAbs-eps-converted-to.pdf}
\caption{Original BiGLasso}
\end{subfigure}
\hfill
\begin{subfigure}[b]{.5\textwidth}
\centering
\includegraphics[width=1.3in]{}
%\includegraphics[width=2in]{Figures/WindAnomK2NormWest-eps-converted-to.pdf}
\caption{Proposed TG-ISTA}
\end{subfigure}
\caption{BiGLasso algorithm \citep{kalaitzis2013bigraphical} and our TG-ISTA approach, estimating a $K=2$ Kronecker sum of random ER graphs with $\mathbf{p} = [75, 75]$, i.e. $p = 5625$, and $n=10$. Normalized Frobenius norm between iterates $\Omega_t$ and converged $\Omega_*$ are shown for each algorithm. Note the greatly superior speed of TG-ISTA. Recall further that TG-ISTA finds the complete optimum (i.e. $\Omega_* = \Omega_{opt}$) and uses $O(p +  d_k^2)$ storage, while BiGLasso does not estimate the diagonal elements of $\Omega$, is limited to $K=2$, and requires $O(p^2)$ storage. }
%Note the failure of ITML and similar performance of the remaining online and batch methods.   }
\label{Fig:ConvBIG}
\end{figure}
\end{comment}

\begin{figure}[h]
%\centering
\begin{subfigure}[b]{\textwidth}
\centering
\begin{minipage}[c]{0.3\textwidth}
\caption{$\ell$1 penalty,\\ $n=100$ sample size}
\end{minipage}\hfill
\begin{minipage}[l]{0.7\textwidth}
\includegraphics[width=1.7in]{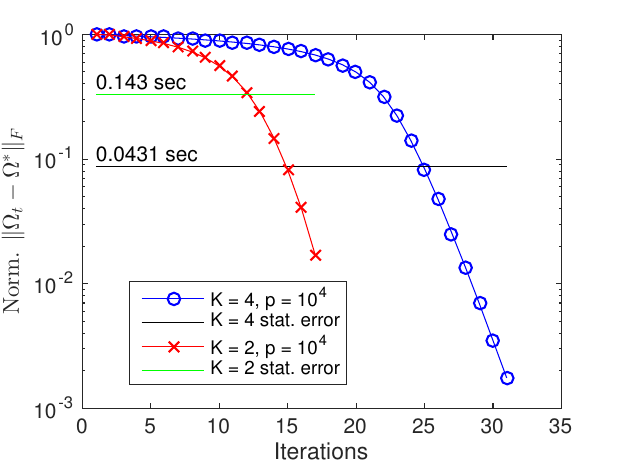}\includegraphics[width=1.7in]{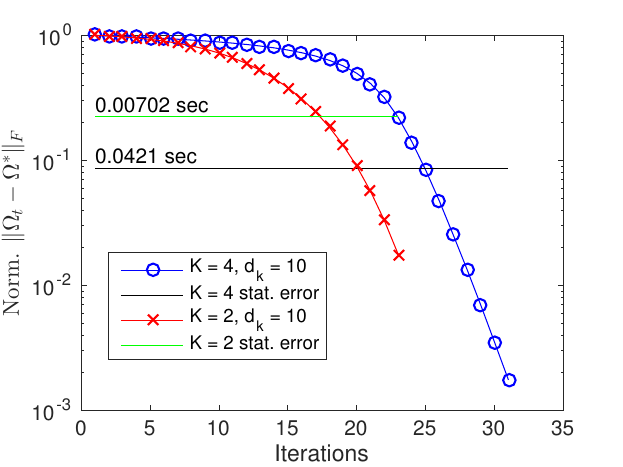}
\end{minipage}
\end{subfigure}
\vfill
\begin{subfigure}[b]{\textwidth}
\centering
\begin{minipage}[c]{0.3\textwidth}
\caption{$\ell$1 penalty,\\ $n=1$ sample size}
\end{minipage}\hfill
\begin{minipage}[l]{0.7\textwidth}
\includegraphics[width=1.7in]{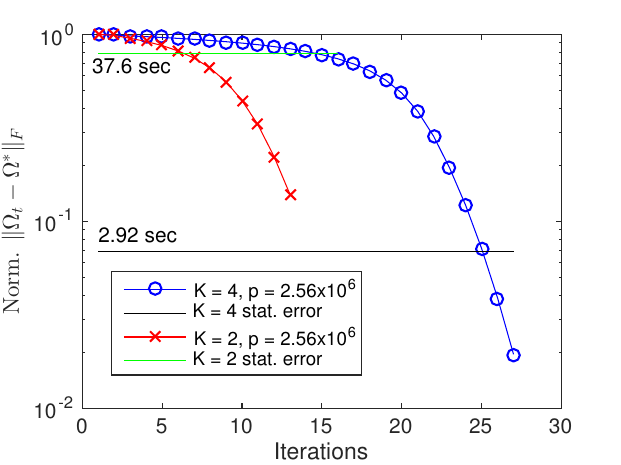}\includegraphics[width=1.7in]{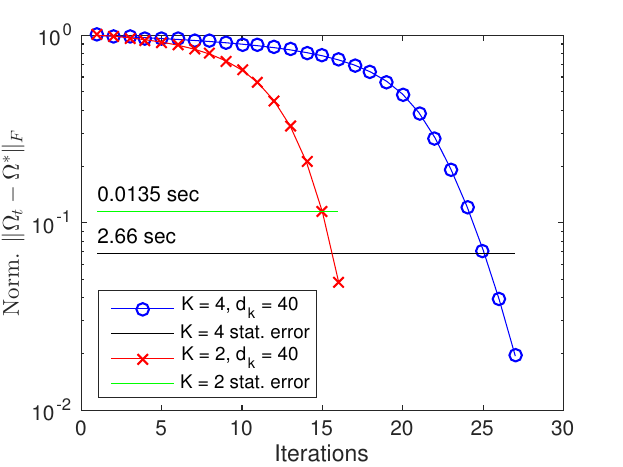}
\end{minipage}
\end{subfigure}
% \begin{subfigure}[b]{\textwidth}
% \centering
% %\includegraphics[width=3in]{FiguresTalk/FigNoneE-eps-converted-to.pdf}%\\
% \includegraphics[width=1.7in]{Figures/convN100o2-eps-converted-to.pdf}\includegraphics[width=1.7in]{Figures/convN100o1New-eps-converted-to.pdf}
% \caption{$\ell$1 penalty, $n=100$ sample size}
% \end{subfigure}
% \vfill
% \begin{subfigure}[b]{\textwidth}
% \centering
% \includegraphics[width=1.7in]{Figures/convN1o2Big-eps-converted-to.pdf}\includegraphics[width=1.7in]{Figures/convN1o1BigNew-eps-converted-to.pdf}
% \caption{$\ell$1 penalty, $n=1$ sample size}
% \end{subfigure}
%\includegraphics[width=2in]{FiguresTalk/FigNoneClassE-eps-converted-to.pdf}
\caption{Linear geometric convergence of the convex ($\ell$1 penalized) TG-ISTA implementation of TeraLasso. Shown is the normalized Frobenius norm $\|{\Omega}_t - \Omega^*\|_F$ of the difference between the estimate at the $t$th iteration and the optimal $\Omega^*$. On the left are results comparing $K = 2$ and $K=4$ on the same data with the same value of $p$ (different $d_k$), on the right they are compared for the same value of $d_k$ (different $p$). Also included are the statistical error levels, and the computation times required to reach them. Observe the consistent and rapid linear convergence rate, with logarithmic dependence on $K$ and dimension $d_k$.  }
%Note the failure of ITML and similar performance of the remaining online and batch methods.   }
\label{Fig:Conv}
\end{figure}

\subsection{Regularization with $\ell 1$ penalty}
In the TeraLasso objective \eqref{Eq:objFun}, the sparsity of the estimate is controlled by $K$ distinct tuning parameters $\rho_k$ for $k=1,\dots, K$. The convergence condition on $\rho_k$ in Theorem \ref{Thm:1} suggests that the $\rho_k$ can be set as
$
\rho_k = \bar{\rho} \sqrt{\frac{\log p}{n m_k}}
$
with $\bar{\rho}$ being a single scalar tuning parameter, depending on absolute constants and $\|\Sigma\|_2$. Below, we experimentally validate the reliability of this tuning strategy.

\begin{figure}[h]
\centering
\includegraphics[width=3.80in,height = 1.2in]{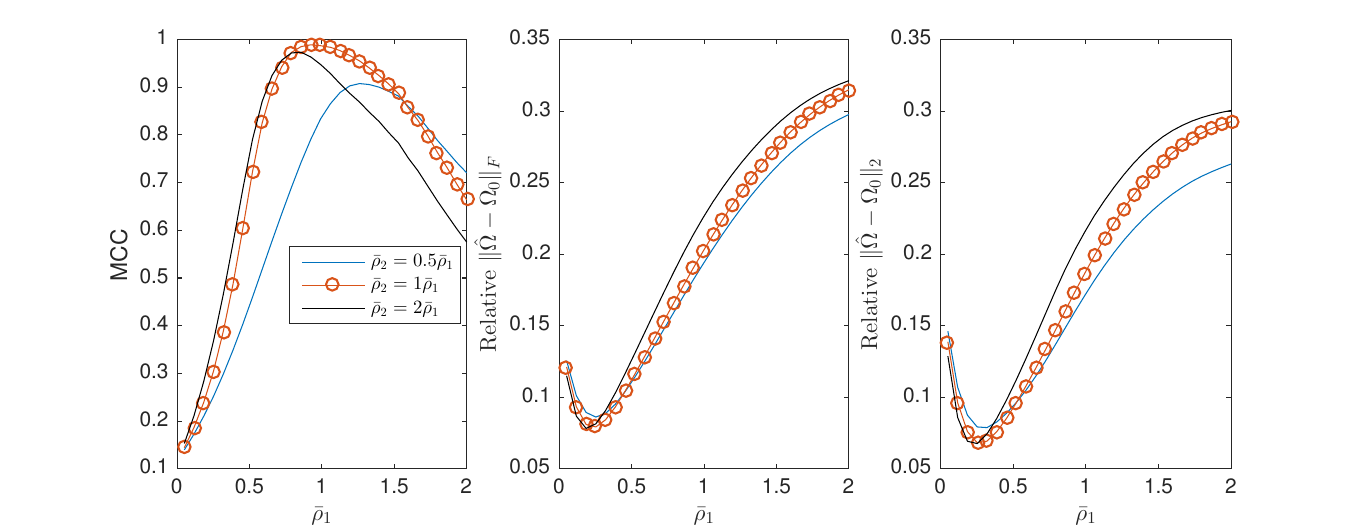}\\
\includegraphics[width=3.8in,height = 1.2in]{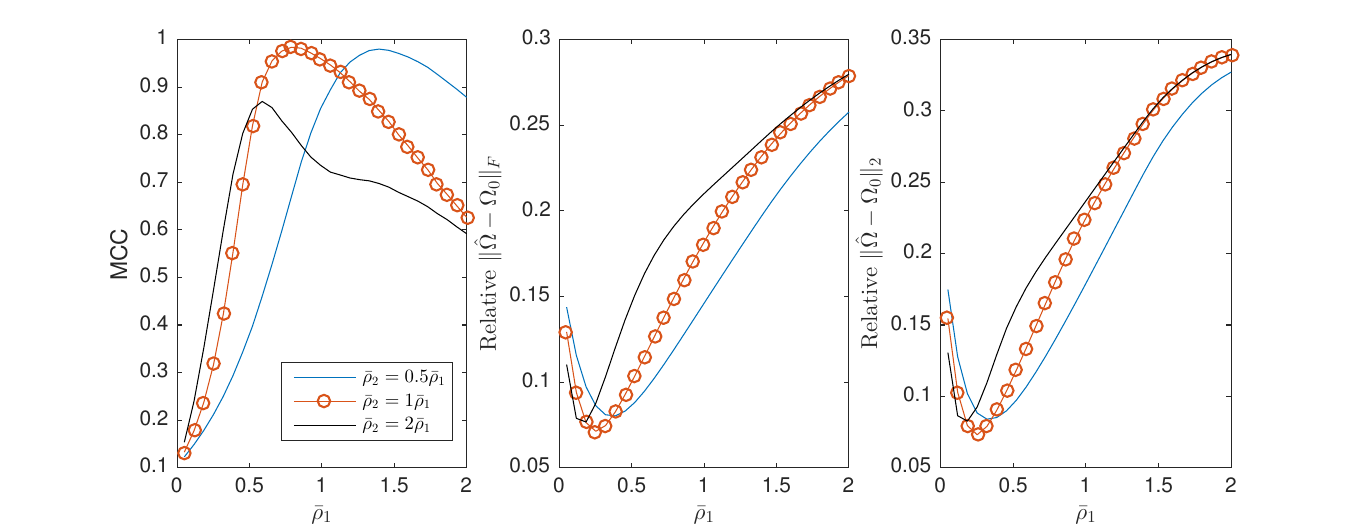}
\caption{Setting tuning parameters with $K=3$, $n = 1$, and $d_1 = d_3 = 64$. Shown are the MCC, relative Frobenius error, and relative L2 error of the TeraLasso estimate as the scaled tuning parameters $\rho_i$ are varied. Shown are deviations of $\bar{\rho}_2$ from the theoretically dictated $\bar{\rho}_2 = \bar{\rho}_1 = \bar{\rho}_3$. %Top: both factors random ER graphs with $d_k/2$ edges and $d_1 = d_2$. 
Top: Equal dimensions, $d_1 = d_2 = d_3$. First and third factors are random ER graphs with $d_k$ edges, and the second factor is random grid graph with $d_k/2$ edges. Bottom: Dimensions $d_2 = 2 d_1$, each factor is a random ER graph with $d_k$ edges. Notice in these scenarios that using $\bar{\rho}_1 = \bar{\rho}_2$ is near optimal, as theoretically predicted.  }
%Note the failure of ITML and similar performance of the remaining online and batch methods.   }
\label{Fig:Tune}
\end{figure}

The performance is empirically evaluated using several metrics including: the Frobenius norm ($\|\hat{\Omega}-\Omega_0\|_F$) and spectral norm ($\|\hat{\Omega}-\Omega_0\|_2$) error of the precision matrix estimate $\hat{\Omega}$ and the Matthews correlation coefficient to quantify the edge misclassification error. Let the number of true positive edge detections be TP, true negatives TN, false positives FP, and false negatives FN. The Matthews correlation coefficient is defined as \citep{matthews1975comparison}
\[
\mathrm{MCC} = \frac{\mathrm{TP \cdot TN - FP \cdot FN}}{\sqrt{\mathrm{(TP + FP)(TP + FN)(TN+FP)(TN + FN)}}},
\]
where each nonzero off diagonal element of $\Psi_k$ is considered as a single edge.
Larger values of MCC imply better edge estimation performance, with $\mathrm{MCC} = 0$ implying complete failure and $\mathrm{MCC} = 1$ perfect edge set estimation.

%We conduct experiments to verify the single tuning parameter approach, shown in Figure \ref{Fig:Tune}. 
Shown in Figure \ref{Fig:Tune} are the MCC, normalized Frobenius error, and spectral norm error as functions of $\bar{\rho}_1$ and $\bar{\rho}_2$ where the $\bar{\rho}_k$ constants giving
%\begin{equation}
%\label{eq:biTune}
${\rho}_k = \frac{\bar{\rho}_k}{\sqrt{(\log p)/(n m_k)}}.
$ %\end{equation}
Note $\bar{\rho}_1 = \bar{\rho}_2 = \bar{\rho}_3$ achieves near optimal results. 

%This is in fact what we observe, considering a variety of configurations of $d_1$ and $d_2$ and differing factor structure. This seems to confirm our expectations from a theoretical perspective, indicating that tuning the algorithm is a much less formidable task than might be initially expected.

%\subsection{Empirical Validation of Statistical Convergence}

Having verified the single tuning parameter approach, hereafter we will cross-validate only $\bar{\rho}$. In supplement Section \ref{supp:convVerif}, we provide experimental verification in a wide variety of experimental settings (including varying the relative size of the tensor dimensions $d_k$) that our bounds on the rate of convergence for the $\ell$1 regularized model are tight. 
Figure \ref{Fig:None1} illustrates how increasing dimension $p$ and $K$ improves single sample performance. Shown are the average TeraLasso edge detection precision and recall values for different values of $K$ in the single and 5-sample regimes, all increasing to $1$ (perfect structure estimation) as $p$, $K$, and $n$ increase. 
\begin{figure}[h]
\begin{subfigure}[b]{\textwidth}
\centering
\begin{minipage}[c]{0.15\textwidth}
\caption{$n=1$}
\end{minipage}\hfill
\begin{minipage}[l]{0.85\textwidth}
\includegraphics[width=3.6in]{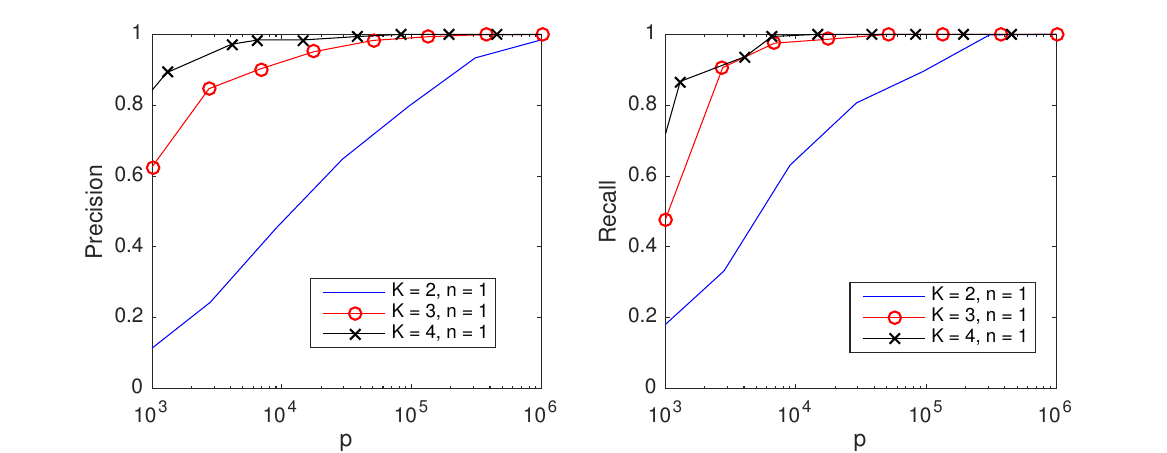}
\end{minipage}
\end{subfigure}
\vfill
\begin{subfigure}[b]{\textwidth}
\centering
\begin{minipage}[c]{0.15\textwidth}
\caption{$n=5$}
\end{minipage}\hfill
\begin{minipage}[l]{0.85\textwidth}
\includegraphics[width=3.6in]{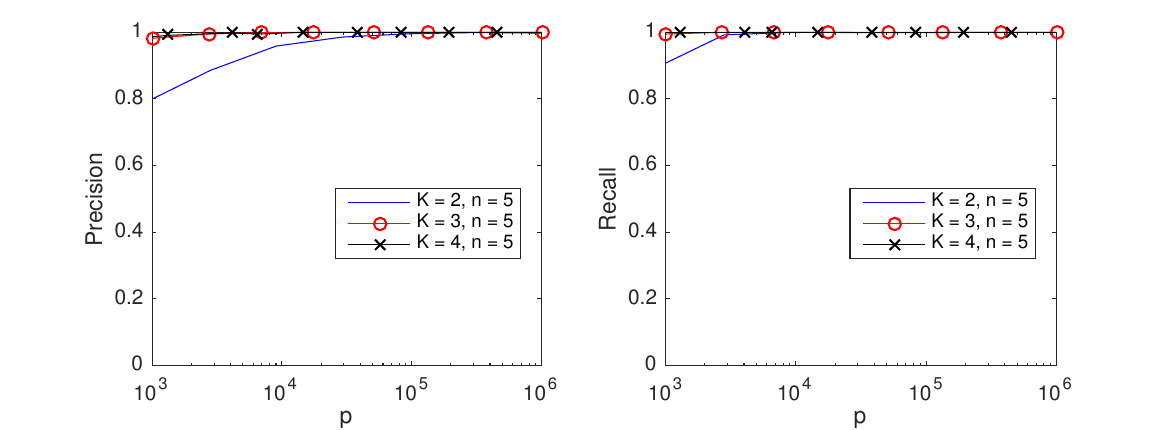}
\end{minipage}
\end{subfigure}%\includegraphics[width=3in]{FiguresTalk/FigNoneE-eps-converted-to.pdf}%\\
\caption{%Dataset 1. The dataset remains fixed throughout, no drift occurs as the constraints are observed. 
Edge support estimation on random ER graphs, with the $\rho_k$ set according to Theorem 1. Graphical model edge detection precision and recall curves are shown as a function of data dimension $p = \prod_{k = 1}^Kd_k$. For each value of the tensor order $K$, we set $d_k=p^{1/K}$. Observe single sample convergence as the dimension $p$ increases and as increasing $K$ creates additional structure. }
%Note the failure of ITML and similar performance of the remaining online and batch methods.   }
\label{Fig:None1}
\end{figure}

\subsection{Nonconvex Regularization}
Here the $\ell 1$ penalized TeraLasso is compared to TeraLasso with nonconvex regularization \eqref{eq:nonconvobv}.
Shown in Figure \ref{Fig:TuneNonConv} are the MCC, normalized Frobenius error, and spectral norm error for estimating $K=2$ and $K=3$ Erdos-Renyi graphs as functions of regularization parameter ${\rho}$ for each of $\ell$1, SCAD \eqref{eq:SCAD}, and MCP \eqref{eq:MCP} regularizers in a variety of configurations. Figure \ref{Fig:TuneNonConvSpiked} shows similar results for $\Psi_k$ a variant of the spiked identity model of \cite{LOH}. Observe that nonconvex regularization improves performance slightly, not only for structure estimation (MCC) but for the Frobenius norm error (due to the reduction in bias) as well. This improvement is increased in the spiked identity case. 
%While nonconvex regularization improves the spectral norm error in Figure \ref{Fig:TuneNonConv}, it slightly degrades it in Figure \ref{Fig:TuneNonConvSpiked}. This is due to the structure of our spiked identity model, whose large rank-one component makes shrinkage ideal for reducing the spectral norm error in this high noise setting. 
\begin{figure}[htb]
\centering
\begin{subfigure}[b]{\textwidth}
\centering
\includegraphics[width=4in,height = 1.1in]{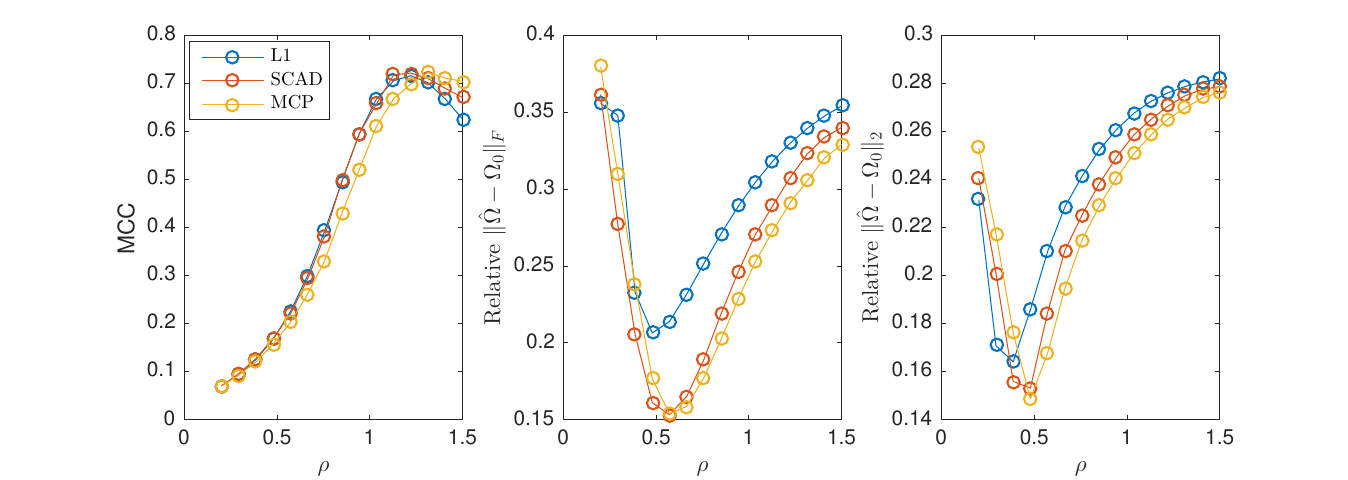}
\caption{$K=2$, $d_1 = d_2 = 1024$}
\end{subfigure}
\begin{subfigure}[b]{\textwidth}
\centering
\includegraphics[width=4in,height = 1.1in]{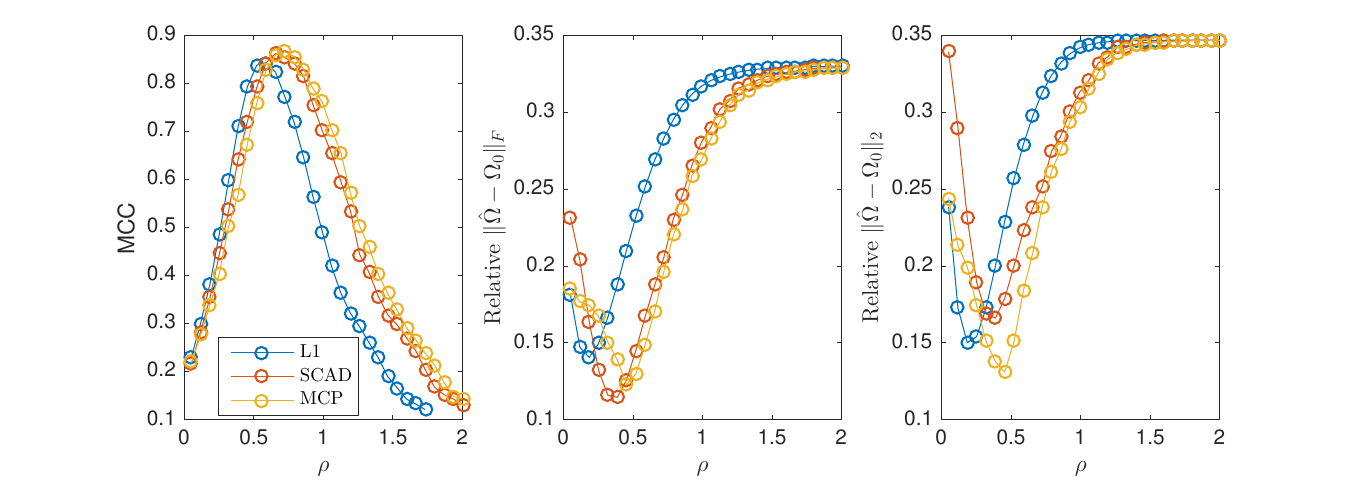}
\caption{$K=3$, $d_1 = d_2 = d_3 = 32$}
\end{subfigure}
%\begin{subfigure}[b]{\textwidth}
%\centering
%\includegraphics[width=3.7in,height = 1.2in]{}
%\caption{$K=3$, $d_1 = d_2 = d_3 = 64$}
%\end{subfigure}
%\begin{subfigure}[b]{\textwidth}
%\centering
%\includegraphics[width=3.4in]{}
%\caption{$K=3$, $d_1 = d_3 = 64$, $d_2 = 128$}
%\end{subfigure}
\caption{Nonconvex regularizers in the single sample regime ($n=1$, $\Psi_k$ ER with $d_k$ edges). Shown are the MCC, relative Frobenius error, and relative L2 error as a function of ${\rho}$. Note nonconvex regularization improves performance. %As before, $K=3$ gives better performance than $K=2$ due to the increased structure.
}
%Note the failure of ITML and similar performance of the remaining online and batch methods.   }
\label{Fig:TuneNonConv}
\end{figure}

\begin{figure}[htb]
\centering
\begin{subfigure}[b]{\textwidth}
\centering
\includegraphics[width=5in]{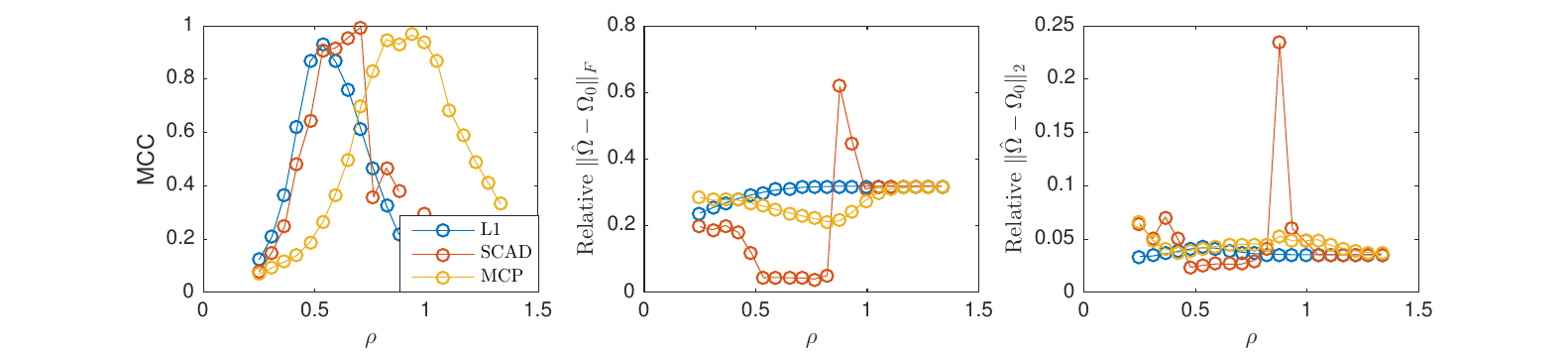}
\caption{$K=2$, $d_1 = d_2 = 256$, $n = 10$, $\Psi_k = 0.5I_{d_k} + 0.5\left[{1}_{8} ; {0}_{248} \right]\left[{1}_{8} ; {0}_{248} \right]^T$}
\end{subfigure}
\begin{subfigure}[b]{\textwidth}
\centering
\includegraphics[width=5in]{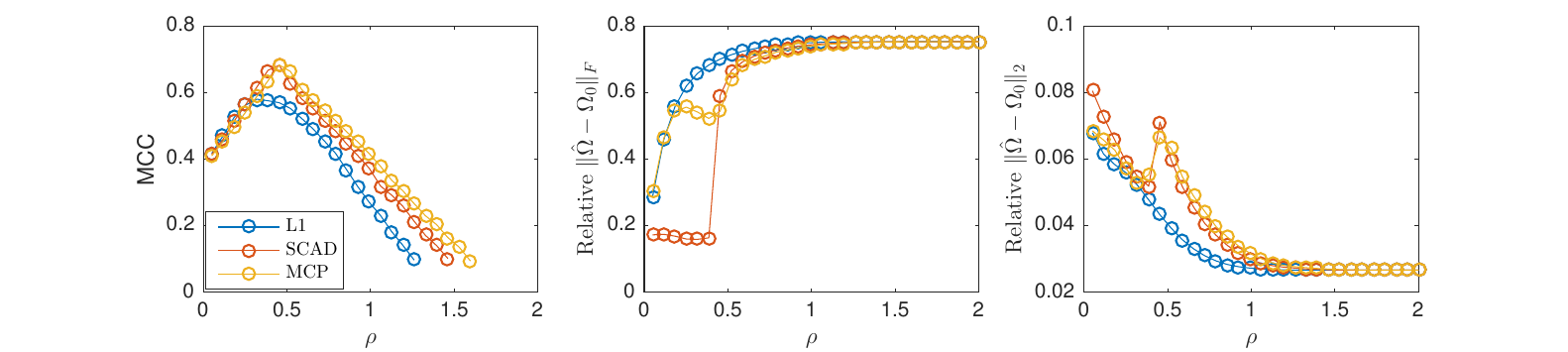}
\caption{$K=3$, $d_1 = d_2 = d_3 = 32$, $n=1$, $\Psi_k = 0.5I_{d_k} + 0.5\left[{1}_{16} ; {0}_{16} \right]\left[{1}_{16} ; {0}_{16} \right]^T$}
\end{subfigure}
%\begin{subfigure}[b]{\textwidth}
%\centering
%\includegraphics[width=5in]{}
%\caption{$K=3$, $d_1 = d_2 = d_3 = 64$, $n=1$, $\Psi_k = 0.5I_{d_k} + 0.5\left[{1}_{32} ; {0}_{32} \right]%\left[{1}_{32} ; {0}_{32} \right]^T$}
%\end{subfigure}
%\begin{subfigure}[b]{\textwidth}
%\centering
%\includegraphics[width=3.4in]{Figures/K3P64Asp2-eps-converted-to.pdf}
%\caption{$K=3$, $d_1 = d_3 = 64$, $d_2 = 128$}
%\end{subfigure}
\caption{Nonconvex regularizers with spiked identity factors $\Psi_k$. Shown are the MCC and relative Frobenius error as a function of ${\rho}$. Note nonconvex regularization improves performance when $\rho$ is chosen correctly. %As before, $K=3$ gives better performance than $K=2$ due to the increased structure.
}
%Note the failure of ITML and similar performance of the remaining online and batch methods.   }
\label{Fig:TuneNonConvSpiked}
\end{figure}

%The variation of the $\rho_k$ with $d_k$ are set using Theorem \ref{Thm:1}, with the constant of proportionality set with cross validation.

%Synthetic data, rate confirmation. Use Ted's covariance generators or banded AR, vary n,p,q,sparsity. 

%\clearpage
\section{NCEP Windspeed Data}\label{sec:real}
%\subsection{NCEP Windspeed Data}
%Good because rectangular grid of values over time - 3 way tensor.
The TeraLasso model is illustrated on a meteorological dataset. The US National Center for Environmental Prediction (NCEP) maintains records of average daily wind velocities in the lower troposphere, with daily readings beginning in 1948. The data is available online at ftp://ftp.cdc.noaa.gov/Datasets/ ncep.reanalysis.dailyavgs/surface. Velocities are recorded globally, in a $144 \times 73$ latitude-longitude grid with spacings of 2.5 degrees in each coordinate. Over bounded areas, the spacing is approximately a rectangular grid, suggesting a $K=2$ model (latitude vs. longitude) for the spatial covariance, and a $K=3$ model (latitude vs. longitude vs. time) for the full spatio-temporal covariance. 

\begin{figure}[h]
\centering
\includegraphics[height=1.3in]{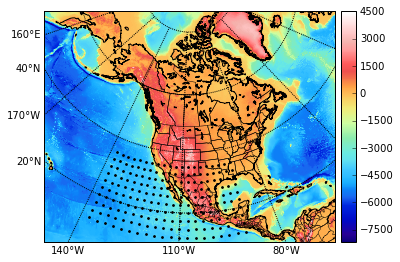}\includegraphics[height=1.3in]{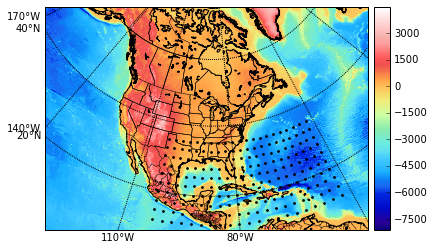}
\caption{Rectangular $10\times 20$ latitude-longitude grids of windspeed locations shown as black dots. Elevation colormap shown in meters. Left: ``Western grid", Right: ``Eastern grid".  }
\label{Fig:Nation}
\end{figure}

%\clearpage
%\begin{figure}[h]
%\centering
%\includegraphics[width=5in]{Figures/WindAllWestNew-eps-converted-to.pdf}
%\caption{Windspeed data, western grid. Inverse spatial covariance estimation, comparing TeraLasso to unstructured and Kronecker product techniques. Spatial covariance, $K=2$. Observe TeraLasso's ability to recover structure with even one sample.  }
%\label{Fig:WAW}
%\end{figure}
\begin{figure}[h]
\begin{subfigure}[b]{\textwidth}
\centering
\includegraphics[width=.06in]{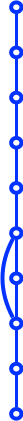}\includegraphics[width=3.5in]{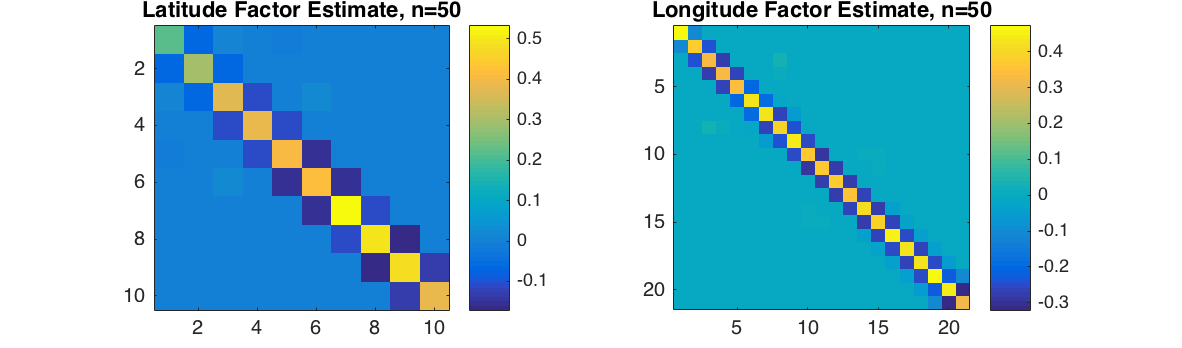}\\\includegraphics[width=3in]{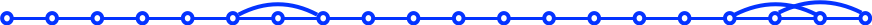}
\caption{Eastern grid. Graphical representation of latitude (left, 10 nodes) and longitude factors (bottom, 20 nodes) with the corresponding precision estimates. Note the simple AR-1 type structure of the longitude graph. %, corresponding to the largely low-elevation Eastern grid region.
}
\end{subfigure}
\vfill
\begin{subfigure}[b]{\textwidth}
\centering
\includegraphics[width=.06in]{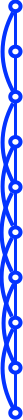}\includegraphics[width=3.5in]{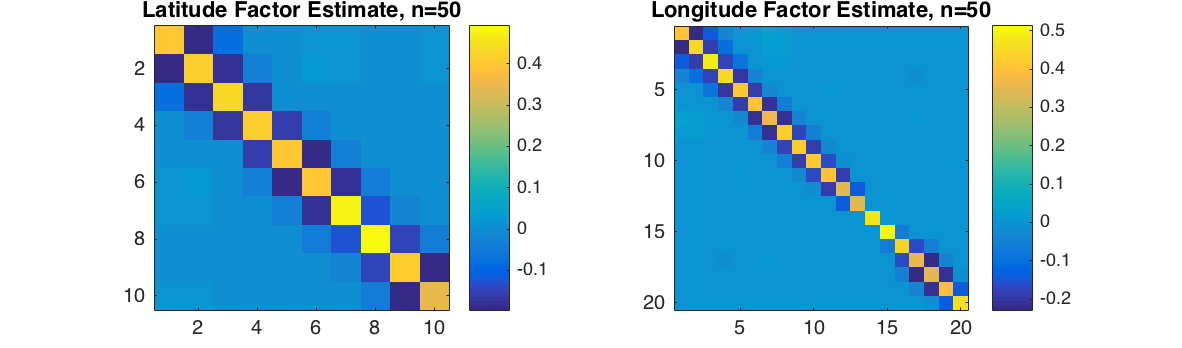}\\\includegraphics[width = 3in]{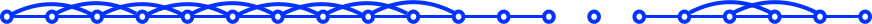}
\caption{Western grid. Graphical representation of latitude (left) and longitude factors (bottom) with the corresponding precision estimates. Observe the decorrelation (longitude factor entries connecting nodes 1-13 to nodes 14-20 are essentially zero) in the Western longitudinal factor, corresponding to the high-elevation line of the Rocky Mountains.}
\end{subfigure}
\caption{TeraLasso estimate factors, $K=2$.  }
\label{Fig:WFac}
\end{figure}
%\begin{figure}[h]
%\centering
%\includegraphics[width=2.3in]{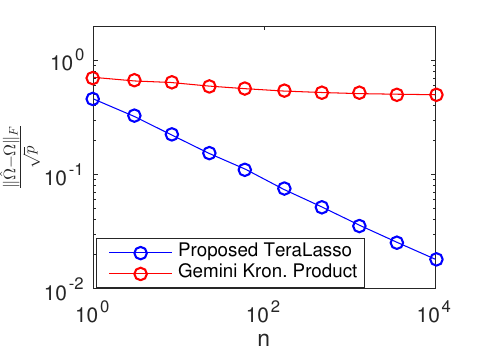}\includegraphics[width=2.3in]{}
%\caption{Comparison of our Kronecker sum estimates with the Gemini and ML Kronecker product estimators, using data synthetically generated using the wind data Kronecker sum precision matrix $\Omega = \Psi_1 \oplus \Psi_2$ shown in Figure \ref{Fig:WFac} (a). Shown is the relative estimation error as a function of sample size. Note that for data generated using this Kronecker sum model, the ML Kronecker product estimator performs significantly better than the Gemini Kronecker product estimator in the moderate sample regime.}% POINT OUT CROSSOVER ON RIGHT, SAY WHY TERALASSO NICER. SAY SQUARE. MAKE SAME SCALE. RUN ANOTHER WITH ACTUAL REAL DATA TOPOLOGY.  }
%\label{fig:misp2}
%\end{figure}

Consider the time series of daily-average wind speeds. Following \cite{tsiliArxiv}, we regress out the mean for each day in the year via a 14-th order polynomial regression on the entire history from 1948-2015. %For low latitudes, the grid is nearly rectangular, which as noted above suggests the use of TeraLasso. 
We extract two $20\times 10$ spatial grids, one from eastern North America, and one from western North America %, with the latter including an expansive high-elevation area and both Atlantic and Pacific oceans 
(Figure \ref{Fig:Nation}). 
%We compare the TeraLasso estimate of the spatial precision matrix to the shrinkage sample covariance (Figure \ref{Fig:WA}). 
Figure \ref{Fig:WFac} shows the TeraLasso estimates for latitude and longitude factors %when the estimator was tuned on the eastern grid, 
using time samples from January in $n$ years following 1948, for both the eastern and western grids. Observe the approximate AR structure, and the break in correlation (Figure \ref{Fig:WFac} (b), longitude factor) in the Western Longitude factor. The location of this break corresponds to the high elevation line of the Rocky Mountains.
In the supplement, we compare the TeraLasso estimator to the unstructured shrinkage estimator, the non-sparse Kronecker sum estimator (TeraLasso estimator with sparsity parameter $\rho = 0$), and the Gemini sparse Kronecker product estimator of \cite{zhou2014gemini}. It is shown that the TeraLasso provides a significantly better fit to the data.

To illustrate the utility of the estimated precision matrices, we use them to construct a season classifier. NCEP
windspeed records are taken from the 51-year span from 1948-2009. We
estimate spatial precision matrices on $n$ consecutive days in
January and June of a training year respectively, and running anomaly
detection on $m = 30$-day sequences of observations in the remaining 50
testing years. We report average classifier performance by averaging over
all 51 possible partitions of the 51-year data into 1 training and 50
testing years. The sequences are labeled as summer (June), and winter
(January), and we compute the classification error rate for the winter vs.
summer classifier obtained by choosing the season associated with the
larger of the likelihood functions
\begin{align*}
 \log |\hat{\Omega}_{\mathrm{summer}}| -\sum\nolimits_{i = 1}^m (\mathbf{x}_i-\mathbf{\mu}_i)^T \hat{\Omega}_{\mathrm{summer}} (\mathbf{x}_i-\mathbf{\mu}_i)\\
 \log |\hat{\Omega}_{\mathrm{winter}}| -\sum\nolimits_{i = 1}^m (\mathbf{x}_i-\mathbf{\mu}_i)^T \hat{\Omega}_{\mathrm{winter}} (\mathbf{x}_i-\mathbf{\mu}_i).
 \end{align*}
%We consider two experimental regimes. The first, with $K=2$, creates a spatial precision matrix with one ($10\times 10$) factor corresponding to the latitude axis of the spatial array, and the other a $20\times 20$ factor corresponding to the longitude axis. The second, 
We consider the $K=3$ spatial-temporal precision matrix for a spatial-temporal array of size $10\times 20 \times T$, with the first ($10\times 10$) factor corresponding to the latitude axis of the spatial array, the second a $20\times 20$ factor corresponding to the longitude axis, and the third factor a $T \times T$ factor corresponding to a temporal axis of length $T$. The spatial-temporal array is created by concatenating $T$ temporally consecutive $10 \times 20$ spatial samples. We use $\ell 1$ regularization.

Results for different sized temporal covariance extents ($T = d_3$) are shown in Figure \ref{Fig:WAnom3} for TeraLasso, with unregularized TeraLasso (ML Kronecker Sum) and maximum likelihood Kronecker product estimator \citep{werner2008estimation,tsiligkaridis2013convergence} results shown for comparison. In this experiment, we use the ML Kronecker product estimator instead of the Gemini, as for this maximum-likelihood classification task the maximum-likelihood based approach performs significantly better than the factorwise objective approach of the Gemini estimators, which is not surprising as the Kronecker product is not a good fit for this data (Section \ref{supp:fit} of the supplement). Note the superior performance and increased single sample robustness of the proposed ML Kronecker Sum and TeraLasso estimates as compared to the Kronecker product estimate, confirming the better fit of TeraLasso. In each case, the nonmonotonic behavior of the Kronecker product curves is due partly to randomness associated with the small test sample size, and partly due to the fact that the Kronecker product in $K=3$ has overly strong coupling across tensor directions, giving large bias. %Similar results for $K=2$ are shown in Figure \ref{Fig:WAnom2}. 

\begin{figure}[h]
\begin{subfigure}[b]{\textwidth}
\centering
\includegraphics[width=5in]{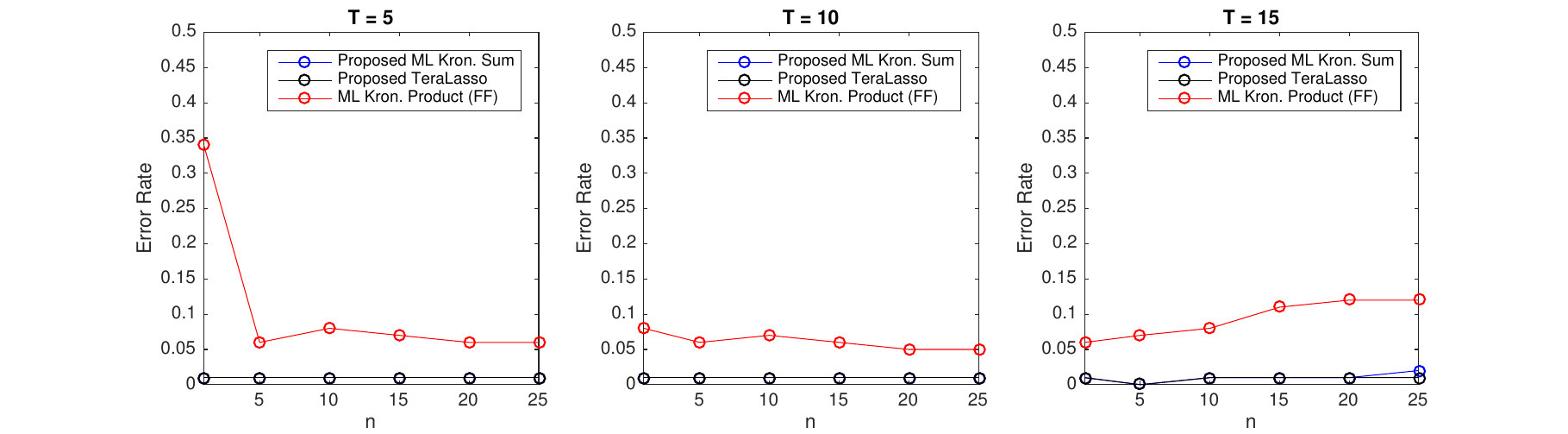}
\caption{Eastern grid}
\end{subfigure}
\vfill
\begin{subfigure}[b]{\textwidth}
\centering
\includegraphics[width=5in]{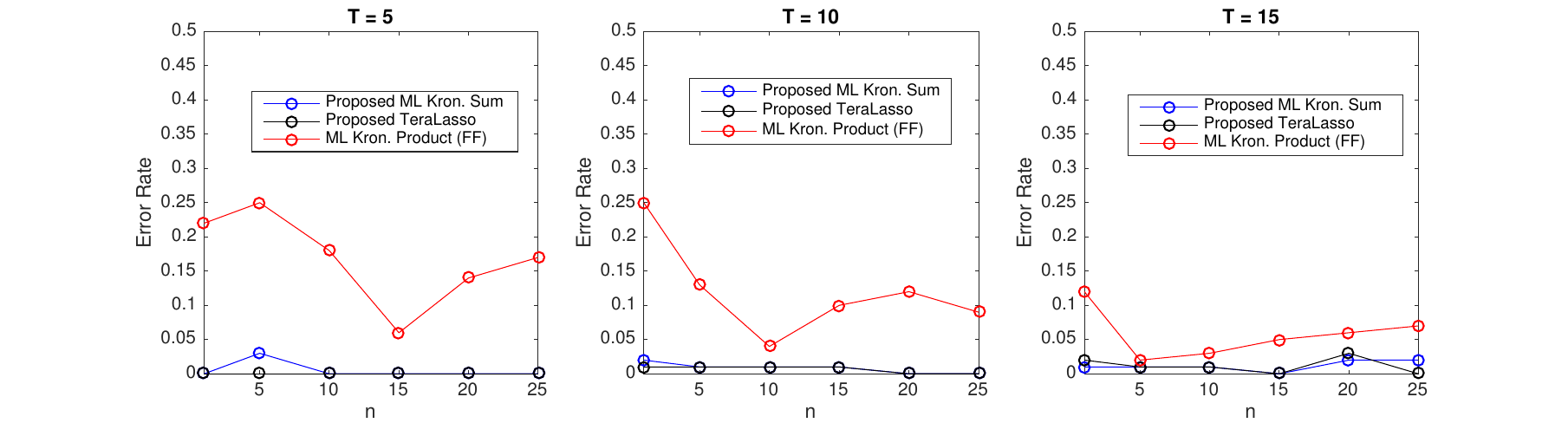}
\caption{Western grid}
\end{subfigure}
\caption{Classification using Gaussian loglikelihood and estimated spatio-temporal ($K=3$) precision matrices for each season, where $T$ is the temporal dimension in days. Shown is windspeed summer vs. winter classification error rate as a function of sample size $n$ and length of temporal window $T$. Note the stability of the Kronecker sum estimate in the $n=1$ case with low error rate. }
\label{Fig:WAnom3}
\end{figure}

\section{Conclusion}
\label{sec:conc}

A factorized model, called the TeraLasso, is proposed for the precision matrix of tensor-valued data that uses Kronecker sum structure and sparsity to regularize the precision matrix estimate. An ISTA-like optimization algorithm is presented that scales to high dimensions. Statistical and algorithmic convergence are established for the TeraLasso that quantify performance gains relative to other structured and unstructured approaches. Numerical
results demonstrate single-sample convergence as well as tightness of the
bounds. Finally, an application to real tensor-valued ($K=3$) meteorological data is considered,
where the TeraLasso model is shown to fit the data well and enable
improved single-sample performance for estimation and anomaly
detection. Future work includes combining first moment tensor representation methods for mean estimation such as PARAFAC \citep{harshman1994parafac} with the second order TeraLasso method introduced in this paper for estimating the covariance. 

\section{Acknowledgement}
The research reported in this paper was partially supported by US Army Research Office grant W911NF-15-1-0479, US Department of Energy grant DE-NA0002534, NSF grant DMS-1316731, and
the Elizabeth Caroline Crosby Research Award from the Advance Program at the
University of Michigan.
%\begin{appendices}

%\clearpage

\bibliographystyle{rss}
\bibliography{KronML_bib}

\begin{appendices}

\section{Appendix outline}\label{supp:outline}
This supplement is organized as follows. Sections \ref{supp:algo}-\ref{supp:addlExp} focus on the implementation and numerical convergence of the TeraLasso algorithm and Sections \ref{app:DO}-\ref{supp:conv} focus on theory and proofs of convergence. 
%The algorithm for TeraLasso with nonconvex regularization is presented in Section \ref{supp:nonconvalg}, along with experiments confirming its numerical convergence. 
Section \ref{supp:algo} presents the algorithm for TeraLasso with nonconvex regularization and describes additional properties of the TeraLasso algorithm, including a discussion of the choice of
step size, decomposition of the gradient update, and proof of joint
convexity of the objective. Section \ref{supp:addlExp} presents additional numerical experiments, including convergence of the nonconvex algorithm, larger scale TG-ISTA convergence experiments, 
additional discussion comparing the fit of the TeraLasso model to the wind speed data, and a discussion of the geometric differences between the Gemini and TeraLasso objectives.  %The computational complexity of the TeraLasso algorithm is discussed in Section \ref{supp:compcomp}, along with several large scale experiments.

We then proceed to the convergence analysis. Section \ref{app:DO} describes properties
of the Kronecker sum and the Kronecker sum subspace
$\mathcal{K}_{\mathbf{p}}$ that are needed for the remainder of the discussion.
Proof of the main Frobenius norm theorem and of the spectral norm
theorem are in Section \ref{app:Pf},
with the
concentration bounds proven in Section \ref{App:Conc}. Section \ref{app:NonConv} proves the result on nonconvex regularization, and
Section \ref{supp:conv} presents and proves theorems on the geometrical convergence of the TG-ISTA algorithm. Relevant properties and identities relating to the space $\mathcal{K}_{\mathbf{b}}$ spanned by Kronecker sum matrices are contained in Appendix \ref{App:Ident}, and a discussion of the case where the diagonal elements of $\Omega$ are known is given in Appendix \ref{app:B}.

\section{TeraLasso algorithm step size and numerical convergence proofs}
\label{supp:algo}

\subsection{Convergence of nonconvex regularization algorithm}
\label{supp:nonconvalg}
The TG-ISTA implementation of the TeraLasso algorithm for nonconvex regularizers is shown in Algorithm \ref{alg:highlevelNC}. The primary differences from the $\ell$1 regularized case are (a) the addition of the norm constraint, and (b) the use of the nonconvex regularizer in the gradient computation.

\begin{algorithm}[h]
\caption{TG-ISTA implementation of TeraLasso with nonconvex regularization}
\label{alg:highlevelNC}
\begin{algorithmic}[1]
%\STATE PUT SEMIDEFINITE CHECK EXPLICITLY. MATCH STRUCTURE. SAY IDENTICAL TO ALG 1 IN THEIR PAPER
\STATE Input: SCM factors $S_k$, regularization parameter $\rho$, regularizer $g_{\rho}(\cdot)$ and associated $q'_{\rho}(\cdot)$, backtracking constant $c \in (0,1)$, initial step size $\zeta_{1,0}$, initial iterate $\Omega_{\mathrm{init}} = I \in \mathcal{K}_{\mathbf{p}}^\sharp$.

\WHILE{ not converged}

\STATE Compute the subspace gradient $\mathrm{Proj}_{\mathcal{K}_{\mathbf{p}}}\left(\Omega^{-1}_t\right) = G_1^t \oplus \dots \oplus G_K^t$.
\STATE \emph{Line search}: Let stepsize $\zeta_t$ be the largest element of $\{c^j \zeta_{t,0}\}_{j =1,\dots}$ such that the following are satisfied for $\Psi_k^{t+1} = \mathrm{shrink}^-_{\zeta_t \rho}(\Psi_k^t - \zeta_t (\tilde{S}_k - G_k^t+q'_{\rho}(\Psi_k)))$:
\begin{enumerate}
\item $\|\Psi_1^{t+1} \oplus \dots \oplus \Psi_K^{t+1}\|_2 \leq \kappa$,
\item $ \Psi_1^{t+1} \oplus \dots \oplus \Psi_K^{t+1}\succ 0$, 
\item $f(\{\Psi_k^{t+1}\}) \leq \mathcal{Q}_{\zeta_t}(\{\Psi_k^{t+1}\}, \{\Psi_k^{t+1}\})$.
\end{enumerate}

%\FOR{$j = 0,1,\dots$}
%\STATE $\zeta_t \gets c^j \zeta_{t,0}$.
\FOR{$k = 1,\dots,K$}

\STATE \emph{Composite objective gradient update}: 
\[
\Psi_k^{t+1} \gets \mathrm{shrink}^-_{\zeta_t \rho}\left(\Psi_k^t - \zeta_t (\tilde{S}_k - G^t_k + q'_{\rho}(\Psi_k)) \right).
\]
\ENDFOR
\STATE Compute next Barzilai-Borwein stepsize $\zeta_{t+1,0}$ via \eqref{eq:BarBor} in supplement \ref{supp:stepsz}.
%\IF{$ \Psi_1^{t+1} \oplus \dots \oplus \Psi_K^{t+1}$ is positive definite \AND decreases the objective sufficiently}
 %\STATE Stepsize $\zeta_t$ is acceptable; break
%\ENDIF
%\STATE Compute next initial stepsize $\zeta_{t+1,0}$. 
%\ENDFOR

\ENDWHILE
\STATE Return $\{\Psi_k^{t+1}\}_{k =1 }^K$.

\end{algorithmic}
\end{algorithm}

\subsection{Choice of step size $\zeta_t$}
\label{supp:stepsz}
%SAY ALWAYS A ZETA, HIGH LEVEL. 

Here we propose a method \eqref{eq:stpq} for selecting the stepsize parameter $\zeta_t$ at each step $t$ that ensures convergence of the algorithm. We follow the approach of \cite{beck2009fast} and \cite{GISTA}. Since $\Omega_t \succ 0$ and the the positive definite cone is an open set, there will always exist a $\zeta_t$ small enough such that $\Omega_{t+1} \succ 0$. We prove geometric convergence when $\zeta_t$ is chosen such that $\Omega_{t+1} \succ 0$ and
\begin{align}
\label{eq:stpq}
f(\Omega_{t+1}) = - \log|\Omega_{t+1}| + \langle \hat{S}, \Omega_{t+1}\rangle \leq \mathcal{Q}_{\zeta_t}(\Omega_{t+1},\Omega_t)
\end{align}
where $\mathcal{Q}_{\zeta_t}$ is a quadratic approximation to $f$ given by
\begin{align}\label{eq:qq}
\mathcal{Q}_{\zeta_t}& (\Omega_{t+1},\Omega_t) %= f(\Omega_t) + \langle \Omega_{t+1} - \Omega_t, \nabla f(\Omega_t)\rangle + \frac{1}{2\zeta_t}\|\Omega_{t+1} - \Omega_t\|_F^2\\\nonumber
\\\nonumber&= - \log |\Omega_t| + \langle \hat{S},\Omega_t\rangle + \langle \Omega_{t+1} - \Omega_t,\nabla f(\Omega_t)\rangle + \frac{1}{2\zeta_t}\|\Omega_{t+1} - \Omega_t\|_F^2.
\end{align}
%This expression is efficiently computable using the identities in Appendix A.

At each iteration $t$, we thus perform a line search to select an appropriate $\zeta_t$. We first select an initial stepsize $\zeta_{t,0}$ and compute the update \eqref{eq:upp}. If the resulting $\Omega_{t+1}$ is not positive definite or does not decrease the objective sufficiently according to \eqref{eq:stpq}, we decrease the stepsize $\zeta_t$ to $c\zeta_{t,0}$ for $c \in (0,1)$ and re-evaluate if the resulting $\Omega_{t+1}$ satisfies the conditions. This backtracking process is repeated (setting stepsize equal to $c^j \zeta_{t,0}$ where $j$ is incremented) until the resulting $\Omega_{t+1}$ satisfies the conditions. Since by construction $\Omega_t$ is positive definite, and the positive definite cone is an open set, there will be a step size small enough such that the conditions are satisfied. In practice, if after a set number of backtracking steps the conditions are still not satisfied, we can always take the safe step
\[
\zeta_t = \lambda_{\min}^2(\Omega_t) = \sum_{k=1}^K \min_i [\mathbf{s}_k]_i^2.
\]
%which is guaranteed to keep $\Omega_{t+1}$ p.s.d.??? These conditions will be utilized in the proof of convergence.

As the safe stepsize often leads to slower convergence, we use the more aggressive Barzilai-Borwein step to set a starting $\zeta_{t,0}$ at each time. %For convenience, define
%SIMPLIFY, WILL NOT NEED THESE BY PARAMETERIZATION
%\begin{align*}
%\Delta_k^t &= \Psi_k^{t+1}-\Psi_k^{t}\\\nonumber
%d_k^t &= \mathrm{diag}(\Delta_k^t)\\\nonumber
%\Delta_k^{t,-} &= \Delta_k^t - \mathrm{diag}(d_k^t).
%\end{align*}
The Barzilai-Borwein stepsize presented in \cite{barzilai1988two} creates an approximation to the Hessian, in our case given by %(see identities in Section 9 of the supplement)
\begin{align}
\label{eq:BarBor}
%X_t = \Psi_1^t\oplus \dots \oplus \Psi_K^t\\\nonumber
\zeta_{t+1,0} &= \frac{\|\Omega_{t+1} - \Omega_t\|_F^2}{\langle \Omega_{t+1} - \Omega_t,\nabla f(\Omega_t) - \nabla f(\Omega_{t+1})\rangle} %\\
%&=\frac{\|\Delta_1^t \oplus \dots \oplus \Delta_K^t\|_F^2}{\langle\Delta_1^t \oplus \dots \oplus \Delta_K^t,(G^t_1 - G^{t+1}_1) \oplus \dots \oplus (G^t_K - G^{t+1}_K)\rangle}\\\nonumber
%&= \frac{\|d_1^t \oplus \dots \oplus d_K^t\|_2^2 +\sum_{k = 1}^K m_k\|\Delta_k^{t,-}\|_F^2}{\sum_{k = 1}^K m_k \langle \Delta_k^t, G_k^t - G_k^{t+1} \rangle + \sum_{k \neq \ell}^K \frac{p}{d_{k} d_{\ell} }\mathrm{tr}(\Delta_{k}^t)\mathrm{tr}(G_{\ell}^t - G_{\ell}^{t+1})},
\end{align}
We derive the gradient $\nabla f(\Omega_t)$ in the next section. The norms and inner products in \eqref{eq:BarBor} and \eqref{eq:qq} can be efficiently computed factorwise (using the $\Psi_k$ and $S_k$ only) using the formulas in Appendix \ref{App:BasicProp}.
%\subsection{Complete algorithm}
\subsection{Generation of Kronecker Sum Random Tensors}\label{App:Alg}
%Several basic operations involving the Kronecker sum model can be made highly efficient by exploiting the Kronecker sum structure. We list the ones we created and used here.
Generating random tensors given a Kronecker sum precision matrix can be made efficient by exploiting the Kronecker sum eigenstructure.
%DEFINE EIG
Algorithm \ref{alg:Gen} allows efficient generation of data following the TeraLasso model.
\begin{algorithm}[h]
\caption{Generation of subgaussian tensor $X \in \mathbb{R}^{d_1 \times \dots \times d_K}$ under TeraLasso model. }
\label{alg:Gen}
\begin{algorithmic}[1]
\STATE Assume $\Sigma^{-1} = \Psi_1 \oplus \dots \oplus \Psi_K$.
\STATE Input precision matrix factors $\Psi_k \in \mathbb{R}^{d_k \times d_k}$, $k = 1,\dots, K$.
\FOR{$k = 1,\dots, K$}
\STATE $U_k, \Lambda_k \gets \mathrm{EIG}(\Psi_k)$ eigendecomposition of $\Psi_k$.
\ENDFOR
\STATE $\mathbf{v} = [v_1,\dots, v_p] \gets \mathrm{diag}(\Lambda_1) \oplus \dots \oplus \mathrm{diag}(\Lambda_K) \in \mathbb{R}^p$.
\STATE Generate isotropic subgaussian random vector $z \in \mathbb{R}^{p}$.
\STATE $\tilde{x}_i \gets v_i^{-1/2} z_i$, for $i  = 1,\dots, p$.
\FOR{$k = 1,\dots,K$}
\STATE $\tilde{\mathbf{x}} \gets (I_{[d_{1:k-1}]} \otimes U_k \otimes I_{[d_{k+1:K}]})\tilde{
\mathbf{x}}$.
\ENDFOR
\STATE Reshape $\tilde{x}$ into $X \in \mathbb{R}^{d_1 \times \dots \times d_K}$.
\end{algorithmic}
\end{algorithm}
%\section{TeraLasso algorithm proofs}
\subsection{Detailed TeraLasso Algorithm}
Algorithm \ref{alg:lowlevel} shows additional details of the implementation of Algorithm \ref{alg:highlevel} in the main text.
\begin{algorithm}[h]
\caption{TG-ISTA Implementation of TeraLasso (Detailed)}
\label{alg:lowlevel}
\begin{algorithmic}[1]
%\STATE PUT INITIALIZATION OF OMEGA
\STATE Input: SCM factors $S_k$, regularization parameters $\rho_i$, backtracking constant $c \in (0,1)$, initial step size $\zeta_{1,0}$, initial iterate $\Omega_{\mathrm{init}} = \Psi_1^0\oplus \dots \oplus \Psi_K^0$.

%\STATE ``Compute" $W_0 = \Omega_0^{-1}$ doing factor-wise.
%\STATE Gradient is just projected into subspace - easy via eigen. Now for the Hessian: When vec (Delta) is in a subspace, only need to compute appropriate portions of hessian. 
\FOR{$k = 1,\dots, K$}
\STATE $\mathbf{s}_k$, $U_k \gets$ Eigen-decomposition of $\Psi_k^0 = U_k \mathrm{diag}(\mathbf{s}_k) U_k^T$.
\STATE $\tilde{S}_k \gets S_k - I_{d_k} \frac{\mathrm{tr}(S_k)}{d_k}\frac{K-1}{K}$.
\ENDFOR
\WHILE{ not converged}

\STATE $\{\tilde{\mathbf{s}}\}_{k = 1}^K \gets \mathrm{Proj}_{{\mathcal{K}}_\mathbf{p}}\left(\mathrm{diag}\left(\frac{1}{\mathbf{s}_1 \oplus \dots \oplus \mathbf{s}_K}\right)\right)$.
\FOR{$k = 1 \dots K$}
\STATE $G_k^t \gets U_k \mathrm{diag}(\mathbf{s}_k) U_k^T$.
\ENDFOR
\FOR{$j = 0,1,\dots$}
\STATE $\zeta_t \gets c^j \zeta_{t,0}$.
\FOR{$k = 1,\dots,K$}
\STATE $\Psi_k^{t+1} \gets \mathrm{shrink}^-_{\zeta_t \rho_k}(\Psi_k^t - \zeta_t (\tilde{S}_k - G_k^t))$. %FIX SK
\STATE Compute eigen-decomposition $U_k \mathrm{diag}(\mathbf{s}_k) U_k^T = \Psi_k^{t+1}$.

\ENDFOR
\STATE Compute $\mathcal{Q}_{\zeta_t}(\{\Psi_k^{t+1} \},\{\Psi_k^{t} \})$ via \eqref{eq:qq}.
\IF{ $f(\{\Psi_k^{t+1} \}) \leq \mathcal{Q}_{\zeta_t}(\{\Psi_k^{t+1} \},\{\Psi_k^{t} \})$ as in \eqref{eq:qq} \AND $\min_i([\mathbf{s}_1 \oplus \dots \oplus \mathbf{s}_K]_i) >0$ }
 \STATE Stepsize $\zeta_t$ is acceptable; \textbf{break}
\ENDIF
\ENDFOR
\STATE Compute Barzilai-Borwein stepsize $\zeta_{t+1,0}$ via \eqref{eq:BarBor}

\ENDWHILE
\STATE Return $\{\Psi_k^{t+1}\}_{k =1 }^K$.

\end{algorithmic}
\end{algorithm}

\subsection{Decomposition of Objective: Proof of Lemma \ref{lem:decomp}}
\label{app:decomp}
\begin{proofof2}
%THE ORIGINAL HERE

For simplicity of notation define $G_t$ to be the projection of $\Omega^{-1}$ onto the cone $\mathcal{K}_{\mathbf{p}}$ of positive definite Kronecker sum matrices: 
\[
G_t = G_1^t \oplus \dots \oplus G_K^t = \mathrm{Proj}_{\mathcal{K}_{\mathbf{p}}}(\Omega^{-1}_t).
\] %eq:cmpgrd
Using this notation and substituting in \eqref{eq:allGrad} from the main text, the objective \eqref{eq:cmpgrd} becomes
\begin{align}\label{eq:Ogstp}
\Omega_{t+1} \in \arg \min_{\Omega \in \mathcal{K}_{\mathbf{p}}} \left\{ \frac{1}{2}\left\| \Omega - \left(\Omega_t - \zeta_t\left(\tilde{S} - G_t\right)\right) \right\|_F^2 + \zeta_t \sum_{k=1}^K m_k  \rho_k  |{\Psi}_k|_{1,\off} \right\}
\end{align}

Expanding out the Kronecker sums, for %and using Property \ref{prop:disj} in Appendix A, for
\begin{align*}
\Omega_t &= \Psi^t_1 \oplus \dots \oplus \Psi^t_K,\qquad
\Omega = \Psi_1 \oplus \dots \oplus \Psi_K,
\end{align*}
the Frobenius norm term in the objective \eqref{eq:Ogstp} 
%\begin{align}\label{eq:Ogstp}
%\Omega_{t+1} \in \arg \min_{\Omega \in \mathcal{K}_{\mathbf{p}}} \left\{ \frac{1}{2}\left\| \Omega - \left(\Omega_t - \zeta\left(\tilde{S} - G^t\right)\right) \right\|_F^2 + \zeta \sum_{k=1}^K m_k  \rho_k  |{\Psi}_k^-|_1 \right\}
%\end{align}
can be decomposed into a sum of a diagonal portion and a factor-wise sum of the off diagonal portions. This holds by Property \ref{prop:disj} in Appendix A which states the off diagonal factors $\Psi_k^-$ have disjoint support in $\Omega$. Thus,
\begin{align*}
&\left\| \Omega - \left(\Omega_t - \zeta_t\left((\tilde{S}_1 - G_1^t) \oplus \dots \oplus (\tilde{S}_K - G_K^t)\right)\right) \right\|_F^2\\ &= 
\left\| \left(\Psi_1 - (\Psi_1^t - \zeta_t (\tilde{S}_1 - G_1^t))\right) \oplus \dots \oplus \left(\Psi_K - (\Psi_K^t - \zeta_t (\tilde{S}_K - G_K^t))\right) \right\|_F^2\\
&= \left\|\diag( \Omega) - \left(\diag(\Omega_t) - \zeta_t\diag\left(\tilde{S} - G_t\right)\right) \right\|_F^2\\ &+ \sum_{k = 1}^K m_k \left\| \offd\left(\Psi_1 - (\Psi_1^t - \zeta_t (\tilde{S}_1 - G_1^t))\right) \right\|_F^2.
\end{align*}
Substituting into the objective \eqref{eq:Ogstp}, we obtain
\begin{align*}%\label{eq:Ogstp}
\Omega_{t+1} \in &
\arg \min_{\Omega \in \mathcal{K}_{\mathbf{p}}} \left\{ \frac{1}{2}\left\| \diag(\Omega) - \left(\diag(\Omega_t) - \zeta_t\diag\left(\tilde{S} - G_t]\right)\right) \right\|_F^2\right. \\ &+ \left. \sum_{k=1}^K m_k  \left(\frac{1}{2}\left\| \offd\left(\Psi_k - (\Psi_k^t - \zeta_t (\tilde{S}_k - G_k^t))\right) \right\|_F^2 + \zeta_t \rho_k  |{\Psi}_k|_{1,\off}\right) \right\}.
\end{align*}
This objective is decomposable into a sum of terms each involving either the diagonal $\Omega^+$ or one of the off diagonal factors $\Psi_k^-$. Thus, we can solve for each portion of $\Omega$ independently, giving %decomposable with respect to the diagonal and each of the off diagonal factors. 

%\begin{lemma}[Decomposition of objective]\label{lem:decomp}
%Recalling that
%\begin{align*}
%\Omega_t &= \Psi^t_1 \oplus \dots \oplus \Psi^t_K\\
%\Omega &= \Psi_1 \oplus \dots \oplus \Psi_K
%\end{align*}
%the objective \eqref{eq:Ogstp} can be decomposed into $K+1$ fully identifiable independent optimization problems. Specifically,
\begin{align}\label{eq:DigO}
\offd(\Psi_k^{t+1}) &= \arg \min_{\offd(\Psi_k)} \frac{1}{2}\left\| \offd(\Psi_k) - \offd(\Psi_k^t - \zeta_t (\tilde{S}_k - G_k^t)) \right\|_F^2 + \zeta_t \rho_k  |{\Psi}_k|_{1,\off}\\\nonumber
\diag(\Omega_{t+1}) &= \arg \min_{\diag(\Omega)}\frac{1}{2}\left\| \diag(\Omega) - \diag\left(\Omega_t - \zeta_t\left(\tilde{S} - G_t\right)\right) \right\|_F^2.
\end{align}
Since the diagonal $\diag(\Omega)$ is not regularized in \eqref{eq:DigO}, we have
\[
\diag(\Omega_{t+1}) = \diag(\Omega_t) - \zeta_t\diag(\tilde{S} - G_t),
\]
i.e.
\begin{align}\label{eq:digg}
\diag(\Psi_k^{t+1}) = \diag(\Psi_k^{t}) - \zeta_t\diag(\tilde{S}_k - G_k^t).
\end{align}
This means we can equivalently obtain the solution of the problem \eqref{eq:DigO} by solving
\[
\Psi_k^{t+1} = \arg \min_{\Psi_k} \frac{1}{2}\left\| \Psi_k - (\Psi_k^t - \zeta_t (\tilde{S}_k - G_k^t)) \right\|_F^2 + \zeta_t \rho_k  |{\Psi}_k|_{1,\off},
\]
completing the proof.

\end{proofof2}
%i.e.
%\begin{align}\label{eq:digg}
%[%\Psi_k^{t+1}]^+ = [\Psi_k^{t}]^+ - \zeta(\tilde{S}_k^+ - G_k^+).
%\end{align}
%\end{lemma}
%The proof is in Appendix \ref{app:decomp}.

\subsection{Proof of Joint Convexity} % (Theorem \ref{Thm:Conv})}
\label{App:Conv}

Our objective function is %\eqref{Eq:objFun}
\begin{align}
\label{Eq:objFunsupp}
Q(\{{\Psi}_k\}) &=  -\log | {\Psi}_1 \oplus \dots \oplus {\Psi}_K| + \langle \hat{S}, {\Psi}_1 \oplus\dots \oplus {\Psi}_K\rangle + \sum_k \rho_k d_k|{\Psi}_k|_{1,\off}.% + \gamma q\|{\Psi}^-\|_1
\end{align}
We have the following theorem. 
This theorem proves the joint convexity of the objective function \eqref{Eq:objFunsupp} and the uniqueness of the minimizer $\hat{\Omega}$.
\begin{theorem}%[Joint Convexity and Uniqueness of $\hat{\Omega}$]
\label{Thm:Conv}
The objective function \eqref{Eq:objFunsupp} is jointly convex in $\{{\Psi}_k\}_{k=1}^K$. Furthermore, define the set $\mathcal{A} = \{\{{\Psi}_k\}_{k=1}^K | Q(\{{\Psi}_k\}_{k=1}^K) = Q^*\}$ where the global minimum $Q^* = \min_{\{{\Psi}_k\}_{k=1}^K} Q(\{{\Psi}_k\}_{k=1}^K)$. There exists a unique ${\Omega}_* \in \mathcal{K}_{\mathbf{p}}^\sharp$, defined in \eqref{eq::kppintro}, that achieves the minimum of $Q$ such that
\begin{align}
{\Psi}_1 \oplus\dots \oplus {\Psi}_K = {\Omega}_* \quad \forall \: \{{\Psi}_k\}_{k=1}^K \in \mathcal{A}.
\end{align}
%The proof is given in Appendix \ref{App:Conv}.
\end{theorem}

\begin{proof}

By definition,
\begin{align}
{\Psi}_1 \oplus \dots \oplus {\Psi}_K =&{\Psi}_1 \otimes {I}_{m_1} + \dots +  {I}_{m_K} \otimes {\Psi}_K
%\\\nonumber
%\mathrm{vec}(\mathcal{R}({\Psi}\oplus{\Theta} )) =& ({I}_{p^2\times p^2} \otimes \mathrm{vec}({I}_{q\times q}))\mathrm{vec}({\Psi}) \\\nonumber &+ (\mathrm{vec}({I}_{p\times p}) \otimes ({I}_{q^2\times q^2}))\mathrm{vec}({\Theta})
\end{align}
is an affine function of $\mathbf{z} = [\mathrm{vec}({\Psi}_1); \dots ; \mathrm{vec}({\Psi}_K)]$. Thus, since $\log |{A}|$ is a concave function on the space of positive definite matrices \citep{boyd2009convex}, all the terms of $Q$ are convex since convex functions of affine functions are convex and the elementwise $\ell_1$ norm is convex. Hence $Q$ is jointly convex in $\{{\Psi}_k\}_{k=1}^K$ on $\mathcal{K}_{\mathbf{p}}^\sharp$. Hence, every local minima is also global. Furthermore, for positive $\rho_k$ at least one global minimum must exist since $|\cdot|_1$ has a global minimum at zero.

We show that a nonempty set of $\{{\Psi}_k\}_{k=1}^K$ such that $Q(\{{\Psi}_k\}_{k=1}^K)$ is minimized maps to a unique ${\Omega} = {\Psi}_1 \oplus \dots \oplus {\Psi}_K$. If only one point $\{{\Psi}_k\}_{k=1}^K$ exists that achieves the global minimum, then the statement is proved. Otherwise, suppose that two distinct points $\{{\Psi}_{k,1}\}_{k=1}^K$ and $\{{\Psi}_{k,2}\}_{k=1}^K$ achieve the global minimum $Q^*$. Then, for all $k$ define
\begin{align}
{\Psi}_{k,\alpha} = \alpha{\Psi}_{k,1} + (1-\alpha) {\Psi}_{k,2}
\end{align}
By convexity, $Q(\{{\Psi}_{k,\alpha}\}_{k=1}^K) = Q^*$ for all $\alpha \in [0,1]$, i.e. $Q$ is constant along the specified affine line segment. This can only be true if (up to an additive constant) the first two terms of $Q$ are equal to the negative of the second two terms along the specified segment. Since 
\begin{align}
-\log |{A}| + \langle {\hat{S}}, {A}\rangle
\end{align}
is strictly convex and smooth on the positive definite cone (i.e. the second derivative along any line never vanishes) \citep{boyd2009convex} and the sum of the two elementwise $\ell$1 norms along any affine combination of variables is at most piecewise linear when smooth, this cannot hold when ${\Omega}_\alpha = {\Psi}_{1,\alpha} \oplus \cdots \oplus {\Psi}_{K,\alpha}$ varies with $\alpha$. Hence, ${\Omega}_\alpha$ must be a constant ${\Omega}^*$ with respect to $\alpha$. Thus, the minimizing ${\Omega}^*$ is unique and Theorem \ref{Thm:Conv} is established. 

\qed \end{proof}

\section{Additional experiments}
\label{supp:addlExp}

\subsection{Convergence of nonconvex regularization algorithm}

Figure \ref{Fig:ConvNon} illustrates the convergence of the nonconvex Algorithm \ref{alg:highlevelNC} (experiment described more thoroughly in the main text).

\begin{figure}[h]
\centering
\begin{subfigure}[b]{0.4\textwidth}
\centering
\includegraphics[width=1.8in]{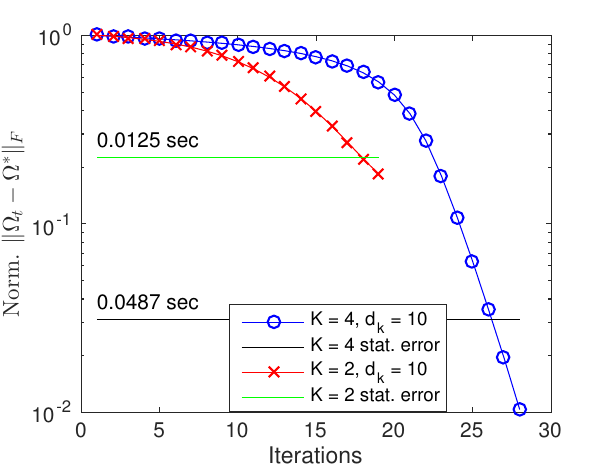}
\caption{SCAD penalty, $n=100$}
\end{subfigure}
\begin{subfigure}[b]{0.4\textwidth}
\centering
\includegraphics[width=1.8in]{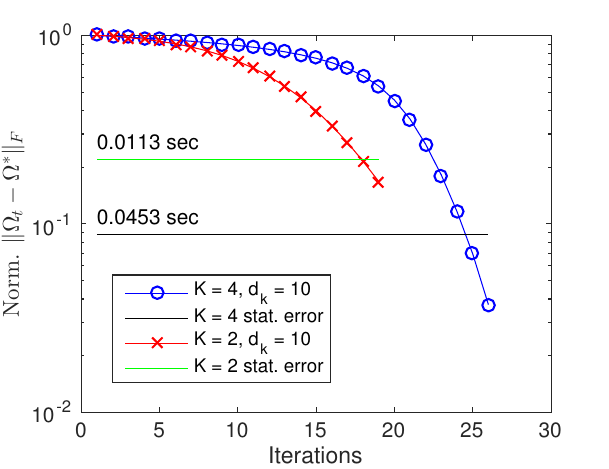}
\caption{MCP penalty, $n=100$}
\end{subfigure}
\vfill
\begin{subfigure}[b]{0.4\textwidth}
\centering
\includegraphics[width=1.8in]{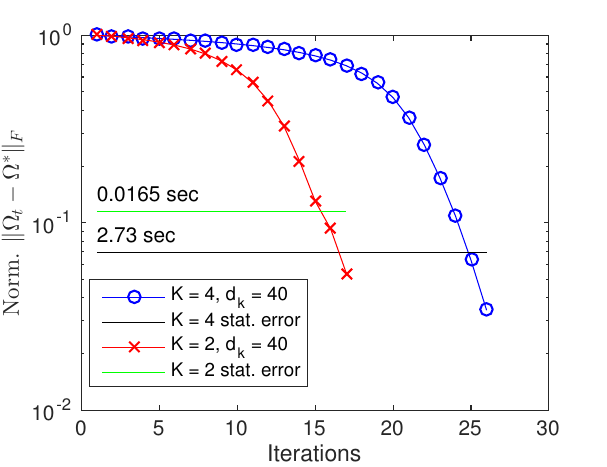}
\caption{SCAD penalty, $n=1$}
\end{subfigure}
\begin{subfigure}[b]{0.4\textwidth}
\centering
\includegraphics[width=1.8in]{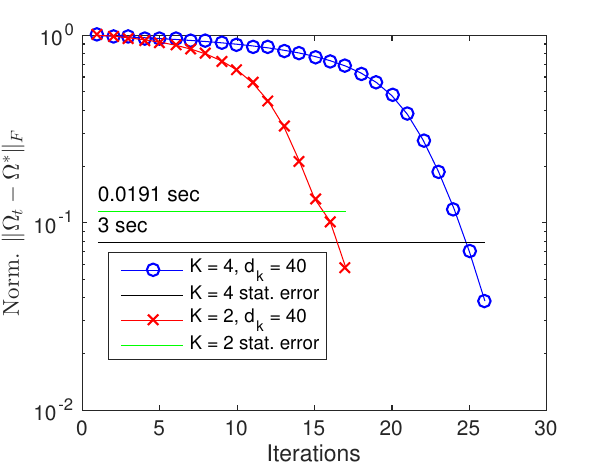}
\caption{MCP penalty, $n=1$}
\end{subfigure}
\caption{Geometric convergence of the nonconvex TG-ISTA implementation of TeraLasso. Shown is the normalized Frobenius norm $\|{\Omega}_t - \Omega^*\|_F$ of the difference between the estimate at the $t$th iteration and the optimal $\Omega^*$. On the left are results comparing $K = 2$ and $K=4$ on the same data with the same value of $p$ (different $d_k$), on the right they are compared for the same value of $d_k$ (different $p$). Also included are the statistical error levels, and the computation times required to reach them. Observe the consistent and rapid linear convergence rate, with logarithmic dependence on $K$ and dimension $d_k$.  }
%Note the failure of ITML and similar performance of the remaining online and batch methods.   }
\label{Fig:ConvNon}
\end{figure}

\subsection{Computational Complexity of TG-ISTA}
\label{supp:compcomp}
In Section \ref{supp:conv}, we show that TG-ISTA reaches the statistical error floor in 
\begin{align*}
T = O_p\left(\frac{2 \log K + \log(s+p) + \log \log p - \log (n \min_k m_k)}{\log \left(1 - \frac{2}{1 + K^2}\right)}\right)
\end{align*}
iterations. 

Each TG-ISTA iteration is also computationally efficient. Due to the representation \eqref{Eq:objFun}, the TG-ISTA implementation of TeraLasso never needs to form the full $p \times p$ covariance. The memory footprint of the proposed implementation is $O(p + \sum_{k = 1}^K d_k^2)$ as opposed to the $O(p^2)$ storage required by BiGLasso and GLasso. Since the training data itself requires $O(n p)$ storage, the storage footprint of the TG-ISTA implementation of TeraLasso is scalable to large values of $p = \prod_{k = 1}^K d_k$ when the $d_k/p$ decrease in $p$, e.g. $d_k = p^{1/K}$.
The computational cost per iteration is dominated by the computation of the gradient, which is performed by doing $K$ eigendecompositions of size $d_1,\dots,d_K$ respectively and then computing the projection of the inverse of the Kronecker sum of the resulting eigenvalues. The former step costs $O(\sum_{k=1}^K d_k^3)$, and the second step costs $O(pK)$, giving a cost per iteration of $
O\left(pK + \sum_{k=1}^K d_k^3\right)$.
For $K > 1$ and $d_k/p \ll 1$, this gives a dramatic improvement on the $O(p^3) = O(\prod_{k=1}^K d_k^3)$ cost per iteration of unstructured Graphical Lasso algorithms \citep{GISTA,hsieh2014quic}. In addition, for $K \leq 3$ the cost per iteration is comparable to the $O(d_1^3 + d_2^3 + d_3^3)$ cost per iteration of the most efficient ($K=3$) Kronecker product GLasso methods such as \cite{zhou2014gemini}.

%We first compared our proposed TG-ISTA implementation of TeraLasso to the original BiGLasso algorithm proposed in \cite{kalaitzis2013bigraphical}. 
Figure \ref{Fig:ConvBIG} shows convergence speeds on various random ER graph estimation scenarios, with the BiGLasso of \cite{kalaitzis2013bigraphical} shown for comparison. Note that the BiGLasso algorithm only applies when the diagonal elements of $\Omega$ are known, so it cannot be considered to solve the general BiGLasso or TeraLasso objectives. Observe that TeraLasso's ability to efficiently exploit the Kronecker sum structure to obtain computational and memory savings allows it to quickly converge to the optimal solution, while the alternating-minimization based BiGLasso algorithm is impractically slow. All computation was timed on a 4-core, 64 bit, 2.5GHz CPU system using Matlab 2016b.
\begin{figure}[h]
\centering
\scriptsize{
\begin{tabular}{c|*{5}{c}}
K              & $p$ & $d_k$ & $n$ & TeraLasso Runtime (s)  & BiGLasso Runtime (s)  \\
\hline
2 & $100$ & 10 & 10 & .0131 & .84\\
2 & $625$ & 25 & 10 & .0147 & 6.81\\
2 & $2500$ & 50 & 10 & .0272 & 161\\
2 & $5625$ & 75 & 10 & .0401 & 1690\\
2 & $10^4$ & 100 & 10 & .0664 &  \\
2 & $2.5\times 10^5$ & 500 & 10 & 1.62 &   \\
2 & $10^6$ & 1000 & 10 & 23.2 &   \\
2 & $4\times 10^6$ & 2000 & 10 & 427 &   \\
\hline
3 & $10^6$ & 100 & 10 & 3.52 & NA  \\
3 & $8\times 10^6$ & 200 & 10 & 11.2 & NA  \\
3 & $1.25 \times 10^8$ & 500 & 10 & 32.6 & NA  \\
3 & $1\times 10^9$ & 1000& 10 & 70.0 & NA\\
\hline
4 & $10^4$ & 10 & 10 & .281 & NA  \\
4 & $1.6\times 10^5$ & 20 & 10 & .649 & NA  \\
4 & $6.25 \times 10^6$ & 50 & 10 & 10.8 & NA  \\
4 & $1.00 \times 10^9$ & 178 & 10 & 88.4 & NA\\
\hline
5 & $1.16 \times 10^9$ & 65 & 10 & 124 & NA\\
\end{tabular}
}
\caption{Run times for the BiGLasso algorithm \citep{kalaitzis2013bigraphical} and the proposed TG-ISTA on a $K=2$ Kronecker sum model where the ground-truth edge topology follows a Kronecker sum Erd\"{o}s-R\'{e}nyi graphs for various values of the total dimension $p = d_1 d_2$ with $d_1  = d_2$. Also shown are TeraLasso results for $K=3, 4, 5$, for which BiGLasso is not applicable. %All experiments were run until the 4-iteration Frobenius norm change was less than $10^{-4}$ times the true covariance Frobenius magnitude.
Note the $10^2$ - $10^4$ magnitude speed up of TeraLasso (increasing with $p$), allowing estimation of billion-variable covariances ($10^{18}$ elements). %Recall that TG-ISTA estimates both the diagonal and the off-diagonal elements of $\Omega$ and requires only $O(p +  d_k^2)$ storage, while BiGLasso does not estimate the diagonal elements of $\Omega$ and requires $O(p^2)$ storage.
}
\label{Fig:ConvBIG}
\end{figure}

\subsection{Convergence rate verification}
\label{supp:convVerif}
\begin{figure}[h]
\centering
\includegraphics[width=4in]{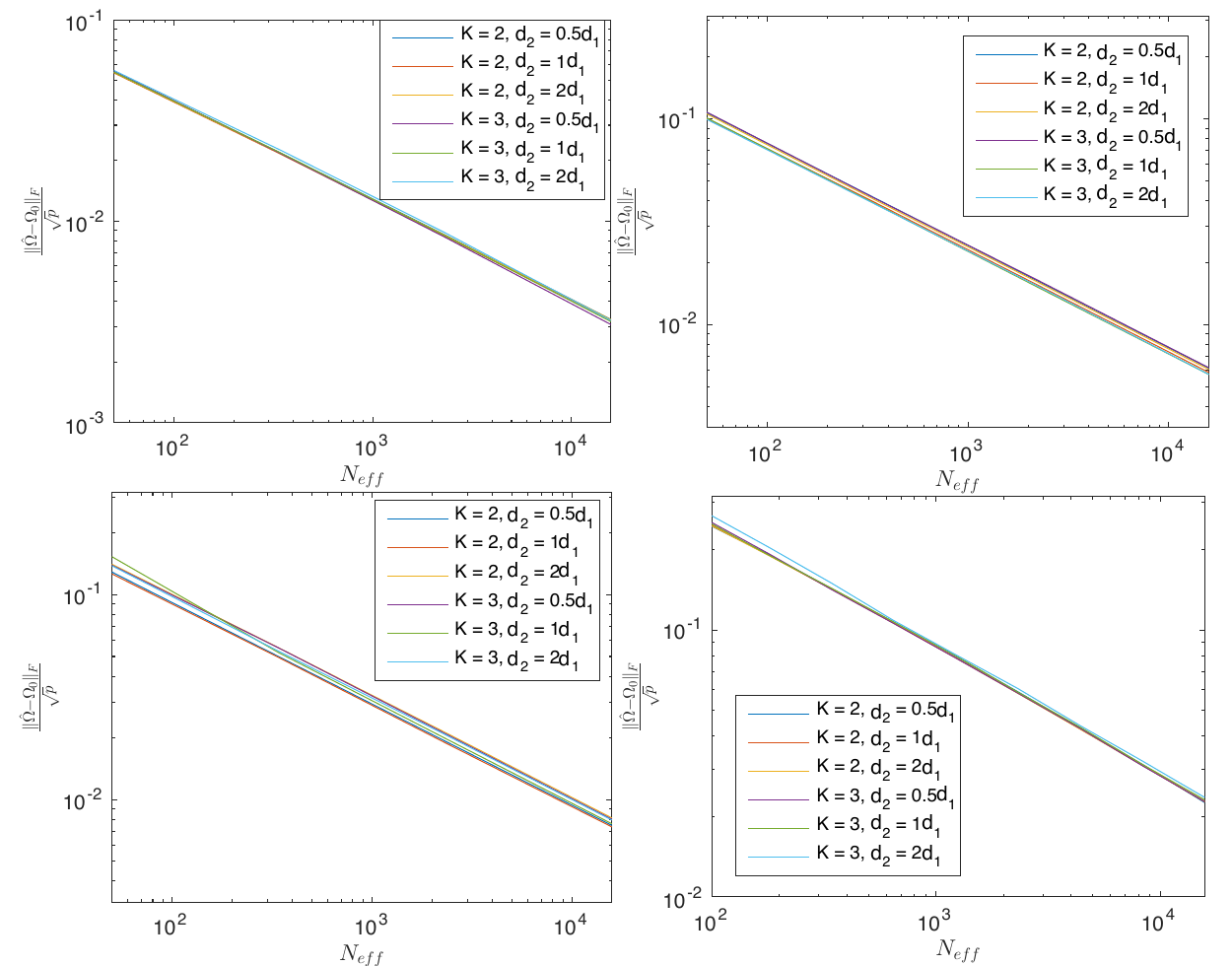}
\caption{%Dataset 1. The dataset remains fixed throughout, no drift occurs as the constraints are observed. 
Frobenius norm convergence rate for the proposed TeraLasso. Shown (ordered by increasing difficulty) are results for AR graphs with $d_1 = 40$ (top left), random ER graphs with $d_1 = 10$ (top right), $d_1 = 40$ (bottom left), and random grid graphs with $d_1 = 36$ (bottom right). For each covariance model, 6 different combinations of $d_2$ and $K$ are considered, and the resulting Frobenius error is plotted versus the effective sample size $N_{\mathrm{eff}}$ \eqref{eq:Neff}.
}
% In all cases, $\rho_k$ are set according to their theoretical values in Theorem 1. %The closeness of the plots over the six scenarios verifies the tightness of the bounds we derive. 
%}
%Note the failure of ITML and similar performance of the remaining online and batch methods.   }
\label{Fig:Rates}
\end{figure}

In this section, we verify that our bounds on the rate of convergence are tight in the case of $\ell$1 regularization. 
%MODEL, DIFFERENT STRUCTURES
We will hold $\|\Sigma_0\|_2$ and $s/p$ constant. We set $\rho_k$ as in Theorem \ref{Thm:1}. %, we set $\rho_k$ according to $\eqref{eq:tuning}$.
%\begin{align}
%\label{eq:rrr}
%\rho_k = C \sqrt{\frac{\log p}{n m_k}}
%\end{align}
%where $C$ is an absolute constant. 
By Lemma \ref{lem:DO} in the supplement, this implies an ``effective sample size" proportional to the inverse of the bound on $\|\hat{\Omega} - \Omega_0\|_F^2/p$:
\begin{align}
\label{eq:Neff}
N_{\mathrm{eff}} \propto (\log p)^{-1}n \min_k m_k. %}{\log p}.
\end{align}

%DISCUSS RIGHT AFTER THM 1. COMPARE TO GEMINI RATE. TALK ABOUT OPTIMALITY, IN DIFFERENT SITUATIONS. DISCUSS INDIVIDUAL FACTOR RATES. CERTAIN PARTS KNOWN?

%TALK ABOUT RNP IN SUPP RIGHT AWAY, DON'T WAIT UNTIL N. GET RID OF M RATE. 

For each experiment below, we varied $K$ and $d_2$ over 6 scenarios. To ensure that the constants in the bound were minimally affected, we held $\Psi_1$ constant over all $(K,d_2)$ scenarios, and let $\Psi_3 = 0$ and $d_3 = d_1$ when $K=3$. 
We let $d_2$ vary by powers of 2, i.e. $d_2(c_d) = 2^{c_d} d_{2,\mathrm{base}}$ where $d_{2,\mathrm{base}}$ is a constant, allowing us to create a fixed matrix $B$ and set $\Psi_2 = I_{d_2/d_{2,\mathrm{base}}} \otimes B$ to ensure the eigenvalues of $\Psi_2$ and thus $\|\Sigma_0\|_2$ remain unaffected as $d_2$ ($c_d$) changes. %EXPLAIN D2BASE

Results averaged over random training data realizations are shown in Figure \ref{Fig:Rates} for ER ($d_k/2$ edges per factor), random grid ($d_k/2$ edges per factor), and AR-1 graphs (AR parameter $.5$ for both factors). Observe that in each case, the curves for all scenarios are very close despite the wide variation in dimensionality, indicating that our bound on the rate of convergence in Frobenius norm is tight. %Small variations are not unexpected, since the objective function is joint over all factors, making the several approximations needed in the proof have varying degrees of accuracy.

\subsection{Additional details for wind speed data experiments}
\label{supp:fit}
For the wind speed data example in the main text, we first regressed out the mean for each day in the year via a 14-th order polynomial regression on the entire history from 1948-2015. %For low latitudes, the grid is nearly rectangular, which as noted above suggests the use of TeraLasso. 
As in the main text, we extracted two $20\times 10$ spatial grids, one from eastern North America, and one from Western North America, with the latter including an expansive high-elevation area and both Atlantic and Pacific oceans (Figure \ref{Fig:Nation}). 
%We compare the TeraLasso estimate of the spatial precision matrix to the shrinkage sample covariance (Figure \ref{Fig:WA}). 
We compare the TeraLasso estimator to the unstructured shrinkage
estimator, the non-sparse Kronecker sum estimator (TeraLasso estimator
with sparsity parameter $\rho = 0$), 
and the Gemini sparse Kronecker product estimator of
\cite{zhou2014gemini}. 
Figure \ref{Fig:WA} shows the estimated precision matrices trained on the eastern grid, using time samples from January in $n$ years following 1948. %Observe the approximate AR structure, and the break in correlation (Figure \ref{Fig:WFac} (b), longitude factor) in the Western Longitude factor. The location of this break corresponds to the high elevation line of the Rocky Mountains.
%In the supplement, 
Note the graphical structure reflects approximate auto-regressive (AR) spatial and temporal structure in each dimension. The TeraLasso estimation is much more stable than the Kronecker product estimation for small sample size $n$. 

\begin{figure}[h]
\centering
\includegraphics[width=5.1in]{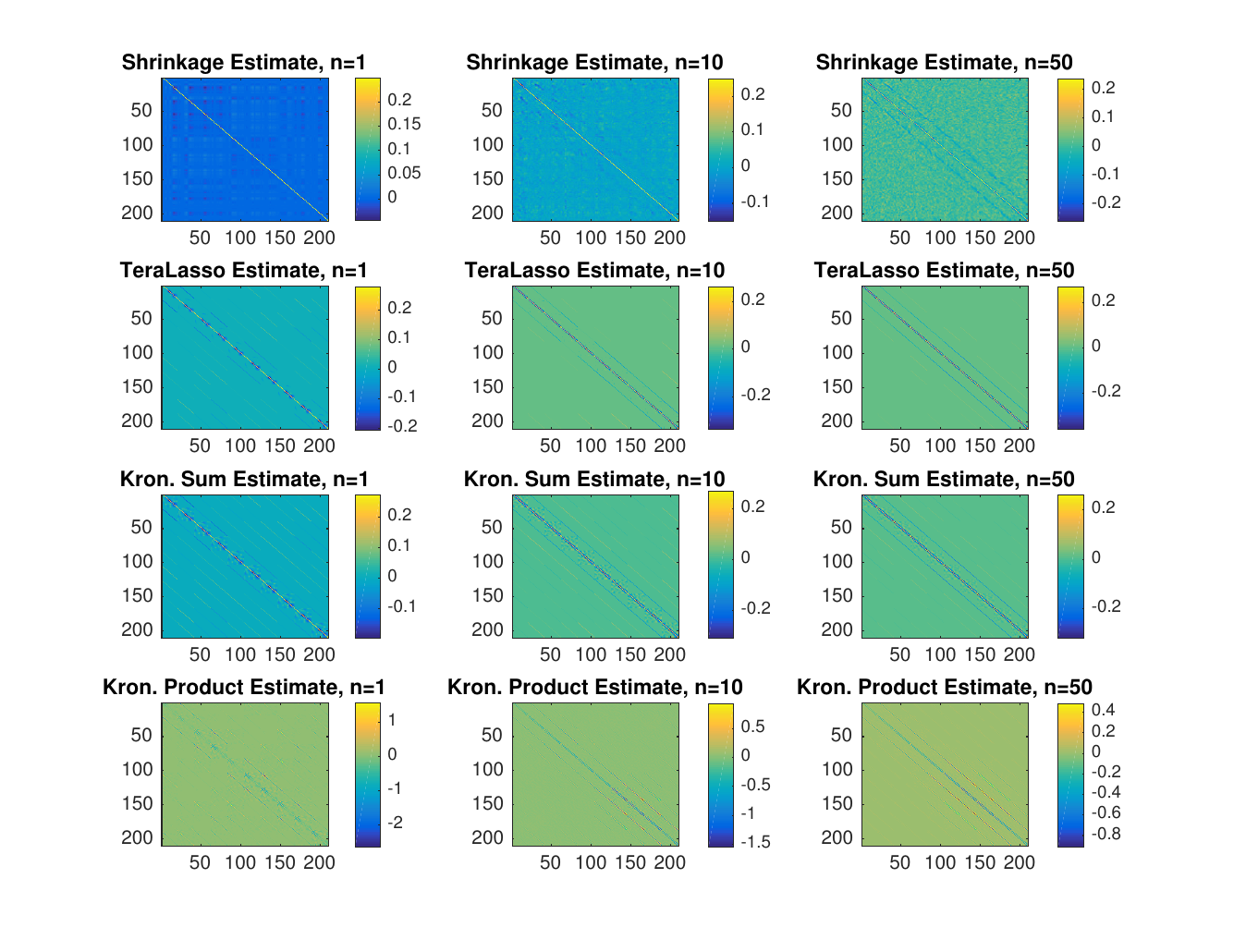}
\caption{Windspeed data, eastern grid. Spatial ($K=2$) precision matrix estimation, comparing TeraLasso to unstructured and sparse Kronecker product (Gemini) techniques, using $n=1$, $10$, and $50$. Observe the increasing sparsity and structure with increasing $n$, and TeraLasso's consistent structure even from one sample up to $n=50$. For improved contrast, the diagonal elements have been reduced in the plot. }
\label{Fig:WA}
\end{figure}

To quantify the fit of the estimated precision matrices to the observed wind data, we compare to an unstructured estimator in a higher sample regime. After training each estimated precision matrix (TeraLasso, Gemini, and ML Kronecker Product) on a 30-day summer interval from 1 year, as in the main experiment, we create a sample covariance $\hat{S}_{\mathrm{test}}$ from the same 30-day summer intervals in the remaining 50 years. We evaluate the precision matrices estimated by TeraLasso, Gemini, and ML Kronecker product using a normalized Frobenius error metric:
\[
\arg \min_{\delta \in [0, 1]} \|\hat{\Omega} - (\hat{S}_{\mathrm{test}} + \delta I_p)^{-1}\|_F/\|(\hat{S}_{\mathrm{test}} + \delta I_p)^{-1}\|_F.
\]
If this metric is small, the structured $\hat{\Omega}$ is close to the unstructured $(\hat{S}_{\mathrm{test}} + \delta I_p)^{-1}$, indicating a good fit to the data. The small ridge $\delta$ is included to ensure that the unstructured inverse estimator $(\hat{S}_{\mathrm{test}} + \delta I_p)^{-1}$ is well-conditioned, with the minimum taken over $\delta$ to present the most optimistic view of Gemini and the ML Kronecker product. 
%By necessity, this metric assumes that the true precision matrix is relatively constant from year to year - hence we expect some residual error due to the stronger adaptivity of the 1-year structured estimates. 
The results for each precision matrix are TeraLasso: 0.0728, Gemini: 0.903, and ML Kronecker Product: 0.76, confirming the superior performance of the TeraLasso estimator.

\subsection{Comparison between TeraLasso and Gemini (Kronecker product) log determinant geometry}
\label{supp:logdet}

In this section, we present further analysis of the relation of the performance of TeraLasso in this wind data setting to its inherently more robust eigenstructure.

Recall the $\ell 1$ TeraLasso objective
\begin{equation}
\label{eq:tlobj}
-\log |\Psi_1 \oplus \dots \oplus \Psi_K| + \langle \hat{S}, \Psi_1 \oplus \dots \oplus \Psi_K\rangle + \sum_{k=1}^K \rho_k m_k |{\Psi}_k|_{1,\off}.
\end{equation}
where $m_k=p/d_k$. The Gemini Kronecker product algorithm \cite{zhou2014gemini} uses a similar objective function to estimate the Kronecker product covariance, which can be shown to be equivalent to
\begin{equation}
\label{eq:gkobj}
-\log |\Psi_1 \otimes  \Psi_2| + \langle \hat{S}, \Psi_1 \oplus  \Psi_2\rangle + \sum_{k=1}^2 \rho_k m_k |{\Psi}_k|_{1,\off}.
\end{equation}
Observe that, for $K=2$, the Gemini objective function \eqref{eq:gkobj} is the same as in TeraLasso objective function \eqref{eq:tlobj} except for the log determinant term. Figure \ref{fig:misp} (a) compares the Kronecker product Gemini estimator to TeraLasso on data generated using precision matrix $\Psi_1 \oplus \Psi_2$, and again on data generated using the Kronecker sum precision matrix $\Psi_1 \otimes \Psi_2$, where $\Psi_1, \Psi_2$ are each $10\times 10$ random ER graphs (generated as in the main text) with 5 nonzero edges. In all cases, we used the theoretically dictated optimal $\ell_1$ penalty for TeraLasso from Theorem \ref{Thm:1} in the main text and for Gemini from Theorem 3.1 in \cite{zhou2014gemini}. Note that both methods perform well in the single sample regime, even under model misspecification. This apparent symmetricity is very different from the relation of the ML Kronecker sum (TeraLasso with zero penalty) and the ML Kronecker product (not directly related to Gemini), whose results on the same data are also shown in Figure \ref{fig:misp} (b). In this case, the ML Kronecker product performs poorly in the single sample regime, whereas the ML Kronecker sum performs well in all regimes, surpassing the ML Kronecker product method in the low sample regime even when the data is generated under the Kronecker product model. 

This seems to indicate that the Gemini estimator leverages some of the inherent stability of the ML Kronecker sum objective (TeraLasso) to solve the more unstable Kronecker product covariance estimation problem.
\begin{figure}[h]
\centering
\begin{subfigure}[b]{\textwidth}
\centering
\includegraphics[width=1.83in]{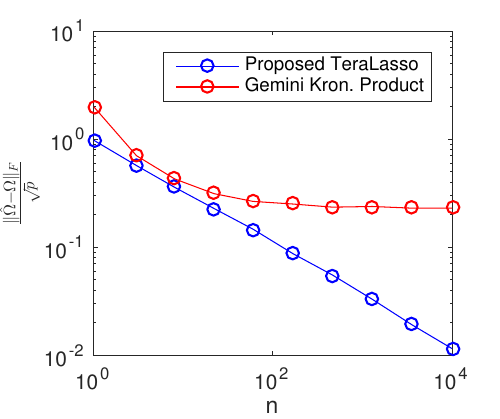}\includegraphics[width=2in]{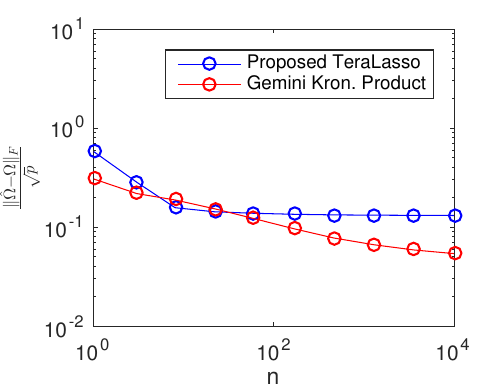}
\caption{TeraLasso (proposed Kronecker sum) and Gemini (Kronecker product) estimators, using optimal $\ell_1$ penalties, under model misspecification. Note the largely symmetric performance under model misspecification (TeraLasso on right, Gemini on left).}
\end{subfigure}
\hfill
\begin{subfigure}[b]{\textwidth}
\centering
\includegraphics[width=2in]{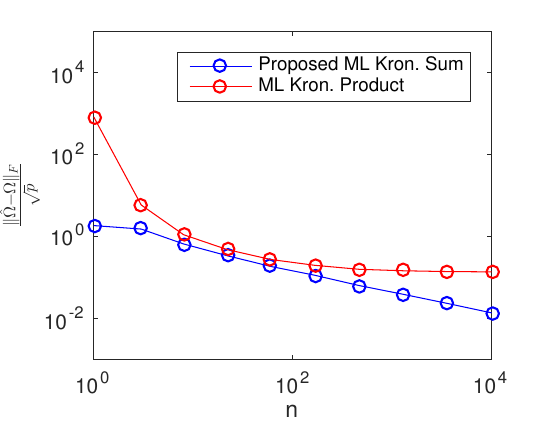}\includegraphics[width=2in]{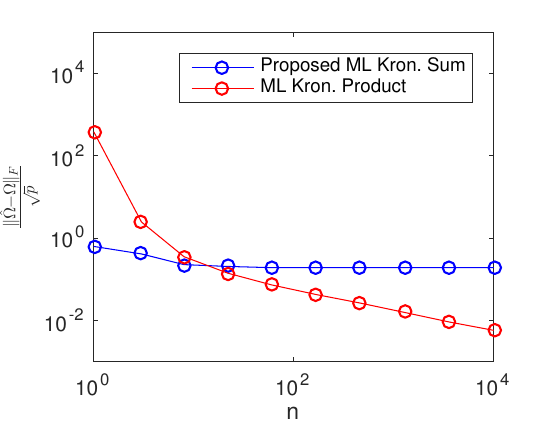}
\caption{Gaussian maximum likelihood estimators under model misspecification. Note the significant low-sample advantage of our proposed ML Kronecker Sum estimator even under model misspecification (right).}
\end{subfigure}
\caption{Kronecker sum and Kronecker product estimators under model misspecification. Left-hand plots were generated using Kronecker sum precision matrix $\Omega = \Psi_1 \oplus \Psi_2$, and right-hand plots were generated using Kronecker product precision matrix $\Omega = \Psi_1 \otimes \Psi_2$.}% POINT OUT CROSSOVER ON RIGHT, SAY WHY TERALASSO NICER. SAY SQUARE. MAKE SAME SCALE. RUN ANOTHER WITH ACTUAL REAL DATA TOPOLOGY.  }
\label{fig:misp}
\end{figure}

\begin{figure}[h]
\centering
\begin{subfigure}[b]{\textwidth}
\centering
\includegraphics[width=2in]{Figures/MispReal2-eps-converted-to.pdf}\includegraphics[width=2in]{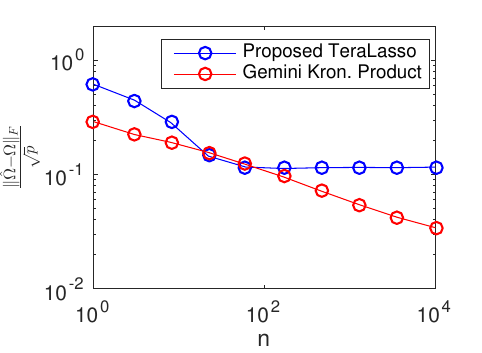}
\caption{TeraLasso (proposed Kronecker sum) and Gemini (Kronecker product) estimators, using optimal $\ell_1$ penalties, under model misspecification. Note the largely symmetric performance under model misspecification (TeraLasso on right, Gemini on left).}
\end{subfigure}
\hfill
\begin{subfigure}[b]{\textwidth}
\centering
\includegraphics[width=2in]{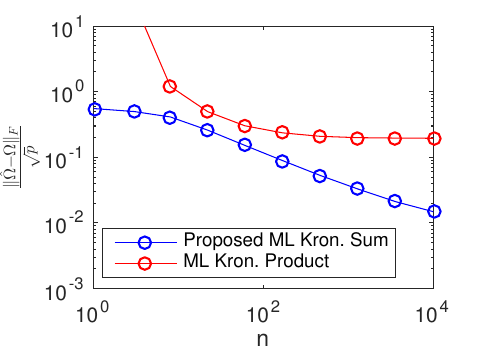}\includegraphics[width=2in]{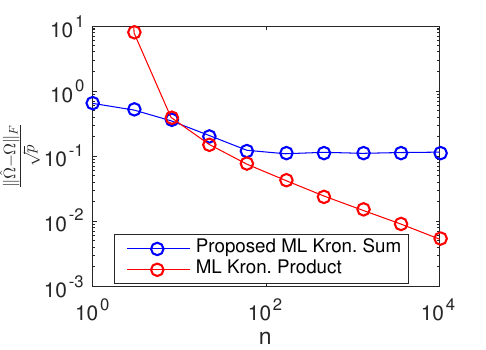}
\caption{Gaussian maximum likelihood estimators under model misspecification. Note the significant low-sample advantage of our proposed ML Kronecker Sum estimator even under model misspecification (right).}
\end{subfigure}
\caption{Kronecker sum and Kronecker product estimators under model misspecification, using the wind data Kronecker sum precision matrix $\Omega = \Psi_1 \oplus \Psi_2$ shown in Figure \ref{Fig:WFac} (a). Left-hand plots were generated using Kronecker sum precision matrix $\Omega = \Psi_1 \oplus \Psi_2$, and right-hand plots were generated using Kronecker product precision matrix $\Omega = \Psi_1 \otimes \Psi_2$.}% POINT OUT CROSSOVER ON RIGHT, SAY WHY TERALASSO NICER. SAY SQUARE. MAKE SAME SCALE. RUN ANOTHER WITH ACTUAL REAL DATA TOPOLOGY.  }
%
%\centering
%\includegraphics[width=2.3in]{}\includegraphics[width=2.3in]{}
%\caption{Comparison of our Kronecker sum estimates with the Gemini and ML Kronecker product estimators, using data synthetically generated using the wind data Kronecker sum precision matrix $\Omega = \Psi_1 \oplus \Psi_2$ shown in Figure \ref{Fig:WFac} (a). Shown is the relative estimation error as a function of sample size. Note that for data generated using this Kronecker sum model, the ML Kronecker product estimator performs significantly better than the Gemini Kronecker product estimator in the moderate sample regime.}% POINT OUT CROSSOVER ON RIGHT, SAY WHY TERALASSO NICER. SAY SQUARE. MAKE SAME SCALE. RUN ANOTHER WITH ACTUAL REAL DATA TOPOLOGY.  }
\label{fig:misp2}
\end{figure}

To further illuminate the connection between TeraLasso and Gemini, we now examine the relationship of the geometry of the differing log determinant terms. Let the eigenvalues of $\Psi_k$ be denoted as $\lambda_{k, 1},\dots, \lambda_{k,d_k}$, and suppose that $\Psi_1 \oplus \dots \oplus \Psi_K \succ 0$ so we can assume all the $\lambda_{k,i} \geq 0$. Using the properties of determinants and the additivity of the eigenvalues in a Kronecker sum we can write
\[
\log | \Psi_1 \oplus \dots \oplus \Psi_K| = \sum_{i_1 = 1}^{d_1} \dots \sum_{i_K = 1}^{d_K} \log|\lambda_{1,i_1} + \dots + \lambda_{K,i_K}|.
\]
Observe that the partial derivative of the log determinant with respect to any one eigenvalue $\lambda_{k,i_k}$ is $\sum_{i_1,\dots,i_{k-1},i_{k+1},\dots i_K}1/|\lambda_{1,i_1} + \dots + \lambda_{K,i_K}| \leq m_k/|\lambda_{k,i_k}|$.

Correspondingly, the log determinant of a Kronecker product is
\[
\log | \Psi_1 \otimes \dots \otimes \Psi_K| =\sum_{k=1}^K m_k \sum_{i_k = 1}^{d_k} \log|\lambda_{k,i_k}|.
\]
Observe that the partial derivative of the log determinant with respect to any one eigenvalue $\lambda_{k,i_k}$ is $m_k/|\lambda_{k,i_k}|$. 

Thus, the geometry of the Kronecker sum log determinant term is significantly flatter than the Kronecker product log determinant, especially for larger $K$, indicating that the Kronecker sum estimator (TeraLasso) will enjoy more flexibility when matching the sample covariances than a Kronecker product method will. 

A parallel interpretation can be obtained by recalling that the Kronecker sum of two sparse graphs is significantly sparser than the Kronecker product of the same two graphs, as discussed in the introduction of the main text.

%\section{Definitions and Preliminary results} 

\section{Identifiable Parameterization of $\mathcal{K}_{\mathbf{p}}$}\label{app:DO}

Observe that for any scalar $c$ %for any $\Omega = \Psi_1 \oplus \dots \oplus \Psi_K$ an equivalent model is
\[
A \oplus B = A \otimes I + I \otimes B = A \otimes I - cI  + cI + I\otimes B = (A - cI) \oplus (B+cI),
\]
and thus the trace of each factor is non-identifiable, and we can write
\begin{align}\label{eq:eqq}
\Psi_1 \oplus \dots \oplus \Psi_K &= ({{\Psi}}_1+ c_1 {I}_{d_1}) \oplus \dots \oplus ({{\Psi}}_K + c_K {I}_{d_K})\\\nonumber &= (\Psi_1 \oplus \dots \oplus \Psi_K) + \left(\sum_{k=1}^K c_k\right) I_{p}\\\nonumber
&= \Psi_1 \oplus \dots \oplus \Psi_K,
\end{align}
where $c_k$ are any scalars such that $\sum_{k=1}^K c_k = 0$. 

The following lemma addresses this trace ambiguity, and creates an orthogonal, identifiable decomposition of $\Omega$ into factors.

Based on the original parameterization 
\[
B = A_1 \oplus \dots \oplus A_K,
\]
we know that the number of degrees of freedom in $B$ is much smaller than the number of elements $p^2$. We thus seek a lower-dimensional parameterization of $B$. The Kronecker sum parameterization
is not identifiable on the diagonals, so we seek a representation of $B$ that is identifiable. In the main text, we noted that $\mathrm{diag}(B) + \mathrm{offd}(A_1) \oplus \dots \oplus \mathrm{offd}(A_K)$ is identifiable (where $\mathrm{offd(A)} = A - \mathrm{diag}(A)$), but $\mathrm{diag}(B)$ cannot be a parameter of the model since not all diagonal vectors can be expressed as a Kronecker sum. Hence while this diagonal-based decomposition is useful for stating identifiable factorwise error bounds, it is does not truly serve as a parameterization.
We show in Lemma \ref{lem:DO} that the space $\mathcal{K}_{\mathbf{p}}$ is linearly, identifiably, and orthogonally parameterized by the quantities $\left(\tau_B \in \mathbb{R}, \left\{\tilde{A}_k \in \{A \in \mathbb{R}^{d_k\times d_k}|\tr(A) \equiv 0 \}\right\}_{k=1}^K\right)$. Specifically,
\begin{lemma}
\label{lem:DO}
Let $B \in \mathcal{K}_{\mathbf{p}}$ and  $B = A_1 \oplus \dots \oplus A_K \in \mathcal{K}_{\mathbf{p}}$.
Then $B$ can be identifiably written as
\begin{equation}\label{eq:decompNew}
B = \tau_B I_p + (\tilde{A}_1 \oplus \dots \oplus \tilde{A}_K)
\end{equation}
where $\tr(\tilde{A}_k) \equiv 0$ and the identifiable parameters $(\tau_B, \{\tilde{A}_k\}_{k=1}^K)$ can be computed as
\ben
\label{eq::tildeA}
\tau_B &= \frac{\tr(B)}{p},\qquad \tilde{A}_k = A_k - \frac{\tr(A_k)}{d_k} I_{d_k}.
\een
By orthogonality, the Frobenius norm can be decomposed as
\begin{align*}
\|B\|_F^2 &=  p \tau_B^2 + \sum_{k=1}^K m_k \|\tilde{A}_k\|_F^2
%&\geq  p \sum_{k = 1}^K \left(\frac{\tau_B}{K}\right)^2 + \sum_{k=1}^K m_k \|\tilde{A}_k\|_F^2\\
%&= \sum_{k=1}^K m_k \left(\left\|\frac{\tau_B}{K}I_{d_k}\right\|_F^2 + \|\tilde{A}_k\|_F^2\right)\\
\geq \sum_{k=1}^K m_k \left\|\frac{\tau_B}{K}I_{d_k}+\tilde{A}_k\right\|_F^2,
\end{align*}
noting that
\[
B = \left(\frac{\tau_B}{K}I_{d_1}+\tilde{A}_1\right) \oplus \dots \oplus \left(\frac{\tau_B}{K}I_{d_K}+\tilde{A}_K\right).
\]
%Let $B \in \mathcal{K}_{\mathbf{p}}$. Then there exists $\Delta_{\Psi,k} \in \mathbb{R}^{d_k \times d_k}$, $k = 1,\dots,K$ such that
%\[
%\Delta_\Omega = \Delta_{\Psi,1} \oplus \dots \oplus \Delta_{\Psi,K}
%\]
%and
%\begin{equation}\label{eq:dcmp}
%\|\Delta_\Omega\|_F^2 \geq \sum_{k = 1}^K m_k \|\Delta_{\Psi,k}\|_F^2.
%\end{equation}
%\end{lemma}
%\begin{lemma}[Construction of Identifiable Parameterization]
%\label{cor:constru}
%The parameterized result is
%\[
%B = \tau_B I_p + (\tilde{A}_1 \oplus \dots \oplus \tilde{A}_K).
%\]

\label{eq:constru}

\end{lemma}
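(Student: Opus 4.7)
The strategy is to leverage the shift identity already displayed in \eqref{eq:eqq}: adding $c_k I_{d_k}$ to $A_k$ with $\sum_k c_k = 0$ leaves $B$ invariant, so the traces of the factors carry no information beyond their sum. I will use this freedom to pick the factors so that each has trace zero, then separate out the remaining scalar degree of freedom as $\tau_B I_p$.

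\textbf{Step 1 (existence of the decomposition).} Starting from $B = A_1 \oplus \dots \oplus A_K$, define $\tilde A_k := A_k - (\tr(A_k)/d_k)\,I_{d_k}$, which is trace-zero by construction, and set $\tau_B := \sum_{k=1}^K \tr(A_k)/d_k$. Applying \eqref{eq:eqq} with $c_k = -\tr(A_k)/d_k$ and collecting the scalar shifts into a single multiple of $I_p$ yields $B = \tau_B I_p + (\tilde A_1 \oplus \dots \oplus \tilde A_K)$. Taking the trace of both sides, together with the identity $\tr(X_1 \oplus \dots \oplus X_K) = \sum_k m_k \tr(X_k)$, confirms $\tau_B = \tr(B)/p$.

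\textbf{Step 2 (Frobenius orthogonality and identifiability).} The heart of the lemma is the orthogonal decomposition. Expanding $\|B\|_F^2 = \langle \tau_B I_p + \bigoplus_k \tilde A_k,\ \tau_B I_p + \bigoplus_k \tilde A_k\rangle$, the cross term equals $2\tau_B \sum_k m_k \tr(\tilde A_k) = 0$. For the remaining square, write $\bigoplus_k \tilde A_k = \sum_k I_{d_1}\otimes\cdots\otimes \tilde A_k \otimes\cdots\otimes I_{d_K}$ and use the Kronecker trace identity $\tr(X_1\otimes\cdots\otimes X_K) = \prod_k \tr(X_k)$ on each pairwise inner product. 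The diagonal $k=l$ terms contribute $m_k \|\tilde A_k\|_F^2$, while every $k\neq l$ cross term acquires a factor $\tr(\tilde A_k)\tr(\tilde A_l) = 0$ and vanishes. This yields $\|B\|_F^2 = p\tau_B^2 + \sum_k m_k \|\tilde A_k\|_F^2$. Identifiability now follows for free: if a second decomposition $B = \tau' I_p + \bigoplus_k A_k'$ with $\tr(A_k') = 0$ existed, the trace argument fixes $\tau' = \tau_B$, and the orthogonality just established, applied to the difference $\bigoplus_k (A_k' - \tilde A_k)$, forces each $A_k' = \tilde A_k$.

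\textbf{Step 3 (the inequality and the final representation).} For the stated bound, apply the same orthogonal expansion to the shifted factors: since $\tr(\tilde A_k) = 0$, $\|(\tau_B/K)I_{d_k} + \tilde A_k\|_F^2 = d_k(\tau_B/K)^2 + \|\tilde A_k\|_F^2$, so $\sum_k m_k \|(\tau_B/K)I_{d_k} + \tilde A_k\|_F^2 = (p\tau_B^2)/K + \sum_k m_k \|\tilde A_k\|_F^2$, which is $\leq p\tau_B^2 + \sum_k m_k \|\tilde A_k\|_F^2 = \|B\|_F^2$ since $K \geq 1$. The final displayed identity is an application of \eqref{eq:eqq} in reverse, distributing $\tau_B I_p = \sum_k (\tau_B/K) I_p$ across the factors of $\bigoplus_k \tilde A_k$.

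\textbf{Main obstacle.} The one nontrivial computation is the orthogonality of the summands $I\otimes\cdots\otimes\tilde A_k\otimes\cdots\otimes I$ for distinct $k$ in the Frobenius inner product; every other step is either a direct invocation of \eqref{eq:eqq} or a trace calculation. Once this orthogonality is in hand, both the Frobenius decomposition and uniqueness follow immediately.

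\end{appendices}
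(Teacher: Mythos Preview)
Your proposal is correct and follows essentially the same route as the paper's proof: both establish the decomposition by subtracting off the normalized traces, verify orthogonality of the $K+1$ summands via the Kronecker trace identity (which yields $\tr(\tilde A_k)\tr(\tilde A_\ell)=0$ on the cross terms), deduce identifiability from this orthogonality, and obtain the inequality by comparing $p\tau_B^2$ with $p\tau_B^2/K$. The only organizational difference is that the paper separates orthogonality/identifiability and the Frobenius computation into two parts, whereas you treat them together in Step~2.
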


%\textbf{Remark}: Observe that under this parameterization $B$ is a \emph{linear} function of the parameters $(\tau_B, \{\tilde{A}_k\}_{k=1}^K)$. 

%MULTIPLY THE r, so that it is still small

%Create orthogonal parameterization. 

%Prove equality for said parameterization.

%Show nice form admits inequality.

%\subsubsection{Proof of Lemma \ref{lem:DO} - Identifiable Parameterization}
%We now prove Lemma \ref{lem:DO}.
\begin{proof}

\textbf{Part I: Identifiable Parameterization.}
Let $B \in {\mathcal K}_{\mathbf{p}}$. By definition, there exists $A_1,\dots, A_K$ such that
\begin{align*}
B = A_1 \oplus \dots \oplus A_K
&= \sum_{k = 1}^K I_{[d_{1:k-1}]} \otimes A_k \otimes I_{[d_{k+1:K}]}\\
&= \sum_{k = 1}^K \left(I_{[d_{1:k-1}]} \otimes (A_k - \tau_k I_{d_k}) \otimes I_{[d_{k+1:K}]} + \tau_k I_{p}\right)\\
&= \left(\sum_{k = 1}^K \tau_k\right)I_p + ((A_1 - \tau_1 I_{d_1}) \oplus \dots \oplus (A_K - \tau_K I_{d_K})).
\end{align*}
where $\tau_k = \tr(A_k)/d_k$. Observe that $\tr(A_k - \tau_k I_{d_k}) = 0$ by construction, so we can set $\tilde{A}_k = A_k - \tau_k I_{d_k}$, creating
\begin{align*}
B &= \left(\sum_{k = 1}^K \tau_k\right)I_p + (\tilde{A}_1 \oplus \dots \oplus \tilde{A}_K).
\end{align*}
Note that in this representation, $\tr(\tilde{A}_1 \oplus \dots \oplus \tilde{A}_K) = 0$, so letting $\tau_B = \tr(B)/p$,
\[
\tau_B = \sum_{k = 1}^K \tau_k,
\]
and \eqref{eq:decompNew} in the Lemma results. It is easy to verify any $B$ expressible in the form \eqref{eq:decompNew} is in ${\mathcal K}_{\mathbf{p}}$.

%Thus, we can write any $B \in {\mathcal K}_{\mathbf{p}}$ as 
%\[
%B = \tau_B I_p +  (\tilde{A}_1 \oplus \dots \oplus \tilde{A}_K),
%\]
%and in the other direction, it is easy to verify any $B$ expressible in this form is in ${\mathcal K}_{\mathbf{p}}$. 

Thus, $(\tau_B, \{\tilde{A}_k\}_{k=1}^K)$ parameterizes ${\mathcal K}_{\mathbf{p}}$.
It remains to show that this parameterization is identifiable.

%Suppose there exists $\tau_C, M_1,\dots, M_K$ with $\tr(M_k) = 0$ and
%\[
%B = \tau_C I_p + (M_1 \oplus \dots \oplus M_K).
%\]
%Then,
%\begin{align*}
%\tau_B I_p +  (\tilde{A}_1 \oplus \dots \oplus \tilde{A}_K) = \tau_C I_p + (M_1 \oplus \dots \oplus M_K).
%\end{align*}
%Taking the trace of each side gives 
%\[
%\tr(\tau_B I_p +  (\tilde{A}_1 \oplus \dots \oplus \tilde{A}_K)) = \tau_B p = \tr(\tau_C I_p + (M_1 \oplus \dots \oplus M_K)) = \tau_C p,
%\]
%so $\tau_B = \tau_C$. Now, 
%
%Thus each component must be equal for the full matrix to be equal, hence the parameterization by $(\tau_B, \{\tilde{A}_k\})$ is identifiable.

%\subsubsection{Orthogonal Parameterization}

\textbf{Part II: Orthogonal Parameterization.}
We will show that under the linear parameterization of $\mathcal{K}_{\mathbf{p}}$ by $(\tau_B, \{\tilde{A}_k\}_{k=1}^K)$, each of the $K+1$ components are linearly independent of the others. %Hence the parameterization is identifiable.

To see this, we compute the inner products between the components:
\begin{align*}
\langle \tau_B I_p,  I_{[d_{1:k-1}]} \otimes \tilde{A}_k \otimes& I_{[d_{k+1:K}]}\rangle = \tau_B m_k \tr(\tilde{A}_k) \equiv 0\\
\langle I_{[d_{1:k-1}]} \otimes \tilde{A}_k \otimes I_{[d_{k+1:K}]},\:&I_{[d_{1:\ell-1}]} \otimes \tilde{A}_\ell \otimes I_{[d_{\ell +1:K}]} \rangle \\
&= \tr\left( I_{[d_{1:k-1}]} \otimes \tilde{A}_k \otimes I_{[d_{k+1:\ell-1}]} \otimes \tilde{A}_\ell \otimes I_{[d_{\ell + 1:K}]}  \right) \\ &= \frac{p}{d_k d_\ell} \tr(\tilde{A}_k)\tr(\tilde{A}_\ell) \equiv 0,
\end{align*}
for all $k \neq \ell$. We have recalled that by definition, $\tr(\tilde{A}_k) \equiv 0$ for all $k$. Since all the inner products are identically zero, the components are orthogonal, thus they are linearly independent. Hence, by the definition of linear independence, this linear parameterization $(\tau_B, \{\tilde{A}_k\}_{k=1}^K)$ is uniquely determined by $B \in \mathcal{K}_{\mathbf{p}}$ (i.e. it is identifiable).

%Thus the parameterization $(\tau_B, \{\tilde{A}_k\})$ is orthogonal.

\textbf{Part III: Decomposition of Frobenius norm.}
Using the identifiability and orthogonality of this parameterization, we can find a direct factorwise decomposition of the Frobenius norm on $\mathcal{K}_{\mathbf{p}}$.

By orthogonality (cross term inner products equal to zero) %and the Pythagorean theorem,
\begin{align}\label{eq:ddecp}
\|B\|_F^2 &= \|\tau_B I_p\|_F^2 + \sum_{k = 1}^K \|I_{[d_{1:k-1}]} \otimes \tilde{A}_k \otimes I_{[d_{k+1:K}]}\|_F^2\\\nonumber
&= p \tau_B^2 + \sum_{k=1}^K m_k \|\tilde{A}_k\|_F^2.
\end{align}
This completes the first decomposition, representing the squared Frobenius norm as weighted sum of the squared Frobenius norms on each component.

For convenience, we also observe that given any $B \in \mathcal{K}_{\mathbf{p}}$ with identifiable parameterization
\[
B = \tau_B I_p +  (\tilde{A}_1 \oplus \dots \oplus \tilde{A}_K),
\]
we can absorb the scaled identity into the Kronecker sum and still bound the Frobenius norm decomposition. Specifically, observe that
\[
p\tau_B^2 = p K \sum_{k = 1}^K \left(\frac{\tau_B}{K}\right)^2 \geq p \sum_{k = 1}^K \left(\frac{\tau_B}{K}\right)^2.
\]
%because the cross terms in the square expansion are all positive.
Substituting this into \eqref{eq:ddecp},
\begin{align*}
\|B\|_F^2 =  p \tau_B^2 + \sum_{k=1}^K m_k \|\tilde{A}_k\|_F^2
&\geq  p \sum_{k = 1}^K \left(\frac{\tau_B}{K}\right)^2 + \sum_{k=1}^K m_k \|\tilde{A}_k\|_F^2\\
&= \sum_{k=1}^K m_k \left(\left\|\frac{\tau_B}{K}I_{d_k}\right\|_F^2 + \|\tilde{A}_k\|_F^2\right)\\
&=\sum_{k=1}^K m_k \left\|\frac{\tau_B}{K}I_{d_k}+\tilde{A}_k\right\|_F^2,
\end{align*}
where the last term follows because $\tr(\tilde{A}_k) \equiv 0$ implies that $\langle I_{d_k},\tilde{A}_k\rangle \equiv 0$.

Observe that
\[
B = \left(\frac{\tau_B}{K}I_{d_1}+\tilde{A}_1\right) \oplus \dots \oplus \left(\frac{\tau_B}{K}I_{d_K}+\tilde{A}_K\right),
\]
hence Lemma \ref{lem:DO} is proved.

%All possible decompositions of $\Delta \in \tilde{K}_{\mathbf{p}}$ are of the form %(Section ??, to be written)% ???
%\begin{align}
%\Delta = (\tilde{\Delta}_{1} + c_1 I) \oplus \dots \oplus (\tilde{\Delta}_{K} + c_K I) = \tilde{\Delta}_{1} \oplus \dots \oplus \tilde{\Delta}_{K} + \frac{\mathrm{tr}(\Delta)}{p} I_{p\times p}
%\end{align}
%where $\mathrm{tr}(\tilde{\Delta}_{k}) = 0$, and $\sum_k c_k  = \mathrm{tr}(\Delta)/p$. 
%By construction and the properties of the Kronecker sum, the $\tilde{\Delta}_{k}$ are uniquely determined, and $c_k$ can be any scalars that satisfy the trace equality. Hence, it follows that
%\begin{align}
%\|\Delta\|_F^2 &= \sum_k m_k \|\tilde{\Delta}_{k} \|_F^2 + \frac{\mathrm{tr}(\Delta)^2 }{p}\\\nonumber
%\sum_k m_k \|{\Delta} \|_F^2 &= \|\Delta\|_F^2 -\frac{\mathrm{tr}(\Delta)^2}{p} +  p \sum_k c_k^2. 
%\end{align}
%Since the choice of $c_k$ does not affect the objective, select $c_k =\frac{1}{Kp} \mathrm{tr}({\Delta})$. Thus if $\mathrm{tr}(\Delta) \neq 0)$, then the resulting $c_k$ will have the same sign. As a result, we have
%\begin{align*}
%\sum_k m_k \|{\Delta_{k}} \|_F^2 \leq& \|\Delta\|_F^2 -\frac{\mathrm{tr}(\Delta)^2}{p}+  p \sum_k c_k^2 = \|\Delta\|_F^2.
%\end{align*}
%

\qed \end{proof}

The identifiable parameterization of $\mathcal{K}_{\mathbf{p}}$ in Lemma \ref{lem:DO} will provide a way to bound the spectral norm relative to the Frobenius norm. This is used to form the spectral norm bound in Theorem \ref{Thm:1Spec}.

The following lemma is also used in the proof of Theorem~\ref{Thm:1} (cf. Proposition \ref{prop:posi-def-interval}).

\begin{lemma}[Spectral Norm Bound]
\label{cor:l2bd}
For all $B \in \mathcal{K}_{\mathbf{p}}$, 
\[
\|B\|_2 \leq \sqrt{\frac{K+1}{\min_k m_k}} \|B\|_F.
\]
\end{lemma}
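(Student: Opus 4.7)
The plan is to combine the orthogonal decomposition from Lemma~\ref{lem:DO} with the subadditivity of the spectral norm on Kronecker sums. First I would write
\[
B = \tau_B I_p + (\tilde{A}_1 \oplus \dots \oplus \tilde{A}_K)
\]
and apply the triangle inequality. Since the eigenvalues of $\tilde{A}_1 \oplus \dots \oplus \tilde{A}_K$ are the sums $\lambda_{1,i_1} + \dots + \lambda_{K,i_K}$ of the factor eigenvalues, we have $\|\tilde{A}_1 \oplus \dots \oplus \tilde{A}_K\|_2 \le \sum_k \|\tilde{A}_k\|_2$, and bounding each factor spectral norm by its Frobenius norm gives
\[
\|B\|_2 \le |\tau_B| + \sum_{k=1}^K \|\tilde{A}_k\|_F.
\]

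Next I would apply Cauchy--Schwarz to the right-hand side, viewed as the inner product of the $(K+1)$-vector of the terms with the all-ones vector of the same length:
\[
|\tau_B| + \sum_{k=1}^K \|\tilde{A}_k\|_F \le \sqrt{K+1}\,\sqrt{\tau_B^2 + \sum_{k=1}^K \|\tilde{A}_k\|_F^2}.
\]

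Finally I would invoke the Frobenius decomposition $\|B\|_F^2 = p\tau_B^2 + \sum_k m_k \|\tilde{A}_k\|_F^2$ from Lemma~\ref{lem:DO}. Since $p = m_k d_k \ge \min_j m_j$ and $m_k \ge \min_j m_j$ for every $k$, the common lower bound of the weights yields
\[
\|B\|_F^2 \ge (\min_k m_k)\Bigl(\tau_B^2 + \sum_{k=1}^K \|\tilde{A}_k\|_F^2\Bigr),
\]
which combines with the previous display to give $\|B\|_2^2 \le \frac{K+1}{\min_k m_k}\|B\|_F^2$. No serious obstacle arises: the only step that requires explicit comment is the spectral subadditivity for Kronecker sums, which is immediate from their additive eigenvalue structure; everything else is Cauchy--Schwarz plus the decomposition already in hand.
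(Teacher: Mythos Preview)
Your proof is correct and follows essentially the same route as the paper's: identifiable parameterization, triangle inequality plus spectral subadditivity of the Kronecker sum, bound each $\|\tilde{A}_k\|_2$ by $\|\tilde{A}_k\|_F$, Cauchy--Schwarz on the $(K+1)$-vector, and finally the weighted Frobenius decomposition with $\min_k m_k$ pulled out. The only cosmetic difference is that the paper upper-bounds $\tau_B^2 + \sum_k \|\tilde{A}_k\|_F^2$ by $\frac{1}{\min_k m_k}\bigl(p\tau_B^2 + \sum_k m_k\|\tilde{A}_k\|_F^2\bigr)$ directly, while you phrase it as a lower bound on $\|B\|_F^2$; these are the same inequality.
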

\begin{proof}
Using the identifiable parameterization of $B$
\[
B = \tau_B I_p + (\tilde{A}_1 \oplus \dots \oplus \tilde{A}_K),
\]
and the triangle inequality, we have
\begin{align*}
\|B\|_2 &\leq | \tau_B| + \sum_{k=1}^K \|\tilde{A}_k\|_2\leq | \tau_B| + \sum_{k=1}^K \|\tilde{A}_k\|_F\leq \sqrt{K+1} \sqrt{\tau_B^2 + \sum_{k=1}^K \|\tilde{A}_k\|_F^2}\\
&\leq \sqrt{\frac{K+1}{\min_k m_k}} \sqrt{p \tau_B^2 + \sum_{k=1}^K m_k\|\tilde{A}_k\|_F^2}\\
&\leq \sqrt{\frac{K+1}{\min_k m_k}} \|B\|_F.
\end{align*}
\qed \end{proof}

\subsection{Inner Product in $\mathcal{K}_{\mathbf{p}}$}
\label{app:cnv}
\begin{lemma}[Kronecker sum inner Products]\label{lem:InnCov}
Suppose $B \in \mathbb{R}^{p \times p}$. Then for any $A_k \in \mathbb{R}^{d_k \times d_k}$, $k = 1, \dots, K$, 
\[
\langle B, A_1 \oplus\dots \oplus A_K\rangle = \sum_{k=1}^K m_k \langle B_k, A_k \rangle.
\]
\end{lemma}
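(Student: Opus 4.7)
The plan is to expand the right-hand argument by bilinearity of the Frobenius inner product and then identify each resulting term as a contracted inner product with the appropriate ``slice'' $B_k$ of $B$. Specifically, since $A_1 \oplus \dots \oplus A_K = \sum_{k=1}^K I_{[d_{1:k-1}]} \otimes A_k \otimes I_{[d_{k+1:K}]}$, bilinearity gives
\[
\langle B, A_1 \oplus \dots \oplus A_K\rangle = \sum_{k=1}^K \langle B, I_{[d_{1:k-1}]} \otimes A_k \otimes I_{[d_{k+1:K}]}\rangle,
\]
and it suffices to prove, for each $k$, that the $k$-th summand equals $m_k \langle B_k, A_k\rangle$, where $B_k$ is the coordinate-wise average defined as in the projection formula earlier in the excerpt.

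First I would view $B$ as an order-$2K$ tensor with indices $(i_1,\dots,i_K;j_1,\dots,j_K)$, so that the Frobenius inner product becomes a full contraction over all $2K$ index pairs. The matrix $I_{[d_{1:k-1}]} \otimes A_k \otimes I_{[d_{k+1:K}]}$ has tensor entries $\bigl(\prod_{\ell \neq k}\delta_{i_\ell j_\ell}\bigr)[A_k]_{i_k j_k}$, so pairing it with $B$ and summing over all indices yields
\[
\sum_{i_k, j_k} [A_k]_{i_k j_k} \sum_{i_1=j_1,\dots,\widehat{i_k j_k},\dots,i_K=j_K} B_{(i_1,\dots,i_K;\,j_1,\dots,j_K)}.
\]
The inner sum collapses the $K-1$ diagonal contractions over the indices $\ell \neq k$, producing an object depending only on $(i_k,j_k)$. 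This is exactly the partial-trace/averaged-block construction, and multiplying by $m_k$ on both sides identifies the inner sum (divided by $m_k$) with $[B_k]_{i_k j_k}$.

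The main step, and the one requiring care, is the indexing for the partial trace. To keep things clean, I would compute it by the block viewpoint used earlier: write $B$ as an $m_k \times m_k$ block matrix whose $(j,j)$ diagonal block is $d_k \times d_k$ in the appropriate reordered basis, and observe that $I_{[d_{1:k-1}]} \otimes A_k \otimes I_{[d_{k+1:K}]}$ (after the same reordering) is block-diagonal with $m_k$ copies of $A_k$ on its diagonal. The inner product then equals $\sum_{j=1}^{m_k} \langle B^{(j,j)}, A_k\rangle = m_k \langle B_k, A_k\rangle$ by the definition
\[
B_k = \frac{1}{m_k}\sum_{j=1}^{m_k} B^{(j,j)}.
\]

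Summing over $k$ then yields the claim. The only obstacle I anticipate is notational: verifying that the reordering of indices required to place each factor $I_{[d_{1:k-1}]} \otimes A_k \otimes I_{[d_{k+1:K}]}$ into block-diagonal form is consistent with the definition of $B_k$ used in Algorithm~\ref{alg:Proj}. Once the indexing is set up, the computation is essentially $\mathrm{tr}(I_m \otimes A \otimes I_n \cdot C) = \sum$ over the diagonal blocks of $C$ paired with $A$, which is exactly what is needed.
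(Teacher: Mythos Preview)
Your proposal is correct and follows essentially the same approach as the paper: expand the Kronecker sum by bilinearity, then for each $k$ recognize $\langle B, I_{[d_{1:k-1}]}\otimes A_k\otimes I_{[d_{k+1:K}]}\rangle$ as the sum over the $m_k$ diagonal subblocks $B(i,i|k)$ paired with $A_k$, and use $B_k = \tfrac{1}{m_k}\sum_i B(i,i|k)$. The paper writes this directly in its submatrix notation $B(i,i|k)$ rather than via the tensor-index or block-reordering language you use, but the content is identical; your anticipated ``indexing obstacle'' is exactly what that notation (defined in Appendix~\ref{App:Ident}) is set up to handle.
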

\begin{proof}
\begin{align*}
\langle B, A_1 \oplus\dots \oplus A_K\rangle &= \sum_{k = 1}^K \langle B, I_{[d_{1:k-1}]} \otimes A_k \otimes I_{[d_{k+1:K}]}\rangle\\
&= \sum_{k = 1}^K \sum_{i = 1}^{m_k}  \langle B(i,i|k) ,  A_k \rangle\\
&= \sum_{k = 1}^K   \left\langle \sum_{i = 1}^{m_k} B(i,i|k) ,  A_k \right\rangle\\
&= \sum_{k = 1}^K m_k  \langle B_k ,  A_k \rangle.
\end{align*}
where we have used the definition of the submatrix notation $B(i,i|k)$ and the matrices $B_k = \frac{1}{m_k} \sum_{i = 1}^{m_k} B(i,i|k)$. See Appendix~\ref{App:Ident} for the notation being used here.
\qed \end{proof}

\newcommand{\ksump}{\mathcal{K}_{\mathbf{p}}}
\section{Proof of Theorems \ref{Thm:1} and \ref{Thm:1Spec} ($\ell 1$ regularized case)}
\label{app:Pf}

%\begin{theorem}[Frobenius error bound]
%\label{Thm:1}
%Suppose the assumptions (A1)-(A2) hold, and that $\hat{\Omega}$ is the minimizer of \eqref{Eq:objFun} with
%%\[
%%\log d_k \leq \frac{n m_k}{4 C^2}
%%\]
%%for all $k$.
%$\rho_k = \delta_{n,k}/\epsilon_k $ for some $0< \epsilon_k < 1$ where $\delta_{n,k} \asymp \frac{1}{\underline{k}_\Omega} \sqrt{\frac{\log p}{nm_k}}$. Then under an event $\mathcal{A} = \cap_{k=0}^K \mathcal{A}_k$ defined in the proof
%\begin{align*}
%%\|\hat{{\Omega}} - {\Omega}\|_F^2 \leq 2(qr_p^2 + pr_q^2)
%\|\hat{{\Omega}} - {\Omega}_0\|_F &\leq  \frac{2 C_1  \twonorm{\Sigma_0}}{\phi_{\min}^2(\Sigma_0)}
%\sqrt{(K+1)\left(s + p\right) \frac{\log p}{n \min_k m_k}}.
%%\frac{1}{K}\sum_{k=1}^K    \frac{\|\Delta_{\Psi,k}\|_F^2}{{d_k}} &\leq C_2 \frac{\max_k [(d_k + s_k) \log d_k] }{p n},\\
%%\|\hat{{\Omega}} - {\Omega}\|_2 &\leq C_3 K   \sqrt{ \frac{(d_{k_*} + \max_k s_k) \log d_{\bar{k }}}{m_{k_*} n}},
%\end{align*}
%Furthermore, event $\mathcal{A}$ holds with probability at least $1-2(K+1) \exp(-c\log p)$.
%
%\end{theorem}

Let $\Omega_0$ be the true value of the precision matrix $\Omega$. 
Since $\Omega, \Omega_0 \in \mathcal{K}_{\mathbf{p}}$ and
$\mathcal{K}_{\mathbf{p}}$ is convex, $\Delta_\Omega = {\Omega} - \Omega_0 \in \mathcal{K}_{\mathbf{p}}$ and we can decompose $\Delta_\Omega$ into diagonal and Kronecker sum off diagonal components:
% and define for any $\Omega$
\begin{align}
\label{eq:DOM}
\Delta_\Omega &= {\Omega} - \Omega_0 
= \diag(\Delta_\Omega) + (\offd(\Delta_{\Psi,1}) \oplus \dots \oplus \offd(\Delta_{\Psi,K})),
\end{align}
where $\diag(\Delta_\Omega) = \diag({\Omega} - \Omega_0)$ and $\offd(\Delta_{\Psi,k}) = \offd({\Psi}_k - {\Psi}_{0,1}) $. Recall that the $\diag(\Delta_\Omega)$ and $\offd(\Delta_{\Psi,k})$ terms are all identifiable given $\Delta_\Omega \in \mathcal{K}_{\mathbf{p}}$. Similarly, we can write 
\begin{align*}
{\Omega} &=\diag(\Omega) + (\offd({\Psi}_1) \oplus \dots \oplus \offd({\Psi}_K))\\
{\Omega}_0 &= \diag(\Omega_0) + (\offd({\Psi}_{0,1}) \oplus \dots \oplus\offd( {\Psi}_{0,K})).
\end{align*}

Let $I(\cdot)$ be the indicator function. For an index set $\mathcal{A}$ and a matrix $M = [m_{ij}]$, define the operator $\mathcal{P}_{\mathcal{A}}(M) \equiv [m_{ij}I((i,j) \in \mathcal{A})]$ that projects $M$ onto the set $\mathcal{A}$. Let $\Delta_{k,S} = \mathcal{P}_{\mathcal{S}_k}(\offd(\Delta_{\Psi,k}))$ be the projection of $\offd(\Delta_{\Psi,k})$ onto the true sparsity pattern of $\Psi_k$. Let ${\mathcal S}_{k}^c$ be the complement of $\mathcal{S}_{k}$, and $\Delta_{k,S^c}=\mathcal{P}_{{\mathcal S}^c_{k}}(\offd(\Delta_{\Psi,k}))$. 
Furthermore, let
\bens
\Delta_{S} &= & (\Delta_{1,S} \oplus \dots \oplus \Delta_{K,S})
\text{ and } \; \; \\
\Delta_{S^c} & = & \Delta_{1,S^c} \oplus \dots \oplus \Delta_{K,S^c}
\eens
be the projection of $\Delta_\Omega$ onto the sparsity set
$\mathcal{S}$  and its complement. 
Recall neither $\mathcal{S}$ nor $\mathcal{S}^c$ includes the diagonal. 

We now provide a deterministic bound on the difference in the penalty terms.
\begin{lemma}
\label{lemma::geometry}
Denote by
\bens
\Delta_g := \sum_k \rho_k m_k (|\Psi_{k,0}+ \Delta_{\Psi,k}|_{1 ,\off}-
|\Psi_{k,0}|_{1,\off}),
\eens
Then
\ben
\label{eq::DDG}
\Delta_g &\geq & 
\sum_k \rho_k m_k  (\onenorm{\Delta_{k, \Sc}} -\onenorm{\Delta_{k,S}} )
\een
\end{lemma}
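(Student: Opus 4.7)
The plan is to reduce \eqref{eq::DDG} to a per-factor inequality — essentially the factor-level analogue of Lemma~\ref{lemma::triangle} — and then reassemble. Since both sides of \eqref{eq::DDG} are sums over $k$ weighted by the same nonnegative constants $\rho_k m_k$, it suffices to show, for each fixed $k$,
\[
\offone{\Psi_{k,0}+\Delta_{\Psi,k}} - \offone{\Psi_{k,0}} \;\geq\; \onenorm{\Delta_{k,\Sc}} - \onenorm{\Delta_{k,S}},
\]
after which summing against $\rho_k m_k$ yields the conclusion directly.

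For a fixed $k$, I would exploit the fact that $\offd(\Delta_{\Psi,k}) = \Delta_{k,S} + \Delta_{k,\Sc}$, where $\Delta_{k,S}$ is supported on $\mathcal{S}_k$ and $\Delta_{k,\Sc}$ is supported on $\mathcal{S}_k^c$. Because $\offd(\Psi_{k,0})$ is zero outside $\mathcal{S}_k$ by definition of the true sparsity set, the matrix $\offd(\Psi_{k,0}) + \Delta_{k,S}$ remains supported in $\mathcal{S}_k$, disjointly from $\Delta_{k,\Sc}$. Disjointness then splits the off-diagonal $\ell_1$ penalty additively,
\[
\offone{\Psi_{k,0}+\Delta_{\Psi,k}} \;=\; \onenorm{\offd(\Psi_{k,0}) + \Delta_{k,S}} + \onenorm{\Delta_{k,\Sc}},
\]
and the reverse triangle inequality on the first summand gives
\[
\onenorm{\offd(\Psi_{k,0}) + \Delta_{k,S}} \;\geq\; \offone{\Psi_{k,0}} - \onenorm{\Delta_{k,S}}.
\]
Combining the two displays and rearranging yields the per-factor bound.

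The argument contains no real obstacle; it is the classical one-matrix Lasso inequality applied factor by factor. The only care required is bookkeeping around the off-diagonal projections: one must verify that the diagonals of $\Psi_{k,0}$ and of $\Delta_{\Psi,k}$ do not contribute (they are absent from the penalty and from the definition of $\mathcal{S}_k$), and that $\Delta_{\Psi,k}$ is unambiguously interpreted as the off-diagonal difference of the $k$th factor, as guaranteed by the identifiable parameterization established earlier. Once these conventions are in place, summing the per-factor bounds against $\rho_k m_k \geq 0$ yields \eqref{eq::DDG} at once.
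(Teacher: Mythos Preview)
Your proposal is correct and follows essentially the same argument as the paper: both work factor by factor, split the off-diagonal $\ell_1$ norm via disjoint support on $\mathcal{S}_k$ and $\mathcal{S}_k^c$, apply the reverse triangle inequality to the $\mathcal{S}_k$ part, and sum against $\rho_k m_k$. The only difference is cosmetic—you are slightly more explicit about the off-diagonal bookkeeping than the paper's proof.
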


\begin{proofof}{Lemma~\ref{lemma::geometry}}
%For ease of notation, we denote by $\Phi_k = \Psi_{k, 0}$.
%First we state the following lemma.
By the decomposability of the $\ell_1$ norm and the reverse triangle inequality $ |A + B|_1 \geq |A|_1 - |B|_1$, we have
\begin{align}\label{eq:decc}
|\Psi_{k,0} + \Delta_{\Psi,k}|_{1,\off} &- |\Psi_{k,0}|_{1,\off}\\\nonumber
&= |\Psi_{k,0} + \Delta_{k,S}|_{1,\off} +  | \Delta_{k,S^c}|_1 - |\Psi_{k,0}|_{1,\off}\\\nonumber
&\geq |\Psi_{k,0}|_{1,\off} - |\Delta_{k,S}|_1+  | \Delta_{k,S^c}|_1 - |\Psi_{k,0}|_{1,\off}\\\nonumber
&\geq |\Delta_{k, S^c}|_1 - |\Delta_{k,S}|_1
%&= |\Delta_{\Psi, k}|_{1,\off} - 2 |\Delta_{k,S}|_1, \fbox{why is this needed?}
\end{align}
since $\Psi_{k,0}$ is assumed to follow sparsity pattern
$\mathcal{S}_k$ by (A1).
\end{proofof}

Let $\mathcal{A}_0$ be the event that for some constant $C_0$,
\begin{equation}\label{eq:a0}
 \frac{|\mathrm{tr}(\hat{S})- \mathrm{tr}(\Sigma_0)|}{p}  \leq C_0 \|\Sigma_0\|_2 \sqrt{\frac{\log p}{p n}};
\end{equation}
and for each $k  = 1,\dots, K$, denote by $\mathcal{A}_k$ the event
such that 
\begin{equation}\label{eq:aa}
\max_{ij} \left|[S_k - \Sigma_0^{(k)}]_{ij}\right| \leq C_0 \|\Sigma_0\|_2 \sqrt{\frac{\log p}{m_k n}}
\end{equation}
holds for some absolute constant $C_0$ which is chosen such that 
probability statement in Lemma~\ref{lemm:concNew} holds:
\begin{lemma}
\label{lemm:concNew}
Let $\mathcal{A} = \cap_{k=0}^K \mathcal{A}_k$ as in \eqref{eq:aa}, \eqref{eq:a0}. 
Then $\mathbb{P}(\mathcal{A}) \geq 1-2(K+1)\exp(-c\log p)$.
\end{lemma}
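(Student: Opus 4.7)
The plan is to prove the bound by a union bound over the $K+1$ events $\mathcal{A}_0,\mathcal{A}_1,\dots,\mathcal{A}_K$. Since $\mathcal{A}^c = \bigcup_{k=0}^K \mathcal{A}_k^c$, it suffices to show $\mathbb{P}(\mathcal{A}_k^c) \leq 2\exp(-c\log p)$ for each $k$, with a uniform constant $c$; the factor $K+1$ then appears in the final bound. Each of the $K+1$ individual bounds will be obtained from a Bernstein/Hanson--Wright-type concentration inequality after writing the relevant quantity as a normalized sum of independent sub-exponential terms.

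For the trace event $\mathcal{A}_0$, I would observe that
$$\frac{\mathrm{tr}(\hat S)}{p} = \frac{1}{np}\sum_{t=1}^n X_t^\top X_t$$
is a normalized sum of $n$ independent sub-exponential quadratic forms in the (sub-)Gaussian data with covariance $\Sigma_0$, each with mean $\mathrm{tr}(\Sigma_0)/p$ and sub-exponential norm $O(\|\Sigma_0\|_2)$. Applying the Hanson--Wright inequality (equivalently, Bernstein for sub-exponentials) yields
$$\mathbb{P}\!\left(\frac{|\mathrm{tr}(\hat S)-\mathrm{tr}(\Sigma_0)|}{p} > t\right) \leq 2\exp\!\left(-c\,n p \min\!\left\{\frac{t^2}{\|\Sigma_0\|_2^2},\,\frac{t}{\|\Sigma_0\|_2}\right\}\right),$$
and the choice $t=C_0\|\Sigma_0\|_2\sqrt{\log p/(np)}$ gives $\mathbb{P}(\mathcal{A}_0^c)\leq 2\exp(-c\log p)$ for $C_0$ large enough, provided $np\gtrsim\log p$ so that the Gaussian (quadratic) branch of Bernstein is active.

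For each $\mathcal{A}_k$ with $k\geq 1$, I would use the structural identity from Lemma~\ref{lem:InnCov}: $S_k = m_k^{-1}\sum_{i=1}^{m_k} \hat S(i,i|k)$ is the average of the $m_k$ principal $d_k\times d_k$ diagonal sub-blocks of $\hat S$. Expanding,
$$[S_k - \Sigma_0^{(k)}]_{ij} = \frac{1}{n m_k}\sum_{t=1}^n\sum_{\ell=1}^{m_k}\bigl(X_t[\ell,i]X_t[\ell,j] - \mathbb{E}X_t[\ell,i]X_t[\ell,j]\bigr),$$
so each entry is a normalized sum of $n$ independent mean-zero sub-exponential variables (each itself a sum over the $m_k$ slices). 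Bernstein's inequality for this sum yields a deviation bound whose effective sample size is $nm_k$, giving
$$\mathbb{P}\bigl(|[S_k-\Sigma_0^{(k)}]_{ij}|>t\bigr) \leq 2\exp\!\left(-c\,n m_k\min\!\left\{\frac{t^2}{\|\Sigma_0\|_2^2},\,\frac{t}{\|\Sigma_0\|_2}\right\}\right),$$
and a union bound over the at most $d_k^2\leq p^2$ entries together with $t=C_0\|\Sigma_0\|_2\sqrt{\log p/(nm_k)}$ produces $\mathbb{P}(\mathcal{A}_k^c)\leq 2\exp(-c\log p)$ after absorbing $\log(d_k^2)$ into $C_0$.

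The principal obstacle is not the concentration mechanics, which are standard, but the careful verification that the inner sum over the $m_k$ slices really behaves like a sub-exponential variable with sub-exponential norm $O(\sqrt{m_k}\,\|\Sigma_0\|_2)$ rather than $O(m_k\|\Sigma_0\|_2)$, so that the $\sqrt{nm_k}$ (rather than merely $\sqrt{n}$) rate is recovered. This requires using the Kronecker-sum structure of $\Sigma_0$ to control the cross-covariances between different slices $\ell\neq \ell'$ at the same time $t$, ensuring that the variance of the $\ell$-sum grows linearly in $m_k$ with constant controlled by $\|\Sigma_0\|_2^2$. Once this variance scaling is in hand, the two Bernstein applications above give the stated probability bound, and combining them via the union bound completes the proof.
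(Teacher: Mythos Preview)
Your overall architecture---union bound over $k=0,\dots,K$, then a Hanson--Wright/Bernstein concentration for each event, with a further union bound over the $d_k^2\le p^2$ entries for $k\ge 1$---is exactly what the paper does. The paper packages the per-entry concentration via a rearrangement operator $\mathcal{R}_k$ (Lemmas~\ref{Lemma:Conc} and~\ref{Cor:Chaos}): it writes
\[
m_k\,[S_k-\Sigma_0^{(k)}]_{ij}
=\frac{1}{n}\sum_{t=1}^n \bigl(z_t^T M z_t - \mathbb{E}[z_t^T M z_t]\bigr),
\qquad
M=\Sigma_0^{1/2}\bigl(\mathbf{e}_i\mathbf{e}_j^T\otimes I_{m_k}\bigr)\Sigma_0^{1/2},
\]
and applies Hanson--Wright once to the stacked vector $(z_1,\dots,z_n)$. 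The $\sqrt{m_k n}$ rate then drops out immediately from $\|M\|_F\le \|\Sigma_0\|_2\,\|\mathbf{e}_i\mathbf{e}_j^T\otimes I_{m_k}\|_F=\|\Sigma_0\|_2\sqrt{m_k}$ and $\|M\|_2\le\|\Sigma_0\|_2$. For $\mathcal{A}_0$ the paper simply re-uses the same lemma with $d_k=1$, $m_k=p$.

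The one point worth correcting in your plan is the ``principal obstacle'': you do \emph{not} need any Kronecker-sum structure of $\Sigma_0$ to get the $\sqrt{m_k}$ scaling. The inequality $\|\Sigma_0^{1/2}A\Sigma_0^{1/2}\|_F\le \|\Sigma_0\|_2\|A\|_F$ holds for every $\Sigma_0$, and the entire improvement from $m_k$ to $\sqrt{m_k}$ comes from the purely algebraic fact that $\|\mathbf{e}_i\mathbf{e}_j^T\otimes I_{m_k}\|_F^2=m_k$. So rather than separately bounding the sub-exponential norm of the inner slice-sum and worrying about cross-slice covariances, just recognize your double sum as a single quadratic form in the full $p$-dimensional whitened vector $z_t$ and apply Hanson--Wright directly; this is both simpler and what the paper does.
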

Lemma~\ref{lemm:concNew} is proved in Section~\ref{App:Conc}. Using the definition of event $\mathcal{A}$, in Section \ref{supp:pfoffd} we prove the following lemma.
\begin{lemma}
\label{lemma::offd}
Denote by $\delta_{n,k} = C_1 \twonorm{\Sigma_0} \sqrt{\frac{\log p}{n m_k}}$.  
Then on event $\mathcal A$ the following holds: 
for all $\Delta_{\Omega}$ as in \eqref{eq:DOM}
\ben
\label{eq::offdiag}
\abs{\ip{\offd(\Delta_{\Omega}), \hat{S} - \Sigma_0} } 
& \le & \sum_{k=1}^K m_k \offone{\Delta_{\Psi, k}} \delta_{n,k}
\een
where $C_0$ are some absolute constants.
\end{lemma}

We then have the following lemma, which we prove in Section \ref{sec:digpf}.
\begin{lemma}
\label{lemma::diagdecomp}
On event $\mathcal{A}$, we have for $\Delta_{\Omega} \in \mathcal{K}_{\mathbf{p}}$,
\bens
\abs{\ip{\diag(\Delta_{\Omega}), \hat{S} - \Sigma_0} }
& \le & 
 C_1 \twonorm{\Sigma_0} \sqrt{\frac{\log p}{n \min_k m_k }} 
\sqrt{(K  + 1) p}\fnorm{\diag(\Delta_{\Omega})} \\
& \le &
\max_{k} \delta_{n,k}\sqrt{(K  + 1) p}\fnorm{\diag(\Delta_{\Omega})} \\
& \asymp &
\twonorm{\Sigma_0} \fnorm{\diag(\Delta_{\Omega})}
\max_{k} \sqrt{d_k} \sqrt{\frac{\log p}{n}}
\eens
where $C_1$ is an absolute constant.
\end{lemma}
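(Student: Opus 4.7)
The plan is to apply the identifiable orthogonal decomposition of Lemma~\ref{lem:DO} to $\diag(\Delta_{\Omega})$, which itself lies in $\mathcal{K}_{\mathbf{p}}$ since $\diag(A_1 \oplus \cdots \oplus A_K) = \diag(A_1) \oplus \cdots \oplus \diag(A_K)$. Writing
\[
\diag(\Delta_{\Omega}) = \tau_{\Delta}\, I_p + (D_1 \oplus \cdots \oplus D_K), \qquad \tr(D_k) = 0,
\]
bilinearity combined with Lemma~\ref{lem:InnCov} (using that each $D_k$ is diagonal, so only the diagonal of each averaged submatrix survives) gives
\[
\langle \diag(\Delta_{\Omega}),\, \hat{S} - \Sigma_0\rangle = \tau_{\Delta}\,(\tr(\hat{S}) - \tr(\Sigma_0)) + \sum_{k=1}^K m_k\, \langle D_k,\, \diag(S_k - \Sigma_0^{(k)})\rangle.
\]

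I would then control the two pieces separately using the concentration events from Lemma~\ref{lemm:concNew}. For the trace term, event $\mathcal{A}_0$ gives $|\tr(\hat{S}) - \tr(\Sigma_0)| \leq C_0 \|\Sigma_0\|_2 \sqrt{p \log p / n}$, while the orthogonal Frobenius decomposition $\|\diag(\Delta_{\Omega})\|_F^2 = p\tau_{\Delta}^2 + \sum_k m_k \|D_k\|_F^2$ yields $|\tau_{\Delta}| \le \|\diag(\Delta_{\Omega})\|_F / \sqrt{p}$, so this piece contributes at most $C_0 \|\Sigma_0\|_2 \sqrt{\log p / n}\,\|\diag(\Delta_{\Omega})\|_F$. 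For each $k$, event $\mathcal{A}_k$ gives $\|\diag(S_k - \Sigma_0^{(k)})\|_F \leq \sqrt{d_k}\,C_0 \|\Sigma_0\|_2 \sqrt{\log p / (n m_k)}$, and Cauchy--Schwarz bounds the $k$-th summand by $m_k \|D_k\|_F \sqrt{d_k}\,C_0 \|\Sigma_0\|_2 \sqrt{\log p / (n m_k)}$. The key identity $m_k d_k = p$ collapses the prefactor to a constant $\sqrt{p / n}$, leaving a sum of the form $C_0 \|\Sigma_0\|_2 \sqrt{p \log p / n}\,\sum_k \|D_k\|_F$.

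A second application of Cauchy--Schwarz and the Frobenius bound from Lemma~\ref{lem:DO} give
\[
\sum_{k=1}^K \|D_k\|_F \le \sqrt{K}\,\Bigl(\sum_{k=1}^K \|D_k\|_F^2\Bigr)^{1/2} \le \sqrt{\frac{K}{\min_k m_k}}\,\|\diag(\Delta_{\Omega})\|_F,
\]
so the Kronecker sum piece is at most $C \|\Sigma_0\|_2 \sqrt{Kp \log p /(n \min_k m_k)}\,\|\diag(\Delta_{\Omega})\|_F$. Adding both contributions and using $\min_k m_k \le p$ to absorb the trace piece into $\sqrt{(K+1)p\log p/(n\min_k m_k)}$ produces the first inequality; the remaining two lines are rewritings in terms of $\delta_{n,k} = C_1 \|\Sigma_0\|_2 \sqrt{\log p/(n m_k)}$ and the asymptotic identity $\sqrt{(K+1)p/\min_k m_k}\asymp \sqrt{p/\min_k m_k} = \sqrt{\max_k d_k}$. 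The main care is in tracking the $m_k, d_k, p$ factors through the two Cauchy--Schwarz steps so that $\min_k m_k$ (and not the weaker $n$-only denominator) appears in the final bound; the identity $m_k d_k = p$ is exactly what allows the per-factor errors to be summed without an extra $\sqrt{K}$-type loss beyond what the orthogonal Frobenius decomposition already absorbs.
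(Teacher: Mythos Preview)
Your proof is correct and follows essentially the same route as the paper: identifiable decomposition $\diag(\Delta_\Omega)=\tau_\Delta I_p+(D_1\oplus\cdots\oplus D_K)$, separate control of the trace piece via $\mathcal{A}_0$ and of each factor via $\mathcal{A}_k$, the identity $m_kd_k=p$ to normalize the per-factor contributions, and a Cauchy--Schwarz step against the orthogonal Frobenius decomposition to reassemble the $\sqrt{K+1}$. The only cosmetic differences are that the paper bounds $\langle D_k, S_k-\Sigma_0^{(k)}\rangle$ by H\"older ($\ell_1$--$\ell_\infty$) rather than Cauchy--Schwarz and combines all $K+1$ terms in a single Cauchy--Schwarz (getting $\sqrt{K+1}$ directly instead of your $1+\sqrt{K}\le\sqrt{2(K+1)}$), which is immaterial given the absolute constant $C_1$.
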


\subsection{Proof of Theorem \ref{Thm:1}}
\begin{proofof2}
Let
\begin{align}
G(\Delta_\Omega)=&Q(\Omega_0 + \Delta_\Omega) -
Q(\Omega_0)\label{eq:Qbar}
%=&\mathrm{tr}[((\Psi-\Psi_0)\otimes I)(\hat{\Sigma}-\Sigma_0)] - q(\log|\Psi|-\log|\Psi_0|) \nonumber\\&+ \mathrm{tr}[((\Psi-\Psi_0)\otimes I)(\Sigma_0)]+\mathrm{tr}[((I\otimes \Theta-\Theta_0))(\hat{\Sigma}-\Sigma_0)] - q(\log|\Psi|-\log|\Psi_0|) \nonumber\\&+ \mathrm{tr}[((\Psi-\Psi_0)\otimes I)(\Sigma_0)]+ \gamma q (\|\Psi^-\|_1-\|\Psi_0^-\|_1) + \rho p (\|\Theta^-\|_1-\|\Theta_0^-\|_1) 
\end{align}
be the difference between the objective function \eqref{Eq:objFun} at
$\Omega_0 + \Delta_\Omega$ and at $\Omega_0$. 
Clearly $\hat{\Delta}_\Omega = \hat{\Omega}-\Omega_0$ minimizes
$G(\Delta_\Omega)$, which is a convex function  with a unique
minimizer on $\mathcal{K}_{\mathbf{p}}^\sharp$ (cf. Theorem
\ref{Thm:Conv}). 
Define
\begin{equation}\label{eq:rrnM}
\T_n =  \left\{\Delta_\Omega \in \mathcal{K}_{\mathbf{p}}: \Delta_\Omega = \Omega - \Omega_0, \Omega,\Omega_0 \in \mathcal{K}_{\mathbf{p}}^\sharp, \|\Delta_\Omega\|_F = M r_{n, \mathbf{p}}\right\}
\end{equation}
where for some large enough absolute constant $C$ to be specified, 
\ben
\label{eq:Mdef}
r_{n,\mathbf{p}} &  = & \frac{C \twonorm{\Sigma_0}}{M} \sqrt{\left(s + p\right)  (K+1) } \sqrt{\frac{\log p}{n  \min_k m_k}}
\; \; \text{ where } \\\nonumber
&& M   =  \frac{1}{2}{\phi^2_{\max}(\Omega_0)}
= \inv{2 \phi^2_{\min}(\Sigma_0)};
\een
In particular, we set $C > 9 (\max_{k} \inv{\ve_k} \vee  C_1)$ for $C_1$ as in
Lemma~\ref{lemma::diagdecomp}.

Proposition~\ref{prop:cnv} follows from~\cite{zhou2010time}.
\begin{proposition}\label{prop:cnv}
If $G(\Delta) > 0$ for all $\Delta \in \mathcal{T}_n$ as defined in \eqref{eq:rrnM}.
 then $G(\Delta) > 0$ for all $\Delta$ in
\bens
\mathcal{V}_n = \{\Delta \in \mathcal{K}_{\mathbf{p}}: \Delta = \Omega - \Omega_0, \Omega,\Omega_0 \in \mathcal{K}_{\mathbf{p}}^{\sharp}, \|\Delta\|_F > M r_{n, \mathbf{p}}\}
\eens
for  $r_{n, \mathbf{p}}$~\eqref{eq:Mdef}.
Hence if $G(\Delta) > 0$ for all $\Delta \in \T_n$, then $G(\Delta) > 0$ for all $\Delta \in \T_n \cup \mathcal{V}_n$.
\end{proposition}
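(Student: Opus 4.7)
The plan is a standard convexity argument: $G$ is convex with $G(0)=0$, so positivity on a sphere in $\ksump^\sharp - \Omega_0$ propagates to positivity outside that sphere. The heart of the proof is a contradiction obtained by shrinking a would-be bad point on a segment back to the sphere where the hypothesis applies.

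First I would verify that $G$ is convex on the relevant domain. Since $Q$ is the sum of the negative log-determinant (convex on the PD cone), a linear term, and $\ell_1$ penalties (convex), $Q$ is convex on $\ksump^\sharp$, and hence $G(\Delta) = Q(\Omega_0 + \Delta) - Q(\Omega_0)$ is convex in $\Delta$ on the set $\{\Delta \in \ksump : \Omega_0 + \Delta \in \ksump^\sharp\}$. Clearly $G(0)=0$. I would also record that $\ksump^\sharp$ is convex (the intersection of the convex subspace $\ksump$ with the PD cone), so whenever $\Omega_0, \Omega_0+\Delta \in \ksump^\sharp$, the whole segment $\Omega_0 + t\Delta$, $t \in [0,1]$, stays in $\ksump^\sharp$.

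Next I would argue by contradiction. Suppose there exists $\Delta^* \in \mathcal{V}_n$ with $G(\Delta^*) \le 0$; by definition $\fnorm{\Delta^*} > M r_{n,\mathbf{p}}$ and $\Omega_0, \Omega_0+\Delta^* \in \ksump^\sharp$. Choose $t = M r_{n,\mathbf{p}} / \fnorm{\Delta^*} \in (0,1)$, so that $\fnorm{t \Delta^*} = M r_{n,\mathbf{p}}$. By the convexity remark, $\Omega_0 + t \Delta^* = (1-t)\Omega_0 + t(\Omega_0 + \Delta^*) \in \ksump^\sharp$, and since $t\Delta^* = (\Omega_0 + t\Delta^*) - \Omega_0 \in \ksump$, the point $t\Delta^*$ lies in $\T_n$. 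Convexity of $G$ combined with $G(0) = 0$ yields
\[
G(t \Delta^*) = G\bigl((1-t)\cdot 0 + t \cdot \Delta^*\bigr) \le (1-t) G(0) + t G(\Delta^*) = t G(\Delta^*) \le 0,
\]
contradicting the assumption $G(\Delta) > 0$ for all $\Delta \in \T_n$. Hence no such $\Delta^*$ exists, and $G > 0$ on $\mathcal{V}_n$; the final conclusion $G > 0$ on $\T_n \cup \mathcal{V}_n$ is then immediate.

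I do not expect any real obstacle here — the entire argument is a one-line convexity/contraction reduction to the boundary sphere $\T_n$. The only point worth flagging is that one must keep the iterate $\Omega_0 + t\Delta^*$ inside $\ksump^\sharp$ so that $G$ remains finite and convex along the segment; this is handled by convexity of $\ksump^\sharp$ as noted above. Everything else follows from $G(0)=0$ and Jensen's inequality applied on the one-dimensional segment $\{t \Delta^* : t \in [0,1]\}$.
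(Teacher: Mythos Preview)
Your proposal is correct and follows essentially the same approach as the paper: argue by contradiction, rescale a putative bad point $\Delta^*\in\mathcal{V}_n$ by $t=Mr_{n,\mathbf{p}}/\|\Delta^*\|_F$ to land on the sphere $\mathcal{T}_n$, use convexity of $\mathcal{K}_{\mathbf{p}}^\sharp$ to keep $\Omega_0+t\Delta^*$ positive definite, and then apply convexity of $G$ with $G(0)=0$ to obtain $G(t\Delta^*)\le tG(\Delta^*)\le 0$, a contradiction. The only cosmetic difference is notation ($t$ versus the paper's $1-\theta$).
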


\begin{proof}
By contradiction, suppose $G(\Delta')\leq 0$ for some $\Delta' \in
\mathcal{V}_n$. Let $\Delta_0 = \frac{M
  r_{n,\mathbf{p}}}{\|\Delta'\|_F}\Delta'$. Then $\Delta_0 = \theta
\mathbf{0} + (1-\theta)\Delta'$, where $0 < 1-\theta = \frac{M
  r_{n,\mathbf{p}}}{\|\Delta'\|_F} < 1$ by definition of
$\Delta_0$. Hence $\Delta_0 \in \mathcal{T}_n$ since by the convexity
of the positive definite cone $\Omega_0 + \Delta_0 \succ 0$ because
$\Omega_0 \succ 0$ and $\Omega_0 + \Delta' \succ 0$. By the convexity
of $G(\Delta)$, we have that $G(\Delta_0) \leq \theta G(\mathbf{0}) +
(1-\theta) G(\Delta') \leq 0$, 
contradicting our assumption that $G(\Delta_0) > 0$ for $\Delta_0 \in \T_n$.
\qed \end{proof}

\begin{proposition}
\label{prop:bnd}
Suppose $G(\Delta_\Omega) > 0$ for all $\Delta_\Omega \in \mathcal{T}_n$. We then have that
\[
\|\hat{\Delta}_\Omega\|_F < M r_{n,\mathbf{p}}.
\]
\end{proposition}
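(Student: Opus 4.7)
The plan is to prove Proposition~\ref{prop:bnd} by a short contradiction argument that simply combines the convexity-based Proposition~\ref{prop:cnv} with the fact that $\mathbf{0}$ is a feasible point for the optimization problem defining $\hat\Delta_\Omega$. The substantive analytic work has already been invested in (i) establishing the convexity-reduction Proposition~\ref{prop:cnv}, which lets us lift positivity of $G$ from the Frobenius sphere $\mathcal{T}_n$ to the exterior $\mathcal{V}_n$, and (ii) the eventual proof that $G>0$ on $\mathcal{T}_n$ (which is the hypothesis here); so there is essentially no remaining analytic obstacle.

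First I would note the trivial but crucial identity $G(\mathbf{0}) = Q(\Omega_0) - Q(\Omega_0) = 0$. Since $\hat\Omega$ is the minimizer of $Q(\cdot)$ over $\mathcal{K}_{\mathbf{p}}^\sharp$ and $\Omega_0 \in \mathcal{K}_{\mathbf{p}}^\sharp$, the shifted function $G$ achieves its minimum at $\hat\Delta_\Omega$ and this minimum is at most $G(\mathbf{0}) = 0$. Hence
\[
G(\hat\Delta_\Omega) \;\le\; 0.
\]

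Next I would argue by contradiction: suppose $\|\hat\Delta_\Omega\|_F \ge M r_{n,\mathbf{p}}$. Since $\hat\Omega, \Omega_0 \in \mathcal{K}_{\mathbf{p}}^\sharp$ and $\hat\Delta_\Omega = \hat\Omega - \Omega_0 \in \mathcal{K}_{\mathbf{p}}$, the element $\hat\Delta_\Omega$ belongs to either $\mathcal{T}_n$ (if equality holds) or to $\mathcal{V}_n$ (if strict inequality holds). In the first case the hypothesis of Proposition~\ref{prop:bnd} directly yields $G(\hat\Delta_\Omega) > 0$; in the second case Proposition~\ref{prop:cnv} upgrades that same hypothesis to give $G(\hat\Delta_\Omega) > 0$. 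Either way we contradict $G(\hat\Delta_\Omega) \le 0$, so the supposition fails and $\|\hat\Delta_\Omega\|_F < M r_{n,\mathbf{p}}$ as required.

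The main (and only) obstacle I expect is a bookkeeping one: one must verify that $\hat\Delta_\Omega$ really does lie in the ambient set on which Proposition~\ref{prop:cnv} operates, namely that both $\hat\Omega$ and $\Omega_0$ are in $\mathcal{K}_{\mathbf{p}}^\sharp$ so that $\hat\Delta_\Omega \in \mathcal{K}_{\mathbf{p}}$. This is immediate from the definition of the estimator (it is the minimizer over that very set) and the standing assumption $\Omega_0 \in \mathcal{K}_{\mathbf{p}}^\sharp$; the only nontrivial check is the positive-definiteness of $\Omega_0 + \hat\Delta_\Omega = \hat\Omega$, which again holds by construction of $\hat\Omega$ as the optimizer of the log-determinant-based objective $Q$. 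Beyond that, the argument is a one-line consequence of convexity plus the fact that $\mathbf{0}$ is a competitor.
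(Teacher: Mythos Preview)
Your proof is correct and follows essentially the same approach as the paper: observe $G(\mathbf{0})=0$ so $G(\hat\Delta_\Omega)\le 0$, then use the hypothesis together with Proposition~\ref{prop:cnv} to conclude $\hat\Delta_\Omega\notin\mathcal{T}_n\cup\mathcal{V}_n$, which forces $\|\hat\Delta_\Omega\|_F < M r_{n,\mathbf{p}}$. Your version is slightly more explicit about the contradiction and the membership check, but the argument is the same.
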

\begin{proof}
By definition, $G(0) = 0$, so $G(\hat{\Delta}_\Omega) \leq G(0) = 0$. Thus if $G(\Delta_\Omega) > 0$ on $\mathcal{T}_n$, then by Proposition \ref{prop:cnv}  (section \ref{app:cnv}), $\hat{\Delta}_\Omega \notin \mathcal{T}_n \cup \mathcal{V}_n$ where $\mathcal{V}_n$ is defined therein. The proposition results.
\qed \end{proof}

\begin{lemma}
\label{cor:LD} 
Under (A1) -  (A3), for all $\Delta \in {\mathcal{T}_n}$ for
which $r_{n,\mathbf{p}} = o\left(\sqrt{\frac{\min_k
      m_k}{K+1}}\right)$,
\bens
\log|\Omega_0 + \Delta| - \log|\Omega_0| \leq \langle \Sigma_0, \Delta
\rangle -
\frac{2}{9\|\Omega_0\|_2^2}\fnorm{\Delta}^2.
\eens
\end{lemma}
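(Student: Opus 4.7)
The plan is to use the second-order integral form of the Taylor expansion of the log-determinant around $\Omega_0$. Specifically, for any $\Delta$ such that $\Omega_0 + v\Delta \succ 0$ for all $v \in [0,1]$, one has the identity
\begin{equation*}
\log|\Omega_0 + \Delta| - \log|\Omega_0| = \langle \Sigma_0, \Delta\rangle - \int_0^1 (1-v)\, \mathrm{tr}\bigl((\Omega_0 + v\Delta)^{-1}\Delta(\Omega_0 + v\Delta)^{-1}\Delta\bigr)\, dv,
\end{equation*}
since $\Sigma_0 = \Omega_0^{-1}$. Because $\Omega, \Omega_0 \in \mathcal{K}_{\mathbf{p}}^{\sharp}$ are both positive definite, the line segment $\Omega_0 + v\Delta$ lies in the positive-definite cone for every $v \in [0,1]$, so this formula applies to all $\Delta \in \mathcal{T}_n$.

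Next, I would lower-bound the second-order integrand. Diagonalising $A := \Omega_0 + v\Delta$ as $\sum_i \lambda_i v_i v_i^T$ and using the fact that $\Delta$ is symmetric, one gets
\begin{equation*}
\mathrm{tr}(A^{-1}\Delta A^{-1}\Delta) = \sum_{i,j} \lambda_i^{-1}\lambda_j^{-1}(v_i^T \Delta v_j)^2 \;\ge\; \frac{1}{\|A\|_2^2}\,\|\Delta\|_F^2.
\end{equation*}
The crux is then to show $\|A\|_2 \le \tfrac{3}{2}\|\Omega_0\|_2$ uniformly in $v$, which amounts to $\|\Delta\|_2 \le \tfrac{1}{2}\|\Omega_0\|_2$. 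This is where Lemma \ref{cor:l2bd} and the hypothesis on $r_{n,\mathbf{p}}$ enter: for $\Delta \in \mathcal{T}_n$,
\begin{equation*}
\|\Delta\|_2 \le \sqrt{\tfrac{K+1}{\min_k m_k}}\,\|\Delta\|_F = \sqrt{\tfrac{K+1}{\min_k m_k}}\, M r_{n,\mathbf{p}} = M\cdot o(1),
\end{equation*}
where the last equality uses $r_{n,\mathbf{p}} = o(\sqrt{\min_k m_k/(K+1)})$. Since $M$ depends only on $\Omega_0$, this quantity is eventually smaller than $\tfrac{1}{2}\|\Omega_0\|_2$, yielding $\|A\|_2^2 \le \tfrac{9}{4}\|\Omega_0\|_2^2$.

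Combining these ingredients and integrating, $\int_0^1 (1-v)\,dv = \tfrac{1}{2}$, so
\begin{equation*}
\int_0^1 (1-v)\,\mathrm{tr}\bigl(A^{-1}\Delta A^{-1}\Delta\bigr)\, dv \;\ge\; \frac{1}{2}\cdot \frac{4}{9\|\Omega_0\|_2^2}\|\Delta\|_F^2 \;=\; \frac{2}{9\|\Omega_0\|_2^2}\|\Delta\|_F^2,
\end{equation*}
which gives the claimed bound once substituted back into the Taylor identity. The only non-routine step is verifying the spectral-norm control on $\Delta$; this is precisely the reason the Kronecker-sum-specific Lemma \ref{cor:l2bd} is stated and used here, since a naïve bound $\|\Delta\|_2 \le \|\Delta\|_F$ would be too weak when $p$ is very large compared to $\min_k m_k$. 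Given Lemma \ref{cor:l2bd}, however, the argument is standard and proceeds as above.
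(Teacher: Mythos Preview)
Your proposal is correct and follows essentially the same route as the paper: both use the integral-remainder Taylor expansion of $\log|\cdot|$, lower-bound the quadratic term via $\mathrm{tr}(A^{-1}\Delta A^{-1}\Delta)\ge \|\Delta\|_F^2/\|A\|_2^2$, and invoke Lemma~\ref{cor:l2bd} on $\mathcal{K}_{\mathbf{p}}$ to turn $\|\Delta\|_F = M r_{n,\mathbf{p}}$ into a spectral-norm bound $\|\Delta\|_2 \le \tfrac12\|\Omega_0\|_2$, yielding the constant $2/9$. The paper presents the same computation in vectorized form and, via its Proposition~\ref{prop:posi-def-interval}, actually establishes the slightly sharper $\|\Delta\|_2 < \tfrac12\phi_{\min}(\Omega_0)$ (needed there to extend positive-definiteness to an open interval around $[0,1]$), but for the inequality in the lemma your weaker bound suffices.
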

The proof is in Section \ref{app:LD}.

By Proposition \ref{prop:bnd}, it remains to show that $G(\Delta_\Omega) > 0$ on 
$\mathcal{T}_n$ under event $\mathcal{A}$. We show this indeed holds.
\begin{lemma}
\label{lemm:18}
On event $\mathcal{A}$, we have $G(\Delta) > 0$ for all $\Delta \in \mathcal{T}_n$.
\end{lemma}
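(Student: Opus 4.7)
The plan is to bound $G(\Delta)$ from below by the strongly convex quadratic $\tfrac{2}{9\|\Omega_0\|_2^2}\|\Delta\|_F^2$ minus a first--order noise term scaling linearly in $\|\Delta\|_F$ with slope $\asymp r_{n,\mathbf{p}}$, then verify that on $\mathcal{T}_n$ (where $\|\Delta\|_F = M r_{n,\mathbf{p}}$ with $M = \tfrac12 \phi_{\max}^2(\Omega_0)$) the quadratic dominates. First I would expand
\[
G(\Delta) \;=\; \bigl(-\log|\Omega_0+\Delta|+\log|\Omega_0|\bigr) \;+\; \langle \hat S,\Delta\rangle \;+\; \Delta_g,
\]
and apply Lemma~\ref{cor:LD} (whose hypothesis $r_{n,\mathbf{p}}=o(\sqrt{\min_k m_k/(K+1)})$ is built into the theorem's scaling). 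Cancelling $-\langle \Sigma_0,\Delta\rangle$ yields
\[
G(\Delta) \;\geq\; \tfrac{2}{9\|\Omega_0\|_2^2}\|\Delta\|_F^2 \;+\; \langle \hat S-\Sigma_0,\Delta\rangle \;+\; \Delta_g.
\]

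Next I would split $\Delta=\diag(\Delta)+\offd(\Delta)$, where $\offd(\Delta)=\offd(\Delta_{\Psi,1})\oplus\cdots\oplus\offd(\Delta_{\Psi,K})$, and treat the two noise components separately. For the diagonal piece, Lemma~\ref{lemma::diagdecomp} gives $|\langle \hat S-\Sigma_0,\diag(\Delta)\rangle|\le\max_k\delta_{n,k}\sqrt{(K+1)p}\,\|\Delta\|_F$. For the off-diagonal piece, Lemma~\ref{lemma::offd} gives $|\langle \hat S-\Sigma_0,\offd(\Delta)\rangle|\le\sum_k m_k\delta_{n,k}(|\Delta_{k,S}|_1+|\Delta_{k,S^c}|_1)$, and Lemma~\ref{lemma::geometry} lower bounds the penalty by $\Delta_g\ge\sum_k\rho_k m_k(|\Delta_{k,S^c}|_1-|\Delta_{k,S}|_1)$. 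Combining these three, the coefficients on $|\Delta_{k,S^c}|_1$ become $m_k(\rho_k-\delta_{n,k})\ge 0$ under the penalty choice $\rho_k\ge(1+\varepsilon_k)\delta_{n,k}$, so this non-negative contribution may be discarded in the lower bound on $G$, leaving only the sparsity-set terms $-\sum_k m_k(\rho_k+\delta_{n,k})|\Delta_{k,S}|_1$.

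To convert these residual $\ell_1$ terms into a Frobenius-norm bound, I would apply $|\Delta_{k,S}|_1\le\sqrt{s_k}\,\|\Delta_{k,S}\|_F$ followed by Cauchy--Schwarz over $k$, invoking the orthogonal factor decomposition of Lemma~\ref{lem:DO} together with the disjoint supports $\{S_k\}$ to aggregate $\sum_k m_k\|\Delta_{k,S}\|_F^2\le\|\Delta\|_F^2$ and the uniform scaling $m_k\delta_{n,k}^2\asymp\|\Sigma_0\|_2^2\log p/n$. Combining with the diagonal bound then produces
\[
G(\Delta) \;\ge\; \tfrac{2}{9\|\Omega_0\|_2^2}\|\Delta\|_F^2 \;-\; c_0\|\Sigma_0\|_2\sqrt{\tfrac{(s+p)(K+1)\log p}{n\min_k m_k}}\,\|\Delta\|_F,
\]
for an absolute constant $c_0$. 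Plugging in $\|\Delta\|_F=Mr_{n,\mathbf{p}}$ and invoking the stipulated $C>9(\max_k\varepsilon_k^{-1}\vee C_1)$ makes the bracket strictly positive, giving $G(\Delta)>0$.

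The main obstacle will be the aggregation step: correctly matching the factor-wise weights $m_k$ and sparsities $s_k$ against $\delta_{n,k}^2\asymp\|\Sigma_0\|_2^2\log p/(nm_k)$ so that the Cauchy--Schwarz bookkeeping produces exactly the $\sqrt{(s+p)(K+1)/\min_k m_k}$ scale embedded in $r_{n,\mathbf{p}}$, and then tuning $C$ so the strong-convexity coefficient $2/(9\|\Omega_0\|_2^2)$ simultaneously dominates both the sparsity residual and the diagonal piece $\max_k\delta_{n,k}\sqrt{(K+1)p}\,\|\Delta\|_F$. Once these scales are reconciled, the remainder is routine algebra.
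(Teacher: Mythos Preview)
Your proposal is correct and follows essentially the same route as the paper: expand $G$, apply Lemma~\ref{cor:LD} for the curvature term, split $\langle\hat S-\Sigma_0,\Delta\rangle$ into diagonal and off-diagonal pieces controlled by Lemmas~\ref{lemma::diagdecomp} and~\ref{lemma::offd}, use Lemma~\ref{lemma::geometry} to cancel the $S^c$ contribution against the penalty, and then compare the remaining linear term to the quadratic on the sphere $\|\Delta\|_F=Mr_{n,\mathbf{p}}$. The only tactical difference is in the aggregation of the sparsity residual: the paper bounds $\sum_k m_k|\Delta_{k,S}|_1=|\Delta_{\Omega,S}|_1\le\sqrt{s}\,\|\Delta_{\Omega,S}\|_F$ in one step and pairs it with $\max_k\delta_{n,k}$, whereas you work factorwise via $|\Delta_{k,S}|_1\le\sqrt{s_k}\,\|\Delta_{k,S}\|_F$ and Cauchy--Schwarz, exploiting $m_k\delta_{n,k}^2\asymp\|\Sigma_0\|_2^2\log p/n$; both bookkeeping choices land inside the $\sqrt{(s+p)(K+1)\log p/(n\min_k m_k)}$ budget.
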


\begin{proof}
Throughout this  proof, we assume that event $\A$ holds.
By Lemma \ref{cor:LD}, if $r_{n,\mathbf{p}} \leq \sqrt{\min_k m_k/(K+1)}$, 
we can write \eqref{eq:Qbar} using the objective \eqref{Eq:objFun},
\ben
\label{eq:G}
\lefteqn{G(\Delta_\Omega)   =
\langle\Omega_0 + \Delta_\Omega,\hat{S}\rangle - \log|\Omega_0 +
\Delta_\Omega| - \langle\Omega_0,\hat{S}\rangle + \log|\Omega_0|  } \\
&& 
\nonumber
+ \sum_k \rho_k m_k|\offone{\Psi_{k,0} + \Delta_{\Psi,k}}
- \sum_k \rho_k m_k \offone{\Psi_{k,0}} \\
& \geq&  
\nonumber
\langle\Delta_\Omega,\hat{S}\rangle - \langle
\Delta_\Omega,\Sigma_0\rangle + \frac{2}{9\|\Omega_0\|_2^2}
\|\Delta_\Omega\|_F^2 \\
\nonumber
&& 
+\sum_k \rho_k m_k (|\Psi_{k,0}+ \Delta_{\Psi,k}|_{1,\off} -
|\Psi_{k,0}|_{1,\off}) 
\\\nonumber
& =& \ip{\diag(\Delta_\Omega),\hat{S}- \Sigma_0 } +
\ip{\offd(\Delta_\Omega),\hat{S}- \Sigma_0 } + 
\frac{2}{9\|\Omega_0\|_2^2} \|\Delta_\Omega\|_F^2  \\
\nonumber
& & \quad+\underbrace{\sum_k \rho_k m_k (|\Psi_{k,0}+ \Delta_{\Psi,k}|_{1 ,\off}- |\Psi_{k,0}|_{1,\off})}_{\Delta_g}.
\een
%where we have used the bilinearity of the inner product. %\frac{1}{2+\tau}\underline{k}_\Omega^2
%for some $\tau = o(1)$.
%FACTORWISE WILL BE USEFUL FOR SPEC NORM TERM WHEN DOING FACTORWISE. PUT ALL AS SEPARATE APPENDI
We next bound the inner product term under event $\A$.
Substituting the bound of Lemma \ref{lemma::offd} and \eqref{eq::DDG} into \eqref{eq:G}, under event $\mathcal{A}$,
we have  by choice of $\rho_{k} = \delta_{n,p}/\ve_k$ where $0 < \ve_k < 1$ for all $k$,
\bens
\lefteqn{\sum_{k=1}^K m_k \rho_{k}
\left(\offone{\Psi_k + \Delta_{\Psi, k}} - \offone{\Psi_k}\right) + 
\ip{\offd(\Delta_{\Omega}), \hat{S} - \Sigma_0} } \\
 & \ge & 
\sum_{k=1}^K m_k \rho_{k}
\left(\onenorm{\Delta_{k, \Sc}} - \onenorm{ \Delta_{k, S}}\right) -  \sum_{k=1}^K m_k 
\offone{\Delta_{\Psi, k}} \delta_{n,k} \\
 & \ge & 
\sum_{k=1}^K m_k \rho_{k}
\left(\onenorm{\Delta_{k, \Sc}} - \onenorm{ \Delta_{k,  S}}\right) -  \sum_{k=1}^K m_k \delta_{n,k} 
\left(\onenorm{\Delta_{k, \Sc}} + \onenorm{\Delta_{k,  S}}\right) \\
% & \ge & 
%-\sum_{k=1}^K m_k (\rho_{n,k} + \delta_{n,k}) \onenorm{\Delta_{k, S}} \\
 & \ge & 
-2 \max_{k} \rho_{k} \sum_{k=1}^K m_k \onenorm{\Delta_{k, S}}  = 
-2 \max_{k} \rho_{k} \onenorm{\Delta_{\Omega, S}} 
\eens
For the diagonal part, we have by Lemma~\ref{lemma::diagdecomp}
\bens
\abs{\ip{\diag(\Delta_\Omega), \hat{S}-\Sigma_0} }
& \leq &  C_1 \max_{k} \delta_{n,k} \sqrt{p} \sqrt{K  +1}\fnorm{\diag(\Delta_{\Omega})}
%\label{eq::diag}
%& = &  C_1 \twonorm{\Sigma_0} \sqrt{\frac{(K+1) p\log p}{n \min_k m_k
%}} \fnorm{\diag(\Delta_{\Omega})} \; \;\text{ and } \\
%\onenorm{\Delta_{\Omega,S}}  & \leq & \sqrt{s}
%\fnorm{\Delta_{\Omega,S}} \leq \sqrt{s} \fnorm{\Delta_{\Omega,S}}
\eens
we have  for all $\Delta_\Omega \in \T_n$, and $C'' = \max_{k}(\frac{2}{\ve_k}) \vee \sqrt{2} C_1$, 
and for  $K \ge 1$, 
\bens
\label{eq:G3}
&G&(\Delta_\Omega) 
\geq 
\ip{\diag(\Delta_\Omega),\hat{S}- \Sigma_0 } -2 \max_{k} \rho_{n,k}
\onenorm{\Delta_{\Omega, S}} 
+ \frac{2}{9\|\Omega_0\|_2^2} \|\Delta_\Omega\|_F^2 \\
& > & 
\frac{2}{9\|\Omega_0\|_2^2} \|\Delta_\Omega\|_F^2 \\
&-&  \max_{k} \delta_{n,k} 
\left(\sqrt{p} \sqrt{K  +  1}\fnorm{\diag(\Delta_{\Omega})} + 2
  \max_{k} \inv{\ve_k} \onenorm{\Delta_{\Omega, S}} \right) \\
&\geq & 
\frac{2}{9\|\Omega_0\|_2^2} \|\Delta_\Omega\|_F^2  - 
 C' \twonorm{\Sigma_0} \sqrt{\frac{\log p}{n \min_k m_k }} \cdot \\
 && \left(\sqrt{(K + 1) p} \fnorm{\diag(\Delta_{\Omega})} + \sqrt{2 s}
\|\Delta_{\Omega,S}\|_F\right) \\
& \geq & 
\frac{2}{9\|\Omega_0\|_2^2} \fnorm{\Delta_\Omega}^2 \\&-&   
C' \left( \sqrt{2}\sqrt{(K+1)(p + s)} \fnorm{\Delta_{\Omega}}\right)  
\twonorm{\Sigma_0} \sqrt{\frac{\log p}{n \min_k m_k }} \\
%  & \ge & \fnorm{\Delta}^2\left(\frac{2}{9\twonorm{\Omega_0}^2} - 
%C'' \twonorm{\Sigma_0} \sqrt{\frac{\log p}{n \min_k m_k }}\left(\sqrt{(K  + 1)(p +  s)} 
%\inv{\fnorm{\Delta_{\Omega}}}\right) \right) \\
 & = &
\fnorm{\Delta}^2\left(\frac{2}{9\twonorm{\Omega_0}^2} - 
C'' \twonorm{\Sigma_0} \sqrt{\frac{\log p}{n \min_k m_k }}
\frac{\sqrt{(K  + 1)(p + s)} }{M r_{n,\mathbf{p}}}\right)  > 0
\eens
which holds  for all $\Delta_\Omega \in \T_n$,
where we use the following bounds: for all $K \ge 1$.
\bens 
\sqrt{(K  + 1) p}\fnorm{\diag(\Delta_{\Omega})} + \sqrt{2s}
\|\Delta_{\Omega,S}\|_F &  \leq &  
%\sqrt{2}\sqrt{(K+1)p + 2 s}\fnorm{\Delta_{\Omega}}  \\ &  \leq & 
 \sqrt{2}\sqrt{(K+1)(p +  s)}\fnorm{\Delta_{\Omega}}
\eens
and 
\bens
\lefteqn{C'' \twonorm{\Sigma_0} \sqrt{\frac{\log p}{n \min_k m_k }} \sqrt{(K  +
  1)(p + s)} \inv{M r_{n,\mathbf{p}}} } \\
& & =   
\frac{C''}{ C M} =   \frac{2 C''}{C} \phi_{\min}^2(\Sigma_0) < \frac{2}{9\twonorm{\Omega_0}^2}
\eens
where  $M = \inv{2 \phi_{\min}^2(\Sigma_0)}$,  which holds so long as
$C$ is  chosen to be large enough in
$$r_{n,\mathbf{p}} = C \twonorm{\Sigma_0} \sqrt{(s + p)
  (K+1)\frac{\log p}{n  \min_k m_k}};$$
For example, we set $C > 9 C'' = 9 (\max_{k}(\frac{2}{\ve_k}) \vee
\sqrt{2} C_1)$.
\end{proof}

Theorem \ref{Thm:1} follows from Proposition \ref{prop:bnd} immediately.
\end{proofof2}

\subsection{Proof of Lemma \ref{lemma::offd}}
\label{supp:pfoffd}
\begin{proofof2}
Assume that the event $\A$ of Lemma \ref{lemm:concNew} holds.
%{Lemma \ref{lemma::offd}}
Using the definition of $\Delta_\Omega$ \eqref{eq:DOM}, the projection
operator $\mathrm{Proj}_{\tilde{K}_{\mathbf{p}}}(\cdot)$, 
 and letting $\tau_\Sigma = (K-1) \frac{\tr(\hat{S}-\Sigma_0)}{p}$, we have
\begin{align}\label{eq:two}
&\left| \langle \offd(\Delta_\Omega), 
\hat{S}-\Sigma_0 \rangle\right| = |\langle \Delta_\Omega, \mathrm{Proj}_{\tilde{K}_{\mathbf{p}}}(\hat{S} - \Sigma_0) \rangle |  \\\nonumber
&= \left|\langle \offd(\Delta_\Omega), ({S}_1 - {\Sigma}_0^{(1)})
  \oplus \dots \oplus ({S}_K - {\Sigma}_0^{(K)}) - \tau_\Sigma I_p
  \rangle\right| 
\\\nonumber
&=\left|\langle \offd(\Delta_{\Psi,1}) \oplus \dots \oplus
  \offd(\Delta_{\Psi,K}), ({S}_1 - {\Sigma}_0^{(1)}) \oplus \dots
  \oplus ({S}_K - {\Sigma}_0^{(K)})  \rangle\right|, 
\end{align}
where we have used the fact that $\offd(\Delta_{\Psi,1}) \oplus \dots
\oplus \offd(\Delta_{\Psi,K})$ is zero along the diagonal and thus has zero inner product with $I_p$.
Substituting Lemma \ref{lemma::diagdecomp} and the definitions of
subevents under $\mathcal{A}$, we have by \eqref{eq:III-IVN} and Lemma \ref{lem:InnCov},
\ben
\label{eq:III-IVN}
\abs{ \langle \offd(\Delta_\Omega), \hat{S}-\Sigma_0 \rangle}
 &= & \sum_{k = 1}^K m_k |\langle \offd(\Delta_{\Psi,k}), {S}_k -{\Sigma}_0^{(k)} \rangle|\\\nonumber
 &\leq  & \sum_{k = 1}^K m_k \sum_{i,j =1}^{d_k}
 |[\offd(\Delta_{\Psi,k})]_{ij}|\cdot \max_{ij} \left|[{S}_k
   -{\Sigma}_0^{(k)}]_{ij}\right|  \\ \nonumber
& \le &  C
\|\Sigma_0\|_2\sum_{k=1}^K m_k  |\Delta_{\Psi,k}|_{1,\off} \sqrt{\frac{\log p}{m_k n}}. 
\een
\end{proofof2}

\subsection{Proof of Lemma \ref{lemma::diagdecomp}: Bound on Inner Product for Diagonal}
\label{sec:digpf}
\begin{proofof2}
Let $\tilde{\Delta}_\Omega =\Delta_\Omega - \tau_{\Omega} I_p$.
Recall the identifiable parameterization of $\Delta_\Omega$ (Lemma \ref{lem:DO})
\bens
\Delta_\Omega = \tau_\Omega I_p + \tilde{\Delta}_{\Psi,1} \oplus \dots \oplus \tilde{\Delta}_{\Psi,K}
\eens
where $\tau_\Omega = \tr(\Delta_\Omega)/p$ and $\tilde{\Delta}_{\Psi,k}$ are given in the lemma.
%Now for
%$$\diag(\Delta'_\Omega) = \Delta'_{\Psi_1} \oplus
%\Delta'_{\Psi_2}  \oplus \ldots \oplus \Delta'_{\Psi_K},$$
We then have  $\tr(\tilde{\Delta}_{\Psi,j})= 0 $ and
\ben
\label{eq::sumnorm}
\sum_{k=1}^K \fnorm{\diag(\tilde{\Delta}_{\Psi,k})}^2 m_k  + p \tau_{\Omega}^2
= \fnorm{\diag(\Delta_\Omega)}^2
\een
by othogonality of the decomposition. By Lemma \ref{lem:InnCov}, we can write
\bens
\abs{\ip{\diag(\tilde{\Delta}_{\Omega}), \hat{S} - \Sigma_0} }
& \leq &\sum_{k=1}^K m_k |\ip{S_k - \Sigma_0^{(k)}, \diag(\tilde{\Delta}_{\Psi_k})}| \\
& \le & 
C \twonorm{\Sigma_0} \sum_{k=1}^K m_k
\onenorm{\diag(\tilde{\Delta}_{\Psi,k})}\sqrt{\frac{\log p}{n m_k}} \\
& \le &
C \twonorm{\Sigma_0} \sum_{k=1}^K \sqrt{m_k} \sqrt{d_k}
\fnorm{\diag(\tilde{\Delta}_{\Psi,k})}\sqrt{\frac{\log p}{n}}.
\eens
Moreover,  under $\mathcal{A}_0$, we have
\bens
\label{eq::diagsum}
\abs{\ip{ \tau_{\Omega} I_p, \hat{S} - \Sigma_0} } & \le &
C |\tau_{\Omega}| \sqrt{p} \twonorm{\Sigma_0}
\sqrt{\frac{\log p}{n}}.
\eens
Summing these terms together, we have
\ben
\nonumber
\lefteqn{\abs{\ip{\diag(\Delta_{\Omega}), \hat{S} - \Sigma_0} }} \\
& \le &
C_0 \twonorm{\Sigma_0} \sqrt{\frac{\log p}{n}}
\left( \sum_{k=1}^K \sqrt{m_k} \sqrt{d_k}
\fnorm{\diag(\tilde{\Delta}_{\Psi,k})} + |\tau_{\Omega}| \sqrt{p} \right) \nonumber\\
& \le &
\label{eq:dcmpp}
C_0 \max_{k} \sqrt{d_k} \twonorm{\Sigma_0} \sqrt{\frac{\log p}{n}} \sqrt{K + 1}\fnorm{\diag(\Delta_{\Omega})} \\
\nonumber
& = &
C_0 \max_{k} \left( \sqrt{\frac{\log p}{n m_k}}  \twonorm{\Sigma_0} \right)
\sqrt{(K  + 1)p}\fnorm{\diag(\Delta_{\Omega})} \\
\nonumber
& = &
C_0 \sqrt{\frac{\log p}{n \min_k m_k}}  \twonorm{\Sigma_0}
\sqrt{(K  + 1)p}\fnorm{\diag(\Delta_{\Omega})} \\
\nonumber
& \asymp &
\max_{k} \delta_{n, k} \sqrt{(K  + 1)p}\fnorm{\diag(\Delta_{\Omega})}
\een
where in \eqref{eq:dcmpp},  we have used the following inequality in view of \eqref{eq::sumnorm}:
\bens
\left(\sum_{k=1}^K \sqrt{m_k} 
\fnorm{\diag(\tilde{\Delta}_{\Psi,k})} + |\tau_{\Omega}| \sqrt{p}
\right)
\le \sqrt{K  + 1}\fnorm{\diag(\Delta_{\Omega})}.
\eens
\end{proofof2}

%\text{ given that }  &&
%\fnorm{\diag(\Delta_\Omega)} = 
%\sqrt{\sum_{k=1}^K \fnorm{\diag(\tilde{\Delta}_{\Psi,k})}^2 m_k  + p
 % \tau_{\Omega}^2}.

%\fbox{Why is the following needed?}
%\bens
%\fnorm{\diag(\Delta_\Omega)} 
%\geq \frac{1}{\sqrt{K+1}} \sqrt{p}|\tau_{\Omega }| + \sum_{k=1}^K \sqrt{m_k} \fnorm{\diag(\tilde{\Delta}_{\Psi,k})}.
%\eens
\subsection{Proof of Lemma \ref{cor:LD}}\label{app:LD}
\begin{proofof2}
We first state Proposition \ref{prop:posi-def-interval}
\begin{proposition}
\label{prop:posi-def-interval}
Under (A1)-(A3), for all $\Delta \in {\mathcal{T}_n}$,
%for which $r_{n,\mathbf{p}} = o\left(\sqrt{\frac{\min_k m_k}{K+1}}\right)$, we have that
\ben
\label{eq::eigen-bound}
\phi_{\min} (\Omega_0) > 2M r_{n,\mathbf{p}} \sqrt{\frac{K+1}{\min_k
    m_k}} \ge \twonorm{\Delta}/2
\een
so that $\Omega_0 + v \Delta \succ 0, \forall v \in I \supset [0, 1]$,
where $I$ is an open interval containing $[0, 1]$.
\end{proposition}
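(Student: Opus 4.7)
The plan is to read the claimed two-sided inequality as a combination of (i) the geometric bound of Lemma \ref{cor:l2bd}, which converts the Frobenius-norm constraint defining $\mathcal{T}_n$ into a spectral-norm bound on $\Delta$, and (ii) a sample-size condition contained in (A3), which forces $\phi_{\min}(\Omega_0)$ to dominate that spectral-norm bound. Once a strict eigenvalue gap $\phi_{\min}(\Omega_0) > \|\Delta\|_2$ is in hand, the positive-definiteness of $\Omega_0 + v\Delta$ on an open neighborhood of $[0,1]$ is a routine Weyl's-inequality argument with continuity of eigenvalues.

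First I would use the fact that $\Delta \in \mathcal{T}_n \subset \mathcal{K}_{\mathbf{p}}$ with $\|\Delta\|_F = M r_{n,\mathbf{p}}$, so Lemma \ref{cor:l2bd} gives directly
\[
\|\Delta\|_2 \;\le\; \sqrt{\tfrac{K+1}{\min_k m_k}}\,\|\Delta\|_F \;=\; M r_{n,\mathbf{p}}\sqrt{\tfrac{K+1}{\min_k m_k}},
\]
from which the right-hand inequality $\|\Delta\|_2/2 \le 2 M r_{n,\mathbf{p}}\sqrt{(K+1)/\min_k m_k}$ follows trivially (the factor-of-two slack in the statement makes this step essentially free). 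Next I would substitute $M = 1/(2\phi_{\min}^2(\Sigma_0))$ and the explicit form of $r_{n,\mathbf{p}}$ to recast the left-hand inequality $\phi_{\min}(\Omega_0) > 2M r_{n,\mathbf{p}}\sqrt{(K+1)/\min_k m_k}$ as the concrete sample-size condition
\[
\frac{2C\,\|\Sigma_0\|_2^{2}\,(K+1)\sqrt{(s+p)\log p}}{\min_k m_k\,\sqrt{n}} \;<\; 1,
\]
which is the quantitative content of assumption (A3) and the choice of the absolute constant $C$ made immediately before Proposition \ref{prop:cnv}; verifying that (A3) indeed supplies this inequality is the substance of the step.

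With both inequalities in hand one has $\phi_{\min}(\Omega_0) > \|\Delta\|_2$, so by Weyl's inequality
\[
\phi_{\min}(\Omega_0 + v\Delta) \;\ge\; \phi_{\min}(\Omega_0) - |v|\,\|\Delta\|_2
\]
is strictly positive throughout $[0,1]$ with nontrivial slack at the endpoints; continuity of $v \mapsto \phi_{\min}(\Omega_0+v\Delta)$ then enlarges $[0,1]$ to some open interval $I$ on which $\Omega_0 + v\Delta \succ 0$, which is exactly the conclusion.

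The main obstacle I anticipate is purely bookkeeping: tracking the chain of constants so that the strict inequality against $\phi_{\min}(\Omega_0) = 1/\|\Sigma_0\|_2$ really is delivered by (A3) for the specific $C$ fixed in the proof of Theorem \ref{Thm:1}, rather than only up to hidden constants. No step is analytically deep; the geometric work has already been done in Lemma \ref{cor:l2bd}, and everything else reduces to Weyl plus continuity.
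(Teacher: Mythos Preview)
Your proposal is correct and follows essentially the same approach as the paper: apply Lemma~\ref{cor:l2bd} to convert the Frobenius constraint on $\mathcal{T}_n$ into a spectral bound, substitute the explicit forms of $M$ and $r_{n,\mathbf{p}}$, and invoke the sample-size condition in (A3) to obtain the strict gap $\phi_{\min}(\Omega_0) > 2\|\Delta\|_2$; then use Weyl's inequality to push positive-definiteness past the endpoints of $[0,1]$. The only cosmetic difference is that the paper verifies the open-interval claim by explicitly checking $\Omega_0 + (1+\varepsilon)\Delta \succ 0$ and $\Omega_0 - \varepsilon\Delta \succ 0$ for small $\varepsilon$, whereas you phrase the same step as continuity of $v\mapsto\phi_{\min}(\Omega_0+v\Delta)$; these are equivalent.
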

\begin{proof}
We first show that~\eqref{eq::eigen-bound} holds for $\Delta \in
{\mathcal{T}_n}$.
Indeed, by Corollary \ref{cor:l2bd}, we have for all $\Delta \in \mathcal{T}_n$
\bens
\|\Delta\|_2  & \leq & 
\sqrt{\frac{K+1}{\min_k m_k}} \|\Delta\|_F  = \sqrt{\frac{K+1}{\min_k m_k}}
M r_{n,\mathbf{p}}  \\
& \le & 
\sqrt{\frac{K+1}{\min_k m_k}}
\frac{C}{2} \twonorm{\Sigma_0} \sqrt{(s + p)  (K+1)\frac{\log p}{n  \min_k m_k}}
\inv{\phi_{\min}^2(\Sigma_0)} \\
&  = & \frac{C}{2} \frac{\phi_{\max}(\Sigma_0)}
{\phi^2_{\min}(\Sigma_0)} \sqrt{ (s  + p) 
 \frac{\log p}{n}}  \frac{(K+1)}{\min_k m_k} < \half \phi_{\min}(\Omega_0) = \inv{ \phi_{\max}(\Sigma_0)}
\eens
so long as 
\bens
n (\min_{k} m_k)^2 > 2 C^2 \kappa(\Sigma_0)^4 (s+ p) (K+1)^2 \log p
\eens
where $\kappa(\Sigma_0) = \phi_{\max}(\Sigma_0)/\phi_{\min}(\Sigma_0)$
is the condition number of $\Sigma_0$.

Next, it is sufficient to show that $\Omega_0 + (1 + \ve) \Delta \succ 0$
and $\Omega_0 - \ve \Delta \succ 0$ for some $1 > \ve > 0$.
Indeed,  for $\ve < 1$, 
\bens
\phi_{\min} (\Omega_0 + (1 + \ve) \Delta) 
& \geq &
\phi_{\min} (\Omega_0) - (1 + \ve) \twonorm{\Delta} \\
& > & \phi_{\min} (\Omega_0) - 2 \sqrt{\frac{K+1}{\min_k m_k}}M
r_{n,\mathbf{p}}   >0
\eens
given that by definition of $\T_n$ and \eqref{eq::eigen-bound}.
\end{proof}

Thus we have that $\log|\Omega_0 + v \Delta|$ is infinitely differentiable on 
the open interval $I \supset [0, 1]$ of $v$. This allows us to 
use the Taylor's formula with integral remainder to prove Lemma
\ref{cor:LD}, drawn from  \cite{rothman2008sparse}.

Let us use $A$ as a shorthand for
$$\mvec{\Delta}^T \left( \int^1_0(1-v)
(\Omega_0 + v \Delta)^{-1} \otimes (\Omega_0 + v \Delta)^{-1}dv
\right) \mvec{\Delta},$$
where $\mathrm{vec}(\Delta) \in \mathbb{R}^{p^2}$ is $\Delta_{p \times p}$
vectorized. Now, the Taylor expansion gives
\begin{align}
\nonumber
\log|\Omega_0 + \Delta| - \log|\Omega_0| & = 
\left.\frac{d}{dv}\log|\Omega_0 + v\Delta|\right|_{v=0} \Delta \\\nonumber&\qquad + 
\int_0^1(1-v) \frac{d^2}{dv^2}  \log|\Omega_0 + v \Delta| dv \\
& =  \langle \Sigma_0,\Delta \rangle - a.\label{eq:LogD}
\end{align}
The last inequality holds because $\nabla_\Omega \log|\Omega| = \Omega^{-1}$ and $\Omega_0^{-1} = \Sigma_0$.

We now bound $a$, following arguments from \citep{zhou2011high,rothman2008sparse}. 
\begin{align*}
a &= \int_0^1(1-v) \frac{d^2}{dv^2}  \log|\Omega_0 + v \Delta| dv\\ 
&= \mathrm{vec}(\Delta)^T \left(\int_0^1 (1-v) (\Omega_0 + v \Delta)^{-1} \otimes (\Omega_0 + v \Delta)^{-1} dv)\right) \mathrm{vec}(\Delta)\\
&\geq \|\Delta\|_F^2 \phi_{\min}\left(\int_0^1 (1-v) (\Omega_0 + v \Delta)^{-1} \otimes (\Omega_0 + v \Delta)^{-1} dv\right).
\end{align*}
Now,  suppose that
\begin{align*}
\phi_{\min}&\left(\int_0^1 (1-v) (\Omega_0 + v \Delta)^{-1} \otimes (\Omega_0 + v \Delta)^{-1} dv\right) \\
&\geq \int_{0}^1(1-v)\phi^2_{\min}((\Omega_0 + v \Delta)^{-1}) dv\\ & \geq \min_{v \in [0,1]}\phi^2_{\min}((\Omega_0 + v \Delta)^{-1}) \int_0^1 (1-v)dv\\
&= \frac{1}{2} \min_{v \in [0,1]}\frac{1}{\phi^2_{\max}(\Omega_0 + v \Delta)} = \frac{1}{2 \max_{v \in [0,1]} \phi^2_{\max}(\Omega_0 + v \Delta)}\\
&\geq \frac{1}{2(\phi_{\max}(\Omega_0) + \|\Delta\|_2)^2}.
\end{align*}
where \eqref{eq::eigen-bound},
we have for all $\Delta \in \T_n$, 
\bens
\|\Delta\|_2 \leq \sqrt{\frac{K+1}{\min_k m_k}} \|\Delta\|_F = 
\sqrt{\frac{K+1}{\min_k m_k}} M r_{n,\mathbf{p}} < \frac{1}{2}\phi_{\min}(\Omega_0)
\eens
so long as the condition in (A3) holds, namely,
\bens
n (\min_{k} m_k)^2 > 2 C^2 \kappa(\Sigma_0)^4 (s+ p) (K+1)^2 \log p.
\eens
Hence,
\begin{align*}
\phi_{\min}&\left(\int_0^1 (1-v) (\Omega_0 + v \Delta)^{-1} \otimes (\Omega_0 + v \Delta)^{-1} dv\right) \geq \frac{2}{9\phi_{\max}^2(\Omega_0) }. 
\end{align*}
Thus, substituting into \eqref{eq:LogD}, the lemma is proved.
%\log|\Omega_0 + \Delta| - \log|\Omega_0| \leq \langle \Sigma_0,
%\Delta \rangle - \frac{2}{9\|\Omega_0\|_2^2}\fnorm{\Delta}^2, \] completing the proof.
\end{proofof2}

%\left| \langle \Delta_\Omega, \hat{S}-\Sigma_0 \rangle\right|
%\leq  C_1 \twonorm{\Sigma_0} \sqrt{\frac{\log p}{n \min_k m_k }} \sqrt{(K
%  + 1) p}\fnorm{\diag(\Delta_{\Omega})}\\ &\qquad+
%C\|\Sigma_0\|_2\sum_{k=1}^K m_k  |\Delta_{\Psi,k}|_{1,\off}
%\sqrt{\frac{\log p}{m_k n}}. 

\subsection{Proof of Theorem \ref{Thm:1Spec}: Factorwise and Spectral Norm Bounds}
\label{app:spec}
\begin{proof}
\textbf{Part I: Factor-wise bound}. 
From the proof of Theorem \ref{Thm:1}, we know that under event $\mathcal{A}$,
\begin{equation}\label{eq:fwb}
\|\Delta_\Omega \|_F^2 \leq c (K+1)(s+p)  \frac{\log p}{n \min_k m_k}.
\end{equation}
Furthermore, since the identifiable parameterizations of $\hat{\Omega}, \Omega_0$ 
are of the form \eqref{eq::tildeA} by construction in Lemma \ref{lem:DO})
\begin{align*}
\hat{\Omega} &= \hat{\tau} I_p + (\tilde{\Psi}_1 \oplus \dots \oplus \tilde{\Psi}_K)\\
{\Omega}_0 &= {\tau_0} I_p + (\tilde{\Psi}_{0,1} \oplus \dots \oplus \tilde{\Psi}_{0,K}),
\end{align*}
we have that the identifiable parameterization of $\Delta_\Omega$ is
\begin{align}\label{idid}
\Delta_\Omega = \tau_{\Delta} I_p + (\tilde{\Delta}_1 \oplus \dots \oplus \tilde{\Delta}_K),
\end{align}
where $\tau_\Delta = \hat{\tau} - \tau_0$, $\tilde{\Delta}_k = \tilde{\Psi}_{k} - \tilde{\Psi}_{0,k}$. Observe that $\tr(\tilde{\Delta}_k) = \tr(\tilde{\Psi}_{k}) - \tr(\tilde{\Psi}_{0,k}) = 0$.

By Lemma \ref{lem:DO} then, 
%shows that any $\Delta_\Omega \in \mathcal{K}_{\mathbf{p}}$ can be identifiably decomposed into orthogonal components
%\[
%\Delta_\Omega = \tau_{\Delta} I_p + (\tilde{\Delta}_1 \oplus \dots \oplus \tilde{\Delta}_K),
%\]
%where
\begin{align*}
\|\Delta_\Omega\|_F^2 &=  p \tau_\Delta^2 + \sum_{k=1}^K m_k \|\tilde{\Delta}_k\|_F^2.
%&\geq  p \sum_{k = 1}^K \left(\frac{\tau_B}{K}\right)^2 + \sum_{k=1}^K m_k \|\tilde{A}_k\|_F^2\\
%&= \sum_{k=1}^K m_k \left(\left\|\frac{\tau_B}{K}I_{d_k}\right\|_F^2 + \|\tilde{A}_k\|_F^2\right)\\
%&\geq \sum_{k=1}^K m_k \left\|\frac{\tau_B}{K}I_{d_k}+\tilde{A}_k\right\|_F^2,
\end{align*}
Thus, the estimation error on the underlying parameters is bounded by \eqref{eq:fwb}
\[
p \tau_\Delta^2 + \sum_{k=1}^K m_k \|\tilde{\Delta}_k\|_F^2 \leq c (K+1)(s+p)  \frac{\log p}{n\min_k  m_k},
\]
or, dividing both sides by $p$
\begin{align}\label{eq:divp}
\tau_\Delta^2 + \sum_{k=1}^K \frac{\|\tilde{\Delta}_k\|_F^2}{d_k} &\leq c (K+1)\frac{s+p}{p} \frac{\log p}{n\min_k  m_k}\\\nonumber
&=c (K+1)\left(\frac{s}{p}+1\right)  \frac{\log p}{n\min_k  m_k}.
\end{align}
Recall that $s = \sum_{k = 1}^K m_k s_k$, so $\frac{s}{p} = \sum_{k=1}^K \frac{s_k}{d_k}$.
Substituting into \eqref{eq:divp}
\begin{align}\label{eq:id}
\tau_\Delta^2 + \sum_{k=1}^K \frac{\|\tilde{\Delta}_k\|_F^2}{d_k} &\leq c (K+1)\left(1 + \sum_{k=1}^K \frac{s_k}{d_k}\right)  \frac{\log p}{n\min_k  m_k}.%\\
%&\leq cK^2 \frac{\left((\max_k d_k) + \sum_{k=1}^K s_k\right)\log p }{pn}
\end{align}
From this, it can be seen that the bound converges as the $m_k$ increase with constant $K$. To put the bound in the form stated in the theorem, note that since $\tau_{\Delta} I_p + (\tilde{\Delta}_1^+ \oplus \dots \oplus \tilde{\Delta}_K^+)$
\begin{align*}
\frac{\|\diag(\Delta_\Omega)\|_2^2}{\max_k d_k} &\leq \frac{\left(\tau_\Delta + \sum_{k=1}^K \|\tilde{\Delta}_k^+\|_2\right)^2}{\max_k d_k}\\
&\leq\frac{K+1}{\max_k d_k}\left( \tau_\Delta^2 + \sum_{k=1}^K \|\diag(\tilde{\Delta}_k)\|_2^2\right)\\
&\leq (K+1) \left(\tau_\Delta^2 + \sum_{k=1}^K \frac{\|\diag(\tilde{\Delta}_k)\|_F^2}{d_k}\right).
\end{align*}

%To confirm single sample convergence, we must check that the condition $r_{n,\mathbf{p}} = o(1)$ remains satisfied, i.e. that $r_{n,\mathbf{p}}^2 = \frac{c'}{M^2} (s+p) \left(\sum_{k=1}^K \sqrt{\frac{\log p}{m_k n}} \right)^2 \leq 1$. Equivalently, we must show
%\begin{align}
%\frac{K+1}{\min_k m_k} (s+p) \left(\sum_{k=1}^K \sqrt{\frac{\log p}{m_k n}} \right)^2 &= o(1)\nonumber\\ \label{eq:nneeww}
%(K+1)^2 \left(1 + \sum_{k=1}^K \frac{s_k}{d_k}\right)\left(\max_k d_k\right) \frac{\log p}{n \min_k m_k} &= o(1),
%\end{align}
%where we have substituted in $\frac{s}{p} = \sum_{k=1}^K \frac{s_k}{d_k}$ and \eqref{eq:sss}. The relation \eqref{eq:nneeww} will hold whenever the bound \eqref{eq:divp} implies convergence of 
%\[
%\left(\max_k d_k\right) \left(\tau_\Delta^2 + \sum_{k=1}^K \frac{\|\tilde{\Delta}_k\|_F^2}{d_k}\right).
%\]
%For $n=1$ and constant $K$, convergence is implied whenever $s_k \leq O(d_k)$ and $\min_k m_k > d_k$. Thus the bound in \eqref{eq:divp} implies single sample convergence of the identifiable parameter estimates in this regime, completing the proof of the theorem. 
%\qed \end{proof}

%\subsection{Spectral norm bound}
%\begin{proof}
\textbf{Part II: Spectral norm bound}. The factor-wise bound immediately implies the bound on the spectral norm $\|\Delta_\Omega\|_2$ of the error under event $A$. We recall the identifiable representation \eqref{idid}
\[
\Delta_\Omega = \tau_{\Delta} I_p + (\tilde{\Delta}_1 \oplus \dots \oplus \tilde{\Delta}_K).
\]
By Property \ref{prop:L2} in Appendix \ref{App:Ident} and the fact that the spectral norm is upper bounded by the Frobenius norm,
\begin{align*}
\|\Delta_\Omega\|_2 &\leq |\tau_\Delta| + \sum_{k = 1}^K \|\tilde{\Delta}_k\|_2 \leq |\tau_\Delta| + \sum_{k=1}^K \|\tilde{\Delta}_k\|_F \\
&\leq \sqrt{K+1} \sqrt{ \tau_\Delta^2 + \sum_{k=1}^K \|\tilde{\Delta}_k\|_F^2}\\
&\leq \sqrt{K+1} \sqrt{\max_k d_k} \sqrt{ \tau_\Delta^2 + \sum_{k=1}^K\frac{ \|\tilde{\Delta}_k\|_F^2}{d_k}}\\
&\leq c(K+1) \sqrt{(\max_k d_k)\left(1 + \sum_{k=1}^K \frac{s_k}{d_k}\right)} \sqrt{\frac{\log p}{n \min_k m_k}},
\end{align*}
%Thus, 
%
%
%
%\begin{align*}
%\|\Delta_\Omega\|_2 &\leq \sum_{k=1}^K \|\Delta_{\Psi,k}\|_2\leq \sum_{k=1}^K \|\Delta_{\Psi,k}\|_F \leq \sqrt{K} \sqrt{\sum_{k=1}^K \|\Delta_{\Psi,k}\|_F^2}\\
%&\leq K \sqrt{\max_k d_k} \sqrt{\frac{1}{K}\sum_{k=1}^K \frac{\|\Delta_{\Psi,k}\|_F^2}{d_k}}\leq K \sqrt{\max_k d_k} \sqrt{c \frac{\max_k (d_k + s_k) \log d_k }{p n}}\\
%&\leq K   \sqrt{c \frac{(d_{k_*} + \max_k s_k) \log d_{\bar{k }}}{m_{k_*} n}},
%\end{align*}
%where $k_* = \arg\max_k d_k$ and $m_{k_*} = p / d_{k_*}$. 
where in the second line, we have used the fact that for $a_k$ elements of $\mathbf{a} \in \mathbb{R}^{K}$ the norm relation $\|\mathbf{a}\|_1 \leq \sqrt{K}\|\mathbf{a}\|_2$ implies  $(\sum_{k = 1}^K |a_k|) \leq \sqrt{K} \sqrt{\sum_{k=1}^K a_k^2}$.
\qed \end{proof}

\section{Proof of Lemma \ref{lemm:concNew}: Subgaussian Concentration } %(Lemma \ref{Cor:Chaos})}

%\section{Concentration Bound } %(Lemma \ref{Lemma:Conc})}
\label{App:Conc}
We first state the following concentration result, proved in Section \ref{App:ChaosPf}. Recall that $m_k = p/d_k$.
\begin{lemma}[Subgaussian Concentration]
\label{Cor:Chaos}
%``Entrywise" Via something. Use union bound only over Delta entries - log penalty. 

%FORMULATE WITH UNFOLDING. EVERYTHING SHOULD HOLD NICELY, BC WITH ALL IDENTITIES, UNFOLDING EQUIV TO SINGLE KRON. ONLY NEED TO PROVE FOR ONE. %LESS CLEAR FOR MIDDLE TERMS, BUT CERTAINLY PROVABLE.
%STUFF ABOUT INNER PRODUCT.
%We state the corollary for the $i = 1$th factor. Similar bounds for every other $i$th factor follow immediately by rearrangement of the variables. 

Suppose that $\log p \ll m_k n$ for all $k$. Then, with probability at least $1 - 2\exp (-c' \log p)$,
\begin{align*}
|\langle \Delta, S_k - \Sigma_0^{(k)} \rangle| 
&\leq C |\Delta|_1 \|\Sigma_0\|_2 \sqrt{\frac{ \log p}{m_k n}} 
\end{align*}
%with probability at least $1 - 2\exp(-c' \log d_k)$
for all $\Delta \in \mathbb{R}^{d_k\times d_k}$, where $c'$ is a constant depending on $C$ given in the proof.
%\end{comment}

%With probability at least $1 - \exp (-\log p)$
%\begin{align}
%|\langle {\Delta} \otimes {I}_{q \times q}, {S} - {\Sigma_0}\rangle | \leq C \sqrt{q} \|{\Delta}\|_1\|{\Sigma}_0\|_2 \sqrt{\frac{\log p}{n}}
%\end{align}
%and similarly with probability at least $1 - \exp (-c\log q)$
%\begin{align}
%|\langle {I}_{p\times p} \otimes {\Delta} , {S} - {\Sigma_0}\rangle | \leq C \sqrt{p} \|{\Delta}\|_1\|{\Sigma}_0\|_2 \sqrt{\frac{\log q}{n}}
%\end{align}

\end{lemma}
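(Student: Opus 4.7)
The plan is to reduce the uniform-in-$\Delta$ statement to an entrywise max-norm concentration bound on $S_k - \Sigma_0^{(k)}$, and then obtain the latter from a Bernstein / Hanson--Wright inequality applied to each entry combined with a union bound. The reduction is immediate from the matrix Hölder inequality
\[
|\langle \Delta, S_k - \Sigma_0^{(k)}\rangle| \;\le\; |\Delta|_1 \cdot \max_{a,b\in[d_k]} \bigl|[S_k - \Sigma_0^{(k)}]_{ab}\bigr|,
\]
so it suffices to prove that, on an event of probability at least $1-2\exp(-c'\log p)$, the entrywise max on the right is bounded by $C\|\Sigma_0\|_2\sqrt{\log p/(n m_k)}$; that single event then yields the claim simultaneously for every $\Delta \in \mathbb{R}^{d_k\times d_k}$.

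For the entrywise control, I would rewrite each scalar $[S_k - \Sigma_0^{(k)}]_{ab}$ as a centered quadratic form in the data. Using the sub-block average definition $S_k = \frac{1}{m_k}\sum_{i=1}^{m_k}\hat S(i,i|k)$ from Appendix~\ref{App:Ident}, there is a $p\times p$ selector $N^{(a,b,k)}$ having exactly $m_k$ nonzero entries of value $1/m_k$, located at the mode-$k$ diagonal sub-block positions, such that
\[
[S_k-\Sigma_0^{(k)}]_{ab} \;=\; \langle N^{(a,b,k)},\, \hat S - \Sigma_0\rangle \;=\; \frac{1}{n}\sum_{j=1}^n \bigl( X_j^T N^{(a,b,k)} X_j - \mathbb{E}\, X_j^T N^{(a,b,k)} X_j\bigr).
\]
Because the nonzero coordinates of $N^{(a,b,k)}$ sit at distinct standard-basis positions within each row and column, a direct calculation (for example $N^T N = m_k^{-2}\sum_i v_i v_i^T$ for orthonormal $v_i$) gives $\|N^{(a,b,k)}\|_F = m_k^{-1/2}$ and $\|N^{(a,b,k)}\|_2 = m_k^{-1}$.

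Applying Hanson--Wright after the whitening $Y_j = \Sigma_0^{-1/2}X_j$ (so the quadratic form becomes $Y_j^T \Sigma_0^{1/2} N^{(a,b,k)} \Sigma_0^{1/2} Y_j$, whose Frobenius norm is $\le \|\Sigma_0\|_2/\sqrt{m_k}$ and whose spectral norm is $\le \|\Sigma_0\|_2/m_k$) yields
\[
\mathbb{P}\bigl(|[S_k-\Sigma_0^{(k)}]_{ab}|>t\bigr) \;\le\; 2\exp\!\left(-c\,\min\!\left(\frac{n m_k\, t^2}{\|\Sigma_0\|_2^2},\; \frac{n m_k\, t}{\|\Sigma_0\|_2}\right)\right).
\]
Plugging in $t = C\|\Sigma_0\|_2 \sqrt{\log p/(n m_k)}$ and using the hypothesis $\log p \ll n m_k$ to ensure the quadratic branch dominates, the per-entry probability is at most $2\exp(-cC^2\log p)$. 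A union bound over the $d_k^2 \le p^2$ entries gives overall failure probability at most $2\exp(-(cC^2 - 2)\log p)$, and taking $C$ sufficiently large produces any prescribed $c'$. The main obstacle is getting the scales in Hanson--Wright correct: what drives the improvement from a $\sqrt{\log p/n}$ rate to the stated $\sqrt{\log p/(n m_k)}$ rate is precisely that $N^{(a,b,k)}$ has only $m_k$ nonzero entries (so $\|N\|_F^2 = 1/m_k$ rather than $1$), which in turn reflects the averaging over $m_k$ diagonal sub-blocks that defines $S_k$.
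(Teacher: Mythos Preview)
Your proposal is correct and follows essentially the same route as the paper: both reduce to an entrywise max bound via $|\langle \Delta, S_k-\Sigma_0^{(k)}\rangle|\le |\Delta|_1\max_{a,b}|[S_k-\Sigma_0^{(k)}]_{ab}|$, represent each entry as a quadratic form through the matrix $I_{[d_{1:k-1}]}\otimes \mathbf{e}_a\mathbf{e}_b^T\otimes I_{[d_{k+1:K}]}$ (your $N^{(a,b,k)}$, the paper's $U\otimes I_{m_k}$ after passing through a rearrangement operator $\mathcal{R}_k$), apply Hanson--Wright using $\|N\|_F^2=1/m_k$ to get the $m_k$ gain, and union bound over the $d_k^2$ entries. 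The only difference is packaging: the paper routes the calculation through a Van Loan--type rearrangement operator and an auxiliary Lemma~\ref{Lemma:Conc}, whereas you work directly with the selector matrix.
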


We can now prove Lemma \ref{lemm:concNew}.

\begin{proof} %{Lemma \ref{lemm:concNew}}
 By Lemma \ref{Cor:Chaos} we have that event $\mathcal{A}_k$ \eqref{eq:aa}, i.e. the event that %for fixed $k$ 
\begin{equation*}%\label{eq:aa}
\max_{ij} \left|[S_k - \Sigma_0^{(k)}]_{ij}\right|= \max_{ij} \left|\langle \mathbf{e}_i \mathbf{e}_j^T ,S_k - \Sigma_0^{(k)}\rangle\right| \leq C \|\Sigma_0\|_2 \sqrt{\frac{\log p}{m_k n}},
\end{equation*}
holds with probability at least $1 - 2\exp (-c' \log p)$. %Denoting this event as $\mathcal{A}_k$, we have that $\Pr(\mathcal{A}_k) \geq 1 - 2\exp (-c' \log p)$. %Taking the intersection over $k  = 1,\dots, K$, we denote the event $\tilde{A} = A_1 \cap \dots A_K$ as the event that all the events $A_k$ hold, i.e. that
%\begin{equation}\label{eq:aa}
%\max_{ij} \left|[S_k - \Sigma_0^{(k)}]_{ij}\right| \leq C \|\Sigma_0\|_2 \sqrt{\frac{\log p}{m_k n}}.
%\end{equation}
%holds for all $k$ simultaneously. By the union bound, $\Pr(\tilde{A}) \geq 1 - 2K \exp(-c\log p)$. 

Note that $\mathbb{E}[\mathrm{tr}(\hat{S})] = \mathrm{tr}(\Sigma_0)$. Viewing $\frac{1}{p}\mathrm{tr}(\Sigma_0)$ as a $1\times 1$ covariance factor since $\frac{1}{p}\tr(\hat{S}) = \frac{1}{pn} \sum_{i = 1}^n \mathrm{vec}(X_i) \mathrm{vec}(X_i)^T$, we can invoke the proof of Lemma \ref{Cor:Chaos} and show that with probability at least $1 - 2 \exp(-c' \log p)$ the event $\mathcal{A}_0$ \eqref{eq:a0} will hold. 
%\begin{equation*}%\label{eq:a0}
% \frac{|\mathrm{tr}(\hat{S})- \mathrm{tr}(\Sigma_0)|}{p} = |\tau_\Sigma| \leq C \|\Sigma_0\|_2 \sqrt{\frac{\log p}{p n}}.
%\end{equation*}
Recall that $\mathcal{A} = \mathcal{A}_0 \cap \mathcal{A}_1 \cap \dots \cap \mathcal{A}_K$. By the union bound, we have $\mathbb{P}(\mathcal{A}) \geq 1-2(K+1)\exp(-c\log p)$.\qed
\end{proof}

\subsection{Proof of Lemma \ref{Cor:Chaos}}
\label{App:ChaosPf}
Define a $K$-way generalization of the invertible Pitsianis-Van Loan type \citep{loan1992approximation} rearrangement operator $\mathcal{R}_k(\cdot)$, which maps $p\times p$ matrices to $d_k^2 \times m_k^2$ matrices. For a matrix $M \in \mathbb{R}^{p \times p}$ we set
\begin{align}\label{eq:Rk}
\mathcal{R}_k(M) &= [\begin{array}{ccc}\mathbf{m}_1 & \dots & \mathbf{m}_{m_k^2}\end{array}],\\\nonumber
\mathbf{m}_{(i-1)m_k + j} &= \mathrm{vec}(M(i,j|k)),
\end{align}
where we use the $M(i,j|k) \in \mathbb{R}^{d_k \times d_k}$ subblock notation (see Section \ref{sec:: sec::notation} in the main text). Using this notation, we have the following concentration result. 
\begin{lemma}
\label{Lemma:Conc}
Let $\mathbf{u} \in S^{d_k^2-1}$ and $\mathbf{f} = \mathrm{vec}(I_{m_k})$. Assume that $\mathbf{x}_t = {\Sigma}_0^{1/2}\mathbf{z}_t$ where $\mathbf{z}_t$ has independent entries $z_{t,f}$ such that $\mathbb{E} z_{t,f}= 0$, $\mathbb{E} z_{t,f}^2 = 1$, and $\| z_{t,f}\|_{\psi_2} \leq K$. Let ${\Delta}_n = {{\hat{S}}} - {\Sigma}_0$. Then for all $0 \leq \frac{\epsilon}{\sqrt{m_k}}  < \frac{1}{2}$:
\begin{align*}
\mathbb{P}&(|\mathbf{u}^T \mathcal{R}_k({\Delta}_n) \mathbf{f}| \geq \epsilon \sqrt{m_k}\|{\Sigma}_0\|_2 ) \leq 2\exp\left(- c  \frac{\epsilon^2 n }{K^4 } \right)
\end{align*}
where $c$ is an absolute constant and $\|\cdot\|_{\psi_2}$ is the subgaussian norm.

\end{lemma}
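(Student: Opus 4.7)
The plan is to express the linear functional $\mathbf{u}^T \mathcal{R}_k(\Delta_n) \mathbf{f}$ as a centered quadratic form in the i.i.d.\ subgaussian vectors $\mathbf{z}_1,\dots,\mathbf{z}_n$ and then invoke the Hanson--Wright inequality. Because $\mathbf{f} = \mathrm{vec}(I_{m_k})$ picks out the $m_k$ diagonal blocks indexed by mode $k$, one can identify a matrix $B \in \mathbb{R}^{p \times p}$ with $\mathbf{u}^T \mathcal{R}_k(\Delta_n)\mathbf{f} = \langle B,\Delta_n\rangle$; specifically $B$ is the embedding along mode $k$ of the $d_k \times d_k$ unvectorization $U$ of $\mathbf{u}$, i.e.\ $B = I_{[d_{1:k-1}]} \otimes U \otimes I_{[d_{k+1:K}]}$. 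Using the Kronecker identities $\|I_a \otimes U \otimes I_b\|_F^2 = ab\,\|U\|_F^2$ and $\|I_a \otimes U \otimes I_b\|_2 = \|U\|_2$, together with $\|U\|_F = \|\mathbf{u}\|_2 = 1$ and $\|U\|_2 \le 1$, I get the critical norm bounds $\|B\|_F = \sqrt{m_k}$ and $\|B\|_2 \le 1$.

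Substituting $\mathbf{x}_t = \Sigma_0^{1/2}\mathbf{z}_t$ into $\Delta_n = \tfrac{1}{n}\sum_{t=1}^n(\mathbf{x}_t\mathbf{x}_t^T - \Sigma_0)$ rewrites the functional as
\[
\langle B,\Delta_n\rangle \;=\; \frac{1}{n}\sum_{t=1}^n\bigl(\mathbf{z}_t^T M \mathbf{z}_t - \mathrm{tr}(M)\bigr), \qquad M \;:=\; \Sigma_0^{1/2}\,\tfrac{B+B^T}{2}\,\Sigma_0^{1/2},
\]
where only the symmetric part of $B$ contributes to the quadratic form. Submultiplicativity gives $\|M\|_F \le \|\Sigma_0\|_2\|B\|_F \le \|\Sigma_0\|_2\sqrt{m_k}$ and $\|M\|_2 \le \|\Sigma_0\|_2$. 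I would then apply the Hanson--Wright inequality (viewed as a tail bound for the centered quadratic form in the concatenated vector $(\mathbf{z}_1^T,\dots,\mathbf{z}_n^T)^T$ with block-diagonal kernel $\tfrac{1}{n}\mathrm{diag}(M,\dots,M)$) to obtain
\[
\mathbb{P}\bigl(\lvert\langle B,\Delta_n\rangle\rvert > t\bigr) \;\le\; 2\exp\!\Bigl(-c\min\!\Bigl(\tfrac{nt^2}{K^4\|M\|_F^2},\; \tfrac{nt}{K^2\|M\|_2}\Bigr)\Bigr).
\]

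Setting $t = \epsilon\sqrt{m_k}\,\|\Sigma_0\|_2$ reduces the two exponents to $cn\epsilon^2/K^4$ and $cn\epsilon\sqrt{m_k}/K^2$ respectively. The first is the smaller precisely when $\epsilon \le K^2\sqrt{m_k}$, so the hypothesis $\epsilon/\sqrt{m_k} < 1/2$ (with $K \ge 1$) places us safely in the subgaussian regime and yields the advertised $2\exp(-c\,\epsilon^2 n/K^4)$.

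The main obstacle will be pinning down the rearrangement identity $\mathbf{u}^T\mathcal{R}_k(\Delta_n)\mathbf{f} = \langle B,\Delta_n\rangle$ with exactly the $B$ above, because this is what keeps $\|B\|_F$ at the level $\sqrt{m_k}$ rather than something larger --- if the embedding along mode $k$ were mis-identified, the Frobenius norm would inflate and the matching of the subgaussian tail with the hypothesis $\epsilon/\sqrt{m_k}<1/2$ would fail. The remaining ingredients --- Hanson--Wright plus the Kronecker norm computations --- are entirely standard.
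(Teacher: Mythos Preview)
Your proposal is correct and matches the paper's proof essentially line for line: the paper likewise identifies $\mathbf{u}^T\mathcal{R}_k(\hat S)\mathbf{f}=\tfrac1n\sum_t \mathbf{x}_t^T(U\otimes I_{m_k})\mathbf{x}_t$ (written for $k=1$, with the general $k$ handled ``similarly''), sets $M=\Sigma_0^{1/2}(U\otimes I_{m_k})\Sigma_0^{1/2}$, bounds $\|M\|_F\le\|\Sigma_0\|_2\sqrt{m_k}$ and $\|M\|_2\le\|\Sigma_0\|_2$ via the same Kronecker identities you cite, applies Hanson--Wright, and checks the same subgaussian-regime condition $\epsilon\le K^2\sqrt{m_k}$. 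Your explicit symmetrization of $B$ and your treatment of general $k$ via $I_{[d_{1:k-1}]}\otimes U\otimes I_{[d_{k+1:K}]}$ are minor cosmetic refinements, not substantive differences.
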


%Note that by rearrangement of the variables, similar bounds for every other unfolding $\mathcal{R}_k$ follow immediately.

\begin{proof}
We prove the lemma for $k = 1$. The proof for the remaining $k$ follow similarly.

By the definition \eqref{eq:Rk} of the permutation operator $\mathcal{R}_1$ and letting $\mathbf{x}_t(i) = [x_{t,(i-1)m_1 +1},\dots,x_{t,im_1}]$,
\begin{align}
\mathcal{R}_1({\hat{S}}) = \frac{1}{n} \sum_{t=1}^n \left[\begin{array}{c}   \mathrm{vec}(\mathbf{x}_t(1)\mathbf{x}_t(1)^T)^T  \\ \mathrm{vec}(\mathbf{x}_t(1)\mathbf{x}_t(2)^T)^T \\ \vdots \\ \mathrm{vec}(\mathbf{x}_t(d_1)\mathbf{x}_t(d_1)^T)^T   \end{array} \right]
\end{align}
Hence,
\begin{equation}
\mathbf{u}^T \mathcal{R}_1(\hat{S}) \mathbf{f} = \frac{1}{n} \sum_{t=1}^n \mathbf{x}_t^T ({U} \otimes {I}_{m_k}) \mathbf{x}_t = \frac{1}{n} \sum_{t=1}^n \mathbf{z}_t^T  {M}  \mathbf{z}_t
\end{equation}
where ${M} = {\Sigma}_0^{1/2}({U} \otimes {I}_{m_k}){\Sigma}_0^{1/2}$, ${U} = \mathrm{vec}^{-1}_{d_1,d_1} (\mathbf{u})$. %, and $\mathbf{V} = \mathrm{vec}^{-1}_{m_1,m_1}(\mathbf{v})$.

Thus, by the Hanson-Wright inequality \citep{rudelson2013hanson}, %\cite{?},
\begin{align}
\mathbb{P}&(|\mathbf{u}^T \mathcal{R}_1(\hat{S}) \mathbf{f} - \mathbb{E}[ \mathbf{u}^T  \mathcal{R}_1(\hat{S}) \mathbf{f}]| \geq \tau) \\\nonumber
&\leq 2\exp\left[- c \min \left( \frac{\tau^2N^2}{K^4 n  \| {M}  \|_F^2}, \frac{\tau n}{K^2 \|{M} \|_2}\right) \right]\\\nonumber
&\leq 2\exp\left[- c \min \left( \frac{\tau^2N}{K^4 m_1 \|{\Sigma}_0  \|_2^2}, \frac{\tau n}{K^2 \|{\Sigma}_0\|_2}\right) \right]
\end{align}
since $\|U \otimes {I}_{m_1}\|_2 = \|U\|_2 \leq 1$ and $\|U \otimes {I}_{m_1}\|_F^2 =\|U\|_F^2 \|{I}_{m_1}\|_F^2 =  m_1$. Substituting $\epsilon = \frac{\tau}{\sqrt{m_1}\|\Sigma_0\|_2}$  %Setting $\epsilon = ??\tau$, 
\begin{align}
\mathbb{P}&(|\mathbf{u}^T \mathcal{R}_1({\Delta}_n) \mathbf{f}| \geq \epsilon \sqrt{m_1} \|{\Sigma}_0\|_2 ) \leq 2\exp\left(- c  \frac{\epsilon^2 n}{K^4 } \right)
\end{align}
for all $\frac{\epsilon^2 n}{K^4  } \leq \frac{\epsilon n \sqrt{m_1}}{K^2 }$, i.e.
$
\epsilon \leq K^2 \sqrt{m_1}\leq \frac{\sqrt{m_1}}{2},
$
since $K^2 > \frac{1}{2}$ by definition.

%$0 \leq \epsilon\sqrt{m_1} < \frac{1}{2}$ (since $K > \frac{1}{2}$ by definition).

\qed \end{proof}

%\subsection{Proof of Lemma \ref{Cor:Chaos}}

We can now prove Lemma \ref{Cor:Chaos}. 

\begin{proof} %{ Lemma \ref{Cor:Chaos}}

%and is defined in \cite{loan1992approximation,tsiliArxiv,werner2008estimation} as setting the $(i-1)p + j$th row of $\mathcal{R}(\mathbf{M})$ equal to $\mathrm{vec}(\mathbf{M}(i,j))^T$. $\mathbf{M}(i,j)$ is defined as the $i,j$th $m_k\times m_k$ subblock of $\mathbf{M}$, i.e. $\mathbf{M}(i,j) = [M]_{(i-1)m_k+1:im_k, (j-1)m_k + 1:jm_k}$.

%For $k = 1$, it sets the $(i-1)d_1 + j$th row of $\mathcal{R}_1({M})$ equal to $\mathrm{vec}({M}(i,j))^T$, i.e. 
%\begin{align}
%\label{Eq:SVD}
%\mathcal{R}_1({M}) &= [\begin{array}{ccc} \mathbf{m}_1 & \dots & \mathbf{m}_{d_1^2}\end{array}]^T,\\\nonumber
%\mathbf{m}_{(i-1)d_1+j} &= \mathrm{vec}({M}(i,j)), \quad i,j = 1,\dots,d_1.
%\end{align}
%${M}(i,j)$ is defined as the $i,j$th $m_1\times m_1$ subblock of ${M}$, i.e. ${M}(i,j) = [M]_{(i-1)m_k+1:im_k, (j-1)m_k + 1:jm_k}$. The other $\mathcal{R}_k$ for $k = 2,\dots$ are defined similarly via permutation of the array directions.

Consider the inner product $\langle {\Delta}, {S}_k - {\Sigma}_0^{(k)}\rangle$,  where ${\Delta}$ is an arbitrary $d_k \times d_k$ matrix. Let 
\begin{align*}
\mathbf{h} &= \mathrm{vec} ({\Delta}),\qquad \mathbf{f} = \mathrm{vec} ({I}_{m_k \times m_k}).
\end{align*}
By the definition of the factor covariances $S_k$ and the rearrangement operator $\mathcal{R}_k$, it can be seen that
\[
\mathrm{vec}(S_k) = \frac{1}{m_k} \mathcal{R}_k(\hat{S}) \mathbf{f},
\]
and that similarly by the definition of the factor covariances $\Sigma_0^{(k)}$
\[
\mathrm{vec}(\Sigma_0^{(k)}) = \frac{1}{m_k} \mathcal{R}_k (\Sigma_0) \mathbf{f}.
\]
Hence,
\begin{align}
\label{eq:gram}
\langle \Delta, S_k - \Sigma_0^{(k)} \rangle &= \frac{1}{m_k} \langle \mathrm{vec}(\Delta), \mathcal{R}_k(\hat{S} - \Sigma_0) \mathbf{f}\rangle\\\nonumber
&= \frac{1}{m_k} \mathbf{h}^T \mathcal{R}_k(\hat{S} - \Sigma_0) \mathbf{f}\\\nonumber
&= \frac{1}{m_k} \sum_{i = 1}^{d_k^2} h_i \mathbf{e}_i^T \mathcal{R}_k(\hat{S} - \Sigma_0) \mathbf{f}
\end{align}
by the linearity of the rearrangement operator, the definition of the inner product, and the definition of the unit vector $\mathbf e_i$ as the $i$-th column of the $d_k^2 \times d_k^2$ identity matrix. 

%Note that $\mathbf{e}_i$ and $\mathbf{f}$ are deterministic and fixed for fixed $i,k$. Hence, 
We can apply Lemma \ref{Lemma:Conc} and take a union bound over $i = 1, \dots, d_k^2$. By Lemma \ref{Lemma:Conc},
\[
\mathbb{P}\left(\left|\mathbf{e}_i^T \mathcal{R}_k(\hat{S} - \Sigma_0) \mathbf{f}\right| \geq \epsilon \sqrt{m_k} \|\Sigma_0\|_2  \right) \leq 2 \exp\left(-c\frac{\epsilon^2 n}{K^4}\right)
\]
for $0 \leq \frac{\epsilon}{\sqrt{m_k}} \leq \frac{1}{2}$. %Note that by definition $\|f\|_2 = \sqrt{m_k}$. 
Taking the union bound over all $i$, we obtain
\begin{align*}
\mathbb{P}\left(\max_{i} |\mathbf{e}_i^T \mathcal{R}_k(\hat{S} - \Sigma_0) \mathbf{f}| \geq \epsilon \|\Sigma_0\|_2 \sqrt{m_k} \right) &\leq 2  d_k^2 \exp\left(-c\frac{\epsilon^2 n}{K^4}\right)\\ &\leq  2\exp\left(2\log d_k -c\frac{\epsilon^2 n}{K^4}\right).
\end{align*}
%WHAT IF SINGLE SAMPLE
Setting $\epsilon = C\sqrt{ \frac{\log p}{n}}$ for large enough $C$ and recalling that $m_k=p/d_k$, with probability at least $1 - 2\exp(-c' \log p)$ we have
\[
\max_i |\mathbf{e}_i^T \mathcal{R}_k(\hat{S} - \Sigma_0) \mathbf{f}| \leq C\|\Sigma_0\|_2 \sqrt{m_k}\sqrt{\frac{ \log p}{n}}
\]
where we assume $\log p \leq \frac{n m_k}{4C^2}$ and let $c' = \frac{cC^2}{K^4}-2$.
Hence, by \eqref{eq:gram}
\begin{align*}
|\langle \Delta, S_k - \Sigma_0^{(k)} \rangle| &= \frac{1}{m_k} \left|\sum_{i = 1}^{d_k^2} h_i \mathbf{e}_i^T \mathcal{R}_k(\hat{S} - \Sigma_0) \mathbf{f}\right|\\
&\leq \frac{1}{m_k} \sum_{i = 1}^{d_k^2} |h_i \mathbf{e}_i^T \mathcal{R}_k(\hat{S} - \Sigma_0) \mathbf{f}|\\
&\leq C\|\Sigma_0\|_2 \frac{1}{\sqrt{m_k}}\sqrt{\frac{ \log p}{n}} \sum_{i = 1}^{d_k^2} h_i \\
&= C \|\Sigma_0\|_2 \sqrt{\frac{ \log p}{m_k n}}  |\Delta|_1
\end{align*}
with probability at least $1 - 2\exp(-c' \log p)$. The first inequality follows from the triangle inequality and the last inequality from the definition of $\mathbf{h}  = \mathrm{vec}(\Delta)$ and $|\cdot|_1$.
\qed

\end{proof}

\section{Nonconvex Regularizers: Proof of Theorem \ref{thm:NonCon}}\label{app:NonConv}
%As a preliminary, we define notation that describes the set of sparse Kronecker sum matrices.
Recall that the support sets $\mathcal{S}, \mathcal{S}_k$ are the set of nonzero elements of $\Omega_0$ and $\Psi_{k,0}$, respectively. 
Define $\mathcal{B}$ to be the set of matrices in $\mathcal{K}_{\mathbf{p}}$ with support contained in $\mathcal{S}$, that is
\[
\mathcal{B} = \{\Omega = \Psi_1 \oplus \dots \oplus \Psi_K \in \mathcal{K}_{\mathbf{p}} | \mathrm{supp}(\Psi_k ) \subseteq \mathcal{S}_k, \: \forall k  \} .
\]
The set $\mathcal{B}$ is the set of Kronecker sum matrices following the true sparsity pattern of the Kronecker sum $\Omega_0 = \Psi_{1,0}\oplus \dots \oplus \Psi_{K,0}$. 

Note that $\mathcal{B}$ is a linear subspace of $\mathbb{R}^{p\times p}$ since $\mathcal{K}_{\mathbf{p}}$ is a linear subspace and the intersection of two linear subspaces is a linear subspace. Hence the (L2 norm) projection $\mathrm{Proj}_{\mathcal{B}}: \mathbb{R}^{p\times p} \rightarrow \mathcal{B}$ onto $\mathcal{B}$ is given by
\[
\mathrm{Proj}_{\mathcal{B}}(A) = \mathrm{Proj}_{\mathcal{S}} ( \mathrm{Proj}_{\mathcal{K}_{\mathbf{p}}}(A)),
%= \mathrm{Proj}_{\mathcal{K}_{\mathbf{p}}} ( \mathrm{Proj}_{\mathcal{S}}(A)),
\]
where $\mathrm{Proj}_{\mathcal{S}}$ is the linear projection operator projecting $\mathbb{R}^{p\times p}$ onto matrices in $\mathbb R^{p\times p}$ with sparsity pattern $\mathcal{S}$, and $\mathrm{Proj}_{\mathcal{K}_{\mathbf{p}}}$ is the previously defined projection onto $\mathcal{K}_{\mathbf{p}}$ defined in Section 2 of the main text.
Note that since the sparsity pattern $\mathcal{S}$ is the sparsity pattern of a Kronecker sum matrix in $\mathcal{K}_{\mathbf{p}}$, projection onto $\mathcal{S}$ does not change the Kronecker structure. 

By reshaping we obtain the representation
\begin{align}\label{eq:pb}
\mathrm{vec}(\mathrm{Proj}_{\mathcal{B}}(A)) = \mathcal{P}_{\mathcal{B}} \mathrm{vec}(A)
\end{align}
where $\mathcal{P}_{\mathcal{B}} \in \mathbb{R}^{p^2 \times p^2}$ is the \emph{projection matrix} associated with the linear subspace $\mathcal{B}$. Recall that $\mathrm{vec}(\cdot)$ is the vectorization operator, and the \emph{projection matrix} in linear algebra is $U U^T$ where $U$ is an orthonormal basis for the subspace.

We first summarize the proof of Theorem \ref{thm:NonCon}.

\textbf{Proof plan:} The proof concept is to apply the primal-dual witness technique of \cite{LOH} to our sparse Kronecker sum precision matrix estimator. Since the nonconvex graphical lasso proof in \cite{LOH} relied on the set of $\mathcal{S}$ sparse matrices being a linear subspace of $\mathbb{R}^{p\times p}$, we can simply replace the sparse subspace in their proof with our sparse Kronecker sum subspace $\mathcal{B}$ and proceed in a similar fashion. The primal-dual witness technique can be briefly summarized as
\begin{enumerate}
\item[(i)] Prove the regularized objective function \eqref{eq:nonconvobv} is strictly convex over the constraint set, so that any zero subgradient point is the unique global minimizer.
\item[(ii)] Construct a zero subgradient point of the \emph{oracle} estimator objective function using Brouwer's theorem.
\item[(iii)] Prove this zero subgradient point $\hat{\Omega}_{\mathrm{oracle}}$ converges to the true $\Omega_0$.
\item[(iv)] Prove that the zero subgradient point of the \emph{oracle} objective is also a zero subgradient point of the full objective function \eqref{eq:nonconvobv}, hence it is the unique global minimizer and converges to $\Omega_0$.
\end{enumerate}

%RSC holds, because it is the same objective function.

Proceeding with the full proof, we first have the following lemma. 
\begin{lemma}\label{lem:ncconv}
Suppose $g_{\rho}$ is $\mu$-amenable. Then for $\kappa = \sqrt{\frac{2}{\mu}}$, the objective function \eqref{eq:nonconvobv} is strictly convex over the constraint set.
\end{lemma}
\begin{proof}
%Follows from Lemma 6 of \cite{LOH}, since our constraint set is the intersection of a linear subspace and the positive definite cone.
Recall that
\begin{equation}\label{eq:hess}
\nabla^2 \left(-\log | \Omega| + \langle\hat{S},\Omega\rangle \right) = (\Omega \otimes \Omega)^{-1}
\end{equation}
which is a deterministic quantity not depending on the data. Hence, for $\|\Omega\|_2 \leq \sqrt{1/\mu}$, the minimum eigenvalue satisfies
\[
\lambda_{\min}(\nabla^2 \left(-\log | \Omega| + \langle\hat{S},\Omega\rangle \right))=\lambda_{\min}((\Omega \otimes \Omega)^{-1}) \geq \frac{\mu}{2}.
\]
This implies that $-\log | \Omega| + \langle\hat{S},\Omega\rangle - \frac{\mu}{2}\|\Omega\|_F^2$ is convex for $\|\Omega\|_2 \leq \sqrt{1/\mu}$. Furthermore, by $\mu$-amenability, $\sum_{k=1}^K m_k \sum_{i\neq j}g_\lambda({[{\Psi}_k]_{ij}})+ \frac{\mu}{2}\|\Omega\|_F^2$ is convex for $\Omega \in \mathcal{K}_{\mathbf{p}}$. Therefore, since $\mathcal{K}_{\mathbf{p}}$ is a linear subspace, the complete objective \eqref{eq:nonconvobv} is convex for $\|\Omega\|_2 \leq \sqrt{1/\mu}$ and $\Omega \in \mathcal{K}_{\mathbf{p}}$. Since it is convex over $\mathcal{K}_{\mathbf{p}}$, it is convex over $\mathcal{K}_{\mathbf{p}}^\sharp$ as well, since $\mathcal{K}_{\mathbf{p}}^\sharp$ is the intersection of $\mathcal{K}_{\mathbf{p}}$ and the convex positive definite cone.
\qed
\end{proof}

Since the objective is convex, a point in the subspace $\mathcal{K}_{\mathbf{p}}$ with zero subgradient will be the unique global minimum. Our first step will be to construct such a zero subgradient point.
%Construct zero subgradient point within the subspace.

We will first construct the (unique) oracle estimate where the oracle gives the support set of $\Omega_0$.
We will then show that this oracle estimate is also a zero-subgradient point of the objective \eqref{eq:nonconvobv} and therefore its unique global minimizer.

{ Using the $\mathcal{B}$ notation, we can write the oracle estimate as
\begin{align}
\hat{\Omega}_{\mathrm{oracle}} = \arg\min_{\Omega \in \mathcal{B}} -\log |\Omega| + \langle \hat{S},\Omega\rangle. % + g_{\rho}(\Omega).
\label{eq:oracleEq}
\end{align} 
%Note that $\hat{\Omega}_{\mathrm{oracle}}$ must satisfy the zero subgradient condition $\mathcal{P}_{\mathcal{B}}(\hat{S} - \hat{\Omega}_{oracle}^{-1}) = 0$. EXPAND?

Our goal will be to construct a map $F: \mathcal{B} \rightarrow \mathcal{B}$ such that (a) $\Delta$ is a fixed point of $F$ if and only if $\Omega_0 + \Delta$ is a fixed point of the oracle estimate \eqref{eq:oracleEq}, (b) $F$ maps the intersection $\mathcal{B} \cap \mathbb{B}_{\infty}(r)$ of $\mathcal{B}$ and the radius-$r$ $\ell_\infty$-ball centered at the origin to itself for some $r$, and (c) this $r$ is such that $\Omega = \Omega_0 + \Delta\succ 0$, for all $\Delta \in \mathcal{B} \cap \mathbb{B}_{\infty}(r)$. Then by Brouwer's fixed point theorem we can show that $F$ must have a fixed point $\Delta_*$ in that ball. By construction (a) above, this fixed point $\Delta_*$ will correspond to a fixed point $\Omega_0 + \Delta^*$ in the oracle estimator objective, hence the oracle estimate will have $\ell_\infty$-ball error less than $r$. }

For $F$, we will choose a Newton method step (gradient step preconditioned by inverse Hessian). Denote the pseudoinverse of a matrix $A$ as $A^\dag$. {We now write the map $F: \mathcal{B} \rightarrow \mathcal{B}$ given by}
\[
F(\Delta_S) := -\Gamma^\dag  \mathrm{vec}\left( \mathrm{Proj}_{\mathcal{B}}(\hat{S} - (\Omega_0 + \Delta_S)^{-1})\right) + \mathrm{vec}(\Delta_S)
\]
where $\Delta_S \in \mathcal{B}$, and we let $\Gamma$ be the Hessian of the objective function within $\mathcal{B}$:\footnote{With $\mathcal{P}_{\mathcal{B}} = U U^T$ as above (where columns of $U$ form an orthonormal basis for the subspace $\mathcal{B}$), $\Gamma = U U^T ( \Sigma_0 \otimes \Sigma_0)U U^T$ and hence $\Gamma^\dag = U \left(U^T ( \Sigma_0 \otimes \Sigma_0)U\right)^{-1} U^T$ since $\Sigma_0$ is positive definite.}
\[
\Gamma = \mathcal{P}_{\mathcal{B}} ( \Sigma_0 \otimes \Sigma_0)\mathcal{P}_{\mathcal{B}}.
\]
The quantity $ \Sigma_0 \otimes \Sigma_0 $ is included as it is the Hessian of the objective function \eqref{eq:hess}. The pseudoinverse is needed since $\mathcal{P}_\mathcal{B}$ is low rank, making the Hessian within $\mathcal{B}$ low rank.

%projection of $\Omega_0 \otimes \Omega_0$ onto $\mathcal{B}$, specifically, $\Gamma^* = \mathcal{P}_{\mathcal{B}} ( \Omega_0^{-1} \otimes \Omega_0^{-1}) \mathcal{P}_{\mathcal{B}}^T $ where $\mathcal{P}_{\mathcal{B}} \in \mathbb{R}^{p^2 \times p^2}$ is the projection matrix associated with the subspace $\mathcal{B}$. $(\Gamma^*)^{\dag} = \mathcal{P}_{\mathcal{B}} ( \Omega_0 \otimes \Omega_0) \mathcal{P}_{\mathcal{B}}^T$ is the pseudoinverse of $\Gamma^*$. %Note that the pseudoinverse is required because $\Gamma^*$ is low rank by the projection onto $\mathcal{B}$. %However $\Gamma^*$ is a positive definite, invertible matrix within the linear space $\mathcal{B}$ since $\Omega_0 \otimes \Omega_0$ is positive definite on $\mathbb{R}^{p\times p}$, satisfying (c). 
 %, and $\mathcal{K}_{\mathbf{p},S}$ is the set of matrices in $\mathcal{K}_{\mathbf{p}}$ with support contained in $\mathcal{S}$.
Clearly if $\mathrm{Proj}_{\mathcal{B}}(\hat{S} - (\Omega_0 + \Delta_S)^{-1}) = 0$, $F(\Delta_S) = \Delta_S$ and vice versa, hence $\Delta_S$ is a fixed point of $F$ if and only if $\Omega_0 + \Delta_S$ is a fixed point of the oracle objective \eqref{eq:oracleEq}. %We verify the behavior of $F$ over the $\ell_\infty$-ball (in the subspace $ \mathcal{B}$) given by $\mathbb{B}_{\infty}(r) \cap \mathcal{B}$, of radius $r$ to be specified later. %For $\Delta_S \in \mathbb{B}_\infty(r) \cap \mathcal{B}$ with support contained in $\mathcal{S}$, let $\Delta\in \mathcal{K}_{\mathbf{p}}$ be the matrix agreeing with $\Delta_S$ on $\mathcal{S}$ and having zeros elsewhere. 
Now
\[
\|\Delta_S\|_2 \leq dr
\]
since $\Delta_S$ has at most $d$ nonzero entries per row. Hence the matrix $\Omega_0 + \Delta_S$ is invertible and positive definite whenever $dr < \lambda_{\min} (\Omega_0)$, making $F$ a continuous map on $\mathbb{B}_{\infty}(r) \cap \mathcal{B}$ and satisfying condition (c). 

%For convenience, define $\Gamma$ as the inverse of $\Omega_0^{-1} \otimes \Omega_0^{-1}$ projected onto $\mathcal{B}$, i.e., recalling the definition of $\mathcal{P}_{\mathcal{B}}$ above in \eqref{eq:pb}
%\begin{align*}
%\Gamma = \mathcal{P}_{\mathcal{B}} ( \Omega_0 \otimes \Omega_0) \mathcal{P}_{\mathcal{B}}.
%\end{align*}
%Using this definition, we can simplify the definition of $F$ to
%\[
%F(\Delta_S) := -\Gamma  \mathrm{vec}\left( \mathrm{Proj}_{\mathcal{B}}(\hat{S} - (\Omega_0 + \Delta_S)^{-1})\right) + \mathrm{vec}(\Delta_S).
%\]
Define the constants $\kappa_\Gamma = \|\Gamma^\dag\|_\infty$ and $\kappa_\Sigma = \|\Sigma_0\|_\infty$, in other words, we are assuming that the Hessian is well-conditioned in the $\infty$-norm sense, which is possible since $\Sigma_0$ has eigenvalues bounded from above and below. We now show the following lemma by verifying the remaining condition (b) on $F$ and applying Brouwer's fixed point theorem. Several relevant quantities are summarized in Table \ref{table:variables} for convenience.
\begin{figure}
\centering
\footnotesize{
\begin{tabular}{|c|l|}
\hline
\textbf{Variable} & \textbf{Definition} \\\hline
$\mathcal{L}_n(\Omega)$ & $\left.-\log |\Omega| + \langle \hat{S},\Omega\rangle \right|_{\Omega \in \mathcal{K}_{\mathbf{p}}}$: Objective function less regularization terms. \\\hline
$q_\rho(t)$ & $g_\rho(t) - \rho |t|$: Difference between regularizer and $\ell 1$ penalty. \\\hline
$\mathcal{K}_{\mathbf{p}}$ & Set of Kronecker sum matrices with fixed dimensions $\mathbf{p} = [d_1, \dots, d_K]$.  \\
\hline
$\mathcal{B}$ & Set of matrices in $\mathcal{K}_{\mathbf{p}}$ with support contained in $\mathcal{S}$.  \\
\hline
$\mathrm{Proj}_{\mathcal{B}}(\cdot)$ & Linear projection operator from $\mathbb{R}^{p\times p}$ onto $\mathcal{B}$. \\\hline
$\mathcal{P}_\mathcal{B}$ & Projection matrix corresponding to $\mathrm{Proj}_{\mathcal{B}}$. $\mathcal{P}_{\mathcal{B}} \mathrm{vec}(A)  = \mathrm{vec}(\mathrm{Proj}_{\mathcal{B}}(A))$. \\\hline
$\Gamma$ & $\mathcal{P}_{\mathcal{B}} ( \Sigma_0 \otimes \Sigma_0) \mathcal{P}_{\mathcal{B}}$: Hessian of $\mathcal{L}_n$ within subspace $\mathcal{B}$. \\\hline
$\kappa_\Gamma$ & $\|\Gamma^\dag\|_\infty$ \\\hline
$\kappa_\Sigma$ & $\|\Sigma_0\|_\infty$ \\\hline
$\tau_\Sigma$ & $\frac{\tr(\hat{S}) - \tr(\Sigma_0)}{p}$ \\\hline
$\mathbf{e}_i$ & $i$th unit vector in $\mathbb{R}^p$. \\\hline
%\textbf{Statistical convergence} & Strong guarantees& Strong guarantees\\\hline
%\textbf{Sample complexity} & INSERT& INSERT\\\hline
%\textbf{Computational complexity} & Depends on chosen estimator INSERT & INSERT for our iterative algorithm.\\\hline
%\textbf{Model misspecification}& Highly susceptible to model misspecification, particularly when $K > 2$.& Always corresponds to maximum entropy estimator, guaranteeing interpretability and accuracy regardless of true model. \\\hline
\end{tabular}
}
\captionof{table}{Selected quantities used in the proof of Theorem \ref{thm:NonCon}}
\label{table:variables}
\end{figure}

\begin{lemma}\label{lemm:Lemm7}
Let $r = 2C_0\kappa_\Gamma \|\Sigma_0\|_2 (K+1) \sqrt{\frac{\log p}{n \min_k m_k}}$ where $C_0$ is a constant depending only on the subgaussian parameter of the data and
\[
dr \leq \min \left\{\frac{1}{2} \lambda_{\min} (\Omega_0), \frac{1}{2 \kappa_\Sigma}, \frac{1}{4\kappa_\Gamma \kappa_\Sigma^3} \right\}.
\]
Assume the sample size satisfies $n\min_k m_k \geq \kappa_\Gamma^2 \log p$. Then under event $\mathcal{A}$ as in Theorem \ref{Thm:1} there exists $\hat{\Omega}_{\mathrm{oracle}} \in \mathcal{B}$ such that
\[
\|\hat{\Omega}_{\mathrm{oracle}} - \Omega_0 \|_{\max} \leq r, \qquad \|\hat{\Omega}_{\mathrm{oracle}} - \Omega_0\|_2 \leq dr, \: \: and \quad \mathrm{Proj}_{\mathcal{B}}(\hat{S} - \hat{\Omega}_{\mathrm{oracle}}^{-1}) = 0.
\]
\end{lemma}

\begin{proof}

%To show that $F(\mathbb{B}_\infty(r)) \subseteq \mathbb{B}_\infty (r)$, consider $\Delta_S \in \mathbb{B}_\infty(r)$. Then
%\[
%F(\mathrm{vec}(\Delta_S)
%\]
First, note that $\Gamma^\dag \Gamma \mathrm{vec}(\Delta) = \mathrm{vec}(\Delta)$ for any $\Delta \in \mathcal{B}$, since $\Gamma$ is the projection of the positive definite matrix $\Sigma_0 \otimes \Sigma_0$ onto the low rank subspace $\mathcal{B}$. 
%Pseudoinverse is required since $\Gamma$ is not full rank due to the projection. We have 
%\begin{align*}
%\Gamma^\dag &= (\mathcal{P}_\mathcal{B} (\Omega_0 \otimes \Omega_0)\mathcal{P}_\mathcal{B})^\dag\\
%&= (U U^T(\Omega_0 \otimes \Omega_0)U U^T)^\dag\\
%&= U( U^T(\Omega_0 \otimes \Omega_0)U )^{-1} U^T\\
%&= U( U^T(\Omega_0 \otimes \Omega_0)U )^{-1} U^T\\
%\end{align*}
%recalling that $\mathcal{P}_\mathcal{B} = U U^T$ where is an orthonormal basis for $\mathcal{B}$. 

Suppose $\Delta_S \in \mathbb{B}_\infty(r)$. Then
\begin{align*}
F(\mathrm{vec}(\Delta_S)) := 
-\Gamma^\dag &\left\{\mathrm{vec}\left(\mathrm{Proj}_{\mathcal{B}}(\hat{S} - \Sigma_0)\right)\right.\\& \left.+ \mathrm{vec}\left(\mathrm{Proj}_{\mathcal{B}}(\Sigma_0 -  (\Omega_0 + \Delta_S)^{-1})\right) +\Gamma \mathrm{vec}(\Delta_S)\right\},
\end{align*}
hence
\begin{align}
\|F(\mathrm{vec}(\Delta_S))\|_\infty \leq& \kappa_\Gamma \|\mathrm{vec}\left(\mathrm{Proj}_{\mathcal{B}}(\hat{S} - \Sigma_0)\right)\|_\infty \label{eq:90}\\&+ \kappa_\Gamma \| \mathrm{vec}\left(\mathrm{Proj}_{\mathcal{B}}(\Sigma_0 -  (\Omega_0 + \Delta_S)^{-1})\right) +\Gamma \mathrm{vec}(\Delta_S)\|_\infty \nonumber,
\end{align}
by the definition of $\kappa_\Gamma$ and the triangle inequality.

The first term of \eqref{eq:90} can be bounded via the concentration inequalities used for the $\ell1$ case. Specifically, note that
\begin{align}
\kappa_\Gamma& \|\mathrm{vec}\left(\mathrm{Proj}_{\mathcal{B}}(\hat{S} - \Sigma_0)\right)\|_\infty \\\nonumber&\leq \kappa_\Gamma \|\mathrm{vec}\left(\mathrm{Proj}_{\mathcal{K}_{\mathbf{p}}}(\hat{S} - \Sigma_0)\right)\|_\infty \\\nonumber
&= \kappa_\Gamma \|(S_1 - \Sigma_0^{(1)} \oplus \dots \oplus (S_K - \Sigma_0^{(K)}) - \tau_\Sigma I_p \|_{\max}\\\nonumber
&\leq \kappa_\Gamma \sum_{k = 1}^K \| S_k - \Sigma_0^{(k)} \|_{\max} +\frac{|\tr(\hat{S} - \tr(\Sigma_0)|}{p}.\nonumber
\end{align}
where we have used $\tau_\Sigma = \frac{\tr(\hat{S}) - \tr(\Sigma_0)}{p}$. Now recall that under event $\mathcal{A}_k$, defined in \eqref{eq:aa} above,
\begin{equation*}%\label{eq:aa}
\max_{ij} \left|[S_k - \Sigma_0^{(k)}]_{ij}\right|= \max_{ij} \left|\langle \mathbf{e}_i \mathbf{e}_j^T ,S_k - \Sigma_0^{(k)}\rangle\right| \leq C_0 \|\Sigma_0\|_2 \sqrt{\frac{\log p}{m_k n}},
\end{equation*}
and under event $\mathcal{A}_0$, defined above in \eqref{eq:a0},
\[
|\tau_\Sigma| \leq C_0 \|\Sigma_0\|_2 \sqrt{\frac{\log p}{p n}}.
\]
Hence under event $\mathcal{A} = \bigcup_{k = 0}^K \mathcal{A}_k$,
\begin{align}
\kappa_\Gamma \|\mathrm{vec}\left(\mathrm{Proj}_{\mathcal{B}}(\hat{S} - \Sigma_0)\right)\|_\infty &\leq C_0\kappa_\Gamma \|\Sigma_0\|_2\left[\sqrt{\frac{\log p}{p n}} + \sum_{k = 1}^K \sqrt{\frac{\log p}{m_k n}}\right]\nonumber\\ &\leq C_0\kappa_\Gamma \|\Sigma_0\|_2 (K+1) \sqrt{\frac{\log p}{n \min_k m_k}}\nonumber\\
&= \frac{r}{2}.\label{eq:SigInf}
\end{align}
Finally, recall that by Lemma \ref{lemm:concNew} event $\mathcal{A}$ holds with probability $ \geq 1 - 2(K + 1) \exp( - c \log p)$. 

Moving on to the second term of \eqref{eq:90}, we apply the matrix expansion
\begin{equation}\label{eq:matExp}
(A + \Delta)^{-1} - A^{-1} = \sum_{\ell = 1}^\infty (-A^{-1} \Delta)^\ell A^{-1},
\end{equation}
{ and note that (since $\Delta_S \in \mathcal{B}$ implies $\mathcal{P}_\mathcal{B}\mathrm{vec}(\Delta_S) = \mathrm{vec}(\Delta_S)$)
\begin{align*}
\Gamma \mathrm{vec}(\Delta_S) %&= (\mathcal{P}_{\mathcal{B}} (\Omega_0 \otimes \Omega_0) \mathcal{P}_\mathcal{B})^\dag\mathrm{vec}(\Delta_S)\\
&= \mathcal{P}_{\mathcal{B}} \left((\Sigma_0 \otimes \Sigma_0) \mathcal{P}_\mathcal{B}\mathrm{vec}(\Delta_S)\right)\\
&= \mathrm{vec}\left(\mathrm{Proj}_{\mathcal{B}}\left(\mathrm{vec}^{-1}\left((\Sigma_0 \otimes \Sigma_0) \mathrm{vec}(\Delta_S)\right) \right)\right)\\& = \mathrm{vec}(\mathrm{Proj}_{\mathcal{B}}(\Sigma_0 \Delta_S \Sigma_0)),
\end{align*}}
where we have used the fact that for symmetric matrices $A, B$, $\mathrm{vec}(A B A) = (A \otimes A)\mathrm{vec}(B)$.

We then obtain
\begin{align*}
&\mathrm{vec}\left(\mathrm{Proj}_{\mathcal{B}}(\Sigma_0 -  (\Omega_0 + \Delta_S)^{-1})\right) +\Gamma \mathrm{vec}(\Delta_S)\\& = \mathrm{vec}\left(\mathrm{Proj}_{\mathcal{B}}(\Sigma_0 -  (\Omega_0 + \Delta_S)^{-1} - \Sigma_0 \Delta_S \Sigma_0)\right)\\
&= \mathrm{vec}\left(\mathrm{Proj}_{\mathcal{B}}\left(\sum_{\ell = 2}^\infty (-\Sigma_0 \Delta_S)^\ell \Sigma_0\right)\right).
\end{align*}
We have used $\mathrm{vec}^{-1}(\cdot)$ to denote the inverse of the vectorization operator.

Via the triangle inequality and the linearity of the vectorization and projection operators,
\begin{align}
\nonumber\| \mathrm{vec}&\left(\mathrm{Proj}_{\mathcal{B}}(\Sigma_0 -  (\Omega_0 + \Delta_S)^{-1})\right) +\Gamma \mathrm{vec}(\Delta_S)\|_\infty \\&\leq \max_{(j,k) \in S} \sum_{\ell = 2}^\infty |\mathbf{e}_j^T (\Sigma_0\Delta)^\ell \Sigma_0 \mathbf{e}_k|.\label{eq:93}
\end{align}
Now we can apply Holder's inequality to obtain
\begin{align*}
|\mathbf{e}_j^T (\Sigma_0\Delta)^\ell \Sigma_0 \mathbf{e}_k| &\leq \|\mathbf{e}_j^T(\Sigma_0 \Delta)^{\ell-1}\Sigma_0\|_1 \|\Delta \Sigma_0\|_{\infty}\\
&\leq \|\Sigma_0(\Delta \Sigma_0)^{\ell-1}\|_1 \|\Delta\|_{\max}\|\Sigma_0 e_k\|_1\\
&\leq \|\Sigma_0\|_1^{\ell-1} \|\Delta\|_{\max}\|\Sigma_0\|_1\\
&=\|\Sigma_0\|_{\infty}^{\ell+1} \|\Delta\|_{\infty}^{\ell-1}\|\Delta\|_{\max}.
\end{align*}
%which \cite{LOH} showed is less than or equal to $2 \kappa_\Sigma^3 d r^2$ HERE. 
Then, using the fact that $\|\Delta\|_2 \leq \|\Delta\|_\infty \leq dr$ and substituting back into \eqref{eq:93}, we have
\begin{align*}
\| \mathrm{vec}&\left(\mathrm{Proj}_{\mathcal{B}}(\Sigma_0 -  (\Omega_0 + \Delta_S)^{-1})\right) +\Gamma \mathrm{vec}(\Delta_S)\|_\infty\\
&\leq \sum_{\ell = 2}^\infty \kappa_{\Sigma}^{\ell + 1} d^{\ell - 2} r^\ell \\
&= \frac{\kappa_{\Sigma}^3 d r^2}{1 - \kappa_{\Sigma} dr} \\
&\leq 2\kappa_{\Sigma}^3 d r^2.
\end{align*}
Since our assumption implies that $2 \kappa_\Sigma^3 d r^2 \leq r$, we therefore have that
\[
\|F(\mathrm{vec}(\Delta_S))\|_\infty \leq r
\]
under event $\mathcal{A}$. Since $F(\mathbb{B}_\infty(r)\cap \mathcal{B}) \in \mathbb{B}_\infty(r)\cap \mathcal{B}$, by Brouwer's fixed point theorem \citep{ortega1970iterative}, $F$ must have a fixed point $\Delta_S^*$. Recalling that $\Delta^*_S, \Omega_0 \in \mathcal{B}$, we choose $\hat{\Omega} _{\mathrm{oracle}}= \Omega_0 + \Delta^*_S$. Hence by construction $\|\hat{\Omega}_{\mathrm{oracle}} - \Omega_0\|_{\max} \leq r$ and $\|\hat{\Omega}_{\mathrm{oracle}} - \Omega_0\|_2 \leq dr$ since both matrices have degree bounded by $d$. 
{
The last equality follows since $\Delta_S^*$ is the fixed point of $F$, i.e. $F(\Delta_S^*) = \Delta_S^*$, which can only occur if 
\[
\mathrm{vec}\left( \mathrm{Proj}_{\mathcal{B}}(\hat{S} - (\Omega_0 + \Delta_S^*)^{-1})\right) = 0.
\]
}
%\begin{align}
%\| \mathrm{vec}\left(\mathcal{P}_{\mathcal{K}_{\mathbf{p},S}}(\Sigma_0 -  (\Omega_0 + \Delta_S)^{-1})\right) +\Gamma_{SS}^* \mathrm{vec}(\Delta_S)\|_\infty
%\end{align}
\qed
\end{proof}

{ 
Using this lemma it remains to show that $\hat{\Omega}_{\mathrm{oracle}}$ satisfies the constraints and is a zero-subgradient point of the complete objective \eqref{eq:nonconvobv}, and hence is the unique global optimum. 
\begin{comment}
Finally, we show that $\hat{\Omega}_{\mathrm{oracle}}$ satisfies the norm constraint of \eqref{eq:nonconvobv}:
\begin{lemma}
$\hat{\Omega}_{\mathrm{oracle}}$ satisfies the constraint of \eqref{eq:nonconvobv}, i.e. $\|\hat{\Omega}_{\mathrm{oracle}}\|_2 \leq \kappa$ under event $\mathcal{A}$.
\end{lemma}
\begin{proof}
\end{proof}
Since $\hat{\Omega}_{\mathrm{oracle}}$ is a zero-subgradient point and satisfies the constraints of \eqref{eq:nonconvobv}, under event $\mathcal{A}$ it is the unique global optimum shown to exist in Lemma \ref{lem:ncconv}. Hence the elementwise bound in the theorem follows.
\end{comment} 
}
% \begin{comment}
Define 
$\mathcal{L}_n(\Omega)$ to be the objective function \eqref{eq:nonconvobv} less the regularization terms, i.e.
\[
\mathcal{L}_n(\Omega) = \left.-\log |\Omega| + \langle \hat{S},\Omega\rangle \right|_{\Omega \in \mathcal{K}_{\mathbf{p}}}.
\]
{
\begin{lemma}\label{lemm:zerosub}
The oracle estimate $\hat{\Omega}_{\mathrm{oracle}}$ will be a zero-subgradient point of the global objective \eqref{eq:nonconvobv} if the inequalities
%\begin{proposition}[Proposition 2 of \cite{LOH}]
%Suppose that $\rho_\lambda$ is $(\mu, \gamma)$-amenable, the conditions of Theorem ?? hold, and that 
%\[
%\min_{j \in S} |\beta_j^*| \geq \lambda\left(\gamma + \|(\hat{Q}_{SS})^{-1}\|_\infty\right) + \left\|(\hat{Q}_{SS})^{-1}\nabla \mathcal{L}_n(\beta^*)_S\right\|_\infty.
%\] 
%Then strict dual feasibility holds if REWRITE FOR OUR CASE
\begin{equation}\label{eq:62a}
\|\nabla_{\mathcal{K}_{\mathbf{p}}} \mathcal{L}_n (\Omega_0)\|_\infty \leq \frac{1}{2} \rho
\end{equation}
and 
\begin{equation}\label{eq:62b}
\|\hat{Q}_{S^c S} (\hat{Q}_{SS})^{\dag} \nabla_{\mathcal{K}_{\mathbf{p}}} \mathcal{L}_n (\Omega_0)_S \|_\infty \leq \frac{1}{2} \rho
\end{equation}
hold, where
\[
\hat{Q} = \int_0^1 \nabla^2_{\mathcal{K}_{\mathbf{p}}} \mathcal{L}_n\left(\Omega_0 + t(\hat{\Omega}_{\mathrm{oracle}}-\Omega_0)\right)dt.
\]
We have denoted $\nabla_{\mathcal{K}_{\mathbf{p}}} f = \mathcal{P}_{\mathcal{K}_{\mathbf{p}}} \nabla f$ and $\nabla^2_{\mathcal{K}_{\mathbf{p}}} f = \mathcal{P}_{\mathcal{K}_{\mathbf{p}}}(\nabla^2 f)\mathcal{P}_{\mathcal{K}_{\mathbf{p}}}$ to be the gradient and Hessian respectively of $f$ projected onto the subspace $\mathcal{K}_{\mathbf{p}}$ ($\mathcal{P}_{\mathcal{K}_{\mathbf{p}}}$ is the projection matrix onto $\mathcal{K}_{\mathbf{p}}$).
\label{lem:Conds}
\end{lemma}
\begin{proof}
In this proof, for simplicity we write $q_\rho(\hat{\Omega})$ to indicate $q_\rho(t) = g_\rho(t) - \rho |t|$ applied elementwise to the offdiagonal elements of $\hat{\Omega}$:
\[
[q_\rho(\hat{\Omega})]_{ij} = \left\{\begin{array}{ll} q_\rho(\hat{\Omega}_{ij}) & i \neq j\\
0 & \mathrm{otherwise}.\end{array}\right. 
\]
Observe that by construction $\nabla_{\mathcal{K}_{\mathbf{p}}} q_\rho({\Omega}) = \nabla_{\mathbb{R}^{p\times p}} q_\rho({\Omega}) = \nabla q_\rho (\Omega)$ for any ${\Omega} \in \mathcal{K}_{\mathbf{p}}$.

For the objective \eqref{eq:nonconvobv}, the zero subgradient condition is given by
\[
\nabla_{\mathcal{K}_{\mathbf{p}}} \mathcal{L}_n (\hat{\Omega}) - \nabla q_{\rho}(\hat{\Omega}) + \rho \hat{z} = 0,
\]
where $\hat{z} = \partial | \hat \Omega |_{1,\mathrm{off}}$ is an element of the subgradient of the off-diagonal $\ell$1 norm at $\hat{\Omega}$. Adding and subtracting $\nabla_{\mathcal{K}_{\mathbf{p}}} \mathcal{L}_n (\Omega_0)$ gives
\[
(\nabla_{\mathcal{K}_{\mathbf{p}}} \mathcal{L}_n (\hat{\Omega}) - \nabla_{\mathcal{K}_{\mathbf{p}}} \mathcal{L}_n (\Omega_0)) + (\nabla_{\mathcal{K}_{\mathbf{p}}} \mathcal{L}_n (\Omega_0) - \nabla q_{\rho}(\hat{\Omega})) + \rho \hat{z} = 0.
\]
By the fundamental theorem of calculus we have (for $\hat{\Omega} = \hat{\Omega}_{\mathrm{oracle}}$) that $\nabla_{\mathcal{K}_{\mathbf{p}}} \mathcal{L}_n (\hat{\Omega}_{\mathrm{oracle}}) - \nabla_{\mathcal{K}_{\mathbf{p}}} \mathcal{L}_n (\Omega_0) = \hat{Q}\mathrm{vec}(\hat{\Omega}_{\mathrm{oracle}} - \Omega_0)$, hence
\[
\hat{Q}\mathrm{vec}(\hat{\Omega}_{\mathrm{oracle}} - \Omega_0) + (\nabla_{\mathcal{K}_{\mathbf{p}}} \mathcal{L}_n (\Omega_0) - \nabla q_{\rho}(\hat{\Omega}_{\mathrm{oracle}})) + \rho \hat{z} = 0.
\]
Rewriting in block form gives %(for $\hat{\Omega} = \hat{\Omega}_{\mathrm{oracle}}$)
\begin{align*}
&\left[ \begin{array}{cc} \hat{Q}_{SS} & \hat{Q}_{S S^c}\\ \hat{Q}_{S^c S} & \hat{Q}_{S^c S^c}\end{array}\right]
\left(\hat{\Omega}_{\mathrm{oracle}} - \Omega_0\right) \\
&+ \left(\left[\begin{array}{c} \nabla_{\mathcal{K}_{\mathbf{p}}} \mathcal{L}_n(\Omega_0)_S - \nabla q_\rho(\hat{\Omega}_{\mathrm{oracle}})_S\\
\nabla_{\mathcal{K}_{\mathbf{p}}} \mathcal{L}_n(\Omega_0)_{S^c} - \nabla q_\rho(0)\end{array}\right]\right)+ \rho\left[\begin{array}{c} \hat{z}_S \\\hat{z}_{S^c} \end{array} \right] = 0,
\end{align*}
where $\hat{Q}_{SS}$ is the block of $\hat{Q}$ corresponding to the elements in $\mathcal{S}$ along both axes, $\hat{Q}_{S^cS^c}$ is the block of $\hat{Q}$ corresponding to the elements in the complement of $\mathcal{S}$, etc.
After some algebra we obtain a solution
\begin{align*}
\hat{z}_{S^c} =& \frac{1}{\rho}\left\{ \left(\nabla q_\rho(0) - \nabla_{\mathcal{K}_{\mathbf{p}}} \mathcal{L}_n(\Omega_0)_{S^c}\right)
\right.\\&+\left. \hat{Q}_{S^c S} \hat{Q}_{SS}^{\dag}\left(\nabla_{\mathcal{K}_{\mathbf{p}}}\mathcal{L}_n(\Omega_0)_S - \nabla q_\rho(\hat{\Omega}_{\mathrm{oracle}})_S + \rho \hat{z}_S\right)\right\},
\end{align*}
since $\nabla q_\rho(0) = 0$ by definition. Now from Lemma \ref{lemm:Lemm7}, under event $\mathcal{A}$
\[
\|\hat{\Omega}_{\mathrm{oracle}} - \Omega_0 \|_{\max} \leq r,
\]
and observe that $\rho \gamma > r$ since we have assumed that $n \min_k m_k \geq c_0 d^2 \log p$ for some $c_0$ large enough.
By our assumption that $|[\Omega_{0}]_{ij}| \geq \rho \gamma + r $ for all $i,j$, we then have (again under event $\mathcal{A}$)
\[
\min_{ij} |[\hat{\Omega}_{\mathrm{oracle}}]_{ij}| \geq \rho\gamma + r - r = \rho \gamma.
\]
Therefore, using condition (f) of the definition of a $(\mu,\gamma)$ regularizer, $ - \nabla q_{\rho}(\hat{\Omega}_{\mathrm{oracle}})_S + \rho \hat{z}_S = 0$ and
\begin{align*}
\|\hat{z}_{S^c}\|_\infty &= \frac{1}{\rho} \left\|- \nabla_{\mathcal{K}_{\mathbf{p}}} \mathcal{L}_n(\Omega_0)_{S^c} + \hat{Q}_{S^c S} \hat{Q}_{SS}^{\dag}\nabla_{\mathcal{K}_{\mathbf{p}}}\mathcal{L}_n(\Omega_0)_S  \right\|_\infty\\
&\leq \frac{1}{\rho} \|\nabla_{\mathcal{K}_{\mathbf{p}}} \mathcal{L}_n(\Omega_0)_{S^c}\|_{\infty} + \frac{1}{\rho} \| \hat{Q}_{S^c S} \hat{Q}_{SS}^{\dag}\nabla_{\mathcal{K}_{\mathbf{p}}}\mathcal{L}_n(\Omega_0)_S\|_{\infty}\\
&\leq \frac{1}{\rho} \|\nabla_{\mathcal{K}_{\mathbf{p}}} \mathcal{L}_n(\Omega_0)\|_{\infty} + \frac{1}{\rho} \| \hat{Q}_{S^c S} \hat{Q}_{SS}^{\dag}\nabla_{\mathcal{K}_{\mathbf{p}}}\mathcal{L}_n(\Omega_0)_S\|_{\infty}\\
&\leq \frac{1}{2} + \frac{1}{2} = 1,
\end{align*}
where we have applied the assumed inequalities.
Since $\|\hat{z}_{S^c}\|_\infty \leq 1$, it is a feasible subgradient and therefore $\hat{\Omega}_{\mathrm{oracle}}$ is a zero subgradient point of the global objective function \eqref{eq:nonconvobv}.

\qed
\end{proof}
}
%\end{proposition}
We now show the inequalities \eqref{eq:62a}, \eqref{eq:62b} assumed by Lemma \ref{lemm:zerosub} hold under event $\mathcal{A}$. Note that
\[
\|\nabla_{\mathcal{K}_{\mathbf{p}}} \mathcal{L}_n (\Omega_0)\|_\infty = \|\mathrm{Proj}_{\mathcal{K}_{\mathbf{p}}}(\hat{S} - \Sigma_0) \|_{\max}
\]
and thus by \eqref{eq:SigInf}, under event $\mathcal{A}$ equation \eqref{eq:62a} holds with 
$
\rho = \frac{r}{\kappa_\Gamma}.
$

{

It remains to show \eqref{eq:62b} holds with $\rho = r$. We will first bound
\[
\|(\nabla_{\mathcal{K}_{\mathbf{p}}}^2 \mathcal{L}_n(\Omega_0))_{S^cS}(\nabla_{\mathcal{K}_{\mathbf{p}}}^2\mathcal{L}_n(\Omega_0))_{SS}^{\dag}(\nabla_{\mathcal{K}_{\mathbf{p}}}\mathcal{L}_n(\Omega_0)_S\|_\infty,
\]
and then show that the expression on the left hand side of \eqref{eq:62b} is close to this quantity.

First, by the definition of the infinity norm and $\mathcal{P}_{\mathcal{K}_{\mathbf{p}}}$ it can be shown that
\begin{align}\label{eq:Pinfnorm}
\|\mathcal{P}_{\mathcal{K}_{\mathbf{p}}}\|_\infty = \sup_{A \in \mathbb{R}^{p\times p}} \frac{\|\mathcal{P}_{\mathcal{K}_{\mathbf{p}}} \mathrm{vec}(A)\|_{\infty}}{\|A\|_{\max}} = \sup_{A \in \mathbb{R}^{p\times p}} \frac{\|\mathrm{Proj}_{\mathcal{K}_{\mathbf{p}}}(A)\|_{\max}}{\|A\|_{\max}}  \leq 2K,
\end{align}
where we have used the expression \eqref{eq:elementProj} for the elements of the projected matrix and the fact that an average of a set of elements of $A$ cannot have magnitude larger than $\|A\|_{\max}$. 
Noting that $(\nabla_{\mathcal{K}_{\mathbf{p}}}^2\mathcal{L}_n(\Omega_0))_{SS}^{\dag} = (\Gamma^{\dag})_{SS}$,
\begin{align}
\label{eq:HessBds}
\|(\nabla_{\mathcal{K}_{\mathbf{p}}}^2 &\mathcal{L}_n(\Omega_0))_{S^cS}(\nabla_{\mathcal{K}_{\mathbf{p}}}^2\mathcal{L}_n(\Omega_0))_{SS}^{\dag}(\nabla_{\mathcal{K}_{\mathbf{p}}}\mathcal{L}_n(\Omega_0)_S\|_\infty\\
&\leq \|(\nabla_{\mathcal{K}_{\mathbf{p}}}^2 \mathcal{L}_n(\Omega_0))_{S^cS}\|_\infty \cdot \|(\Gamma^{\dag})_{SS} \|_{\infty} \cdot \|(\nabla_{\mathcal{K}_{\mathbf{p}}}\mathcal{L}_n(\Omega_0)_S \|_{\infty}\nonumber \\
&\leq O\left(\frac{r}{\kappa_\Gamma}\right),\nonumber
\end{align}
since $\|(\nabla_{\mathcal{K}_{\mathbf{p}}}^2 \mathcal{L}_n(\Omega_0))_{S^cS}\|_\infty \leq \|\nabla_{\mathcal{K}_{\mathbf{p}}}^2 \mathcal{L}_n(\Omega_0)\|_\infty = \left\|\mathcal{P}_{\mathcal{K}_{\mathbf{p}}} \left(\nabla^2\mathcal{L}_n(\Omega_0)\right)\mathcal{P}_{\mathcal{K}_{\mathbf{p}}}\right\|_\infty \leq \|\mathcal{P}_{\mathcal{K}_{\mathbf{p}}}\|_\infty^2 \|\nabla^2\mathcal{L}_n(\Omega_0)\|_\infty = \|\mathcal{P}_{\mathcal{K}_{\mathbf{p}}}\|_\infty^2 \|\Sigma_0\otimes \Sigma_0\|_\infty =\|\mathcal{P}_{\mathcal{K}_{\mathbf{p}}}\|_\infty^2 \|\Sigma_0\|_\infty^2 \leq 4K^2 \kappa_\Sigma^2 $, and \eqref{eq:62a} holds under event $\mathcal{A}$ with $\rho = \frac{r}{\kappa_\Gamma}$. 

We now relate the bound in \eqref{eq:HessBds} to that required to show \eqref{eq:62b}. Note that
\begin{align}\label{eq:HessQuant}
&\|\hat{Q}_{S^c S} (\hat{Q}_{SS})^{\dag} \nabla_{\mathcal{K}_{\mathbf{p}}} \mathcal{L}_n (\Omega_0)_S\|_\infty \\\nonumber
&\leq \|(\nabla_{\mathcal{K}_{\mathbf{p}}}^2 \mathcal{L}_n(\Omega_0))_{S^cS}(\nabla_{\mathcal{K}_{\mathbf{p}}}^2\mathcal{L}_n(\Omega_0))_{SS}^{\dag}(\nabla_{\mathcal{K}_{\mathbf{p}}}\mathcal{L}_n(\Omega_0)_S)\|_\infty + \|\Xi(\nabla_{\mathcal{K}_{\mathbf{p}}}\mathcal{L}_n(\Omega_0)_S )\|_\infty\\\nonumber
&\leq O\left(\frac{r}{\kappa_\Gamma}\right) + \|\Xi(\nabla_{\mathcal{K}_{\mathbf{p}}}\mathcal{L}_n(\Omega_0)_S) \|_\infty,
\end{align}
where we have defined
\[
\Xi = \hat{Q}_{S^c S}(\hat{Q}_{SS})^{\dag} - (\nabla_{\mathcal{K}_{\mathbf{p}}}^2 \mathcal{L}_n(\Omega_0))_{S^cS}(\nabla_{\mathcal{K}_{\mathbf{p}}}^2\mathcal{L}_n(\Omega_0))_{SS}^{\dag}.
\]
Now, again invoking \eqref{eq:62a},
\begin{align}\label{eq:RR}
\|\Xi(\nabla_{\mathcal{K}_{\mathbf{p}}}\mathcal{L}_n(\Omega_0)_S) \|_\infty \leq \|\Xi\|_\infty \cdot \|\nabla_{\mathcal{K}_{\mathbf{p}}}\mathcal{L}_n(\Omega_0)_S \|_{\infty} \leq \|\Xi\|_\infty O\left(r\right).
\end{align}
The infinity norm of $\Xi$ can be bounded as
\begin{align}\label{eq:Xi}
\|\Xi\|_\infty \leq& \left\| \left(\hat{Q}_{S^cS} - (\nabla_{\mathcal{K}_{\mathbf{p}}}^2\mathcal{L}_n(\Omega_0))_{S^cS}\right)\left((\hat{Q}_{SS})^{\dag} - (\nabla_{\mathcal{K}_{\mathbf{p}}}^2\mathcal{L}_n(\Omega_0))_{SS}^{\dag}\right)\right\|_\infty\\\nonumber
&+\left\|(\hat{Q}_{S^cS} - (\nabla_{\mathcal{K}_{\mathbf{p}}}^2\mathcal{L}_n(\Omega_0))_{S^cS})(\nabla_{\mathcal{K}_{\mathbf{p}}}^2\mathcal{L}_n(\Omega_0))_{SS}^{\dag}\right\|_\infty\\\nonumber
&+\left\|(\nabla_{\mathcal{K}_{\mathbf{p}}}^2\mathcal{L}_n(\Omega_0))_{S^cS}\left((\hat{Q}_{SS})^{\dag} - (\nabla_{\mathcal{K}_{\mathbf{p}}}^2\mathcal{L}_n(\Omega_0))_{SS}^{\dag}\right)\right\|\\\nonumber
&\leq \delta_1 \delta_2 + \delta_1 \|(\hat{Q}_{SS})^{\dag}\|_\infty + \delta_2 \|(\nabla_{\mathcal{K}_{\mathbf{p}}}^2\mathcal{L}_n(\Omega_0))_{S^cS}\|_\infty
\end{align}
where we have set 
\begin{align*}
\delta_1 &:= \|\hat{Q}_{S^c S} - (\nabla_{\mathcal{K}_{\mathbf{p}}}^2\mathcal{L}_n(\Omega_0))_{S^c S}\|_\infty\\
\delta_2 &:= \|(\hat{Q}_{SS})^{\dag} - (\nabla_{\mathcal{K}_{\mathbf{p}}}^2\mathcal{L}_n(\Omega_0))_{SS}^{\dag}\|_\infty.
\end{align*}
First note that by \eqref{eq:Pinfnorm}
\begin{align*}
\|(\nabla_{\mathcal{K}_{\mathbf{p}}}^2\mathcal{L}_n(\Omega_0))_{S^cS}\|_\infty &\leq \|(\nabla_{\mathcal{K}_{\mathbf{p}}}^2\mathcal{L}_n(\Omega_0))\|_\infty \\
&\leq \|\mathcal{P}_{\mathcal{K}_{\mathbf{p}}}(\nabla^2\mathcal{L}_n(\Omega_0))\mathcal{P}_{\mathcal{K}_{\mathbf{p}}}\|_\infty\\
&\leq \|\mathcal{P}_{\mathcal{K}_{\mathbf{p}}}\|_\infty^2 \|\nabla^2\mathcal{L}_n(\Omega_0)\|_\infty\\
& \leq 4K^2 \|\nabla^2\mathcal{L}_n(\Omega_0)\|_\infty \leq 4K^2 \kappa_\Sigma^2,
\end{align*}
and $\|(\hat{Q}_{SS})^\dag\|_\infty = O(1 + \delta_2)$ by the definition of $\delta_2$.

Substituting into \eqref{eq:Xi} gives
\begin{equation}\label{eq:Xi2}
\|\Xi\|_\infty \leq O(\delta_1 \delta_2) + O(\delta_1) + O(\delta_2).
\end{equation}

We bound $\delta_1$ and $\delta_2$ in the following lemma, proved in Section \ref{supp:Delts}.
\begin{lemma}\label{lem:Delts}
Under the conditions of Lemma \ref{lemm:Lemm7}, 
\begin{align*}
\delta_1 &= O(dr)\\
\delta_2 &= O(dr).
\end{align*}
\end{lemma}
%The proof is found in Section \ref{supp:Delts}.
Applying Lemma \ref{lem:Delts} to \eqref{eq:Xi2}, we obtain
\[
\|\Xi\|_\infty = O(d^2 r^2) + O(dr)+ O(dr) = O(dr)
\]
and substituting into \eqref{eq:RR}
\[
\|\Xi(\nabla_{\mathcal{K}_{\mathbf{p}}}\mathcal{L}_n(\Omega_0)_S) \|_\infty = O(dr)\cdot O(r) = O(r),
\]
since $dr = o(1)$ by our assumption that $n \min_k m_k \geq c_0 d^2 \log p$ for some $c_0$ large enough.

Therefore, substituting into \eqref{eq:HessQuant} we obtain 
\begin{align*}
\|\hat{Q}_{S^c S} (\hat{Q}_{SS})^{\dag} \nabla_{\mathcal{K}_{\mathbf{p}}} \mathcal{L}_n (\Omega_0)_S\|_\infty = O(r) + O(r) = O(r),
\end{align*}
proving the desired condition \eqref{eq:62b} holds.} Hence the conditions of Lemma \ref{lem:Conds} hold under event $\mathcal{A}$, and $\hat{\Omega}_{\mathrm{oracle}}$ is the unique global minimizer of the complete objective \eqref{eq:nonconvobv}. 
%\end{comment}

%GET NORM STUFF UNDER KP FROM EARLIER(likely the same?).
The Frobenius and spectral norm bounds follow from the identities 
\[
\|\hat{\Omega} - \Omega_0\|_F \leq \sqrt{s + p} \|\hat{\Omega} - \Omega_0\|_{\max}
\]
and
\[
\|\hat{\Omega} - \Omega_0\|_2 \leq d \|\hat{\Omega} - \Omega_0\|_{\max},
\]
where the latter identity follows by symmetry of $\Omega$.

{
\subsection{Proof of Lemma \ref{lem:Delts}: Bound on $\delta_1, \delta_2$}\label{supp:Delts}
\begin{proof}
Consider that
\begin{align*}
\hat{Q} - \nabla_{\mathcal{K}_{\mathbf{p}}}^2\mathcal{L}_n(\Omega_0) &= \int_0^1 \nabla^2_{\mathcal{K}_{\mathbf{p}}} \mathcal{L}_n\left(\Omega_0 + t(\hat{\Omega}_{\mathrm{oracle}}-\Omega_0)\right)dt - \nabla_{\mathcal{K}_{\mathbf{p}}}^2\mathcal{L}_n(\Omega_0)\\
&= \mathcal{P}_{\mathcal{K}_{\mathbf{p}}} \left(\int_0^1 \nabla^2 \mathcal{L}_n\left(\Omega_0 + t(\hat{\Omega}_{\mathrm{oracle}}-\Omega_0)\right)dt - \nabla^2 \mathcal{L}_n(\Omega_0) \right) \mathcal{P}_{\mathcal{K}_{\mathbf{p}}}.
\end{align*}
Hence, since $\| \mathcal{P}_{\mathcal{K}_{\mathbf{p}}}\|_\infty \leq 2K$ by \eqref{eq:Pinfnorm},
\begin{align*}
&\|\hat{Q} - \nabla_{\mathcal{K}_{\mathbf{p}}}^2\mathcal{L}_n(\Omega_0)\|_\infty\\
&\leq \| \mathcal{P}_{\mathcal{K}_{\mathbf{p}}}\|_\infty^2 \left\| \int_0^1 \nabla^2 \mathcal{L}_n\left(\Omega_0 + t(\hat{\Omega}_{\mathrm{oracle}}-\Omega_0)\right)dt - \nabla^2 \mathcal{L}_n(\Omega_0) \right\|_\infty\\
&\leq 4K^2\left\|\int_0^1 (\Omega_0 + t(\hat{\Omega}_{\mathrm{oracle}} - \Omega_0))^{-1} \otimes (\Omega_0 + t(\hat{\Omega}_{\mathrm{oracle}} - \Omega_0))^{-1} - \Omega_0^{-1} \otimes \Omega_0^{-1}dt\right\|_\infty\\
&\leq 4K^2\int_0^1 \left\|(\Omega_0 + t(\hat{\Omega}_{\mathrm{oracle}} - \Omega_0))^{-1} \otimes (\Omega_0 + t(\hat{\Omega}_{\mathrm{oracle}} - \Omega_0))^{-1} - \Omega_0^{-1} \otimes \Omega_0^{-1}\right\|_\infty dt.
\end{align*}
By Lemma \ref{lemm:Lemm7}, for $t \in [0,1]$, 
\[
\|\Omega_0 + t(\hat{\Omega}_{\mathrm{oracle}} - \Omega_0) - \Omega_0 \|_\infty = t\|\hat{\Omega}_{\mathrm{oracle}} - \Omega_0\|_\infty \leq d\|\hat{\Omega}_{\mathrm{oracle}} - \Omega_0\|_{\max} \leq dr.
\]
We make use of the following matrix inequalities \citep{LOH}. For any invertible $A, B \in \mathbb{R}^{p\times p}$ and matrix norm $\|\cdot\|$, 
\begin{align}
\label{eq:Lem11}
\|A^{-1} - B^{-1}\| \leq \frac{\|A^{-1}\|^2 \|A-B\|}{1 - \|A^{-1}\| \|A-B\|} = O(\|A^{-1}\|^2 \|A - B\|).
\end{align}
if $\|A^{-1}\| \|A - B\| \leq 1/2$.
For any $A$ and $B$ matrices of equal dimension we have
\begin{align}\label{eq:Lem13}
\|A\otimes A - B\otimes B\|_\infty \leq \|A - B\|_\infty^2 + 2\min(\|A\|_\infty, \|B\|_\infty)\cdot \|A-B\|_\infty.
\end{align}
Applying \eqref{eq:Lem11} we get
\begin{align*}
&\left\|\left(\Omega_0 +  t(\hat{\Omega}_{\mathrm{oracle}} - \Omega_0)\right)^{-1} - \Omega_0^{-1} \right\|_\infty \\%&\leq d\left\|\left(\Omega_0 + t(\hat{\Omega}_{\mathrm{oracle}} - \Omega_0)\right)^{-1} - \Omega_0^{-1} \right\|_{\max}\\
&\leq O\left(\left\|\Omega_0^{-1}\right\|_\infty^2\|\Omega_0 + t(\hat{\Omega}_{\mathrm{oracle}} - \Omega_0) - \Omega_0 \|_\infty\right)\\
& \leq  O(dr),
\end{align*}
since $\|\Omega_0^{-1}\|_{\infty}= \|\Sigma_0\|_{\infty}$ is bounded by $\kappa_\Sigma$.
Applying \eqref{eq:Lem13} to this yields
\[
\|\hat{Q} - \nabla_{\mathcal{K}_{\mathbf{p}}}^2\mathcal{L}_n(\Omega_0)\|_\infty = O(dr),
\]
which gives
\[
\delta_1 = \|\hat{Q}_{S^c S} - (\nabla_{\mathcal{K}_{\mathbf{p}}}^2\mathcal{L}_n(\Omega_0))_{S^c S}\|_\infty = O(dr)
\]
and 
\begin{equation}\label{eq:lastEq}
\|\hat{Q}_{S S} - (\nabla_{\mathcal{K}_{\mathbf{p}}}^2\mathcal{L}_n(\Omega_0))_{S S}\|_\infty = O(dr).
\end{equation}

Finally, recall that the projection matrix onto $\mathcal{K}_{\mathbf{p}}$ can be written as $U U^T$ with $U^T U = I$ so
\begin{align*}
\delta_2 &= \|(\hat{Q}_{SS})^{\dag} - (\nabla_{\mathcal{K}_{\mathbf{p}}}^2\mathcal{L}_n(\Omega_0))_{SS}^{\dag}\|_\infty\\& =\left\|U\left(\left(U^T \hat{Q}_{SS} U\right)^{-1} - \left(U^T \nabla_{\mathcal{K}_{\mathbf{p}}}^2\mathcal{L}_n(\Omega_0)_{SS}U\right)^{-1}\right) U^T \right\|_\infty.
\end{align*}
By the matrix expansion \eqref{eq:matExp} we then have
\begin{align*}
&(\hat{Q}_{SS})^{\dag} - (\nabla_{\mathcal{K}_{\mathbf{p}}}^2\mathcal{L}_n(\Omega_0))_{SS}^{\dag}\\&=U\left[\left(U^T \hat{Q}_{SS} U\right)^{-1} - \left(U^T \nabla_{\mathcal{K}_{\mathbf{p}}}^2\mathcal{L}_n(\Omega_0)_{SS}U\right)^{-1}\right]U^T\\
& = U\left[\sum_{\ell = 1}^\infty \left[\left( -\left(U^T \nabla_{\mathcal{K}_{\mathbf{p}}}^2\mathcal{L}_n(\Omega_0)_{SS}U\right)^{-1} \left(U^T \hat{Q}_{SS} U -U^T \nabla_{\mathcal{K}_{\mathbf{p}}}^2\mathcal{L}_n(\Omega_0)_{SS}U \right)\right)^\ell\right.\right.\\&\qquad\qquad\qquad\qquad\qquad \cdot\left.\left. \left(U^T \nabla_{\mathcal{K}_{\mathbf{p}}}^2\mathcal{L}_n(\Omega_0)_{SS}U\right)^{-1}\right]\right]U^T\\
&= \sum_{\ell=1}^\infty \left( - \left((\nabla_{\mathcal{K}_{\mathbf{p}}}^2\mathcal{L}_n(\Omega_0))_{SS}^{\dag}\right)\left(\hat{Q}_{S S} - (\nabla_{\mathcal{K}_{\mathbf{p}}}^2\mathcal{L}_n(\Omega_0))_{S S}\right) \right)^\ell\left((\nabla_{\mathcal{K}_{\mathbf{p}}}^2\mathcal{L}_n(\Omega_0))_{SS}^{\dag}\right).
\end{align*}
We can then use the bound \eqref{eq:lastEq} to obtain
\begin{align*}
\delta_2 &= \|(\hat{Q}_{SS})^{\dag} - (\nabla_{\mathcal{K}_{\mathbf{p}}}^2\mathcal{L}_n(\Omega_0))_{SS}^{\dag}\|_\infty\\
&\leq\sum_{\ell = 1}^\infty O\left(\|\hat{Q}_{S S} - (\nabla_{\mathcal{K}_{\mathbf{p}}}^2\mathcal{L}_n(\Omega_0))_{S S}\|_\infty^\ell\right) \\
&= O\left(\frac{\|\hat{Q}_{S S} - (\nabla_{\mathcal{K}_{\mathbf{p}}}^2\mathcal{L}_n(\Omega_0))_{S S}\|_\infty}{1 - \|\hat{Q}_{S S} - (\nabla_{\mathcal{K}_{\mathbf{p}}}^2\mathcal{L}_n(\Omega_0))_{S S}\|_\infty}\right)\\
& = O(dr),
\end{align*}
since $dr = o(1)$.

\qed\end{proof}
}

\section{Numerical Convergence of TG-ISTA}
\label{supp:conv}
%ADD CONVERGENCE TIME
%\subsection{TG-ISTA Algorithm Convergence Rate}\label{sec:convrt}
The following theorem shows that the iterates of the TG-ISTA implementation of TeraLasso converge geometrically to the global minimum:
\begin{theorem}\label{thm:conv}
Let $\rho_k \geq 0$ for all $k$ and let $\Omega_{\mathrm{init}}$ be the initialization of the TG-ISTA implementation of TeraLasso (Algorithm \ref{alg:lowlevel}). Let
\begin{align*}
a = \frac{1}{\sum_{k = 1}^K \|S_k\|_2 +  d_k \rho_k}, \qquad b = \|\Omega^*\|_2 + \|\Omega_{\mathrm{init}} - \Omega^*\|_F,
\end{align*}
and assume $\zeta_t \leq a^2$ for all $t$. Suppose further that $\Omega^*$ is the global optimum. Then
\[
\|\Omega_{t+1} - \Omega^* \|_F \leq \max\left\{\left|1-\frac{\zeta_t}{b^2}\right|, \left|1 - \frac{\zeta_t}{a^2}\right|\right\} \|\Omega_t - \Omega^*\|_F.
\]
Furthermore, the step size $\zeta_t$ which yields an optimal worst-case contraction bound $s(\zeta_t)$ is $\zeta = \frac{2}{a^{-2} + b^{-2}}$. The corresponding optimal worst-case contraction bound is 
\begin{equation}\label{eq:soptt}
s(\zeta) = 1 - \frac{2}{1 + \frac{b^2}{a^2}}.
\end{equation}
\end{theorem}
Our proof uses results on the structure of the Kronecker sum subspace
to extend to our subspace restricted setting the methodology that
\cite{GISTA} used to derive the unstructured GLasso convergence rates.
%\fbox{fix his sentence}

We decompose the claims of Theorem \ref{thm:conv} into the following two theorems which we prove separately.
\begin{theorem}\label{thm:conv2}
Assume that the iterates $\Omega_t$ of Algorithm \ref{alg:lowlevel} satisfy $a I \preceq \Omega_t \preceq bI$, for all $t$, for some fixed constants $0 < a < b < \infty$. Suppose further that $\Omega^*$ is the global optimum. If $\zeta_t \leq a^2$ for all $t$, then
\[
\|\Omega_{t+1} - \Omega^* \|_F \leq \max\left\{\left|1-\frac{\zeta_t}{b^2}\right|, \left|1 - \frac{\zeta_t}{a^2}\right|\right\} \|\Omega_t - \Omega^*\|_F.
\]
Furthermore, the step size $\zeta_t$ which yields an optimal worst-case contraction bound $s(\zeta_t)$ is $\zeta = \frac{2}{a^{-2} + b^{-2}}$. The corresponding optimal worst-case contraction bound is 
\begin{equation}\label{eq:soptt}
s(\zeta) = 1 - \frac{2}{1 + \frac{b^2}{a^2}}.
\end{equation}
\end{theorem}
\begin{theorem}\label{thm:eigiter}
Let $\rho_k \geq 0$ for all $k$ and let $\Omega_{\mathrm{init}}$ be the initialization of the TG-ISTA implementation of TeraLasso (Algorithm \ref{alg:lowlevel}). Let
\begin{align*}
a = \frac{1}{\sum_{k = 1}^K \|S_k\|_2 +  d_k \rho_k}, \qquad b = \|\Omega^*\|_2 + \|\Omega_{\mathrm{init}} - \Omega^*\|_F,
\end{align*}
and assume $\zeta_t \leq a^2$ for all $t$. Then the iterates $\Omega_t$ of Algorithm \ref{alg:lowlevel} satisfy $a I \preceq \Omega_t \preceq b I$ for all $t$. %, with $b = \|\Omega^*\|_2 + \|\Omega_{\mathrm{init}} - \Omega^*\|_F$.
\end{theorem}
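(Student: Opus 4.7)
The plan is induction on $t$, handling the upper and lower eigenvalue bounds in tandem (the upper-bound argument depends on the lower bound holding at earlier iterates, and vice versa through the step-size assumption).

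For the upper bound, the base case at $t=0$ follows from the triangle inequality: $\|\Omega_{\mathrm{init}}\|_2 \leq \|\Omega^*\|_2 + \|\Omega_{\mathrm{init}} - \Omega^*\|_F = b$. For the inductive step, I would invoke Theorem \ref{thm:conv}: assuming $aI \preceq \Omega_s \preceq bI$ for all $s \leq t$, the hypothesis $\zeta_s \leq a^2$ is satisfied, and the contraction bound yields $\|\Omega_{t+1}-\Omega^*\|_F \leq \cdots \leq \|\Omega_{\mathrm{init}}-\Omega^*\|_F$, so $\|\Omega_{t+1}\|_2 \leq \|\Omega^*\|_2 + \|\Omega_{\mathrm{init}}-\Omega^*\|_F = b$.

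For the lower bound, I would combine Lemma \ref{lem:decomp} with the subgradient optimality of the factor-wise soft-thresholding to rewrite the iterate as
\begin{equation*}
\Omega_{t+1} = \Omega_t + \zeta_t G_t - \zeta_t (\tilde{S} + \tilde{Z}_{t+1}),
\end{equation*}
where $G_t = \mathrm{Proj}_{\mathcal{K}_\mathbf{p}}(\Omega_t^{-1})$, $\tilde{S} = \tilde{S}_1 \oplus \cdots \oplus \tilde{S}_K = \mathrm{Proj}_{\mathcal{K}_\mathbf{p}}(\hat{S})$, and $\tilde{Z}_{t+1} = \rho_1 W_1 \oplus \cdots \oplus \rho_K W_K$ with $|W_k|_\infty \leq 1$ and zero diagonals. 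Using the Kronecker-sum triangle inequality, $\|\tilde{S}_k\|_2 \leq \|S_k\|_2$ (the trace-correction shift is bounded by the mean eigenvalue, hence by the spectral norm), and $\|W_k\|_2 \leq d_k$ (entrywise bound), I would obtain $\|\tilde{S} + \tilde{Z}_{t+1}\|_2 \leq \sum_k (\|S_k\|_2 + d_k \rho_k) = 1/a$, so that the perturbation term contributes at most $\zeta_t/a$ in spectral norm.

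The main obstacle will be establishing the key matrix inequality $\Omega_t + \zeta_t G_t \succeq (a + \zeta_t/a) I$ whenever $\Omega_t \succeq aI$. In the unprojected GISTA setting (where $G_t$ would be $\Omega_t^{-1}$) this is immediate from the scalar monotonicity of $g(\lambda) = \lambda + \zeta_t/\lambda$ on $[\sqrt{\zeta_t}, \infty) \supseteq [a,\infty)$, which holds since $\zeta_t \leq a^2$. Here $G_t$ is only the Kronecker-sum projection of $\Omega_t^{-1}$, which need not PSD-dominate $\Omega_t^{-1}$; indeed projection onto $\mathcal{K}_{\mathbf{p}}$ can introduce negative eigenvalues. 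I would exploit that $\Omega_t$ and $G_t$ are simultaneously diagonal in the tensor-product basis of eigenvectors of $\{\Psi_k^t\}$, with the eigenvalues $g_\alpha$ of $G_t$ given by sums (over $k$) of factor-wise partial averages of $1/\lambda_\beta(\Omega_t)$ constrained to agree with $\alpha$ in the $k$-th index. A Jensen-type aggregate inequality applied to these partial averages, combined with the scalar monotonicity of $g$, should recover the eigenvalue-wise bound $\lambda_\alpha + \zeta_t g_\alpha \geq a + \zeta_t/a$. Together with the $\zeta_t/a$ bound on the perturbation term, this gives $\Omega_{t+1} \succeq a I$, closing the induction. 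The base case for the lower bound is ensured by the algorithm's initialization (e.g., $\Omega_{\mathrm{init}}$ chosen as a sufficiently large scalar multiple of the identity, which trivially satisfies $\Omega_{\mathrm{init}} \succeq aI$).
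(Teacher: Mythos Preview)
Your overall architecture matches the paper's: the upper bound via the contraction of Theorem \ref{thm:conv} is exactly what the paper does in its final lemma, and your lower-bound strategy (split $\Omega_{t+1}$ into $\Omega_t+\zeta_t G_t$ minus a perturbation of spectral norm at most $\zeta_t/a$) is the same decomposition the paper uses, with the cosmetic difference that the paper treats the $\hat S$ part (Lemma \ref{lem:4}, via Weyl) and the soft-thresholding part (Lemma \ref{lem:6}) separately rather than merging them into your subgradient term $\tilde S+\tilde Z_{t+1}$. Your bound $\|\tilde S_k\|_2\le\|S_k\|_2$ and $\|\rho_k W_k\|_2\le d_k\rho_k$ is fine and gives the same $1/a$.

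The divergence is at the step you flag as the ``main obstacle'': showing $\Omega_t+\zeta_t G_t\succeq(a+\zeta_t/a)I$. The paper does not attempt an eigenvalue-by-eigenvalue Jensen argument. Instead it exploits linearity of the projection: since $\Omega_t\in\mathcal K_{\mathbf p}$, one has $\Omega_t+\zeta_t G_t=\mathrm{Proj}_{\mathcal K_{\mathbf p}}(\Omega_t+\zeta_t\Omega_t^{-1})$, and the unprojected matrix $\Omega_t+\zeta_t\Omega_t^{-1}$ has eigenvalues $\lambda+\zeta_t/\lambda\ge a+\zeta_t/a$ by the scalar monotonicity you already noted. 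The paper then invokes Lemma \ref{lem:Pcont} to say that projecting onto $\mathcal K_{\mathbf p}$ cannot push the minimum eigenvalue below that of the input, because in the Kronecker-product eigenbasis the projected diagonal entries are ``weighted averages'' of the original eigenvalues. Your proposed Jensen-type route is essentially trying to reprove this lemma from the factor-wise description of $g_\alpha$, which is harder: be aware that the projection coefficients onto $\mathcal K_{\mathbf p}$ are \emph{not} all nonnegative (for $K\ge2$ the $-(K-1)\tr(A)/p$ correction introduces negative weights on indices sharing no coordinate with $\alpha$), so a naive convex-combination/Jensen argument does not go through directly. If you want to close this step, the linearity reduction $\Omega_t+\zeta_t G_t=\mathrm{Proj}_{\mathcal K_{\mathbf p}}(\Omega_t+\zeta_t\Omega_t^{-1})$ is the device to adopt, and then argue the eigenvalue preservation of the projection as a separate lemma in the spirit of Lemma \ref{lem:Pcont}.
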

Observe that by Theorem \ref{thm:eigiter}, the worst case contraction factor \eqref{eq:soptt} 
\[
s(\zeta) = 1 - \frac{2}{1 + (\|\Omega^*\|_2 + \|\Omega_{\mathrm{init}} - \Omega^*\|_F)^2 (\sum_{k=1}^K \|S_k \|_2 + d_k \rho_k)^2}
\]
scales at most as $s(\zeta) = O(1 - \frac{2}{1 + K^2})$ for $\|\Omega^*\|_2, \|\Sigma_0\|_2$ of fixed order, since $\|S_k\|_2  \sim \|\Sigma_0\|_2$ with high probability. 

Let $T$ be the number of iterations required for $\|\Omega_{T} - \Omega^*\|_F \leq \|\Omega^* - \hat{\Omega}\|_F$ to hold, i.e. for the optimization error to be smaller than the statistical error. By Theorem \ref{Thm:1}, we require
\begin{equation}\label{eq:neww}
\|\Omega_{T} - \Omega^*\|_F^2 \leq C_1 K^2 (s+p) \frac{\log p}{n \min_k m_k}. 
\end{equation}
Using worst case contraction factor $s(\zeta)$, \eqref{eq:neww} will hold for $T$ such that (with high probability)
\[
\|\Omega_{\mathrm{init}} - \Omega^*\|_F^2 \left(1 - \frac{2}{1 + \frac{b^2}{a^2}}\right)^{2T} \leq C_1 K^2 (s+p) \frac{\log p}{n \min_k m_k}.
\]
Taking the logarithm of both sides and using $s(\zeta) = O(1 - \frac{2}{1 + K^2})$, we have that the optimization error is guaranteed to equal the statistical error after $T$ iterations, where
\begin{align*}
T = O_p\left(\frac{2 \log K + \log(s+p) + \log \log p - \log (n \min_k m_k)}{\log \left(1 - \frac{2}{1 + K^2}\right)}\right).
\end{align*}

%\subsection

%We now prove the following theorems. 

%\section{
\subsection{Proof of Theorem \ref{thm:conv2}}
\label{app:Conv}

%NOTE TO PROF ZHOU: PLEASE DO NOT LOOK BELOW THIS LINE. CURRENTLY UNDERGOING REWRITE.
%\subsection{Bounded eig solution}
%Not necessary, follows from iterative argument.
%\subsection{Strong convexity?}
%\subsection{Lipschitz continuity of gradient}
%CANNOT USE THIS
%In \citep{?} it was shown that for $\nabla f(X) = S - X^{-1}$, for all $0 \prec aI \preceq X,Y \preceq bI$
%\[
%\|\nabla f(X) - \nabla f(Y)\|_F \leq \frac{\sqrt{p}}{a^2} \|X - Y\|_F,
%\]
%so $\nabla f$ is Lipschitz continuous. Since it is Lipschitz continuous in the PSD cone, it is also Lipschitz continuous in the PSD constrained linear subspace $\mathcal{K}_{\mathbf{p}}$.

%\subsection{Contraction factor (Theorem \ref{thm:conv})}
%From \citep{?,?} and references therein we have the following theorem:
%NEED TO RESTATE BIG PROX, just cite objective and state is prox.

\begin{proofof2}
For convenience, define the Kronecker sum shrinkage operator as
\begin{equation}\label{eq:shrinkOp}
\mathrm{shrink}^{-}_{\rho}(A) = \mathrm{shrink}^{-}_{\rho_1}(A^{(1)}) \oplus \dots \oplus \mathrm{shrink}^{-}_{\rho_K}(A^{(K)})
\end{equation}
for $A = A^{(1)} \oplus \dots \oplus A^{(K)} \in \mathcal{K}_{\mathbf{p}}$ and $\mathbf{\rho} = [\rho_1,\dots, \rho_K]$ with all $\rho_k \geq 0$. 
Note that $\mathrm{shrink}^{-}_{\rho}(A) = \arg \min_{\Omega \in \mathcal{K}_{\mathbf{p}}} \left\{ \frac{1}{2}\left\| \Omega - A \right\|_F^2 + \sum_{k=1}^K m_k  \rho_k  |{\Psi}_k|_{1,\off} \right\}$.
Since $\sum_{k=1}^K m_k  \rho_k  |{\Psi}_k|_{1,\off}$ is a convex function on $\mathcal{K}_{\mathbf{p}}$, and since $\mathcal{K}_{\mathbf{p}}$ is a linear subspace, $\mathrm{shrink}^{-}_{\epsilon}(\cdot)$ is a proximal operator by definition.

Recall that we can write the TG-ISTA update \eqref{eq:Ogstp} using this Kronecker sum shrinkage operator as
\begin{align*}%\label{eq:Ogstp}
&\Omega_{t+1} = \arg \min_{\Omega \in \mathcal{K}_{\mathbf{p}}} \left\{ \frac{1}{2}\left\| \Omega - \left(\Omega_t - \zeta_t\left(\tilde{S} - G^t\right)\right) \right\|_F^2 + \zeta_t \sum_{k=1}^K m_k  \rho_k  |{\Psi}_k|_{1,\off} \right\}\\
&= \arg \min_{\Omega \in \mathcal{K}_{\mathbf{p}}} \left\{ \frac{1}{2}\left\| \Omega - \left(\Omega_t - \zeta_t\left(\mathrm{Proj}_{\mathcal{K}_{\mathbf{p}}}({\hat{S}} - \Omega_t^{-1}\right)\right) \right\|_F^2 + \zeta_t \sum_{k=1}^K m_k  \rho_k  |{\Psi}_k|_{1,\off} \right\}\\
&=\mathrm{shrink}^{-}_{\zeta_t \rho}(\Omega_t - \zeta_t\mathrm{Proj}_{\mathcal{K}_{\mathbf{p}}}(\hat{S}-\Omega_t^{-1})),
\end{align*}
where $\hat{S}$ is the sample covariance \eqref{eq::gram} and $\tilde{S} = \mathrm{Proj}_{\mathcal{K}_{\mathbf{p}}}(\hat{S})$ is its projection onto $\mathcal{K}_{\mathbf{p}}$ \eqref{eq:ingrad}.
%where we have used $\mathbf{\rho} = [\rho_1,\dots, \rho_K]$.

%Recall the factor-wise shrinkage operator $\eta_R(\cdot)$ \eqref{eq:proxBig} is a proximal operator, and that by Algorithm \ref{alg:lowlevel},
%\[
%\Omega_{t+1} = \eta_{\zeta_t R}((\Omega_t - \zeta_t (S - W)) = \eta_{\zeta_t R}((\Omega_t - \zeta_t \mathrm{Proj}_{\mathcal{K}_{\mathbf{p}}}(S - \Omega_t^{-1})).
%\]
By convexity in $\mathcal{K}_{\mathbf{p}}$ and Theorem \ref{Thm:Conv}, the optimal point $\Omega^*_\rho$ is a fixed point of the ISTA iteration (\cite{combettes2005signal}, Prop 3.1). Thus,
\[
\Omega_\rho^* = \mathrm{shrink}^{-}_{\zeta_t \rho}(\Omega^*_\rho - \zeta_t \mathrm{Proj}_{\mathcal{K}_{\mathbf{p}}}(\hat{S} - (\Omega^*_\rho)^{-1}).
\]
Since proximal operators are not expansive \citep{combettes2005signal}, we have
\begin{align*}
\|\Omega_{t+1} &- \Omega_\rho^*\|_F \\&= \left\|\mathrm{shrink}^{-}_{\zeta_t \rho}(\Omega_t - \zeta_t \mathrm{Proj}_{\mathcal{K}_{\mathbf{p}}}(\hat{S} - \Omega_t^{-1}))\right.\\
&\qquad\qquad\qquad \left.- \mathrm{shrink}^{-}_{\zeta_t \rho}(\Omega_\rho^* - \zeta_t \mathrm{Proj}_{\mathcal{K}_{\mathbf{p}}}(\hat{S} - (\Omega_\rho^*)^{-1}))\right\|_F \\
&\leq \|\Omega_t - \zeta_t \mathrm{Proj}_{\mathcal{K}_{\mathbf{p}}}(\hat{S} - \Omega_t^{-1}) - (\Omega_\rho^* - \zeta_t \mathrm{Proj}_{\mathcal{K}_{\mathbf{p}}}(\hat{S} - (\Omega_\rho^*)^{-1}))\|_F\\
&= \|\Omega_t + \zeta_t \mathrm{Proj}_{\mathcal{K}_{\mathbf{p}}}(\Omega_t^{-1}) - (\Omega_\rho^* + \zeta_t \mathrm{Proj}_{\mathcal{K}_{\mathbf{p}}}((\Omega_\rho^*)^{-1}))\|_F.
\end{align*}
For $\gamma > 0$ define $h_\gamma: \mathcal{K}_{\mathbf{p}}^{\sharp} \rightarrow \mathcal{K}_{\mathbf{p}}^{\sharp}$ by
\[
h_\gamma (\Omega) = \mathrm{vec}(\Omega) + \mathrm{vec}(\gamma \mathrm{Proj}_{\mathcal{K}_{\mathbf{p}}}(\Omega^{-1})).
\]
Since $\partial \Omega^{-1}/\partial \Omega= - \Omega^{-1} \otimes \Omega^{-1}$, 
\[
\frac{\partial \mathrm{Proj}_{\mathcal{K}_{\mathbf{p}}}(\Omega^{-1})}{\partial \Omega} = - P(\Omega^{-1} \otimes \Omega^{-1})P^T
\]
where $P$ is the projection matrix that projects $\mathrm{vec}(\Omega)$ onto the vectorized subspace $\mathcal{K}_{\mathbf{p}}$. Thus, we have the Jacobian (valid for all $\Omega \in \mathcal{K}_{\mathbf{p}}^{\sharp}$)
\[
J_{h_\gamma}(\Omega) = PP^T - \gamma P(\Omega^{-1} \otimes \Omega^{-1})P^T.
\]
Recall that if $h : U \subset \mathbb{R}^n \rightarrow \mathbb{R}^m$ is a differentiable mapping, then if $x,y \in U$ and $U$ is convex, then if $J_h(\cdot)$ is the Jacobian of $h$,
\[
\|h(x) - h(y)\| \leq \sup_{c \in [0,1]} \|J_h(cx + (1-c)y)\|\|x-y\|.
\]
Thus, letting $Z_{t,c} = \mathrm{vec}(c \Omega_t + (1-c) \Omega^*_\rho)$, for $c \in [0,1]$ we have
\begin{align*}
\|h_{\zeta_t} (x) - h_{\zeta_t}(y)\| \leq \sup_{c \in [0,1]} \|PP^T - \zeta_t P(Z_{t,c}^{-1} \otimes Z_{t,c}^{-1})P^T\|\|\Omega_t - \Omega^*_\rho\|_F.
\end{align*}
By Weyl's inequality, $\lambda_{\max}(Z_{t,c}) \leq \max\{ \|\Omega_t\|, \|\Omega_\rho^*\|\}$ and 
\[
\lambda_{\min} (Z_{t,c}) \geq \min\{ \lambda_{\min}(\Omega_t), \lambda_{\min}(\Omega_\rho^*)\}.
\] Furthermore, note that for any $Y$ and projection matrix $P$
\[
 \lambda_{\max}(PYP^T) \leq \lambda_{\max}(Y).
\]
We then have
\begin{align*}
\\|PP^T - \zeta_t P(Z_{t,c}^{-1} \otimes Z_{t,c}^{-1})P^T\| 
&\leq \|I_{p^2} - \zeta_t Z_{t,c}^{-1} \otimes Z_{t,c}^{-1}\|
\\&\leq \max \left\{ \left| 1 - \frac{\zeta_t}{b^2}\right|,\left|1 - \frac{\zeta_t}{a^2}\right|\right\},
\end{align*}
where the latter inequality comes from \citep{GISTA}. Thus, 
\begin{align*}
\|\Omega_{t+1}& - \Omega_\rho^*\|_F \leq s(\zeta_t) \|\Omega_{t} - \Omega_\rho^*\|_F\\
\text{ and  } \; \; 
s(\zeta) &=  \max \left\{ \left| 1 - \frac{\zeta}{b^2}\right|,\left|1 - \frac{\zeta}{a^2}\right|\right\}
\end{align*}
as desired. Algorithm \ref{alg:lowlevel} will then converge if $s(\zeta_t) \in (0,1)$ for all $t$. The minimum of $s(\zeta)$ occurs at $\zeta = \frac{2}{a^{-2} + b^{-2}}$, completing the proof of Theorem \ref{thm:conv2}. 
\end{proofof2}

\subsection{Proof of Theorem \ref{thm:eigiter}}

\begin{proofof2}
We first prove the following properties of the Kronecker sum projection operator.
\begin{lemma}\label{lem:Pcont}
For %the projection operator $\mathrm{Proj}_{\mathcal{K}_{\mathbf{p}}}(\cdot)$ and 
any $A \in \mathbb{R}^{p\times p}$ and orthogonal matrices $U_k \in \mathbb{R}^{d_k \times d_k}$, let $U = U_1 \otimes \dots \otimes U_K \in \mathcal{K}_{\mathbf{p}}$. Then 
\[
\mathrm{Proj}_{\mathcal{K}_{\mathbf{p}}}(A) = U \mathrm{Proj}_{\mathcal{K}_{\mathbf{p}}}(U^T A U) U^T.
\]
%$A = U \Lambda U^T$ where $U = U_1 \otimes \dots \otimes U_K$,
Furthermore, if the eigendecomposition of $A$ is of the form $A = (U_1 \otimes \dots \otimes U_K) \Lambda (U_1 \otimes \dots \otimes U_K)^T$ with $\Lambda = \mathrm{diag}(\lambda_1, \dots, \lambda_p)$, we have
\[
\mathrm{Proj}_{\mathcal{K}_{\mathbf{p}}}(A) = U \mathrm{Proj}_{\mathcal{K}_{\mathbf{p}}}(\Lambda) U^T
\]
and
\[
\lambda_{\min}(A) \leq \lambda_{\min}(\mathrm{Proj}_{\mathcal{K}_{\mathbf{p}}}(A)) \leq \lambda_{\max}(\mathrm{Proj}_{\mathcal{K}_{\mathbf{p}}}(A)) \leq \lambda_{\max}(A).
\]
\end{lemma}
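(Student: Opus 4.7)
The plan is to prove the three assertions in sequence, exploiting the invariance of $\mathcal{K}_{\mathbf{p}}$ under conjugation by Kronecker products of orthogonal matrices. The first two identities are straightforward consequences of this invariance together with the variational characterization of the Frobenius projection, while the eigenvalue bound requires a separate and more delicate argument based on the explicit projection formula.

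I begin with the equivariance identity. The key structural fact is that for $U = U_1 \otimes \dots \otimes U_K$ with each $U_k$ orthogonal, conjugation $B \mapsto U B U^T$ preserves $\mathcal{K}_{\mathbf{p}}$. Indeed, by the mixed-product property of Kronecker products,
\[
U\,(I_{[d_{1:k-1}]} \otimes B_k \otimes I_{[d_{k+1:K}]})\, U^T = I_{[d_{1:k-1}]} \otimes (U_k B_k U_k^T) \otimes I_{[d_{k+1:K}]},
\]
so $U(B_1 \oplus \dots \oplus B_K) U^T = (U_1 B_1 U_1^T) \oplus \dots \oplus (U_K B_K U_K^T) \in \mathcal{K}_{\mathbf{p}}$. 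Moreover $U U^T = I_p$, so conjugation by $U$ is an isometry of $(\mathbb{R}^{p\times p}, \langle \cdot,\cdot\rangle_F)$ and therefore commutes with orthogonal projection onto the $U$-invariant subspace $\mathcal{K}_{\mathbf{p}}$. Equivalently, verifying directly: the candidate $U\,\mathrm{Proj}_{\mathcal{K}_{\mathbf{p}}}(U^T A U)\,U^T$ lies in $\mathcal{K}_{\mathbf{p}}$ by invariance, and for every $B \in \mathcal{K}_{\mathbf{p}}$,
\[
\langle A - U\,\mathrm{Proj}_{\mathcal{K}_{\mathbf{p}}}(U^T A U)\,U^T,\, B\rangle = \langle U^T A U - \mathrm{Proj}_{\mathcal{K}_{\mathbf{p}}}(U^T A U),\, U^T B U\rangle = 0,
\]
because $U^T B U \in \mathcal{K}_{\mathbf{p}}$ by the invariance just established. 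Uniqueness of the orthogonal projection then yields the first identity.

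The second identity is an immediate consequence: when $A = U \Lambda U^T$ with $U = U_1 \otimes \dots \otimes U_K$, we have $U^T A U = \Lambda$, so substituting into the first identity gives $\mathrm{Proj}_{\mathcal{K}_{\mathbf{p}}}(A) = U\, \mathrm{Proj}_{\mathcal{K}_{\mathbf{p}}}(\Lambda)\, U^T$.

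For the eigenvalue bound, the second identity reduces the problem to bounding the spectrum of $P := \mathrm{Proj}_{\mathcal{K}_{\mathbf{p}}}(\Lambda)$ for diagonal $\Lambda$, since conjugation by the orthogonal matrix $U$ preserves eigenvalues. My first observation is that $P$ itself is diagonal: adding any off-diagonal entry to $P$ strictly increases $\|P - \Lambda\|_F^2$, while $\mathcal{K}_{\mathbf{p}}$ already contains an ample supply of diagonal matrices (take each Kronecker factor diagonal), so the minimizer must be diagonal. Hence the eigenvalues of $P$ coincide with its diagonal entries. My approach is to use the explicit form $P = A_1 \oplus \dots \oplus A_K - (K-1)\,\tfrac{\mathrm{tr}(\Lambda)}{p}\, I_p$ supplied by Algorithm~\ref{alg:Proj} (equivalently, the identifiable parameterization of Lemma~\ref{lem:DO}) to write each diagonal entry of $P$ as a linear combination of the $\lambda_i$, and then argue that this combination is a convex one, from which the bound $\lambda_{\min}(\Lambda) \le P_{ii} \le \lambda_{\max}(\Lambda)$ follows pointwise. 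The main obstacle I anticipate is exactly here: the identifiability correction $-(K-1)\mathrm{tr}(\Lambda)/p$ introduces coefficients that are not manifestly nonnegative in the naive expansion, so a bare pointwise bound fails. Overcoming this will require a symmetrization argument — averaging the representation over permutations of the multi-index that are induced by the product structure and leave $\mathcal{K}_{\mathbf{p}}$ invariant — in order to re-express each $P_{ii}$ as a genuine convex combination of $\lambda_j$'s over an appropriate fiber. Completing this symmetrization is the delicate part of the argument and is the step on which the interval containment $[\lambda_{\min}(P), \lambda_{\max}(P)] \subseteq [\lambda_{\min}(\Lambda), \lambda_{\max}(\Lambda)]$ rests.
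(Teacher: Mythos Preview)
Your treatment of the first two assertions is correct and essentially coincides with the paper's: both arguments rest on the invariance of $\mathcal{K}_{\mathbf{p}}$ under conjugation by $U=U_1\otimes\cdots\otimes U_K$ together with the unitary invariance of the Frobenius inner product. The paper phrases this via the variational characterization $\mathrm{Proj}_{\mathcal{K}_{\mathbf{p}}}(A)=\arg\min_{B\in\mathcal{K}_{\mathbf{p}}}\|A-B\|_F^2=\arg\min_{B\in\mathcal{K}_{\mathbf{p}}}\|U^TAU-U^TBU\|_F^2$ and then changes variables $C=U^TBU$; your orthogonality check is an equivalent route.

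For the eigenvalue bound you have, commendably, gone further than the paper. The paper's proof simply cites Lemma~\ref{lem:Projj} and asserts that the diagonal entries of $\mathrm{Proj}_{\mathcal{K}_{\mathbf{p}}}(\Lambda)$ are ``weighted averages'' of the $\lambda_i$, then concludes. You correctly observe that the explicit formula produces an \emph{affine} combination with some negative weights (the $-(K-1)\tr(\Lambda)/p$ correction), so this inference is not justified as written, and you propose a symmetrization to repair it.

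The genuine gap is that no symmetrization can succeed: the eigenvalue inclusion is \emph{false} as stated, so the step on which you say the argument rests cannot be completed. Take $K=2$, $d_1=d_2=2$, $U=I_4$, and $\Lambda=\mathrm{diag}(1,1,1,0)$. Writing a generic diagonal element of $\mathcal{K}_{\mathbf{p}}$ as $(a_1{+}b_1,\,a_1{+}b_2,\,a_2{+}b_1,\,a_2{+}b_2)$ and solving the least-squares normal equations gives $a_1+b_1=5/4$, so $\mathrm{Proj}_{\mathcal{K}_{\mathbf{p}}}(\Lambda)=\mathrm{diag}(5/4,\,3/4,\,3/4,\,1/4)$ and $\lambda_{\max}(\mathrm{Proj}_{\mathcal{K}_{\mathbf{p}}}(\Lambda))=5/4>1=\lambda_{\max}(\Lambda)$. (Equivalently, in your coefficient analysis, the weight on $\lambda_{(2,2)}$ in the $(1,1)$ entry is $-1/4$.) Replacing $\Lambda$ by $-\Lambda$ shows the lower bound fails as well. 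Thus the obstacle you flagged is not merely a delicacy of the proof but a defect of the claim itself; any argument for the third assertion would need an additional hypothesis (for instance that $\Lambda$ already lies in $\mathcal{K}_{\mathbf{p}}$, in which case the projection is the identity and the bound is trivial), and the paper's one-line justification shares the same gap.
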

\begin{proof}
Recall
\[
\mathrm{Proj}_{\mathcal{K}_{\mathbf{p}}}(A) = \arg \min_{M \in \mathcal{K}_{\mathbf{p}}} \|A - B\|_F^2 = \arg \min_{B \in \mathcal{K}_{\mathbf{p}}} \|U^T A U - U^T B U\|_F^2
\]
since $U^T A U = \Lambda$ and the Frobenius norm is unitarily invariant. Now, note that for any matrix $B = B_1 \oplus \dots \oplus B_K \in \mathcal{K}_{\mathbf{p}}$,
\begin{align*}
(U_1 \otimes \dots &\otimes U_K)^T B (U_1 \otimes \dots \otimes U_K)\\ &= \sum_{k = 1}^K (U_1 \otimes \dots \otimes U_K)^T (I_{[d_{1:k-1}]}\otimes B_k \otimes I_{[d_{k+1}:K]}) (U_1 \otimes \dots \otimes U_K)\\
&= \sum_{k = 1}^K  I_{[d_{1:k-1}]}\otimes U_k^T B_k U_k \otimes I_{[d_{k+1}:K]}\\
&= (U_1^T B_1 U_1) \oplus \dots \oplus  (U_K^T B_K U_K)\\
&\in \mathcal{K}_{\mathbf{p}},
\end{align*}
since $U_k^T I_{d_k} U_k  = I_{d_k}$. Since $U^TB U \in \mathcal{K}_{\mathbf{p}}$, the constraint $B \in \mathcal{K}_{\mathbf{p}}$ can be moved to $C = U^T B U$, giving
\begin{align*}
\mathrm{Proj}_{\mathcal{K}_{\mathbf{p}}}(A) &= U(\arg \min_{C \in \mathcal{K}_{\mathbf{p}}} \|U^T A U - C\|_F^2)U^T\\
&= U(\mathrm{Proj}_{\mathcal{K}_{\mathbf{p}}}(U^T A U))U^T.
\end{align*}
If $A = (U_1 \otimes \dots \otimes U_K) \Lambda (U_1 \otimes \dots \otimes U_K)^T$, then $U^T A U = \Lambda$, completing the first part of the proof.
%SHOW
%Recall \eqref{?} that
%\[
%\mathrm{Proj}_{\mathcal{K}_{\mathbf{p}}}(A) = U \mathrm{Proj}_{\mathcal{K}_{\mathbf{p}}}(\Lambda) U^T
%\]
%where %$A = U \Lambda U^T$, 
%$\Lambda = \mathrm{diag}(\lambda_1, \dots, \lambda_p)$ is the eigendecomposition of $A$. 
As shown in Lemma \ref{lem:Projj}, $\mathrm{Proj}_{\mathcal{K}_{\mathbf{p}}}(\Lambda)$ is a diagonal matrix whose entries are weighted averages of the diagonal elements $\lambda_i$. Hence 
\[
\min_{i} \lambda_i \leq \min_i [\mathrm{Proj}_{\mathcal{K}_{\mathbf{p}}}(\Lambda)]_{ii} \leq \max_i [\mathrm{Proj}_{\mathcal{K}_{\mathbf{p}}}(\Lambda)]_{ii} \leq \max_i \lambda_i.
\]
Since $\mathrm{Proj}_{\mathcal{K}_{\mathbf{p}}}(\Lambda)$ gives the eigenvalues of $\mathrm{Proj}_{\mathcal{K}_{\mathbf{p}}}(A)$ by the orthogonality of $U$, this completes the proof. 
\qed
\end{proof}

%Property \ref{prop:eigP} in Appendix A states that

%%
\begin{lemma}\label{lem:4}
Let $0 < a < b$ be given positive constants and let $\zeta_t > 0$. Assume $a I \preceq \Omega_t \preceq bI$. Then for 
\[
\Omega_{t+1/2} := \Omega_t - \zeta_t( \mathrm{Proj}_{\mathcal{K}_{\mathbf{p}}}(\hat{S} -\Omega_t^{-1}))
\]
we have
\begin{align*}
\lambda_{\min}(\Omega_{t+1/2}) \geq \left\{\begin{array}{lc} 2\sqrt{\zeta_t} -\zeta_t \lambda_{\max}(\hat{S}) & \mathrm{if} \: a \leq \sqrt{\zeta_t} \leq b\\
\min\left(a + \frac{\zeta_t}{a}, b + \frac{\zeta_t}{b}\right) - \zeta_t \lambda_{\max}(\hat{S}) & \mathrm{o.w.} \end{array}\right. 
\end{align*}
%and
%\[
%\lambda_{\max}(\Omega_{t+1/2}) \leq \max \left(a + \frac{\zeta_t}{a}, b + \frac{\zeta_t}{b}\right) - \zeta_t \lambda_{\min}(\hat{S}).
%\]
\end{lemma}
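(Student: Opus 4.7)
The plan is to decompose $\Omega_{t+1/2}$ using linearity of the projection operator, collapse $\Omega_t + \zeta_t \mathrm{Proj}_{\mathcal{K}_{\mathbf{p}}}(\Omega_t^{-1})$ into a single projection of a matrix having Kronecker-product eigenstructure, and then reduce the eigenvalue analysis to optimizing the scalar function $f(x) = x + \zeta_t/x$ over $[a,b]$ in the manner of \cite{GISTA}.

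First I would use linearity of the projection and the fact that $\Omega_t \in \mathcal{K}_{\mathbf{p}}$ (so $\mathrm{Proj}_{\mathcal{K}_{\mathbf{p}}}(\Omega_t) = \Omega_t$) to write
\[
\Omega_{t+1/2} \;=\; \mathrm{Proj}_{\mathcal{K}_{\mathbf{p}}}\!\left(\Omega_t + \zeta_t \Omega_t^{-1}\right) \;-\; \zeta_t \mathrm{Proj}_{\mathcal{K}_{\mathbf{p}}}(\hat{S}).
\]
Because $\Omega_t = \Psi_1^t \oplus \dots \oplus \Psi_K^t$, the factor-wise eigendecompositions yield $\Omega_t = U\Lambda U^T$ with $U = U_1 \otimes \dots \otimes U_K$ and $\Lambda$ diagonal containing Kronecker-sum eigenvalues in $[a,b]$. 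Thus $\Omega_t + \zeta_t \Omega_t^{-1} = U(\Lambda + \zeta_t \Lambda^{-1})U^T$ has precisely the Kronecker-product eigenvector form required by Lemma~\ref{lem:Pcont}, and its eigenvalues are exactly $\{f(\lambda_j)\}_j$ for $f(x) = x + \zeta_t/x$.

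Applying Lemma~\ref{lem:Pcont}'s eigenvalue inequalities to the first projection and to $\mathrm{Proj}_{\mathcal{K}_{\mathbf{p}}}(\hat{S})$ (whose eigenvalues are sandwiched between those of $\hat{S}$), followed by Weyl's inequality for the sum, gives
\[
\lambda_{\min}(\Omega_{t+1/2}) \;\ge\; \min_{x \in [a,b]} f(x) \;-\; \zeta_t \lambda_{\max}(\hat{S}), \qquad
\lambda_{\max}(\Omega_{t+1/2}) \;\le\; \max_{x \in [a,b]} f(x) \;-\; \zeta_t \lambda_{\min}(\hat{S}).
\]
A routine scalar calculation finishes the proof: $f$ is convex with a unique minimum at $x = \sqrt{\zeta_t}$ where $f(\sqrt{\zeta_t}) = 2\sqrt{\zeta_t}$, so $\min_{[a,b]} f = 2\sqrt{\zeta_t}$ when $\sqrt{\zeta_t} \in [a,b]$ and equals $\min(a + \zeta_t/a,\, b + \zeta_t/b)$ otherwise, while the maximum is always attained at an endpoint, giving $\max(a + \zeta_t/a,\, b + \zeta_t/b)$. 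These are exactly the bounds claimed.

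The main conceptual step, and the only place where the proof departs from the unstructured GLasso argument, is the algebraic identity $\Omega_t + \zeta_t \mathrm{Proj}_{\mathcal{K}_{\mathbf{p}}}(\Omega_t^{-1}) = \mathrm{Proj}_{\mathcal{K}_{\mathbf{p}}}(\Omega_t + \zeta_t \Omega_t^{-1})$ combined with the verification that $\Omega_t + \zeta_t \Omega_t^{-1}$ inherits the Kronecker-product eigenbasis of $\Omega_t$. Without this observation, one would only have the naive bound $d_j \in [1/b, 1/a]$ on the diagonal of $\mathrm{Proj}_{\mathcal{K}_{\mathbf{p}}}(\Lambda^{-1})$, which does not exploit the correlation between $\lambda_j$ and its corresponding projected reciprocal and is too weak to recover the $2\sqrt{\zeta_t}$ bound. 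Once the identity is in place, Lemma~\ref{lem:Pcont} does all the heavy lifting and the remainder is elementary one-variable optimization.
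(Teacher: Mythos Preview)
Your approach matches the paper's almost exactly: both rewrite $\Omega_{t+1/2}$ via linearity of the projection, exploit the Kronecker-product eigenbasis $U = U_1\otimes\dots\otimes U_K$ of $\Omega_t$ so that $\Omega_t + \zeta_t\Omega_t^{-1}=U(\Gamma+\zeta_t\Gamma^{-1})U^T$, invoke Lemma~\ref{lem:Pcont}, and finish with the elementary optimization of $f(x)=x+\zeta_t/x$ over $[a,b]$.

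The one structural difference is the order in which Weyl's inequality and Lemma~\ref{lem:Pcont} are applied. The paper first uses the conjugation identity in Lemma~\ref{lem:Pcont} (the part valid for \emph{any} $A$) to absorb $\hat S$ into a single projection,
\[
\Omega_{t+1/2}=\mathrm{Proj}_{\mathcal{K}_{\mathbf p}}\bigl(\tilde\Omega_{t+1/2}\bigr),\qquad \tilde\Omega_{t+1/2}=U\bigl(\Gamma+\zeta_t\Gamma^{-1}-\zeta_t U^T\hat S U\bigr)U^T,
\]
applies Weyl to the unprojected $\tilde\Omega_{t+1/2}$, and only then invokes the eigenvalue sandwich of Lemma~\ref{lem:Pcont} once. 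You instead keep $\mathrm{Proj}_{\mathcal{K}_{\mathbf p}}(\Omega_t+\zeta_t\Omega_t^{-1})$ and $\mathrm{Proj}_{\mathcal{K}_{\mathbf p}}(\hat S)$ separate, apply the sandwich to each, and then use Weyl. That forces you to claim that the eigenvalues of $\mathrm{Proj}_{\mathcal{K}_{\mathbf p}}(\hat S)$ are sandwiched between those of $\hat S$; but the eigenvalue-sandwich conclusion of Lemma~\ref{lem:Pcont} is stated only for matrices whose eigenvectors are a Kronecker product $U_1\otimes\dots\otimes U_K$, a hypothesis the sample covariance $\hat S$ has no reason to satisfy. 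Folding $\hat S$ into the single projection before applying Weyl, as the paper does, is the intended route and removes the need to argue the sandwich for $\hat S$ in isolation.
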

\begin{proof}
Let $U\Gamma U^T = \Omega_t$ be the eigendecomposition of $\Omega_t$,
where $\Gamma = \mathrm{diag}(\gamma_1, \dots, \gamma_p)$. Then all $b
\geq \gamma_i \geq a > 0$. Since $\Omega_t \in
\mathcal{K}_{\mathbf{p}}$, by the eigendecomposition property in
Appendix \ref{App:Ident} 
we have $U = U_1 \otimes \dots \otimes U_K$ and $\Gamma \in \mathcal{K}_{\mathbf{p}}$, letting us apply Lemma \ref{lem:Pcont}: 
%MAKE SURE S DEFINED APRROPRIATELY EVERYWHERE
%THIS ASSUMES S ALREAY PROJECTED. SHOULD FLOW FROM PROX OPERATOR
\begin{align*}
\Omega_{t+1/2} &= \Omega_t - \zeta_t(\mathrm{Proj}_{\mathcal{K}_{\mathbf{p}}}(\hat{S}) - \mathrm{Proj}_{\mathcal{K}_{\mathbf{p}}}(\Omega_t^{-1}))\\
&= U \Gamma U^T - \zeta_t (\mathrm{Proj}_{\mathcal{K}_{\mathbf{p}}}(\hat{S}) - U \mathrm{Proj}_{\mathcal{K}_{\mathbf{p}}}(\Gamma^{-1})U^T)\\
&= U\left(\Gamma - \zeta_t(U^T \mathrm{Proj}_{\mathcal{K}_{\mathbf{p}}}(\hat{S}) U -\mathrm{Proj}_{\mathcal{K}_{\mathbf{p}}}(\Gamma^{-1}))\right)U^T\\
&= U\left(\mathrm{Proj}_{\mathcal{K}_{\mathbf{p}}}(\Gamma) - \zeta_t\left(\mathrm{Proj}_{\mathcal{K}_{\mathbf{p}}}(U^T \hat{S} U) -\mathrm{Proj}_{\mathcal{K}_{\mathbf{p}}}(\Gamma^{-1})\right)\right)U^T\\
&= \mathrm{Proj}_{\mathcal{K}_{\mathbf{p}}}\left(U(\Gamma + \zeta \Gamma^{-1} - \zeta_t(U^T \hat{S} U))U^T\right)\\
&=\mathrm{Proj}_{\mathcal{K}_{\mathbf{p}}}(\tilde{\Omega}_{t+1/2}),
\end{align*}
where we set $\tilde{\Omega}_{t+1/2} = U(\Gamma + \zeta \Gamma^{-1} - \zeta_t(U^T \hat{S} U))U^T$ and recall the linearity of the projection operator $\mathrm{Proj}_{\mathcal{K}_{\mathbf{p}}}(\cdot)$ (Lemma \ref{lem:Projj}). %, and recall $\Gamma \in \mathcal{K}_{\mathbf{p}}$. 
By Weyl's inequality, %for 
\begin{align*}
\gamma_1 + \frac{\zeta_t}{\gamma_1} - \zeta_t \lambda_{\max}(\hat{S}) \leq \lambda_{\min}(\tilde{\Omega}_{t + 1/2}). % \leq \lambda_{\max}(\tilde{\Omega}_{t+1/2}) \leq \gamma_p + \frac{\zeta_t}{\gamma_p} - \zeta_t \lambda_{\min}(\hat{S}).
\end{align*}
By Lemma \ref{lem:Pcont},
\begin{align*}
\gamma_1 + \frac{\zeta_t}{\gamma_1} - \zeta_t \lambda_{\max}(\hat{S}) \leq \lambda_{\min}({\Omega}_{t + 1/2}). % \leq \lambda_{\max}({\Omega}_{t+1/2}) \leq \gamma_p + \frac{\zeta_t}{\gamma_p} - \zeta_t \lambda_{\min}(\hat{S}).
\end{align*}
%By construction of the diagonal matrix $\mathcal{P}(\mathrm{diag}(g_1, \dots, g_p))$, we have for all $i = 1, \dots, p$
%\[
% \min_j g_{j} \leq [\mathcal{P}(\mathrm{diag}(g_1, \dots, g_p))]_{ii} \leq \max_j \gamma^{-1}_{j} = \frac{1}{\gamma_1}.
%\]
%Hence by Weyl's inequality,
Note that the only extremum of the function $f(x) = x + \frac{\zeta_t}{x}$ over $a \leq x \leq b$ is a global minimum at $x = \sqrt{\zeta_t}$. Hence 
\begin{align*}
\inf_{a \leq x \leq b} x + \frac{\zeta_t}{x} &= \left\{\begin{array}{lc} 2 \sqrt{\zeta_t} & \mathrm{if} \: a \leq \sqrt{\zeta_t} \leq b\\
\min \left(a + \frac{\zeta_t}{a}, b + \frac{\zeta_t}{b}\right)& \mathrm{o.w.} \end{array}\right. %\\
%\sup_{a \leq x \leq b} x + \frac{\zeta_t}{x} &=  \max \left(a + \frac{\zeta_t}{a}, b + \frac{\zeta_t}{b}\right).
\end{align*}
By our assumption, $a \leq \gamma_1 \leq b$. Thus
\begin{align*}
\lambda_{\min}({\Omega}_{t + 1/2}) &\geq \left\{\begin{array}{lc} 2 \sqrt{\zeta_t}- \zeta_t \lambda_{\max}(\hat{S}) & \mathrm{if} \: a \leq \sqrt{\zeta_t} \leq b\\
\min \left(a + \frac{\zeta_t}{a}, b + \frac{\zeta_t}{b}\right)- \zeta_t \lambda_{\max}(\hat{S})& \mathrm{o.w.} \end{array}\right. %\\
%\lambda_{\max}(\Omega_{t+1/2}) & \leq \max \left(a + \frac{\zeta_t}{a}, b + \frac{\zeta_t}{b}\right) - \zeta_t \lambda_{\min}(\hat{S}).
\end{align*}
as desired, completing the proof.
\qed
\end{proof}

We then have the following lemma.
\begin{lemma}\label{lem:6}
For $A  \in \mathcal{K}_{\mathbf{p}}^{\sharp}$ and $\mathbf{\epsilon} = [\epsilon_1, \dots, \epsilon_K]$ with $\epsilon_k \geq 0$:
\[
\lambda_{\min} (A) - \sum_{k = 1}^K d_k \epsilon_k \leq \lambda_{\min}(\mathrm{shrink}^{-}_{\epsilon}(A))
\]
\end{lemma}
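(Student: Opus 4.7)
The plan is to write $A-\eta_\epsilon(A)$ explicitly as a Kronecker sum of small perturbations, one per factor, bound the spectral norm of each perturbation, sum them via the triangle inequality, and finish with Weyl's inequality.

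First I would unpack the proximal operator factorwise using Lemma~\ref{lem:decomp}. Writing $A = \Psi_1 \oplus \cdots \oplus \Psi_K$, the minimizer $\eta_\epsilon(A)$ decouples into (i) an unpenalized quadratic update on the diagonal, which by \eqref{eq:digg} leaves $\diag(\eta_\epsilon(A)) = \diag(A)$, and (ii) $K$ independent off-diagonal soft-thresholdings, one per factor. Hence I may choose factors so that $\eta_\epsilon(A) = \tilde\Psi_1 \oplus \cdots \oplus \tilde\Psi_K$ with $\diag(\tilde\Psi_k) = \diag(\Psi_k)$ and $\offd(\tilde\Psi_k)$ obtained from $\offd(\Psi_k)$ by entrywise soft-thresholding at level $\epsilon_k$. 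Setting $\Delta_k := \Psi_k - \tilde\Psi_k$, the elementary scalar bound $|x - \mathrm{sign}(x)\max(|x|-\alpha,0)| \le \alpha$ gives $[\Delta_k]_{ii}=0$ and $|[\Delta_k]_{ij}| \le \epsilon_k$ for all $i \ne j$.

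Next I would bound each $\|\Delta_k\|_2$. Since $\Delta_k$ is symmetric with zero diagonal and off-diagonal entries of absolute value at most $\epsilon_k$, Gershgorin's disk theorem (equivalently, the max-row-sum bound for symmetric matrices) yields $\|\Delta_k\|_2 \le (d_k-1)\epsilon_k \le d_k\,\epsilon_k$. Using
\[
A - \eta_\epsilon(A) \;=\; \Delta_1 \oplus \cdots \oplus \Delta_K \;=\; \sum_{k=1}^K I_{[d_{1:k-1}]} \otimes \Delta_k \otimes I_{[d_{k+1:K}]}
\]
together with $\|I \otimes M \otimes I\|_2 = \|M\|_2$ and the triangle inequality on the spectral norm, I obtain
\[
\|A - \eta_\epsilon(A)\|_2 \;\le\; \sum_{k=1}^K \|\Delta_k\|_2 \;\le\; \sum_{k=1}^K d_k\,\epsilon_k.
\]
Weyl's inequality then gives $\lambda_{\min}(\eta_\epsilon(A)) \ge \lambda_{\min}(A) - \|A - \eta_\epsilon(A)\|_2 \ge \lambda_{\min}(A) - \sum_{k=1}^K d_k\,\epsilon_k$, which is the claim.

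The one delicate point — and the step I would verify most carefully — is the trace nonidentifiability of the individual factors $\Psi_k$ in $\mathcal{K}_{\mathbf{p}}$ (cf.\ Lemma~\ref{lem:DO}): any representation $A = \Psi_1 \oplus \cdots \oplus \Psi_K$ is unique only up to equal-and-opposite shifts of the $\tr(\Psi_k)$'s, so one must check that the construction of $\Delta_k$ (and hence the bound above) is insensitive to this choice. This is immediate, because such shifts affect only the (unpenalized, trivially updated) diagonals, so the proximal map acts only on the identifiable off-diagonal parts of each factor, and the Kronecker sum $\Delta_1 \oplus \cdots \oplus \Delta_K$ — which is what enters the spectral norm bound — is canonical even when the individual $\Delta_k$'s are not.
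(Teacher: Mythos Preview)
Your argument is correct. It differs from the paper's proof in one structural respect: after writing $\eta_\epsilon(A)=\eta_{1,\epsilon_1}(A^{(1)})\oplus\cdots\oplus\eta_{K,\epsilon_K}(A^{(K)})$, the paper invokes the \emph{exact} eigenvalue additivity of Kronecker sums, namely $\lambda_{\min}(B_1\oplus\cdots\oplus B_K)=\sum_k\lambda_{\min}(B_k)$, applies the factorwise bound $\lambda_{\min}(\eta_{k,\epsilon_k}(A^{(k)}))\ge\lambda_{\min}(A^{(k)})-d_k\epsilon_k$ (citing Lemma~6 of \cite{GISTA}), and then sums, using eigenvalue additivity once more for $A$ itself. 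You instead bound the global perturbation $\|A-\eta_\epsilon(A)\|_2$ via Gershgorin on each $\Delta_k$ plus the spectral-norm triangle inequality for the Kronecker sum, and apply Weyl a single time at the end. Your route is a touch more elementary---it needs only the subadditivity $\|B_1\oplus\cdots\oplus B_K\|_2\le\sum_k\|B_k\|_2$ rather than the full eigenvalue structure of Kronecker sums---while the paper's route is more structural and makes the factorwise contributions explicit. Both yield the same bound with the same constants; your discussion of the trace nonidentifiability is a nice addition that the paper leaves implicit.
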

\begin{proof}
Since by definition \eqref{eq:shrinkOp}
\[
\mathrm{shrink}^{-}_{\epsilon}(A) = \mathrm{shrink}^{-}_{\epsilon_1}(A^{(1)}) \oplus \dots \oplus \mathrm{shrink}^{-}_{\epsilon_K}(A^{(K)}),
\]
we can use the fact that the eigenvalues of a Kronecker sum are the sums of the eigenvalues to show
\[
\lambda_{\min}(\mathrm{shrink}^{-}_{\epsilon}(A)) = \sum_{k = 1}^K \lambda_{\min}(\mathrm{shrink}^{-}_{\epsilon_k}(A^{(k)})).
\]
We have used the fact that $A$ is positive definite since it is in $\mathcal{K}_{\mathbf{p}}^{\sharp}$.

Via Weyl's inequality and the proof of Lemma 6 in \citep{GISTA}, 
$$\lambda_{\min}(\mathrm{shrink}^{-}_{\epsilon_k}(A^{(k)})) \geq
\lambda_{\min}(A^{(k)}) - d_k \epsilon_k.$$ 
Hence,
\[
\lambda_{\min}(\mathrm{shrink}^{-}_{\epsilon}(A)) \geq \sum_{k = 1}^K \lambda_{\min} (A^{(k)}) - \sum_{k = 1}^K d_k \epsilon_k = \lambda_{\min} (A) - \sum_{k = 1}^K d_k \epsilon_k 
\]

\qed
\end{proof}

\subsubsection{Proof of Theorem \ref{thm:eigiter}}

To prove the lower inequality in Theorem \ref{thm:eigiter}, we show the following. 
\begin{lemma}\label{lem:7}
Let $\rho = [\rho_1, \dots, \rho_K]$ with all $\rho_i > 0$. Define
\[
\chi = \sum_{k = 1}^K d_k \rho_k
\]
 and let $\alpha = \frac{1}{\|\hat{S}\|_2 + \chi} < b'$. Assume $\alpha I \preceq \Omega_{t+1}$. Then $\alpha I \preceq \Omega_{t+1}$ for every $0 < \zeta_t < \alpha^2$.
\end{lemma}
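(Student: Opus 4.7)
The plan is to prove this by combining Lemma~\ref{lem:4} with Lemma~\ref{lem:6}, exploiting the algebraic fact that the definition of $\alpha$ is designed precisely to make the two negative contributions cancel. First I would interpret the statement (which appears to have a typographical error): the hypothesis should read $\alpha I \preceq \Omega_t$, and we want to conclude $\alpha I \preceq \Omega_{t+1}$, providing the inductive step needed elsewhere in the proof of Theorem~\ref{thm:eigiter}.

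Next I would write out the update as a composition
\begin{equation*}
\Omega_{t+1} = \eta_{\zeta_t\rho}(\Omega_{t+1/2}), \qquad \Omega_{t+1/2} := \Omega_t - \zeta_t\,\mathrm{Proj}_{\mathcal{K}_{\mathbf{p}}}\bigl(\hat S - \Omega_t^{-1}\bigr).
\end{equation*}
Applying Lemma~\ref{lem:6} with shrinkage levels $\epsilon_k = \zeta_t\rho_k$ yields
\begin{equation*}
\lambda_{\min}(\Omega_{t+1}) \geq \lambda_{\min}(\Omega_{t+1/2}) - \zeta_t \sum_{k=1}^K d_k\rho_k = \lambda_{\min}(\Omega_{t+1/2}) - \zeta_t\chi.
\end{equation*}
It remains to lower-bound $\lambda_{\min}(\Omega_{t+1/2})$ in terms of $\alpha$.

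For that I would re-use the Weyl-inequality step inside the proof of Lemma~\ref{lem:4}. The statement of Lemma~\ref{lem:4} requires a two-sided bound $aI\preceq \Omega_t \preceq bI$, but examining the argument one sees that the lower estimate $\gamma_1 + \zeta_t/\gamma_1 - \zeta_t\|\hat S\|_2$ on $\lambda_{\min}(\Omega_{t+1/2})$ depends only on $\gamma_1 := \lambda_{\min}(\Omega_t)$. Since $\gamma_1 \geq \alpha$ and (by the hypothesis $\zeta_t < \alpha^2$) we have $\sqrt{\zeta_t}<\alpha\leq\gamma_1$, the map $x\mapsto x+\zeta_t/x$ is increasing at $\gamma_1$, whence $\gamma_1 + \zeta_t/\gamma_1 \geq \alpha + \zeta_t/\alpha$. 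Chaining everything together gives
\begin{equation*}
\lambda_{\min}(\Omega_{t+1}) \geq \alpha + \frac{\zeta_t}{\alpha} - \zeta_t\|\hat S\|_2 - \zeta_t\chi = \alpha + \zeta_t\left(\tfrac{1}{\alpha} - \|\hat S\|_2 - \chi\right) = \alpha,
\end{equation*}
where the last equality is precisely the defining identity $1/\alpha = \|\hat S\|_2 + \chi$.

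The whole argument is essentially an algebraic verification, so there is no serious obstacle. The only point requiring care is observing that the upper-bound half of Lemma~\ref{lem:4} is not needed here — otherwise one would have to carry along an upper bound $b$ inductively too, which the statement does not provide. I would therefore either quote the proof of Lemma~\ref{lem:4} selectively or redo its short Weyl-inequality computation in one line to make it clear that only the lower bound $\alpha I \preceq \Omega_t$ is used.
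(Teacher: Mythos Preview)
Your proposal is correct and follows essentially the same route as the paper: apply Lemma~\ref{lem:4} (or its Weyl-inequality core) to bound $\lambda_{\min}(\Omega_{t+1/2})$ from below by $\alpha+\zeta_t/\alpha-\zeta_t\|\hat S\|_2$, then apply Lemma~\ref{lem:6} to absorb the shrinkage cost $\zeta_t\chi$, and finally invoke the defining identity $1/\alpha=\|\hat S\|_2+\chi$. Your observation that the upper bound $b'$ in Lemma~\ref{lem:4} is not actually needed for this step (the paper invokes the full lemma and carries $b'$ along) is correct and a mild sharpening of the presentation, but the underlying argument is the same.
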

\begin{proof}
Since $\zeta_t < \alpha^2$, $\sqrt{\zeta_t} \notin [\alpha,b']$, and $\min\left(\alpha + \frac{\zeta_t}{\alpha}, b' + \frac{\zeta_t}{b'}\right) = \alpha + \frac{\zeta_t}{\alpha}$. % if and only if $\zeta_t \leq ab$.
Lemma \ref{lem:4} then implies that
\begin{align*}
\lambda_{\min}(\Omega_{t + 1/2}) &\geq \min\left(\alpha + \frac{\zeta_t}{\alpha}, b' + \frac{\zeta_t}{b'}\right) - \zeta_t \lambda_{\max}(\hat{S})\\
&= \alpha + \frac{\zeta_t}{\alpha} - \zeta_t \lambda_{\max}(\hat{S}).
\end{align*}
By Lemma \ref{lem:6}, %since $\Omega_{t+1} = \eta_{\zeta_t \rho}(\Omega_{t + 1/2})$
\begin{align*}
\lambda_{\min}(\Omega_{t+1}) &= \lambda_{\min}\left(\mathrm{shrink}^{-}_{\zeta_t \rho}(\Omega_{t + 1/2})\right)\\
& \geq \lambda_{\min}(\Omega_{t+1/2}) - \zeta_t \chi\\
&\geq \alpha + \frac{\zeta_t}{\alpha} - \zeta_t \lambda_{\max}(\hat{S}) - \zeta_t \chi.
\end{align*}
Hence, since $\zeta_t > 0$, $\lambda_{\min}(\Omega_{t+1}) \geq \alpha$ whenever
\begin{align*}
\zeta_t\left(\frac{1}{\alpha} - \lambda_{\max}(\hat{S}) - \chi\right) &\geq 0\\
\frac{1}{\alpha} - \lambda_{\max}(\hat{S}) - \chi &\geq 0 \\
\alpha \leq \frac{1}{\|\hat{S}\|_2 + \chi}.
\end{align*}
\qed
\end{proof}

The upper bound in Theorem \ref{thm:eigiter} results from the following lemma. 
\begin{lemma}
Let $\chi$ be as in Lemma \ref{lem:7} and let $\alpha = \frac{1}{\|\hat{S}\|_2 + \chi}$. Let $\zeta_t \leq \alpha^2$ for all $t$. We then have $\Omega_t \preceq b' I$ for all $t$ when $b' = \|\Omega_\rho^*\|_2 + \| \Omega_0 - \Omega_\rho^* \|_F$.
\end{lemma}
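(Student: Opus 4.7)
The plan is a joint induction on $t$ establishing both $\alpha I \preceq \Omega_t$ and $\Omega_t \preceq b' I$, where the lower bound comes from Lemma \ref{lem:7} (already proved) and the upper bound is obtained by combining the non-expansiveness of the TG-ISTA iteration (Theorem \ref{thm:conv}) with the triangle inequality. The base case for the upper bound is immediate: $\|\Omega_0\|_2 \le \|\Omega_\rho^*\|_2 + \|\Omega_0 - \Omega_\rho^*\|_2 \le \|\Omega_\rho^*\|_2 + \|\Omega_0 - \Omega_\rho^*\|_F = b'$.

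For the inductive step, I assume $\alpha I \preceq \Omega_t \preceq b' I$. Before invoking Theorem \ref{thm:conv}, I also need $\alpha I \preceq \Omega_\rho^* \preceq b' I$: the upper bound is automatic from the definition of $b'$, while the lower bound follows from the KKT conditions for $Q$ on $\mathcal K_{\mathbf p}$ (the subgradient of the $\ell_1$ penalty restricted to $\mathcal K_{\mathbf p}$ has spectral norm at most $\chi = \sum_k d_k\rho_k$, so $\|(\Omega_\rho^*)^{-1}\|_2 \le \|\hat S\|_2 + \chi = \alpha^{-1}$). Equivalently, one can treat $\Omega_\rho^*$ as a fixed point of the prox iteration and re-run the argument of Lemma \ref{lem:7} to itself. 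With this in hand, Theorem \ref{thm:conv} applies with $a=\alpha$ and $b=b'$; the step-size hypothesis $\zeta_t \le \alpha^2$ yields
\begin{equation*}
s(\zeta_t) = \max\Bigl\{\bigl|1-\tfrac{\zeta_t}{b'^{\,2}}\bigr|,\,\bigl|1-\tfrac{\zeta_t}{\alpha^2}\bigr|\Bigr\} \le 1,
\end{equation*}
so $\|\Omega_{t+1}-\Omega_\rho^*\|_F \le \|\Omega_t - \Omega_\rho^*\|_F$.

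Iterating the contraction from step $0$ gives $\|\Omega_{t+1}-\Omega_\rho^*\|_F \le \|\Omega_0-\Omega_\rho^*\|_F$, and then the triangle inequality together with $\|\cdot\|_2 \le \|\cdot\|_F$ yields
\begin{equation*}
\|\Omega_{t+1}\|_2 \le \|\Omega_\rho^*\|_2 + \|\Omega_{t+1}-\Omega_\rho^*\|_F \le \|\Omega_\rho^*\|_2 + \|\Omega_0-\Omega_\rho^*\|_F = b',
\end{equation*}
i.e.\ $\Omega_{t+1} \preceq b' I$. The lower bound $\alpha I \preceq \Omega_{t+1}$ is then closed by applying Lemma \ref{lem:7} under the inductive hypothesis, completing the induction.

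The main obstacle is the intertwining of the two bounds: Theorem \ref{thm:conv} needs a two-sided spectral sandwich of $\Omega_t$ and $\Omega_\rho^*$, while Lemma \ref{lem:7} itself requires the upper bound to conclude that $\sqrt{\zeta_t}\notin[\alpha,b']$ and hence that $\min(\alpha+\zeta_t/\alpha,\,b'+\zeta_t/b')=\alpha+\zeta_t/\alpha$. The key observation that makes the simultaneous induction close is that the non-expansiveness given by Theorem \ref{thm:conv} depends on the spectral bounds only at the current step and at the fixed point, and that a contraction bound of exactly $1$ (rather than $<1$) is sufficient to propagate $\|\Omega_t-\Omega_\rho^*\|_F \le \|\Omega_0-\Omega_\rho^*\|_F$ forever; no summability or strict-contraction argument is needed here.
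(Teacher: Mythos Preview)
Your proof is correct and follows essentially the same route as the paper: use the lower eigenvalue bound from Lemma~\ref{lem:7}, apply the one-step contraction estimate of Theorem~\ref{thm:conv} with $a=\alpha$ to get $\|\Omega_{t+1}-\Omega_\rho^*\|_F\le\|\Omega_t-\Omega_\rho^*\|_F$, and then read off $\|\Omega_t\|_2\le b'$ from the triangle inequality and $\|\cdot\|_2\le\|\cdot\|_F$.

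There are two small differences worth noting. First, the paper obtains $\alpha I\preceq\Omega_\rho^*$ by invoking convergence $\Omega_t\to\Omega_\rho^*$, whereas you obtain it directly from the KKT conditions; your argument is self-contained and avoids the mild circularity of appealing to convergence before the eigenvalue sandwich (and hence Theorem~\ref{thm:conv}) has been secured. Second, the paper treats Lemma~\ref{lem:7} as furnishing $\alpha I\preceq\Omega_t$ outright, while you correctly observe that its proof, via Lemma~\ref{lem:4}, is stated under a two-sided hypothesis $\alpha I\preceq\Omega_t\preceq b'I$, and you therefore run a joint induction. In fact the upper bound is not actually needed for the lower-eigenvalue conclusion of Lemma~\ref{lem:4} once $\sqrt{\zeta_t}<\alpha$ (the function $x+\zeta_t/x$ is increasing on $[\alpha,\infty)$), so the paper's decoupled argument can be made rigorous; but your simultaneous induction is the cleaner way to present it and makes the logical dependencies explicit.
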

\begin{proof}
By Lemma \ref{lem:7}, $\alpha I \preceq \Omega_t$ for every $t$. Since $\Omega_t \rightarrow \Omega_\rho^*$, by strong convexity $\alpha I \preceq \Omega_\rho^*$. Hence $a = \min \{\lambda_{\min}(\Omega_t), \lambda_{\min}(\Omega^*_\rho)\} \geq \alpha$. For $b > a$ and $\zeta_t \leq \alpha^2$, 
\[
\max \left\{\left|1 - \frac{\zeta_t}{b^2}\right|, \left| 1- \frac{\zeta_t}{a^2}\right| \right\} \leq 1.
\]
Hence, by Theorem \ref{thm:conv} $\|\Omega_t - \Omega^*_\rho \|_F \leq \|\Omega_{t-1} - \Omega^*_\rho  \|_F \leq \|\Omega_0 - \Omega^*_\rho \|_F$. Thus 
\[
\|\Omega_t\|_2 - \|\Omega_\rho^*\|_2 \leq \|\Omega_t - \Omega_\rho^*\|_2 \leq \|\Omega_t - \Omega_\rho^*\|_F \leq \|\Omega_0 - \Omega_\rho^*\|_F
\]
so
\[
\|\Omega_t\|_2 \leq \|\Omega_\rho^*\|_2 + \|\Omega_0 - \Omega_\rho^*\|_F.
\]
\qed
\end{proof}
This completes the proof of Theorem \ref{thm:eigiter}.
\end{proofof2}

\section{Useful Properties of the Kronecker Sum and $\mathcal{K}_{\mathbf{p}}$}
\label{App:Ident}
%TALK ABOUT LOGDET TERM, AND COMPARE TO GEMINI "APPROXIMATION". DECOMPS, ETC. HOW MAKES ROBUST?
%Let $\mathrm{diag}(M)$, and $\offd(M) = M - \diag(M)$ denote the diagonal and off diagonal portions of a matrix. 
\subsection{Basic Properties}
\label{App:BasicProp}
As the properties of Kronecker sums are not always widely known, we have compiled a list of some fundamental algebraic relations we use.
\begin{enumerate}
\item Sum or difference of Kronecker sums \citep{laub2005matrix}:
\begin{align*}
c_A(A_1 \oplus &\dots \oplus A_K) + c_B(B_1 \oplus \dots \oplus B_K) \\&= (c_A A_1 + c_B B_1) \oplus \dots \oplus (c_A A_K + c_B B_K).
\end{align*}
\item \label{prop:disj} Factor-wise disjoint off diagonal support \citep{laub2005matrix}. By construction, if for any $k$ and $i \neq j$
\[
[I_{[d_{1:k-1}]} \otimes A_k \otimes I_{[d_{k+1:K}]}]_{ij} \neq 0,
\]
then for all $\ell \neq k$
\[
[I_{[d_{1:\ell-1}]} \otimes A_{\ell} \otimes I_{[d_{\ell+1:K}]}]_{ij} = 0.
\]
Thus,
\[
|A_1 \oplus \dots \oplus A_K|_{1,\mathrm{off}} = \sum_{k=1}^K |I_{[d_{1:k-1}]} \otimes \offd(A_k) \otimes I_{[d_{k+1:K}]}|_1 = \sum_{k=1}^K m_k |A_k|_{1,\mathrm{off}}.
\]
%\item Projection

\item \label{prop:eig} Eigendecomposition: If $A_k = U_k \Lambda_k U_k^T$ are the eigendecompositions of the factors, then \citep{laub2005matrix}
\[
A_1 \oplus \dots \oplus A_K = (U_1 \otimes \dots \otimes U_K) (\Lambda_1 \oplus \dots \oplus \Lambda_K)(U_1 \otimes \dots \otimes U_K)^T
\]
is the eigendecomposition of $A_1 \oplus \dots \oplus A_K$. Some resulting identities useful for doing numerical calculations are as follows:
\begin{enumerate}
\item 
\label{prop:L2} 
L2 norm:
\begin{align*}
\|A_1 \oplus \dots \oplus A_K \|_2 &= \max \left(\sum_{k = 1}^K \max_i [\Lambda_k]_{ii}, -\sum_{k = 1}^K \min_i [\Lambda_k]_{ii} \right) \\&\leq \sum_{k = 1}^K \|A_k\|_2.
\end{align*}
%\item Frobenius norm (method 2) via $p$-term sum: 
%\[
%\|A_1 \oplus \dots \oplus A_K \|_F^2 = \underbrace{\sum_{i_1 = 1}^{d_1} \dots \sum_{i_K =1}^{d_K}}_{\mathrm{K \: sums}} \left(\underbrace{[\Lambda_1]_{i_1i_1} + \dots + [\Lambda_K]_{i_Ki_K}}_{\mathrm{K \: terms}}\right)^2.
%\]
\item Determinant:
\begin{align*}
 \log |A_1 \oplus \dots \oplus A_K| &= \log |\Lambda_1 \oplus \dots \oplus \Lambda_K| \\=& \underbrace{\sum_{i_1 = 1}^{d_1} \dots \sum_{i_K =1}^{d_K}}_{\mathrm{K \: sums}} \log\left(\underbrace{[\Lambda_1]_{i_1i_1} + \dots + [\Lambda_K]_{i_Ki_K}}_{\mathrm{K \: terms}}\right).
\end{align*}
\item Matrix powers (e.g. inverse, inverse square root): 
\[
(A_1 \oplus \dots \oplus A_K)^v = (U_1 \otimes \dots \otimes U_K) (\Lambda_1 \oplus \dots \oplus \Lambda_K)^v(U_1 \otimes \dots \otimes U_K)^T.
\]
Since the $\Lambda_k$ are diagonal, this calculation is memory and computation efficient.

\end{enumerate}
%\item Inner products and $\tilde{K}_{\mathbf{p}}$. Since $\mathcal{K}_{\mathbf{p}}$ is a linear subspace, if $A = A_1 \oplus \dots \oplus A_K \in \mathcal{K}_{\mathbf{p}}$ and $B \in \mathbb{R}^{p\times p}$,
%\[
%\langle A, B\rangle = \langle A_1 \oplus \dots \oplus A_K, \mathrm{Proj}_{\mathcal{K}_{\mathbf{p}}}(B) \rangle,
%\]
%where $\mathrm{Proj}_{\mathcal{K}_{\mathbf{p}}}(B)$ is the projection of $B$ onto $\mathcal{K}_{\mathbf{p}}$, as defined in Lemma \ref{lem:Projj}.
\end{enumerate}

\subsection{Eigenstructure of $\Omega \in \mathcal{K}_{\mathbf{p}}$}\label{supp:eigg}
Kronecker sum matrices $\Omega \in \mathcal{K}_{\mathbf{p}}$ have Kronecker product eigenvectors with linearly related eigenvalues, as contrasted to the multiplicatively related eigenvalues in the Kronecker product. For simplicity, we illustrate in the $K=2$ case, but the result generalizes to the full tensor case. Suppose that $\Psi_1 = U_1 \Lambda_1 U_1^T$ and $\Psi_2 = U_2 \Lambda_2 U_2^T$ are the eigendecompositions of $\Psi_1$ and $\Psi_2$.
Then by \cite{laub2005matrix}, if $\Omega = \Psi_1 \oplus\Psi_2$, the eigendecomposition of $\Omega$ is
\[
\Omega = \Psi_1 \oplus \Psi_2 = (U_1 \otimes U_2) (\Lambda_1 \oplus \Lambda_2) (U_1 \otimes U_2)^T.
\]
Thus, the eigenvectors of the Kronecker sum are the Kronecker products of the eigenvectors of each factor. This ``block" structure is evident in the inverse Kronecker sum example in Section 1 of the main text. The structure of $\Omega^{-1}$ is discussed further in \cite{canuto2014decay}. 

This eigenstructure representation parallels the eigenvector structure of the Kronecker product - 
%5. Benefits of this choice on conditioning of model: Eigenstructure. Eigvecs Kronecker product, so great. Convex vs nonconvex.
specifically when $\Omega = \Psi_1 \otimes \Psi_2$
\[
\Omega = \Psi_1 \otimes \Psi_2 = (U_1 \otimes U_2) (\Lambda_1 \otimes \Lambda_2) (U_1 \otimes U_2)^T.
\]
Hence, use of the Kronecker sum model can be viewed as replacing the nonconvex, multiplicative eigenvalue structure of the Kronecker product with the convex linear eigenvalue structure of the Kronecker sum. This additive structure results in relatively more stable estimation of the precision matrix. As the tensor dimension $K$ increases, this structural stability of the Kronecker sum as compared to the Kronecker product becomes more dominant ($K$ term sums instead of $K$-order products).

%We also have the following corollary.
%\begin{corollary}
%If $A \in \mathcal{K}_{\mathbf{p}} \succ 0$, $\mathrm{Proj}_{\mathcal{K}_{\mathbf{p}}}(A^{-1})\succ 0$ as well. 
%
%\end{corollary}
%\begin{proof}
%
%\end{proof}

%Thus the eigenvectors of the Kronecker sum and product are identical, indicating a similarity of structure.

%FIGURE

\subsection{Projection onto $\mathcal{K}_{\mathbf{p}}$}\label{sec:prjjj}

We first introduce a submatrix notation. Fix a $k$, and choose $i,j \in \{ 1,\dots m_k\}$. Let $E_1 \in \mathbb{R}^{\prod_{\ell=1}^{k-1} d_k \times \prod_{\ell=1}^{k-1} d_k}$ and $E_2 \in \mathbb{R}^{\prod_{\ell = k+1}^K d_k \times \prod_{\ell = k+1}^K d_k}$ be such that $[E_1\otimes E_2]_{ij} = 1$ with all other elements zero. Observe that $E_1 \otimes E_2 \in \mathbb{R}^{m_k \times m_k}$.
For any matrix $A \in \mathbb{R}^{p \times p}$, let $A(i,j|k) \in \mathbb{R}^{d_k \times d_k}$ be the submatrix of $A$ defined via
\begin{align}\label{eq:submats}
[A(i,j|k)]_{rs} = \tr( (E_1 \otimes \mathbf{e}_{r} \mathbf{e}_{s} \otimes E_2)A), \qquad r,s = 1,\dots, d_k.% internal\: notation \: debate%&= [E_i^{(k)}]^T A_{ E_j^{(k)}\\\nonumber
%E_i^{(k)} &= I_{d_k} \otimes \mathbf{e}_{i -\floor{i(\prod_{\ell = 1}^{k-1} d_k^{-1})}},
%_{\ell + ((0: d_k-1) + (j-1)d_k)d_k^+,\ell + ((0: d_k-1) + (j-1)d_k)d_k^+}
\end{align}
The submatrix $A(i,j|k)$ is defined for all $i,j \in \{ 1,\dots m_k\}$ and $k = 1,\dots, K$.
When $A$ is a covariance matrix associated with a tensor $X$, this subblock corresponds to the covariance matrix between the $i$th and $j$th slices of $X$ along the $k$th dimension.

We can now express the projection operator $\mathrm{Proj}_{\mathcal{K}_{\mathbf{p}}}(A)$ in closed form: %the projection corresponds to setting each factor $k$ to the coordinatewise $d_k \times d_k$ average of $A$ with a correction for the trace:
%\[
%\mathrm{Proj}_{\mathcal{K}_{\mathbf{p}}}(A) = ({A}_1 \oplus \dots \oplus {A}_K) - (K-1)\frac{\tr(A)}{p} I_p,
%\]
%where $A_k = \sum_{i=1}^{m_k} A(i,i | k)$ is the average of $A$ over dimensions $1,\dots, k-1,k+1,\dots,K$.
\begin{figure}[h]
\centering
\includegraphics[width=4.5in]{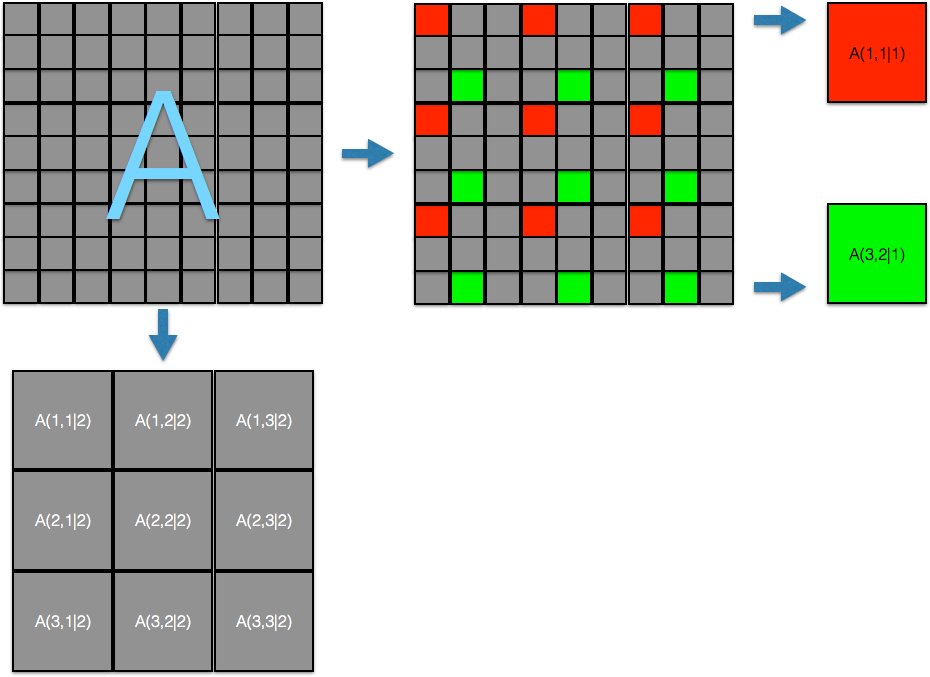}
\caption{Submatrix notation (equation \eqref{eq:submats}). Shown is a 9x9 matrix $A$, with $K=2$ and $d_1=d_2 = 3$. Displayed are the subblocks corresponding to the $A(i,j|2)$ and two example $A(i,j|1)$. $A(1,1|1)\in \mathbb{R}^{3\times 3}$ is formed from the 9 red entries, and $A(3,2|1)$ from the nine green entries. The remaining $A(i,j|1)$ follow similarly according to \eqref{eq:submats}. }
\label{fig:sub}
\end{figure}
\begin{lemma}[Projection onto $\mathcal{K}_{\mathbf{p}}$] \label{lem:Projj}
%\end{lemma}
For any $A \in \mathbb{R}^{p \times p}$, 
%\[
%\mathrm{Proj}_{\mathcal{K}_{\mathbf{p}}}(A) = \tilde{A}_1 \oplus \dots \oplus \tilde{A}_K.
%\]
%where the matrices $\tilde{A}_k$ are defined as
%\[
%[\tilde{A}_k]_{ij} = \left\{\begin{array}{ll} \frac{1}{m_k}\langle A, I_{[d_{1:k-1}]} \otimes \mathbf{e}_i \mathbf{e}_j^T \otimes I_{[d_{k+1:K}]}\rangle& i \neq j\\  \frac{1}{m_k}\langle A, I_{[d_{1:k-1}]} \otimes \mathbf{e}_i \mathbf{e}_i^T \otimes I_{[d_{k+1:K}]}\rangle - \frac{K-1}{K}\frac{\mathrm{tr}(A)}{p}& i = j\end{array}\right.
%\]
\begin{align*}
\mathrm{Proj}_{\mathcal{K}_{\mathbf{p}}}(A) &= A_1 \oplus \dots \oplus A_K - (K-1)\frac{\tr(A)}{p} I_p\\
=& \left(A_1-\frac{K-1}{K}\frac{\tr(A_1)}{d_1} I_{d_1}\right) \oplus \dots \oplus \left(A_K-\frac{K-1}{K}\frac{\tr(A_K)}{d_K} I_{d_K}\right),
\end{align*}
where
%\[
%\tilde{A}_k = A_k - \frac{K-1}{K}\frac{\tr(A)}{p} = A_k - \frac{K-1}{K}\frac{\tr(A_k)}{d_k}I_{d_k},
%\]
%and
\[
A_k = \frac{1}{m_k}\sum_{i= 1}^{m_k} A(i,i|k).
\]
Since the submatrix operator $A(i,i|k)$ is clearly linear, $\mathrm{Proj}_{\mathcal{K}_{\mathbf{p}}}(\cdot)$ is a linear operator. 
\end{lemma}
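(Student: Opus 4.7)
The plan is to reduce the minimization $\min_{B\in\mathcal{K}_{\mathbf{p}}}\|A-B\|_F^2$ to independent sub-projections by invoking the identifiable, orthogonal parameterization of Lemma~\ref{lem:DO} together with the inner-product identity of Lemma~\ref{lem:InnCov}. By Lemma~\ref{lem:DO}, every $B\in\mathcal{K}_{\mathbf{p}}$ is uniquely representable as
\[
B = \tau_B I_p + \sum_{k=1}^K I_{[d_{1:k-1}]}\otimes \tilde{B}_k\otimes I_{[d_{k+1:K}]},\qquad \tr(\tilde{B}_k)=0,
\]
and the $K+1$ summands are mutually orthogonal in the Frobenius inner product with squared norms $p\tau_B^2$ and $m_k\|\tilde{B}_k\|_F^2$. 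Applying Lemma~\ref{lem:InnCov} summand-by-summand to $\langle A,B\rangle$, together with this orthogonality for $\|B\|_F^2$, yields
\[
\|A-B\|_F^2 = \|A\|_F^2 - 2\tau_B\tr(A) + p\tau_B^2 + \sum_{k=1}^K m_k\bigl(\|\tilde{B}_k\|_F^2 - 2\langle A_k,\tilde{B}_k\rangle\bigr),
\]
where $A_k = \frac{1}{m_k}\sum_{i=1}^{m_k}A(i,i|k)$. The objective therefore decouples into a scalar minimization in $\tau_B$ and $K$ constrained matrix minimizations in the $\tilde{B}_k$.

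Next I would solve each subproblem in closed form. The scalar minimum is attained at $\tau_B = \tr(A)/p$. Each factor subproblem is $\min_{\tr(\tilde{B}_k)=0}\|\tilde{B}_k - A_k\|_F^2$, solved by the orthogonal projection of $A_k$ onto the trace-zero hyperplane in $\mathbb{R}^{d_k\times d_k}$, namely $\tilde{B}_k = A_k - \tfrac{\tr(A_k)}{d_k}I_{d_k}$. The crucial reconciling observation is the identity $\tr(A_k) = \tr(A)/m_k$, valid because summing the diagonal blocks $A(i,i|k)$ over $i=1,\dots,m_k$ reconstructs the full diagonal of $A$; in particular $\tr(A_k)/d_k = \tr(A)/p$ is the same scalar for every $k$.

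Finally, reassembling and using the Kronecker-sum arithmetic $\bigoplus_{k=1}^K (A_k - c I_{d_k}) = (\bigoplus_{k=1}^K A_k) - Kc\, I_p$ with $c=\tr(A)/p$ gives
\[
\mathrm{Proj}_{\mathcal{K}_{\mathbf{p}}}(A) = \tfrac{\tr(A)}{p}I_p + \bigoplus_{k=1}^K \bigl(A_k - \tfrac{\tr(A)}{p}I_{d_k}\bigr) = \bigoplus_{k=1}^K A_k - (K-1)\tfrac{\tr(A)}{p}I_p,
\]
which matches the first displayed form; substituting $\tr(A)/p = \tr(A_k)/d_k$ back into the factor-wise expression recovers the second. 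Linearity of $\mathrm{Proj}_{\mathcal{K}_{\mathbf{p}}}$ is then immediate since $A\mapsto A_k$ and $A\mapsto\tr(A)$ are each linear. The only real pitfall I anticipate is careful book-keeping of factor-wise versus global traces, and the algebraic step converting the identity shift $cI_p$ into the factor-wise shifts $cI_{d_k}$; no analytic inequality is required once Lemmas~\ref{lem:DO} and~\ref{lem:InnCov} are in hand.
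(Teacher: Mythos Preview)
Your proposal is correct and follows essentially the same approach as the paper: both exploit the orthogonal parameterization $B=\tau_B I_p+\bigoplus_k\tilde{B}_k$ with $\tr(\tilde{B}_k)=0$, project onto each of the $K+1$ orthogonal pieces separately, and reassemble using the trace identity $\tr(A_k)/d_k=\tr(A)/p$. The only organizational difference is that you invoke Lemmas~\ref{lem:DO} and~\ref{lem:InnCov} to obtain the orthogonality and the inner-product formula $\langle A,\bigoplus_k\tilde{B}_k\rangle=\sum_k m_k\langle A_k,\tilde{B}_k\rangle$, whereas the paper re-derives these facts inline by constructing explicit orthonormal bases $\mathbf{u}_k^{(ij)}$ and computing the projection coefficients elementwise; your version is somewhat more streamlined as a result.
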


\begin{proof}
Since $\mathcal{K}_{\mathbf{p}}$ is a linear subspace, projection can be found via inner products. Specifically, recall that if a subspace $\mathcal{A}$ is spanned by an orthonormal basis $U$, then 
\[
\mathrm{Proj}_{\mathcal{A}}(\mathbf{x}) = UU^T \mathbf{x}.
\]
Since $\mathcal{K}_{\mathbf{p}}$ is the space of Kronecker sums, the off diagonal elements are independent and do not overlap across factors. 
The diagonal portion is more difficult as each factor overlaps on the same entries, creating an overdetermined system. We can create an alternate parameterization of $\mathcal{K}_{\mathbf{p}}$:
\begin{align}\label{eq:abar}
\mathrm{Proj}_{\mathcal{K}_{\mathbf{p}}}(A) = \bar{A}_1 \oplus \dots \oplus \bar{A}_K + \tau_A I_{p} = \tau_A I_{p} + \sum_{k = 1}^K I_{[d_{1:k-1}]} \otimes \bar{A}_k \otimes I_{[d_{k+1:K}]}
\end{align}
where we constrain $\tr(\bar{A}_k) = 0$. Each of the $K+1$ terms in this sum is now orthogonal to all other terms since by construction
\begin{align*}
\langle I_{[d_{1:k-1}]} \otimes \bar{A}_k \otimes & I_{[d_{k+1:K}]},I_{[d_{1:\ell-1}]} \otimes \bar{A}_\ell \otimes I_{[d_{\ell+1:K}]}\rangle\\ 
&= \frac{p}{d_k d_\ell} \tr((\bar{A}_k \otimes I_{d_\ell} )(I_{d_k} \otimes \bar{A}_\ell)) =  \frac{p}{d_k d_\ell}  \tr(\bar{A}_k) \tr(\bar{A}_\ell) = 0\\
\langle \tau_A I_{p}, I_{[d_{1:k-1}]} &\otimes \bar{A}_k \otimes I_{[d_{k+1:K}]}\rangle \\&= \langle \tau_A I_{[d_{1:k-1}]} \otimes I_{d_k} \otimes I_{[d_{k+1:K}]}, I_{[d_{1:k-1}]} \otimes \bar{A}_k \otimes I_{[d_{k+1:K}]}\rangle\\ &= m_k \langle I_{d_k}, \bar{A}_k \rangle = m_k \tr(\bar{A}_k)= 0
\end{align*}
for $\ell \neq k$ and all possible $\bar{A}_k$, $\tau_A$. 
Thus, we can form bases for the $\bar{A}_k$ and $\tau_A$ independently. To find the $\bar{A}_k$ it suffices to project $A$ onto a basis for $\bar{A}_k$. We can divide this projection into two steps. In the first step, we ignore the constraint on $\tr(\bar{A}_k)$ and create the orthonormal basis
\[
\mathbf{u}^{(ij)}_k := \frac{1}{\sqrt{m_k}}I_{[d_{1:k-1}]} \otimes \mathbf{e}_i \mathbf{e}_j^T \otimes I_{[d_{k+1:K}]}
\]
for all $i,j = 1,\dots d_k$. Recall that in a projection of $\mathbf{x}$, the coefficient of a basis component $\mathbf{u}$ is given by $\mathbf{u}^T\mathbf{x} = \langle \mathbf{u},\mathbf{x}\rangle$. We can thus apply this elementwise to the projection of $A$. Hence projecting $A$ onto these basis components yields a matrix $B\sqrt{m_k} \in \mathbb{R}^{d_k \times d_k}$ where
\[
B_{ij} = \frac{1}{m_k}\langle A, I_{[d_{1:k-1}]} \otimes \mathbf{e}_i \mathbf{e}_j^T \otimes I_{[d_{k+1:K}]}\rangle.
\]
To enforce the $\tr(\bar{A}_k) = 0$ constraint, we project away from $B$ the one-dimensional subspace spanned by $I_{d_k}$. This projection is given by
\begin{equation}\label{eq:bb}
B - \frac{\tr(B)}{d_k} I_{d_k}, 
\end{equation}
where by construction
\begin{align*}
\frac{\tr(B)}{d_k} &= \frac{1}{d_k m_k} \sum_{i = 1}^{d_k} \langle A, I_{[d_{1:k-1}]} \otimes \mathbf{e}_i \mathbf{e}_i^T \otimes I_{[d_{k+1:K}]}\rangle\\
&=\frac{1}{p} \langle A, I_p\rangle =\frac{\tr(A)}{p}.
\end{align*}
Equation \eqref{eq:bb} completes the projection onto a basis for $\bar{A}_k$, so we can expand the projection $\sqrt{m_k} B$ back into the original space. This yields a $\bar{A}_k$ of the form
\[
[\bar{A}_k]_{ij} = \left\{\begin{array}{ll} \frac{1}{m_k}\langle A, I_{[d_{1:k-1}]} \otimes \mathbf{e}_i \mathbf{e}_j^T \otimes I_{[d_{k+1:K}]}\rangle& i \neq j\\  \frac{1}{m_k}\langle A, I_{[d_{1:k-1}]} \otimes \mathbf{e}_i \mathbf{e}_i^T \otimes I_{[d_{k+1:K}]}\rangle - \frac{\mathrm{tr}(A)}{p}& i = j\end{array}\right.
\]
Finally, for $\tau_A$ we can compute
\[
\tau_A = \frac{1}{p} \langle A, I_p \rangle = \frac{\tr (A)}{p}.
\]

%for all $i \neq j$. For the diagonal portions, we can divide the projection into two steps. In the first step, we ignore the constraint on $\tr(\bar{A}_k)$ and create the orthonormal basis
%\[
%\frac{1}{\sqrt{m_k}}I_{[d_{1:k-1}]} \otimes e_i^T e_i \otimes I_{[d_{k+1:K}]}.
%\]
%for $i = 1,\dots, d_k$. Projecting $A$ onto this basis yields
%\[
%[\bar{A}_k]_{ii} = \frac{1}{m_k}\langle A, I_{[d_{1:k-1}]} \otimes e_i^T e_i \otimes I_{[d_{k+1:K}]}\rangle,
%\]

Combining all these together and substituting into \eqref{eq:abar} allows us to define the projection in terms of matrices $\tilde{A}_k$, where we split the $\tau_A I_p$ term evenly across the other $K$ factors. Specifically
\[
\mathrm{Proj}_{\mathcal{K}_{\mathbf{p}}}(A) = \tilde{A}_1 \oplus \dots \oplus \tilde{A}_K.
\]
where
\begin{equation}\label{eq:elementProj}
[\tilde{A}_k]_{ij} = \left\{\begin{array}{ll} \frac{1}{m_k}\langle A, I_{[d_{1:k-1}]} \otimes \mathbf{e}_i \mathbf{e}_j^T \otimes I_{[d_{k+1:K}]}\rangle& i \neq j\\  \frac{1}{m_k}\langle A, I_{[d_{1:k-1}]} \otimes \mathbf{e}_i \mathbf{e}_i^T \otimes I_{[d_{k+1:K}]}\rangle - \frac{K-1}{K}\frac{\mathrm{tr}(A)}{p}& i = j\end{array}\right. .
\end{equation}
An equivalent representation using factorwise averages is
\[
\tilde{A} = A_k - \frac{K-1}{K} \frac{\tr(A)}{p},
\]
where
\[
A_k = \frac{1}{m_k}\sum_{i = 1}^{m_k} A(i,i|k).
\]
Moving the trace corrections to a last term and putting the result in terms of the $A_k$ yields the lemma.

In Algorithm \ref{alg:lowlevel} we use an efficient method of computing this projected inverse in our setting by exploiting the eigendecomposition identities in Section \ref{supp:eigg}. 
\qed \end{proof}

\section{Known diagonal elements (correlation matrix form)}
\label{app:B}
In the case where the diagonal $\mathrm{diag}(\Omega_0)$ of the precision matrix is known a priori, the estimation problem becomes easier. For simplicity, we consider the case that $\Omega_0$ is in the form of a correlation matrix, i.e. $\mathrm{diag}(\Omega_0) = {I}_{p}$, noting this was the setting originally the focus of \cite{kalaitzis2013bigraphical}.

Note that since the diagonal elements are known, we do not need to estimate them and indeed can set all the $\mathrm{diag}(\Psi_k) = 1/K I_{d_k}$. Revisiting the proof of Theorem \ref{Thm:1}, it is easy to show the following corollary, which shows strong $O(\sqrt{(K+1)s \frac{\log p}{n \min_k m_k}})$ convergence in the case of $\ell 1$ regularization. This replacement of the $\sqrt{p + s}$ term in rate of Theorem \ref{Thm:1} with a $\sqrt{s}$ guarantees single sample convergence in the sparse setting when $\min_k m_k \gg s$.
\begin{corollary}
\label{Cor:KnownDiag}
Suppose the conditions of Theorem \ref{Thm:1}, and that $\mathrm{diag}(\Omega_0) = {I}_{p}$ is known. Then under event $\mathcal{A}$,
\begin{align*}
%\|\hat{{\Omega}} - {\Omega}\|_F^2 \leq 2(qr_p^2 + pr_q^2)
\|\hat{{\Omega}} - {\Omega}_0\|_F &\leq  \frac{2 C_1  \twonorm{\Sigma_0}}{\phi_{\min}^2(\Sigma_0)}
\sqrt{(K+1)s \frac{\log p}{n \min_k m_k}}.
%\frac{1}{K}\sum_{k=1}^K    \frac{\|\Delta_{\Psi,k}\|_F^2}{{d_k}} &\leq C_2 \frac{\max_k [(d_k + s_k) \log d_k] }{p n},\\
%\|\hat{{\Omega}} - {\Omega}\|_2 &\leq C_3 K   \sqrt{ \frac{(d_{k_*} + \max_k s_k) \log d_{\bar{k }}}{m_{k_*} n}},
\end{align*}
Furthermore, event $\mathcal{A}$ holds with probability at least $1-2(K+1) \exp(-c\log p)$.

\end{corollary}
\begin{proof}
Dropping the diagonal term from the proof of Lemma \ref{lemm:18}, we have that the $\sqrt{p}$ dependence vanishes, and on event $\mathcal{A}$, we have $G(\Delta) > 0$ for all $\Delta \in \mathcal{T}_n$ where
\[
\mathcal{T}_n = \left\{ \Delta_\Omega \in \mathcal{K}_{\mathbf{p}}: \Delta_\Omega = \Omega - \Omega_0, \|\Delta_\Omega\|_F = M r_{n,\mathbf{p}}\right\}
\]
and
\[
r_{n,\mathbf{p}} = C \|\Sigma_0\|_2 \sqrt{s(K+1) \frac{\log p}{n \min_k m_k}}.
\]

The rest of the proof follows by substituting this new value of $r_{n,\mathbf{p}}$ into the proof of Theorem \ref{Thm:1}.
\end{proof}

\begin{comment}
Applying this to the sample covariance $\hat{S}$ and covariance $\Sigma$ yields the following corollary.
\begin{corollary}\label{cor:Projj}
\[
\mathrm{Proj}_{\mathcal{K}_\mathbf{p}}(\hat{S}) = \tilde{S}_1 \oplus \dots \oplus \tilde{S}_K,
\]
where
\[
\tilde{S}_k = S_k - \frac{K-1}{K}\frac{\mathrm{tr}(S_k)}{d_k} I_{d_k} = S_k - \frac{K-1}{K}\frac{\mathrm{tr}(\hat{S})}{p} I_{d_k}.
\]
Similarly,
\[
\mathrm{Proj}_{\mathcal{K}_\mathbf{p}}(\Sigma) = \tilde{\Sigma}^{(1)} \oplus \dots \oplus \tilde{\Sigma}^{(K)},
\]
where
\[
\tilde{\Sigma}^{(k)} = \Sigma^{(k)} - \frac{K-1}{K}\frac{\mathrm{tr}( \Sigma^{(k)})}{d_k} I_{d_k}= \Sigma^{(k)} - \frac{K-1}{K}\frac{\mathrm{tr}( \Sigma)}{p} I_{d_k}.
\]
\end{corollary}
\begin{proof}
Observe that 
\begin{align*}
\tr (S_k) &= \tr\left(\frac{1}{n m_k} \sum_{i = 1}^n X_{i,(k)} X_{i,(k)}^T\right) = \frac{1}{n m_k} \sum_{i=1}^n \|X_{i,(k)}\|_F^2\\ &= \frac{1}{n m_k} \sum_{i = 1}^n \|X_i\|_F^2 = \frac{1}{n m_k} (n \tr(\hat{S})) = \frac{\tr(\hat{S})}{m_k}
\end{align*}
since rearrangement of entries does not affect the Frobenius norm. Substitution into Lemma \ref{lem:Projj} completes the proof for $\hat{S}$. The proof for $\Sigma = \mathbb{E}[\hat{S}]$ is identical.

\end{proof}
\end{comment}

\section{SCAD and MCP regularizers}
\label{supp:regs}

The SCAD penalty \citep{fan2001variable} with parameter $a>2$ (giving $\mu = 1/(a-1)$) is given by
\begin{align}\label{eq:SCAD}
g_\rho(t) = \left\{\begin{array}{ll} \rho |t| & \mathrm{if}\:|t| \leq \rho\\
- \frac{t^2 - 2a \rho |t| + \rho^2}{2(a-1)} & \mathrm{if} \: \rho < |t| \leq a \rho\\
\frac{(a+1)\rho^2}{2} & \mathrm{if}\: a \rho < |t| \end{array}\right. 
\end{align}
which is linear (as the $\ell$1 norm) for small $|t|$, constant for large $|t|$, and has a transition between the two regimes for moderate $|t|$.

The MCP penalty \citep{zhang2010nearly} with parameter $a > 0$ (giving $\mu = 1/a$) is given by
\begin{align}\label{eq:MCP}
g_\rho(t) = \mathrm{sign}(t) \rho \int_{0}^{|t|} \left(1 - \frac{z}{\rho a}\right)_+ dz,
\end{align}
giving a more smooth transition between the approximately linear region and the constant region ($t > \rho a$).

\end{appendices}

\end{document}